\newcommand{\mysize}{\fontsize{10pt}{10pt}\selectfont}
\newtheorem{ex}{EXAMPLE}[section]
\newenvironment{example}{\begin{ex} \nopagebreak
  \begin{rm}}{{\hfill$\Box$}\end{rm}\end{ex}} 
\newtheorem{defin}{Definition}[section]
\newenvironment{definition}[1]{\begin{defin}\begin{rm}({\bf #1})}{{\hfill$\Box$}\end{rm}\end{defin}}
\newtheorem{lemm}{Lemma}[section]
\newenvironment{lemma}{\begin{lemm}}{{\hfill$\Box$}\end{lemm}}
\newtheorem{thm}{Theorem}[section]
\newenvironment{theorem}{\begin{thm} \nopagebreak}{{\hfill$\Box$}\end{thm}}
\newtheorem{prop}{Proposition}[section]
\newenvironment{proposition}{\begin{prop}}{{\hfill$\Box$}\end{prop}}
\newtheorem{corol}{Corollary}[section]
\newtheorem{conjec}{Conjecture}[section]
\newsavebox{\savepar}
\newcommand{\squishlist}{
  \begin{list}{$\bullet$}
   {
     \setlength{\itemsep}{0pt}
     \setlength{\parsep}{0pt}
     \setlength{\topsep}{0pt}
     \setlength{\partopsep}{0pt}
     \setlength{\leftmargin}{1.5em}
     \setlength{\labelwidth}{1em}
     \setlength{\labelsep}{0.5em} } }
\newcommand{\squishend}{
   \end{list}  }
\newcommand{\nop}[1]{}                       
\begin{document}

\title
{Obtaining Information about Queries \\ Behind Views and Dependencies 
}

\numberofauthors{2}

\author{
\alignauthor Rada Chirkova
\\
	\affaddr{Department of Computer Science} \\
	\affaddr{NC State University, Raleigh, NC 27695, USA} \\
	\email{chirkova@csc.ncsu.edu}
\alignauthor Ting Yu\\
	\affaddr{Department of Computer Science} \\
	\affaddr{NC State University, Raleigh, NC 27695, USA} \\
	\email{yu@csc.ncsu.edu}
}

\maketitle

{\mysize 

\begin{abstract} 
{\mysize 
We consider the problems of  finding and determining certain query answers and  of determining containment between queries; each problem is formulated in presence of materialized views and dependencies under the closed-world assumption. We show a tight relationship between the problems in this setting. Further, we introduce algorithms for solving each problem for those inputs where all the queries and views are conjunctive, and the dependencies are embedded weakly acyclic \cite{FaginKMP05}. We also determine the complexity of each problem under the security-relevant complexity measure introduced in \cite{ZhangM05}. The problems studied in this paper are fundamental in ensuring correct specification of database access-control policies, in particular in case of fine-grained access control. Our approaches can also be applied in the areas of inference control, secure data publishing, and database auditing. 
} 
\end{abstract} 

\vspace{-0.2cm} 

\section{Introduction} 

In this paper, we consider the problems of finding and determining certain answers to relational queries, and of containment between relational queries. 
For the former two problems, we build on the setting of \cite{AbiteboulD98}, and for the latter -- on the setting of \cite{ZhangM05}; we point out and exploit a tight relationship between the settings. To begin with, in all these settings the set of databases (a.k.a. instances) of interest is not given directly, and is instead specified via a set of ``materialized views.'' That is, we are given definitions of one or more named queries {\em (definitions of views).} We are also given a set of answer tuples  for each view, that is, each view is {\em materialized} into a relation. Intuitively, each set $MV$ of materialized views specifies a set of ``base instances'' $I$, such that each relation in $MV$ can be obtained as an answer, on the instance $I$, to the respective view definition. In addition, for a given set of integrity constraints {\em (dependencies)} on the instances of interest, we deem relevant only those instances that satisfy all the dependencies. 
In summary, we consider 
the problems of finding and determining certain query answers and the problem of query containment, each with respect to the sets of base instances 
specified by a given set of materialized views 
{\em and} a given set of dependencies. 

The following motivating example draws on the area of database security called ``database-access control'' \cite{BertinoGK11}.  

\vspace{-0.1cm} 

\begin{example}
\label{intro-main-three-ex} 
Suppose a relation {\tt Emp} stores information about employees of a company, using attributes {\tt Name}, {\tt Dept} (department), and {\tt Salary}. Two other relations of interest are {\tt HQDept(Dept)} and {\tt OfficeInHQ} {\tt (Name,Office).} The relation {\tt HQDept} stores the names of the departments that are located in the company headquarters; {\tt OfficeInHQ} associates employees working in the headquarters with their office addresses. 

We now describe the integrity constraint {\em (dependency)} that holds on the database schema {\bf P} $=$ $\{$ {\tt Emp,} {\tt HQDept,} {\tt OfficeInHQ} $\}$. 
Suppose that for  all the departments located in the company headquarters, all their employees 
have their offices in the headquarters. This can be expressed as a ``tuple-generating dependency'' \cite{AbiteboulHV95}, which we call $\sigma$. (Please see Example~\ref{intro-formalized-ex} for a formalization.) 


Let a ``secret query'' \cite{MiklauS07} 
{\tt Q} ask for the names and salaries of all the employees who work in the company headquarters. We can 
formulate {\tt Q} in the standard relational query language SQL, as follows: 

\vspace{-0.1cm} 

{\small 
\begin{verbatim} 
(Q):SELECT DISTINCT Emp.Name, Salary FROM Emp, OfficeInHQ  
    WHERE Emp.Name = OfficeInHQ.Name; 
\end{verbatim} 
} 
 
 \vspace{-0.1cm} 

Consider three views, {\tt U,} {\tt V,} and {\tt W}, that are defined for
some class(es) of users, in SQL on the schema {\bf P}. The view {\tt U} returns the relation {\tt HQDept}, 
 the
view {\tt V} returns the department names for each employee, and {\tt W} returns the salaries in each department: 

\vspace{-0.1cm}

{\small
\begin{verbatim}
(U): DEFINE VIEW U(Dept) AS SELECT * FROM HQDept; 
(V): DEFINE VIEW V(Name,Dept) AS
     SELECT DISTINCT Name, Dept FROM Emp;
(W): DEFINE VIEW W(Dept,Salary) AS
     SELECT DISTINCT Dept, Salary FROM Emp;
\end{verbatim}
} 

\vspace{-0.1cm}

Consider database users who are authorized to see only the answers to the views {\tt U}, 
{\tt V} and {\tt W}. (In particular, these users are not authorized to see any answers to the query {\tt Q}.) That is, {\tt U}, {\tt V}, and {\tt W} are {\em access-control views} for these users on the database with the schema {\bf P}. Suppose that at some point in time, these users can see
the following set $MV$ of answers to the views: 

\vspace{-0.2cm}

\begin{tabbing}
$MV$$=$$\{${\tt U(sales)},{\tt V(johnDoe,sales)},{\tt W(sales,50000)}$\}.$ 
\end{tabbing}

\vspace{-0.2cm}


A basic security question in this setting is as follows: Can these users find out which tuples must be in the answer 
to the query {\tt Q} on {\em all} the relevant ``back-end'' databases? If the answer to this question is positive, then, intuitively, there is a security breach, in the form of leakage   of some answers to {\tt Q} to unauthorized users. 

Let the back-end databases of interest be those instances of schema {\bf P} (i.e., ``base instances'') that satisfy the dependency $\sigma$ and that ``generate exactly $MV$ as answers to {\tt U}, {\tt V}, and {\tt W}.'' (The latter requirement is the ``closed-world assumption,'' to be discussed in detail shortly.)  
%
Using an algorithm introduced in this paper, we can show that the tuple 
%
$\bar t$ $=$ $(${\tt johnDoe}, {\tt 50000}$)$ 
%
\noindent 
is in the answer to the secret query {\tt Q} on all such base instances.  
Thus, we can answer the above security question in the positive for the secret query {\tt Q} in the ``materi- alized-view setting'' {\em (}{\bf P}, $\{$$\sigma$$\}$, $\{${\tt U}, {\tt V}, {\tt W}$ \}$, $MV$). 
\end{example} 

\vspace{-0.1cm}

A tuple that is in the answer to the query of interest on all the relevant base instances is called a {\em certain answer} to the query. ``Determining a certain query answer'' is the problem of determining if a given tuple is a certain answer to a given query w.r.t. the given materialized views and, possibly, dependencies. (E.g., the tuple $\bar t$ in Example~\ref{intro-main-three-ex} is a certain answer to the query {\tt Q} in the 
setting ({\bf P}, $\{$$\sigma$$\}$, $\{${\tt U}, {\tt V}, {\tt W}$\}$, $MV$).)  The problems of finding and   determining  certain query answers on the instances defined by the given materialized views have been considered 
both under the ``open-world assumption'' {\em (OWA)} and under the ``closed-world assumption'' {\em (CWA).}  
That is, for a base instance $I$, consider the answer tuples generated by the given view definitions on $I$. Then, informally, 
$I$ is relevant to the given instance $MV$ of materialized views under CWA  iff these answer tuples together comprise exactly $MV$. In contrast, OWA permits $MV$ to not contain all such tuples. 

The classic paper \cite{AbiteboulD98} by Abiteboul and Duschka addressed the complexity of 
determining certain query answers under both OWA and CWA, for a range of query and view languages and in the absence of dependencies. 
\cite{AbiteboulD98} also provided algorithms for finding certain query answers under both OWA and CWA, for queries defined in datalog and for views in nonrecursive datalog,  with disequalities ($\neq$) permitted in both, again  in the absence of dependencies. The algorithms of \cite{AbiteboulD98}  are based on the ``conditional tables'' of \cite{ImielinskiL84}. The formulation of the latter problem was extended, in the context of database security, to account for database dependencies; the extended problem was solved  in \cite{BrodskyFJ00,StoffelS05}, under OWA for more restricted (than in \cite{AbiteboulD98}) languages of queries and views 
and for ``embedded dependencies'' \cite{AbiteboulHV95}.  


Our original motivation for this current work comes from the fact that finding certain query answers is a basic problem in database security, as illustrated by Example~\ref{intro-main-three-ex}. Moreover, in its security form, this problem makes the most sense under CWA, rather than under OWA (see, e.g., \cite{RizviMSR04}). Intuitively, for those database attackers who are seeking unauthorized answers to a secret query, {\tt Q,} in presence of a set of view answers $MV$, the only relevant base instances are those that ``generate exactly $MV$,'' that is only the CWA-relevant instances. Suppose the owners of the back-end database run an {\em OWA-}based algorithm for finding certain answers to {\tt Q} w.r.t. $MV$.  They could then arrive at the empty set of answers (and thus conclude that their database is secure), even though under CWA, the set of certain answers to {\tt Q} would not be empty for the same $MV$ and dependencies. Indeed, we can use the results of \cite{BrodskyFJ00,StoffelS05} to show that in 
Example~\ref{intro-main-three-ex}, the set of certain answers to the query {\tt Q} is the empty set under OWA.\footnote{The algorithms of \cite{AbiteboulD98} could not be applied to the problem instance of Example~\ref{intro-main-three-ex}, as that instance contains a dependency that is not a ``full''  \cite{AbiteboulHV95} (a.k.a. ``total'' \cite{BeeriV81}) dependency.} 

To address this challenge, we have developed a CWA-based approach for finding certain answers to conjunctive queries {\em (CQ queries),} in presence of CQ views and of weakly acyclic embedded dependencies \cite{FaginKMP05}. (Similarly to the dependency-free case 
\cite{AbiteboulD98}, the CWA version of this problem is harder 
than the OWA version considered in \cite{BrodskyFJ00,StoffelS05}.) We then realized that our techniques can be connected to the solution of \cite{ZhangM05}, by Zhang and Mendelzon, to the (CWA-based) problem of determining containment between queries in presence of 
materialized views. 
The latter problem arises, for instance, in determining whether a user query formulated on the base database relations has an equivalent rewriting in terms of the access-control views for this user. (If the answer to the question is positive, then the user query can be answered safely, 
see \cite{RizviMSR04,ZhangM05} for the details.) A natural and practically important generalization of this problem is its extension to the consideration of dependencies holding on the 
base instances. 
%
We have been able  
to extend to the case of dependencies the algorithm of \cite{ZhangM05} for their query-containment problem w.r.t. materialized views $MV$, by building on our 
approach to the problem of finding certain query answers w.r.t. $MV$. 

\vspace{-0.1cm} 

\paragraph{Our contributions} 


Our specific contributions are as follows: 
\begin{itemize} 
\vspace{-0.1cm} 
	\item We formalize the problem of determining containment between two queries under CWA and in presence of materialized views and dependencies, by building on the formalization of \cite{ZhangM05} that does not consider dependencies. 
	
	\vspace{-0.1cm} 

	\item We develop an algorithm for solving this problem, in the case where the input queries and all the view definitions are CQ queries, and the input dependencies are embedded weakly acyclic. 
	
	\vspace{-0.1cm}

	\item We show that the problem of determining certain answers to a query, under CWA and in presence of materialized CQ views and dependencies, is a special case of the above containment problem. It follows that the algorithm that we introduce for the containment problem also solves correctly this ``certain-query-answer'' problem, for all inputs where the queries and views are CQ queries and the dependencies  are embedded weakly acyclic. 
	
	\vspace{-0.1cm} 

	\item For the problem of {\em finding} all certain query answers under CWA and in presence of materialized views and dependencies, we develop two algorithms that are sound and complete for all inputs  where the queries and views are CQ queries and the dependencies are embedded weakly acyclic. The first algorithm uses as a subroutine our algorithm for the  ``certain-query-answer'' problem. The second algorithm both builds on the standard approach to answering queries in relational data exchange, and uses a simpler version of the technique that we use to solve the above {\em containment}  problem (w.r.t. materialized views and dependencies). 
	
	\vspace{-0.1cm} 

	\item We determine the complexity of each of the three problems under the security-relevant complexity measure of \cite{ZhangM05}. In this measure, it is assumed that everything is fixed except for the materialized views and queries (but not for view definitions). 
	
\end{itemize} 


\vspace{-0.1cm}

The problems that we study in the paper can be used to model and analyze a wide range of database-security problems, including database-policy analysis, secure data publishing, inference control, and auditing. For instance, database-security policies are often implemented through views. It is important to ensure that security views are defined correctly, so that no sensitive information can be learned by unauthorized parties from granted view access~\cite{BrodskyFJ00, YipL98}. Clearly, an information disclosure happens if an attacker can learn certain answers to a secret query. The same modeling can be applied to capture the secure data-publishing problem~\cite{MiklauS07}. Similarly, in database query auditing, answers to user-issued queries can be modeled as materialized views~\cite{MotwaniNT08, NabarKMM08}. A potential inference attack happens if those answers combined together can be used to derive secret information as defined by a query.



The remainder of the paper is organized as follows. After discussing related work in Section~\ref{related-work-section}, we review the background definitions and results in Section~\ref{prelim-sec}, and then define our three problems of interest in Section~\ref{problem-statement-sec}. In Sections~\ref{query-cont-algor-sec}--\ref{main-vv-dexchg-sec} we introduce our approaches to solving the three problems in the CQ weakly acyclic setting. 
Finally, in Section~\ref{complexity-sec} we address the complexity of the three problems in the CQ weakly acyclic setting.


\vspace{-0.25cm} 

\section{Related Work} 
\label{related-work-section} 

The seminal paper \cite{AbiteboulD98} by Abiteboul and Duschka addressed the complexity of the problem of determining whether a tuple is a certain query answer in presence of materialized views (the ``certain-query-answer problem'') under both OWA and CWA, for a range of query and view languages and in the absence of dependencies. \cite{AbiteboulD98} also solved the problem of finding all certain query answers under both OWA and CWA, for datalog queries and for views in nonrecursive datalog,  with disequalities ($\neq$) permitted in both,  
in the absence of dependencies. The algorithms of \cite{AbiteboulD98}  are based on the ``conditional tables'' of \cite{ImielinskiL84}. (See \cite{AbiteboulHV95,GrecoMS12} for detailed overviews of incomplete databases and of their representations, including conditional tables \cite{ImielinskiL84}.) It is remarked  in \cite{AbiteboulD98} that their algorithms 
for finding certain answers could be extended to the case of ``full'' (or ``total'') dependencies \cite{BeeriV81,AbiteboulHV95}. In this current paper, we provide sound and complete algorithms for finding certain answers 
and for the certain-query-answer problem, under CWA for CQ queries and views and for ``weakly acyclic''  \cite{FaginKMP05} embedded dependencies, of which the class of full dependencies 
is a proper subclass. We also address the complexity of both problems in this CQ weakly acyclic setting. 

The paper \cite{BrodskyFJ00} by Brodsky and colleagues 
introduced in the security context the problem of finding certain query answers under OWA and in presence of embedded dependencies, and proposed a sound and complete algorithm for the case where the queries and views are CQ queries expressible without joins. Then, Stoffel and colleagues in \cite{StoffelS05} made a connection between this problem and the techniques introduced in data exchange \cite{FaginKMP05,Barcelo09,LibkinDataExchange}, by developing (\cite{StoffelS05}) a data-exchange-based approach for finding certain query answers, under OWA for CQ views, UCQ queries (i.e., unions of CQ queries), and embedded dependencies. 

In this paper we extend the data-exchange approach of \cite{FaginKMP05}, also used in \cite{StoffelS05}, to solve the problem of finding certain query answers under CWA, 
 for CQ queries and views in presence of weakly acyclic embedded dependencies. The approaches of this current paper do not use ``target-to-source dependencies'' introduced in the data-exchange context in \cite{FuxmanKMT06}. The dependencies in our approaches do use constants (as was suggested back in \cite{FaginKMP05}), and thus are related to ``conditional dependencies'' 
\cite{FanG12}. Conditional dependencies are intuitively understood as enforcing a (perhaps constant-involving) pattern onto (typically constant-determined) subsets 
of the given relations. As the dependencies that we use do not have constants in their antecedents, they intuitively behave in the ways expected of standard (constant-free) embedded dependencies. 

While the term ``data exchange'' is mentioned in the paper \cite{MiklauS07} by Miklau and Suciu, data-exchange methods are not used in the technical development in \cite{MiklauS07}. Rather, the term 
is used in \cite{MiklauS07} informally as a reference to today's  universal sharing of data (as in, e.g., on the Web). 
\cite{MiklauS07} addresses the problem of ``data publishing,'' in which the goal is to determine, for a given set of view definitions and for a ``secret query'' $Q$, whether any materializations of the given views would disclose information about any answers to $Q$. (In contrast, in the three problems considered in this current paper, we assume that a specific set of view materializations is provided in the problem input.) Further, the notion of disclosure in \cite{MiklauS07}, inspired by Shannon's notion of perfect secrecy \cite{Shannon49}, is as follows: There is no disclosure of  query $Q$ via views $\cal V$ if and only if the {\em probability} of an adversary guessing the answer to $Q$ is the same (or, in another scenario, is almost the same) whether the adversary knows the answers to $\cal V$ or not. In this current paper, we 
use a deterministic, rather than probabilistic, notion of disclosure of a query answer, in presence 
of a specific set of view materializations; this leads to different security decisions than those following from \cite{MiklauS07}. 


The work \cite{ZhangM05} by Zhang and Mendelzon introduced and solved the problem of ``conditional containment'' between two CQ queries in presence of materialized CQ views, under CWA and in the absence of dependencies. \cite{ZhangM05} also introduced a  security-relevant complexity metric, under which their problem is $\Pi^p_2$ complete. 
(\cite{ZhangM05} also provides an excellent overview of the connections of the query-containment problem of \cite{ZhangM05} to database-theory literature.) 
In our work, we add dependencies to the formulation of the problem of \cite{ZhangM05}, and extend the approach of \cite{ZhangM05}, both to solve the resulting problem in the CQ weakly acyclic setting and to analyze the complexity of the problem. We also uncover a tight relationship of the problem with the problems of finding and determining certain query answers, under CWA in presence of view materializations and of dependencies.





\nop{ 

CWA \& dependencies: problem of finding all certain query answers:  
			
				\begin{enumerate} 
					\item this problem was introduced (with dependencies) under OWA in \cite{StoffelS05,BrodskyFJ00} 
					
					\item this problem was solved for CWA/OWA dependency free in \cite{AbiteboulD98} (full/total dependencies \cite{BeeriV81} suggested as extension)  

				\end{enumerate} 

\noindent 
In this paper, we define the CWA case of the problem with dependencies, and provide two sound and complete algorithms for solving the problem for the ``CQ weakly acyclic'' case. 

Sec 4.2: CWA \& dependencies: query containment: We are the first to provide this problem formulation (where dependencies are involved)

The Brodsky paper \cite{BrodskyFJ00} (NB! it actually precedes [was published in the year 2000] the paper \cite{StoffelS05}, which was published in 2005): 

\begin{itemize} 
	\item \cite{BrodskyFJ00} also mentions a problem statement where the input {\em does not supply} a specific instance $MV$ 
	
	\item in their paper \cite{BrodskyFJ00}, in the case where $MV$ {\em is} given in the problem input, they consider (i) single-relation (i.e., {\em no joins are involved)} CQ queries/views, (ii) Horn-clause constraints (tgds and egds), with constants possible both on the LHS and on the RHS, and (iii) under OWA. 
\end{itemize}  

} 

\vspace{-0.25cm}

\section{Preliminaries} 
\label{prelim-sec} 

\vspace{-0.2cm}

\subsection{Instances and Queries} 
\label{inst-query-sec} 


{\bf Schemas and instances.} A {\em schema} {\bf P} is a finite sequence $<$ $P_1$, $\ldots,$ $P_m$ $>$ of relation symbols, with each $P_i$ having a fixed arity $k_i$ $\geq$ $0$. An {\em instance} $I$ {\em of} {\bf P} assigns to each $P_i$ $\in$ {\bf P} a finite $k_i$-ary relation $I[P_i]$, which is a set of tuples. For tuple membership in relation $I[P_i]$, we use the notation $\bar t$ $\in$ $I[P_i]$. Each element of each tuple in an instance belongs to one of two disjoint infinite sets of values, {\sc Const} and {\sc Var}. We call elements of {\sc Const} {\em constants,} and denote them by lowercase letters $a$, $b$, $c$, $\ldots \ $; the elements of {\sc Var} are called {\em (labeled) nulls,} denoted by symbols $\perp$, $\perp_1$, $\perp_2$, $\ldots \ $. 

Sometimes we use the notation $P_i({\bar t})$ $\in$ $I$ instead of $\bar t$ $\in$ $I[P_i]$, and call $P_i({\bar t})$ a {\em fact} of $I$. When all the values in ${\bar t}$ are constants, we say that $P_i({\bar t})$ is a {\em ground fact,} and $\bar t$ is a {\em ground tuple.} The {\em active domain} of instance $I$, denoted $adom(I)$, is the set of all the elements of {\sc Const} $\cup$ {\sc Var} that occur in any facts in $I$. When each fact in $I$ is a ground fact, we call $I$ a {\em ground instance.} 

{\bf Queries.} We consider the class of queries called ``unions of conjunctive queries with disequalities,'' {\em $UCQ^{\neq}$ queries.} In the definitions for $UCQ^{\neq}$ queries, we will use  the following notions of {\em relational atom} and of {\em (dis)equality atom.} Let {\sc Qvar} be an infinite set of values disjoint from {\sc Const} $\cup$ {\sc Var}; we call {\sc Qvar} {\em the set of (query) variables.} We will denote variables by uppercase letters $X$, $Y$, $\ldots \ $ . Then $P({\bar t})$, with $P$ a $k$-ary relation symbol and $\bar t$ a $k$-vector of values, is a {\em relational atom} whenever each value in $\bar t$ is an element of {\sc Const} $\cup$ {\sc Qvar}. Further, an {\em equality (resp. disequality) atom} is a built-in predicate of the form $S$ $\theta$ $T$, where $\theta$ is $=$ (resp. $\neq$), and each of $S$ and $T$ is an element of {\sc Const} $\cup$ {\sc Qvar}. 

A $CQ^{\neq}$-{\em rule} over schema {\bf P}, with $k$-ary ($k$ $\geq$ $0$) {\em output relation symbol} $Q$ $\notin$ {\bf P}, 
is an expression of the form  

\vspace{-0.1cm} 

\begin{tabbing} 
$Q({\bar X}) \ \leftarrow \ P_1({\bar U}_1) \wedge \ldots \wedge P_n({\bar U}_n) \wedge C .$  
\end{tabbing} 

\vspace{-0.1cm} 

\noindent 
Here, $n$ $\geq$ $1$; the vector $\bar X$ has $k$ elements; for each $i$ $\in$ $[1,$ $n]$, $P_i$ $\in$ 
{\bf P}; 
each of $Q({\bar X})$, $P_1({\bar U}_1),$ $\ldots,$ $P_n({\bar U}_n)$ is a relational atom; and  $C$ is a (possibly empty) finite conjunction of disequality atoms. %
%
We consider only {\em safe} rules: That is, each variable in $\bar X$, as well as each variable occurring in $C$,  
also occurs in at least one of ${\bar U}_1$, $\ldots$, ${\bar U}_n$. All the variables of the rule that do not appear in $\bar X$ (i.e., the {\em nonhead variables} of the rule) are assumed to be existentially quantified. We call the atom $Q({\bar X})$ the {\em head} of the rule, call $\bar X$ the {\em head vector} of the rule, and call the conjunction of its remaining atoms the {\em body} of the rule. Each atom in the body of a rule is called a {\em subgoal} of the rule. The conjunction in the body 
is usually written using commas, as $P_1({\bar U}_1), \ldots, P_n({\bar U}_n), C .$ 


A {\em conjunctive query with disequalities (a $CQ^{\neq}$ query)} is a query defined by a single $CQ^{\neq}$-rule; a {\em conjunctive query (a CQ query)} is a $CQ^{\neq}$ query with an empty $C$. We will be referring to a $CQ^{\neq}$ query 
with head $Q({\bar X})$ 
as just $Q({\bar X})$, or even $Q$, whenever clear from the context. We will be using 
$body_{(Q)}$ as a concise name 
for the body of the (rule for) $Q$. 

Finally, for a $k$-ary relation symbol $Q$, with $k$ $\geq$ $0$, let ${\cal S}(Q)$ $=$ $\{$ $Q^{(1)}$, $\ldots,$ $Q^{(l)}$ $\}$ be a finite set of 
$CQ^{\neq}$-rules over schema {\bf P}, such that $Q$ is the output relation symbol in each rule. Then we say that the set ${\cal S}(Q)$ {\em defines a $UCQ^{\neq}$ query} $Q$ {\em over} {\bf P}, and that each element of ${\cal S}(Q)$ defines {\em a $CQ^{\neq}$ component of} $Q$. In the special case where ${\cal S}(Q)$ $=$ $\emptyset$, we say that the corresponding $UCQ^{\neq}$ query $Q$ is a {\em trivial query.} 

{\bf Semantics of $UCQ^{\neq}$ queries.} We now define the semantics of a $UCQ^{\neq}$ query $Q$. In the definition, we  will need the notions of {\em homomorphism} and of {\em valuation.} Consider two conjunctions, $\varphi({\bar Y})$ and $\psi({\bar Z})$, of relational atoms. 
Then a mapping $h$ from the set of elements of $\bar Y$ to the set of elements of $\bar Z$ is called a {\em homomorphism} from $\varphi({\bar Y})$ to $\psi({\bar Z})$ whenever (i) $h(c)$ $=$ $c$ for each constant $c$ in $\bar Y$, and (ii) for each conjunct of the form $p({\bar U})$ in $\varphi({\bar Y})$, the relational atom $p(h({\bar U}))$ is a conjunct in $\psi({\bar Z})$. (For a vector $\bar S$ $=$ $[s_1 s_2 \ldots s_l]$, for some $l$ $\geq$ $0$, we define $h({\bar S})$ as the vector $[h(s_1) h(s_2) \ldots h(s_l)]$. 
By convention, a homomorphism is an identity mapping when applied to empty vectors and to empty tuples.) 

We define homomorphisms in the same way for the case where either one of $\varphi({\bar Y})$ and $\psi({\bar Z})$ (or both) is a conjunction of facts. 
Further, for a conjunction $C$ of disequality atoms and two conjunctions  $\varphi({\bar Y})$ and $\psi({\bar Z})$ of  relational atoms or of facts, we say that every homomorphism, $h$, from $\varphi({\bar Y})$ to $\psi({\bar Z})$ is also a homomorphism from $\varphi({\bar Y})$ to $\psi({\bar Z})  \wedge C$. 
We will denote homomorphisms by lowercase letters $g$, $h$, $\ldots \ $, possibly with subscripts. 

Now suppose we are given a conjunction $\varphi({\bar Y})$ of relational atoms, a conjunction $\psi({\bar Z})$ of facts, and a conjunction $C$ of disequalities over variables in $\bar Y$ and constants in {\sc Const}. 
Suppose there is a homomorphism, $h$, from $\varphi({\bar Y})$ to $\psi({\bar Z})$, such that for each atom of the form $S$ $\neq$ $T$ in $C$, the values $h(S)$ and $h(T)$ are distinct elements of {\sc Const} $\cup$ {\sc Var}. 
Then we say that $h$ is 
 a {\em valuation} from $\varphi({\bar Y}) \wedge C$ to $\psi({\bar Z})$. 
We will use Greek letters $\mu$, $\nu$, $\ldots \ $, possibly with subscripts, for valuations. 

Given a $k$-ary $CQ^{\neq}$ query $Q({\bar X})$ and given an instance $I$, which we interpret as a conjunction of all the facts in $I$. Then {\em the answer to $Q$ on} $I$, denoted $Q(I)$, is 

\vspace{-0.1cm}

\begin{tabbing} 
$Q(I) = \{ \ {\nu({\bar X})} \ | \ \nu$ is a valuation from $body_{(Q)}$ to $I \ \}$ . 
\end{tabbing} 

\vspace{-0.1cm} 

\noindent 
(When $k$ $=$ $0$, i.e., $Q$ is a {\em Boolean query,} $\nu({\bar X})$ is the empty tuple.) 
Further, for  a $UCQ^{\neq}$ query $Q$ defined by $l$ $\geq$ $1$ 
rules $\{$ $Q^{(1)}$, $\ldots,$ $Q^{(l)}$ $\}$, and for an 
instance $I$, {\em the  answer to} $Q$ {\em on} $I$ is the union $\cup_{i=1}^l Q^{(i)}(I)$. 
By convention, for every trivial $UCQ^{\neq}$ query $Q$ and for every instance $I$, we have $Q(I)$ $=$ $\emptyset$. 

{\bf Query containment.} A query $Q_1$ {\it is contained in query $Q_2$,} denoted $Q_1 \sqsubseteq Q_2,$ if $Q_1(I) \subseteq Q_2(I)$ for every instance $I$. A classic result in~\cite{ChandraM77} by Chandra and Merlin states that a necessary and sufficient condition for the containment  $Q_1 \sqsubseteq Q_2,$ for CQ queries $Q_1$ and $Q_2$ of the same arity, is the existence of a containment mapping from $Q_2$ to $Q_1.$ Here, a {\it containment mapping} \cite{ChandraM77} from CQ query $Q_2({\bar X}_2)$ to CQ query $Q_1({\bar X}_1)$ is a homomorphism $h$ from $body_{(Q_2)}$ to $body_{(Q_1)}$ such that $h(\bar{X}_2) = \bar{X}_1$. By the results in 
\cite{LevyMSS95}, this containment test of \cite{ChandraM77} remains true when $Q_1$ has built-in predicates. Thus, the same test holds in particular when $Q_1$ is a $CQ^{\neq}$ query. It follows that, for $Q_1$ a $UCQ^{\neq}$ query and for $Q_2$ a CQ query, determining whether $Q_1 \sqsubseteq Q_2$ is decidable. Indeed, the containment holds iff 
for each $CQ^{\neq}$ rule $Q^{(i)}_1$ $\in$ ${\cal S}(Q_1)$, $i$ $\in$ $[1,$ $l]$, we have $Q^{(i)}_1$ $\sqsubseteq$ $Q_2$.

{\bf Canonical database.} Every $CQ^{\neq}$ query $Q$ can be regarded as a  symbolic ground instance $I^{(Q)}$. $I^{(Q)}$ is defined as the result of 
turning each relational atom $P_i(\ldots)$ in $body_{(Q)}$ into a tuple in the relation $I^{(Q)}[P_i]$. 
The procedure is to keep each constant in the body of $Q$, and to replace consistently each variable in the body of $Q$ by a distinct constant different from all the constants in  $Q$. The tuples that correspond to the resulting ground facts are the only tuples in the {\it  canonical database} $I^{(Q)}$ for $Q$, which is unique up to isomorphism.  

{\em Remark.} We have defined $CQ^{\neq}$-rules as not having explicit equality atoms in their bodies. As a result and by definition of canonical database, we are restricting our consideration to the set of all and only {\em satisfiable} $CQ^{\neq}$-rules/queries. (A $CQ^{\neq}$-rule/query $Q$ is {\em satisfiable} iff there exists an instance $I$ such that $Q(I)$ $\neq$ $\emptyset$.) 

\vspace{-0.1cm} 

\subsection{Dependencies and Chase} 
\label{dep-chase-prelims-sec} 

{\bf Embedded dependencies.} We consider dependencies $\sigma$ of the form 
%
%
\begin{equation} 
\label{basic-emb-dep-eq} 
\sigma: \phi(\bar{U},\bar{V}) \rightarrow \exists \bar{W} \ \psi(\bar{U},\bar{W})  
\end{equation} 


\noindent 
with $\phi$ and $\psi$ conjunctions of relational atoms, possibly with equations added. (All the variables in $\bar U$, $\bar V$ are understood to be universally quantified.) Such dependencies, called {\it embedded dependencies,} are expressive enough to specify all usual integrity constraints, such as keys, foreign keys, and inclusion dependencies~\cite{AbiteboulHV95}. 
If $\psi$ is a single equation, then $\sigma$ is an {\it equality-generating dependency (egd).} 
If $\psi$ consists only of relational atoms, then $\sigma$ is a {\it tuple-generating 
dependency (tgd).} We follow \cite{FaginKMP05} in allowing constants in egds and tgds. Each set of embedded dependencies without constants is equivalent to a set 
of tgds and egds~\cite{AbiteboulHV95}. We write $I \models \Sigma$ if instance $I$ satisfies all elements of set $\Sigma$ of dependencies. All the sets $\Sigma$ that we refer to are finite. 

{\bf Query containment under dependencies.} We say that {\em query $Q$ is contained in query} $P$ {\it under set of dependencies} $\Sigma$, denoted $Q \sqsubseteq_{\Sigma} P,$ if for every instance $I \models \Sigma$ we have $Q(I)$ $\subseteq$ $P(I)$. Queries $Q$ and $P$ {\em are equivalent under} $\Sigma$, denoted $Q \equiv_{\Sigma} P,$ if both $Q \sqsubseteq_{\Sigma} P$ and $P \sqsubseteq_{\Sigma} Q$ hold. $Q$ and $P$ {\em are equivalent (in the absence of dependencies),} denoted $Q$ $\equiv$ $P$, if  $Q \equiv_{\emptyset} P$. 

{\bf Chase for CQ queries.} Given a CQ query $Q(\bar{X}) \  \leftarrow \ \xi(\bar{X},\bar{Y})$ and a tgd $\sigma$  as in Eq. (\ref{basic-emb-dep-eq}); 
%
assume w.l.o.g. 
that $Q$ has none of the variables in $\bar{W}$. The (standard \cite{GrecoMS12}) {\it chase of $Q$ with $\sigma$ is applicable} if there is a homomorphism $h$ from $\phi$ to $\xi$, 
such that $h$ cannot be extended to a homomorphism from  $\phi \wedge \psi$ to $\xi$.  Then, a (standard)  {\it chase step} on $Q$ with $\sigma$ and $h$ is a rewrite of $Q$ into a CQ query $Q^*(\bar{X}) \  \leftarrow \ \xi(\bar{X},\bar{Y}) \wedge \psi (h(\bar{U}),\bar{W})$. 
It can be shown that $Q^*$ $\equiv_{\{\sigma \}}$ $Q$ and that $Q^*$ $\sqsubseteq$ $Q$. 

We now define a (standard \cite{GrecoMS12}) chase step with an egd. Assume a CQ query $Q$, as before, and an egd $\sigma$ of the form $\phi(\bar{U}) \rightarrow U_1 = U_2.$ The {\it chase of $Q$ with $\sigma$ is applicable} if there is a homomorphism $h$ from $\phi$ to $\xi$ such that $h(U_1) \neq h(U_2)$. Suppose at least one of $h(U_1)$ and $h(U_2)$ is a variable; let w.l.o.g. $h(U_1)$ be a variable. Then a {\it chase step}  
on $Q$ with $\sigma$ and $h$ is a rewrite of $Q$ into a CQ query, $Q^*$, that results from replacing all occurrences of $h(U_1)$ in $Q$ by $h(U_2)$. Again, we have $Q^*$ $\equiv_{\{\sigma \}}$ $Q$ and $Q^*$ $\sqsubseteq$ $Q$. If, for an 
$h$ as above, $h(U_1)$ and $h(U_2)$ are distinct constants, then we say that {\em chase with $\sigma$ fails on} $Q$. In this case, $Q(I)$ $=$ $\emptyset$ on all $I$ $\models$ $\{ \sigma \}$.  

A {\it $\Sigma$-chase sequence} ${\cal C}$ (or just {\it chase sequence,} if $\Sigma$ is clear from the context) {\em for CQ query} $Q_0$  
is a sequence of CQ queries $Q_0, Q_1, \ldots$ such that each query $Q_{i+1}$ ($i \geq 0$) in ${\cal C}$ is obtained from $Q_i$ by a chase step $Q_i \Rightarrow^{\sigma} Q_{i+1}$ using a dependency $\sigma \in \Sigma$. 
A chase sequence $Q = Q_0, Q_1, \ldots, Q_n$ is {\it terminating} if $I^{(Q_n)} \models \Sigma$, 
where  $I^{(Q_n)}$ is the canonical database for $Q_n$. In 
this case we denote $Q_n$ by $(Q)^{\Sigma}$ and say that $(Q)^{\Sigma}$ is the (terminal) {\it result} of the chase. All chase results for a given CQ query are equivalent in the absence of dependencies~\cite{DeutschPods08}. 


{\bf Weakly acyclic dependencies \cite{FaginKMP05}.} 
Let $\Sigma$ be a set of tgds over schema {\bf T}. We construct the {\em dependency graph} of $\Sigma$, as follows. The nodes (positions) of the graph are all pairs $(T,$ $A)$, for $T$ $\in$ {\bf T} and $A$ an attribute of $T$. We now add edges:  For each tgd $\varphi({\bar X})$ $\rightarrow$ $\exists$ $\bar Y$ $\psi({\bar X}, {\bar Y})$ in $\Sigma$, and for each $X$ $\in$ $\bar X$ that occurs in $\varphi$ in position $(T$, $A)$ and that occurs in $\psi$, do the following.  

\begin{itemize} 
	\item For each occurrence of $X$ in $\psi$ in position $(S,$ $B)$, add a {\em regular edge} from $(T,$ $A)$ to $(S$, $B)$; and 
\vspace{-0.2cm} 
	\item For each existentially quantified variable $Y$ $\in$ $\bar Y$ and for each occurrence of $Y$ in $\psi$ in position $(R$, $C)$, add a {\em special edge} 
	from $(T$, $A)$ to $(R$, $C)$. 
\end{itemize} 

For a set $\Sigma$ of tgds and egds, with $\Sigma^{t}$ the set of all tgds in $\Sigma$, we say that $\Sigma$ is {\em weakly acyclic} if the dependency graph of $\Sigma^{t}$ does not have a cycle going through a special edge. 
Chase of CQ queries terminates in finite time under sets of weakly acyclic dependencies \cite{FaginKMP05}.


The following result is immediate 
from~\cite{AbiteboulHV95,DeutschDiss,DeutschPods08,JohnsonK84}.  


\vspace{-0.1cm} 

\begin{theorem}
\label{chase-theorem}
Given CQ queries $Q_1$, $Q_2$ and a set $\Sigma$ of embedded dependencies. Then 
	$Q_1 \sqsubseteq_{\Sigma} Q_2$ iff $(Q_1)^{\Sigma} \sqsubseteq Q_2$ in the absence of dependencies.
\end{theorem}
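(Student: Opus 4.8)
The plan is to reduce the statement to two facts already available from the preliminaries, together with the Chandra--Merlin canonical-database characterization of CQ containment. First, composing the single-step guarantees $Q_{i+1} \equiv_{\{\sigma\}} Q_i$ along a terminating chase sequence $Q_1 = Q_0, Q_1, \ldots, Q_n = (Q_1)^{\Sigma}$, and observing that $\sigma \in \Sigma$ forces every $I \models \Sigma$ to also satisfy $\{\sigma\}$, I would establish $(Q_1)^{\Sigma} \equiv_{\Sigma} Q_1$, i.e. $Q_1(I) = (Q_1)^{\Sigma}(I)$ for every $I \models \Sigma$. Second, I would use the defining property of a terminating chase, namely $I^{((Q_1)^{\Sigma})} \models \Sigma$, where $I^{((Q_1)^{\Sigma})}$ is the canonical database of the chase result. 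Finally, recall that for CQ queries the containment $(Q_1)^{\Sigma} \sqsubseteq Q_2$ holds (in the absence of dependencies) iff the frozen head tuple $\bar x$ of $(Q_1)^{\Sigma}$ lies in $Q_2(I^{((Q_1)^{\Sigma})})$; this is exactly the existence of a containment mapping from $Q_2$ to $(Q_1)^{\Sigma}$.

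For the ($\Leftarrow$) direction, assume $(Q_1)^{\Sigma} \sqsubseteq Q_2$ with no dependencies, and fix an arbitrary $I \models \Sigma$. Then $Q_1(I) = (Q_1)^{\Sigma}(I)$ by the equivalence under $\Sigma$, and $(Q_1)^{\Sigma}(I) \subseteq Q_2(I)$ because the ordinary containment holds on every instance. Chaining these gives $Q_1(I) \subseteq Q_2(I)$; since $I \models \Sigma$ was arbitrary, $Q_1 \sqsubseteq_{\Sigma} Q_2$. This direction uses only the equivalence fact and is essentially bookkeeping.

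For the ($\Rightarrow$) direction, assume $Q_1 \sqsubseteq_{\Sigma} Q_2$ and take the canonical database $I^{((Q_1)^{\Sigma})}$ as the test instance. Because the chase terminated, $I^{((Q_1)^{\Sigma})} \models \Sigma$, so the hypothesis applies to it. The freezing map is itself a valuation from $body_{((Q_1)^{\Sigma})}$ to $I^{((Q_1)^{\Sigma})}$, hence $\bar x \in (Q_1)^{\Sigma}(I^{((Q_1)^{\Sigma})})$; by $(Q_1)^{\Sigma} \equiv_{\Sigma} Q_1$ together with $I^{((Q_1)^{\Sigma})} \models \Sigma$ this yields $\bar x \in Q_1(I^{((Q_1)^{\Sigma})})$, and then $Q_1 \sqsubseteq_{\Sigma} Q_2$, instantiated at this instance, gives $\bar x \in Q_2(I^{((Q_1)^{\Sigma})})$. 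By the Chandra--Merlin test recalled above, this is equivalent to $(Q_1)^{\Sigma} \sqsubseteq Q_2$ in the absence of dependencies, which is the desired conclusion.

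The step I expect to be the crux is the ($\Rightarrow$) direction, and specifically the choice of $I^{((Q_1)^{\Sigma})}$ as the witness: the whole argument hinges on the fact that this particular instance simultaneously satisfies $\Sigma$ (so that the $\Sigma$-containment hypothesis can be invoked) and serves as the canonical database against which ordinary CQ containment is tested. I would also dispatch the degenerate case where chase with some egd fails on $Q_1$: then $Q_1(I) = \emptyset$ for every $I \models \Sigma$, so $Q_1 \sqsubseteq_{\Sigma} Q_2$ holds vacuously, and one must check this is consistent with interpreting $(Q_1)^{\Sigma}$ as a query with empty answer on all instances, which is trivially contained in $Q_2$. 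Apart from this edge case, no nontrivial calculation is required beyond invoking the chase-step properties and Chandra--Merlin.
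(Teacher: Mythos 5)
Your proof is correct and follows exactly the classical route: the paper itself gives no proof of this theorem, stating only that it is immediate from the cited literature (Abiteboul et al., Deutsch, Johnson--Klug), and the argument in those sources is precisely what you wrote --- chase-step equivalences give $(Q_1)^{\Sigma}\equiv_{\Sigma}Q_1$, termination gives $I^{((Q_1)^{\Sigma})}\models\Sigma$, and the canonical database of the chase result serves as the single witness instance on which the Chandra--Merlin test is applied. Your handling of the failed-chase edge case is also consistent with the paper's conventions on trivial/empty queries.
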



\vspace{-0.1cm} 

{\bf Chase of instance.} Let $I$ be an instance of schema {\bf P}, and $\Sigma$ a set  of egds and tgds; we interpret $I$ as a conjunction of its facts. We follow \cite{DeutschPods08} in defining chase of $I$ with $\Sigma$ in the same way as chase of a CQ query with $\Sigma$. That is, in the chase steps we treat each distinct null in $I$ as a distinct variable (in the chase for CQ queries). Further, each chase step with a tgd that has existential variables introduces, in the result of the chase step, a distinct new null for each existential variable of the tgd. Chase sequences and chase termination are also defined in the same way as for CQ queries; the result $I'$ of the chase of $I$ with $\Sigma$ always satisfies $\Sigma$, that is, $I'$ $\models$ $\Sigma$. 


\vspace{-0.2cm} 

\section{The Problem Statements} 
\label{problem-statement-sec}

In this section we formalize the problems of finding and determining certain query answers and of query containment, under CWA and in presence of dependencies. We then establish a direct 
relationship between the latter two problems in the case of CQ view definitions. 

\vspace{-0.1cm} 

\subsection{Certain Query Answers and Query Containment w.r.t. Views and Dependencies} 
\label{probl-stmt-defs-sec} 

We begin by introducing the notion of ``materialized-view setting'' (``setting'' for short). Suppose that we are given a schema {\bf P} and a set of dependencies $\Sigma$ on {\bf P}.  Let $\cal V$ be a finite set of relation symbols not in {\bf P}, with each symbol  {\em (view name)}  $V$ $\in$ $\cal V$ of some 
arity $k_V$ $\geq$ $0$. Each $V$ $\in$ $\cal V$ is associated with a $k_V$-ary query  on the schema {\bf P}. We call $\cal V$ a {\em set of views on} {\bf P}, and call the query for each $V$ $\in$ $\cal V$ the {\em definition of the view} $V$, or {\em the query for} $V$. We assume that  the query for each $V$ $\in$ $\cal V$ is associated with ($V$ in) the set $\cal V$. We call a ground instance $MV$ of schema $\cal V$ a {\em set of view answers for} $\cal V$. 

Let $I$ be a ground instance of schema {\bf P}. We say that $I$ {\em is a $\Sigma$-valid base instance for} $\cal V$ {\em and} $MV$, denoted by $\cal V$ $\Rightarrow_{I,{\Sigma}}$ $MV$, whenever (a) $I$ $\models$ $\Sigma$, and (b) the answer $V(I)$ to the query for $V$ on the instance $I$ is identical to the relation $MV[V]$ in $MV$, for each $V$ $\in$ $\cal V$. (This is the {\em closed-world assumption (CWA),} as defined in, e.g., \cite{AbiteboulD98}, with an added requirement that $I$ $\models$ $\Sigma$.) Further, we say that $MV$ {\em is a $\Sigma$-valid set of view answers for} $\cal V$, denoted by $\cal V$ $\Rightarrow_{*,{\Sigma}}$ $MV$, whenever there exists a $\Sigma$-valid base instance for $\cal V$ and $MV$. 

\vspace{-0.1cm} 

\begin{definition}{Materialized-view setting ${\cal M}\Sigma$} 
\label{mat-setting-def} 
Given a schema {\bf P}, a set $\Sigma$ of dependencies without constants on {\bf P}, a set $\cal V$ of views on {\bf P}, and a ($\Sigma$-valid) set $MV$ 
of view answers for $\cal V$:  
We call ${{\cal M}\Sigma}$ $=$ ({\bf P}, $\Sigma$, $\cal V$, $MV$) the {\em (valid) materialized-view setting for} {\bf P}, $\Sigma$, $\cal V$, and $MV$. 
\end{definition} 

\vspace{-0.1cm} 



Let ${\cal M}\Sigma$ be a materialized-view setting $(${\bf P}, $\Sigma$, $\cal V$, $MV)$, 
and let $Q$ be a query over {\bf P}. We define 
the {\em set of certain answers of} $Q$ {\em w.r.t. the setting} ${\cal M}\Sigma$ as 


\begin{tabbing} 
$certain_{{\cal M}\Sigma}(Q)=\bigcap \ \{ Q(I)$ | $I$ s.t. $\cal V$ $\Rightarrow_{I,{\Sigma}}$ $MV$ in ${{\cal M}\Sigma}\}.$
\end{tabbing} 

\noindent 
That is, the set of certain answers of a query w.r.t. a setting is understood, as usual, as the set of all tuples that are in the answer to the query on all the instances relevant to the setting. (Cf. \cite{AbiteboulD98} for the case $\Sigma$ $=$ $\emptyset$.)


\vspace{-0.1cm} 

\begin{definition}{Certain-query-answer problem in a materialized-view setting} 
\label{certain-query-answer-def} 
Given a setting ${{\cal M}\Sigma}$ $=$ $(${\bf P}, $\Sigma$, $\cal V$, $MV)$, a $k$-ary ($k$ $\geq$ $0$) query $Q$ over the schema {\bf P} in ${{\cal M}\Sigma}$, and a ground $k$-tuple $\bar t$. 
Then the {\em certain-query-answer  problem for $Q$ and $\bar t$ in} ${\cal M}\Sigma$ is to determine whether $\bar t$ $\in$ $certain_{{\cal M}\Sigma}(Q)$. 
\end{definition} 

\vspace{-0.1cm} 




It is easy to show that a tuple $\bar t$ can be a certain answer to a query $Q$ in a setting ${\cal M}\Sigma$ only if all the values in $\bar t$ are in $consts({\cal M}\Sigma)$, which denotes 
the set of constants occurring in ${\cal M}\Sigma$. (For a given materialized-view setting ${\cal M}\Sigma$ $=$ $(${\bf P}, $\Sigma$, $\cal V$, $MV)$, we define $consts({\cal M}\Sigma)$ as the union of $adom(MV)$ with the set of all the  constants used in the definitions of the views $\cal V$.) 
By this observation, in Definition~\ref{certain-query-answer-def} we can restrict our consideration to the tuples $\bar t$ with this property. 

The problem as in Definition~\ref{certain-query-answer-def}, the {\em problem of determining certain query answers,} will be featured in our characteristic of the relationship between the extensions of the problems of \cite{AbiteboulD98} and of \cite{ZhangM05} to the case of dependencies under CWA. 
We will also consider the {\em problem of finding the set of  certain query answers w.r.t. a setting:} Given a setting ${\cal M}\Sigma$ and a query $Q$, 
find the set of certain answers of $Q$ w.r.t. ${\cal M}\Sigma$. 
In Sections~\ref{query-cont-algor-sec}--\ref{main-vv-dexchg-sec}, we will introduce algorithms for solving the ``CQ weakly acyclic case'' of this problem and of the problem of Definition~\ref{certain-query-answer-def}. 
The {\em CQ weakly acyclic case} of each problem is the case where: (i) each 
${\cal M}\Sigma$ is {\em conjunctive} (i.e., all the views in ${\cal M}\Sigma$ are defined as CQ queries) and {\em weakly acyclic} (i.e., $\Sigma$ in ${\cal M}\Sigma$ is a set of weakly acyclic embedded dependencies), and (ii) each $Q$ is a CQ query. 

We now turn our attention to the problem of query containment w.r.t. a setting ${\cal M}\Sigma$. Our Definition~\ref{main-mendelz-def} extends the formalization of this problem due to \cite{ZhangM05}, to the case of dependencies on the relevant base instances. 

\vspace{-0.1cm} 

\begin{definition}{${\cal M}\Sigma$-conditional query containment}  
\label{main-mendelz-def} 
Given a materialized-view setting ${\cal M}\Sigma$ and queries $Q_1$ and $Q_2$ over the schema {\bf P} in ${\cal M}\Sigma$. Then we say that $Q_1$ {\em  is ${\cal M}\Sigma$-conditionally contained in} $Q_2$, denoted $Q_1$ $\sqsubseteq_{{\cal M}\Sigma}$ $Q_2$, iff for each instance $I$ s.t. $\cal V$ $\Rightarrow_{I,{\Sigma}}$ $MV$ in ${\cal M}\Sigma$, we have $Q_1(I)$ $\subseteq$ $Q_2(I)$. Further, 
 {\em the problem of ${\cal M}\Sigma$-conditional containment for} $Q_1$ and $Q_2$ 
 is to determine whether $Q_1$ $\sqsubseteq_{{\cal M}\Sigma}$ $Q_2$. 
\end{definition} 

\vspace{-0.1cm} 


\subsection{An Illustration} 
\label{probl-stmt-illustr-sec} 

In this subsection we recast Example~\ref{intro-main-three-ex} into the formal terms of Section~\ref{probl-stmt-defs-sec}. The results of this paper permit us to obtain correct solutions to all the three problems formulated at the end of Example~\ref{intro-formalized-ex}.

\begin{example} 
\label{intro-formalized-ex} 
The 
setting ${\cal M}\Sigma$ outlined in Example~\ref{intro-main-three-ex} uses the schema\footnote{We abbreviate the relation names of Example~\ref{intro-main-three-ex} using the first letter of each name.}  {\bf P} $=$ $\{$$E$, $H$, $O$$\}$ and a weakly acyclic set $\{ \sigma \}$ of dependencies, with $\sigma$ as follows:   

\vspace{-0.1cm} 

\begin{tabbing} 
$\sigma$: $E(X,Y,Z) \wedge H(Y) \rightarrow \exists S \ \ O(X,S)$. 
\end{tabbing} 

\vspace{-0.1cm} 

Further, $\cal V$ $=$ $\{$$U$, $V$, $W$$\}$ is the set of CQ views in ${\cal M}\Sigma$, with the view definitions as follows: 

\vspace{-0.1cm} 

\begin{tabbing} 
Hop me b \= boo \kill
$U(X)$ \> $\leftarrow H(X).$ \\ 
$V(X,Y)$ \> $\leftarrow E(X,Y,Z).$ \\ 
$W(Y,Z)$ \> $\leftarrow E(X,Y,Z).$ 
\end{tabbing} 

\vspace{-0.1cm} 

Finally, for brevity we encode the constants of Example~\ref{intro-main-three-ex} as $c$ for {\tt johnDoe,} $d$ for {\tt sales,} and $f$ for {\tt 50000.} Then the set of view answers $MV$ of Example~\ref{intro-main-three-ex} can be recast for ${\cal M}\Sigma$ as $MV$ $=$ $\{ \ U(d), V(c,d), W(d,f) \ \}.$  


Now that we have specified a 
CQ weakly acyclic setting ${\cal M}\Sigma$, consider the CQ query $Q$ of Example~\ref{intro-main-three-ex}: 

\vspace{-0.1cm} 

\begin{tabbing} 
$Q(X,Z)$ $\leftarrow E(X,Y,Z), O(X,S).$ 
\end{tabbing} 

\vspace{-0.1cm} 

Consider another CQ query, $Q_1$, defined as follows: 
\begin{tabbing} 
$Q_1(c,f)$ $\leftarrow H(d), E(c,d,X), E(Y,d,f).$ 
\end{tabbing} 

\vspace{-0.1cm} 

For the ${\cal M}\Sigma$, $Q$, and $Q_1$ as above and for a tuple $\bar t$ $=$ $(c,f)$, we have the following problems as in Section~\ref{probl-stmt-defs-sec}: 

\begin{enumerate} 
	\item The certain-query-answer problem for $Q$ and $\bar t$ in ${\cal M}\Sigma$ is ``Is $\bar t$ a certain answer of $Q$ w.r.t. ${\cal M}\Sigma$?'' 
\vspace{-0.1cm} 
	\item The problem of finding the set of certain answers to $Q$ w.r.t.  ${\cal M}\Sigma$ is ``Return the set $certain_{{\cal M}\Sigma}(Q)$ for $Q$ and ${\cal M}\Sigma$''; and, finally, 
\vspace{-0.1cm} 
	\item The problem of ${\cal M}\Sigma$-conditional containment for $Q_1$ and $Q$ 
	is ``Does $Q_1$ $\sqsubseteq_{{\cal M}\Sigma}$ $Q$ hold?'' 
\end{enumerate} 
\vspace{-0.6cm} 
\end{example}

\vspace{-0.1cm}

\subsection{Relationship between the Problems} 
\label{probl-stmt-relationship-sec} 

We now establish a direct relationship between the certain-query-answer problem for a given $Q$, $\bar t$, and ${\cal M}\Sigma$, 
and the problem of ${\cal M}\Sigma$-conditional containment for $Q_1$ and $Q$, 
for the same $Q$ and ${\cal M}\Sigma$. (We prove the relationship for the case where all the views are defined as CQ queries.) 
Here, the query $Q_1$ is constructed from the given $Q$, $\bar t$, and ${\cal M}\Sigma$. A similar relationship was observed in \cite{AbiteboulD98} 
between  the certain-query-answer problem, for a range of query and view languages in the dependency-free case under OWA, and {\em unconditional} ($Q_1$ $\sqsubseteq$ $Q$) query containment. 
In contrast, our result holds under CWA, in presence of dependencies, and involves ${\cal M}\Sigma${\em -conditional} query containment. Due to this result, the algorithm that we introduce in Section~\ref{query-cont-algor-sec} for checking 
${\cal M}\Sigma$-conditional containment, can also be used to solve the certain-query-answer problem, in the CQ weakly acyclic case of each problem. (The {\em CQ weakly acyclic case} of the containment problem covers CQ weakly acyclic settings and CQ input queries.) 

We formulate the main result of this section, Theorem~\ref{problem-relationship-thm}, using the following notation. For a set $\cal V$ $=$ $\{ $$V_1$, $\ldots$, $V_m$$\}$ of $m$ $\geq$ $1$ CQ views and for a set $MV$ of view answers for $\cal V$,  
consider the conjunction 

\vspace{-0.1cm} 

\begin{tabbing} 
huhu mu \= bubu \kill 
${\cal C}_{MV}$ $=$ $\bigwedge_{i=1}^m \bigwedge_{j=1}^{l_i} V_i({\bar t}_{ij})$ 
%
\end{tabbing} 

\vspace{-0.1cm} 


\noindent 
The conjunction is over all the ground facts $V_i({\bar t}_{ij})$ in the set $MV$. (For each $i$ $\in$ $[1,$ $m]$, the relation $MV[V_i]$ in $MV$ is of cardinality $l_i$ $\geq$ $0$.) That is, we treat each ground fact in $MV$ as a relational atom, and ${\cal C}_{MV}$ is the conjunction of all these relational atoms. (For each $i$ such that $MV[V_i]$ $=$ $\emptyset$, we define $\bigwedge_{j=1}^{l_i} V_i({\bar t}_{ij})$ $:=$ $true$.)  

Observe that ${\cal C}_{MV}$ can be treated as the body of a CQ query over the schema $\cal V$. Thus, we can use the view definitions in $\cal V$ to do the standard {\em expansion} (as in a rewriting \cite{LevyMSS95}) of ${\cal C}_{MV}$ into a conjunction of atoms, ${\cal C}^{exp}_{MV}$, over the schema {\bf P}. We call ${\cal C}^{exp}_{MV}$ {\em the expansion of $MV$ over} {\bf P}. As an illustration, in the setting of Example~\ref{intro-formalized-ex}, ${\cal C}_{MV}$ is $U(d) \wedge V(c,d) \wedge W(d,f)$, and 
${\cal C}^{exp}_{MV}$ is the body of the query $Q_1$ in the example. 

We now formulate Theorem~\ref{problem-relationship-thm}. (Due to the page limit, the straightforward proof and other details can be found in Appendix~\ref{app-rewriting-sec}.) This result says that for a valid  CQ materialized-view setting ${\cal M}\Sigma$ and for an arbitrary query $Q$ and an arbitrary ground tuple $\bar t$, there exists a (constructible) CQ query $Q_1$ such that the certain-query-answer problem for $Q$ and $\bar t$ in ${\cal M}\Sigma$ is the problem of ${\cal M}\Sigma$-conditional containment for $Q_1$ and $Q$. 


\vspace{-0.1cm} 

\begin{theorem} 
\label{problem-relationship-thm} 
Given a valid CQ materialized-view setting ${\cal M}\Sigma$ $=$ $(${\bf P}, $\Sigma$, $\cal V$, $MV)$, a $k$-ary ($k$ $\geq$ $0$) query $Q$ defined in an arbitrary query language over {\bf P}, and a $k$-tuple $\bar t$ of values in $consts({\cal M}\Sigma)$. Consider the CQ query $Q_1(\bar t) \leftarrow {\cal C}^{exp}_{MV}$. Then $\bar t$ $\in$ $certain_{{\cal M}\Sigma}(Q)$ if and only if $Q_1$ is  ${\cal M}\Sigma$-conditionally contained in $Q$. 
\end{theorem}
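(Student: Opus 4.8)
The plan is to reduce the entire biconditional to one key observation: on every instance $I$ relevant to the setting --- that is, every $I$ with ${\cal V}$ $\Rightarrow_{I,\Sigma}$ $MV$ --- the query $Q_1$ evaluates to exactly the singleton $\{\bar t\}$. Once this is in hand, both directions follow immediately from the definitions of $certain_{{\cal M}\Sigma}(Q)$ and of ${\cal M}\Sigma$-conditional containment, and, crucially, the argument never inspects the internal structure of $Q$, treating $Q(I)$ as an opaque set of answer tuples; this is what lets the result hold for $Q$ in an \emph{arbitrary} query language.

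First I would prove that claim. Since the head vector of $Q_1$ is the ground tuple $\bar t$, all of whose values are constants by hypothesis, every valuation from $body_{(Q_1)} = {\cal C}^{exp}_{MV}$ to $I$ returns the same output tuple $\bar t$; hence $Q_1(I) \subseteq \{\bar t\}$ on every instance $I$. For the reverse inclusion on a relevant $I$, I would exhibit a valuation from ${\cal C}^{exp}_{MV}$ to $I$. By construction, ${\cal C}^{exp}_{MV}$ is the conjunction, over all ground facts $V_i({\bar t}_{ij})$ in $MV$, of the bodies of the corresponding view definitions, with each view's head vector instantiated to ${\bar t}_{ij}$ and with the nonhead (existential) variables of distinct atoms renamed apart. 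For each such $V_i({\bar t}_{ij})$, the closed-world condition ${\cal V}$ $\Rightarrow_{I,\Sigma}$ $MV$ gives the equality $V_i(I) = MV[V_i]$, so ${\bar t}_{ij} \in V_i(I)$, which by the semantics of $CQ^{\neq}$ queries yields a valuation from $body_{(V_i)}$ to $I$ mapping the head vector to ${\bar t}_{ij}$. Because the expansion keeps the existential variables of distinct view atoms disjoint, the union of these per-atom valuations is a single well-defined valuation from ${\cal C}^{exp}_{MV}$ to $I$, whence $\bar t \in Q_1(I)$. The two inclusions give $Q_1(I) = \{\bar t\}$, as claimed.

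With the claim established, the directions are immediate. For the forward direction, suppose $\bar t \in certain_{{\cal M}\Sigma}(Q)$; then $\bar t \in Q(I)$ for every relevant $I$, so $Q_1(I) = \{\bar t\} \subseteq Q(I)$ for every relevant $I$, i.e. $Q_1 \sqsubseteq_{{\cal M}\Sigma} Q$. For the converse, suppose $Q_1 \sqsubseteq_{{\cal M}\Sigma} Q$; then for every relevant $I$ we have $\{\bar t\} = Q_1(I) \subseteq Q(I)$, so $\bar t \in Q(I)$, and since this holds for all relevant $I$ we conclude $\bar t \in certain_{{\cal M}\Sigma}(Q)$. Here I would note that the hypothesis that ${\cal M}\Sigma$ is a \emph{valid} setting (Definition~\ref{mat-setting-def}) guarantees at least one relevant instance exists, so the equivalence is not vacuous.

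I expect no serious obstacle --- the paper itself calls this straightforward. The only points that demand care are (i) invoking the closed-world assumption in its exact form, the equality $V_i(I) = MV[V_i]$ rather than mere containment, since that is what licenses the valuation out of each view body, and (ii) verifying that the fresh-variable discipline used in forming ${\cal C}^{exp}_{MV}$ lets the per-view valuations be amalgamated without clashes. The degenerate cases are harmless: if every $MV[V_i]$ is empty then ${\cal C}^{exp}_{MV}$ is $true$ and the amalgamated valuation is trivial, and if $k = 0$ (Boolean $Q$) then $\bar t$ is the empty tuple and the same computation still yields $Q_1(I) = \{\bar t\}$ on every relevant $I$.
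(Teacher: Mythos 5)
Your proof is correct, and it rests on exactly the fact the paper's argument rests on --- that $Q_1(I)=\{\bar t\}$ on every instance $I$ with ${\cal V}\Rightarrow_{I,\Sigma}MV$ --- but you reach that fact by a more direct and self-contained route. The paper (Appendix~\ref{app-rewriting-sec}) derives Theorem~\ref{problem-relationship-thm} as an instance of a general ``rewriting approach'': it introduces head-instantiated and maximal $MV$-induced rewritings over the view schema $\cal V$, observes that $Q_1$ is precisely the expansion $(R^*_{\bar t})^{exp}$ of the maximal $MV$-induced rewriting $R^*_{\bar t}$ for $\bar t$ (whose unique answer on $MV$ is $\bar t$ by Proposition~\ref{mv-induced-properties-prop}), and then transfers this to all relevant base instances via Proposition~\ref{expansion-prop}, which states $R^{exp}(I)=R(MV)$ whenever ${\cal V}\Rightarrow_{I,\emptyset}MV$; the theorem then falls out of Theorems~\ref{rewr-approach-correct-thm} and~\ref{rewr-approach-thm}. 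You instead compute $Q_1(I)$ by hand: the ground head gives $Q_1(I)\subseteq\{\bar t\}$, and the CWA equality $V_i(I)=MV[V_i]$ supplies a per-view-fact valuation into $I$ which the fresh-variable discipline of the expansion lets you amalgamate, giving the reverse inclusion. What the paper's detour buys is the surrounding machinery (quantification over \emph{all} head-instantiated rewritings, reduction to the single maximal one), which it needs elsewhere in that appendix; what your version buys is brevity and the fact that the proof visibly never inspects $Q$, making the ``arbitrary query language'' claim transparent. The only points worth tightening are the degenerate case where $MV$ is entirely empty (the body of $Q_1$ then has no relational atom, which the paper's definition of a $CQ^{\neq}$-rule with $n\geq 1$ does not admit --- a corner the paper itself glosses over), and an explicit remark that the existence of the head-binding homomorphism used to form each expansion block is itself guaranteed by $\bar t_{ij}\in V_i(I)$, which your construction of the valuation already supplies implicitly.
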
 

\vspace{-0.1cm} 

Whenever determining validity of a setting ${\cal M}\Sigma$ is decidable (as is the case for, e.g., CQ weakly acyclic settings, via our view-verified data-exchange approach of Section~\ref{main-vv-dexchg-sec}, see Appendix~\ref{vv-correctness-sec}), ${\cal M}\Sigma$ not being valid implies that $certain_{{\cal M}\Sigma}(Q)$ $=$ $\emptyset$ for every query $Q$. 



\vspace{-0.2cm} 

\section{The Query-Containment Problem} 
\label{query-cont-algor-sec} 

In this section we outline our approach to solving the problem of ${\cal M}\Sigma$-conditional query containment. (See Definition~\ref{main-mendelz-def}.) We show that this approach  is a correct algorithm for the 
CQ weakly acyclic case of the problem. 
Thus, our algorithm extends to the case of weakly acyclic dependencies the solution of \cite{ZhangM05}  for their problem of conditional containment between CQ queries in presence of materialized CQ views.\footnote{A full version of \cite{ZhangM05}, including proofs of its results, has never been published.} 
We show that our extension of the method of \cite{ZhangM05} is not trivial. By Theorem~\ref{problem-relationship-thm}, 
the approach reported in this section is also a correct algorithm for the CQ weakly acyclic cases of the certain-query-answer problem.  

\vspace{-0.1cm} 

\subsection{Intuition and Discussion} 
\label{zmendelz-intuition-sec} 

We begin by sketching our containment-checking approach via an extended example. The example illustrates, in particular, how disequalities and disjunction may arise in the 
chase of a CQ query in this approach. 

\begin{example} 
\label{disj-neq-ex} 
Consider CQ queries $Q_1$ and $Q_2$: 

\vspace{-0.1cm} 

\begin{tabbing} 
$Q_1(X) \leftarrow P(X,Y).$ \\ 
$Q_2(X) \leftarrow P(X,Y), R(Z).$  
\end{tabbing} 

\vspace{-0.1cm} 

Consider a dependency (full tgd) $\sigma$ on the schema {\bf P} $=$ $\{ P, R \}$, a view $V$, and an instance $MV$, as follows. 

\vspace{-0.1cm} 

\begin{tabbing} 
$\sigma: P(X,X) \rightarrow R(X)$ \\ 
$V(X) \leftarrow P(X,X).$ \\ 
$MV$ $=$ $\{$ $V(c)$ $\}$.  
\end{tabbing} 

\vspace{-0.1cm} 

Let us specify a setting ${\cal M}\Sigma$ as $(${\bf P}, $\{ \sigma \},$ $\{ V \}, MV)$. The setting ${\cal M}\Sigma$ is CQ weakly acyclic by definition. 

By the results reviewed in Section~\ref{prelim-sec}, the query $Q_1$ is not unconditionally contained in $Q_2$, either in the absence of dependencies or in presence of  $\sigma$. At the same time, by our results in this section, $Q_1$ {\em is} ${\cal M}\Sigma$-contained in $Q_2$. Our approach to proving it is by chasing the query $Q_1$ using a ``$\neq${\em -transformation,}'' $\sigma_{(\neq)}$, of the given tgd $\sigma$ on the schema {\bf P},  
as well as ``$MV$-induced dependencies.'' (We introduce both kinds of dependencies in Section~\ref{containment-dependencies-sec}.) 
The first step of the approach is to conjoin the body of $Q_1$ with ${\cal C}^{exp}_{MV}$ $=$ $P(c,c)$ (see Section~\ref{probl-stmt-relationship-sec} for the definition of ${\cal C}^{exp}_{MV}$): 

\vspace{-0.1cm} 

\begin{tabbing} 
$Q'_1(X) \leftarrow P(X,Y), P(c,c).$  
\end{tabbing} 

\vspace{-0.1cm} 

Now the only $MV$-induced dependency, $\tau_V$, is 

\vspace{-0.1cm} 

\begin{tabbing} 
$\tau_V: P(X,Y) \rightarrow (X = c \wedge Y = c) \vee (X \neq Y)$. 
\end{tabbing} 

\vspace{-0.1cm} 

\noindent 
It says that, for each subgoal of the form $P(X,Y)$ that could arise in the chase of $Q'_1$ with the dependencies $\tau_V$ and $\sigma_{(\neq)}$: Either (i) the subgoal must become $P(c,c)$, which would (correctly) give rise to  $V(c )$ in $MV$, or (ii)  
$P(X,Y)$ must be accompanied by the disequality $X \neq Y$, to prevent atoms of the form $V(d)$, where $d$ is a constant not equal to $c$, from arising in $MV$. (These requirements must be satisfied for our approach to be correct, see Proposition~\ref{zmendelz-valuation-prop} in Section~\ref{containment-procedure-sec}.) 

The chase of $Q'_1$ with $\tau_V$ produces a $UCQ^{\neq}$ query: 

\vspace{-0.1cm} 

\begin{tabbing} 
$Q^{(a)}_1(c ) \leftarrow P(c,c), P(c,c).$ (We then drop the duplicate.) \\   
$Q^{(b)}_1(X) \leftarrow P(X,Y), P(c,c), X \neq Y.$ 
\end{tabbing} 

\vspace{-0.1cm} 

Now the dependency $\sigma_{(\neq)}$, which we obtain from the tgd $\sigma$, is $\sigma_{(\neq)}: P(X,Y) \rightarrow R(X) \vee X \neq Y.$ Applying $\sigma_{(\neq)}$ to the above $UCQ^{\neq}$ query yields the $UCQ^{\neq}$ result $(Q_1)^{{\cal M}\Sigma}$ $=$ $\{ Q^{(1)}_1, Q^{(2)}_1, Q^{(3)}_1 \}$ of chasing the query $Q'_1$ with the dependencies $\tau_V$ and $\sigma_{(\neq)}$: 

\vspace{-0.1cm} 

\begin{tabbing} 
$Q^{(1)}_1(c ) \leftarrow P(c,c), R( c).$  \\   
$Q^{(2)}_1(X) \leftarrow P(X,Y), R(X), P(c,c), R(c ).$ \\ 
$Q^{(3)}_1(X) \leftarrow P(X,Y), X \neq Y, P(c,c), R(c ).$ 
\end{tabbing} 

\vspace{-0.1cm} 

Now the results of \cite{LevyMSS95} can be used to ascertain the unconditional containment of $(Q_1)^{{\cal M}\Sigma}$ in the query $Q_2$. We conclude that the query $Q_1$ is ${\cal M}\Sigma$-contained in $Q_2$. 

Finally, suppose that we change the query $Q_2$ slightly, by replacing its subgoal $R(Z)$ with $R(X)$. Then the same procedure as above can be used to show that the resulting query would {\em not}  ${\cal M}\Sigma$-contain the query $Q_1$. 
\end{example} 

In some particularly simple cases, queries $(Q_1)^{{\cal M}\Sigma}$ can be CQ queries; see Appendix~\ref{cannot-chase-mendelzon-sec}. In general in our approach, queries $(Q_1)^{{\cal M}\Sigma}$  are $UCQ^{\neq}$ queries. 

In our proposed approach for checking ${\cal M}\Sigma$-containm- ent of CQ queries, the intuition is the same as in cheching query containment in presence of dependencies \cite{AbiteboulHV95,DeutschDiss,DeutschPods08,JohnsonK84} (see Section~\ref{prelim-sec}). That is, 
to determine if a query $Q_1$ is contained in query $Q_2$ on a set of instances that are ``relevant'' to a set of view answers $MV$, 
		we chase 
		$Q_1$ to transform it into a query, $(Q_1)^{{\cal M}\Sigma}$, which is equivalent, by construction, to $Q_1$ on all the relevant instances. (The ``relevant instances'' are the $\Sigma$-valid base instances for the given $\cal V$ and $MV$.) In addition to this  property, the query $(Q_1)^{{\cal M}\Sigma}$, by its construction, ``exhibits the flavor of the relevant instances,'' in a very precise sense (see Proposition  \ref{zmendelz-valuation-prop} in Section~\ref{containment-procedure-sec}). 
		These properties permit us to use a test for unconditional containment of $(Q_1)^{{\cal M}\Sigma}$ in $Q_2$ to correctly determine whether the original query $Q_1$ is contained in $Q_2$ w.r.t. all the relevant instances. (See Theorem~\ref{main-mendelz-thm} in Section~\ref{containment-procedure-sec}.)
	
Zhang and Mendelzon in their paper  \cite{ZhangM05} did precisely the above chase, with precisely the same goals and results, in the special case where no dependencies hold on the relevant instances. As an illustration, suppose that in Example~\ref{disj-neq-ex} we set $\Sigma$ $:=$ $\emptyset$, while keeping the remaining inputs as they are. Then the approach of \cite{ZhangM05} for these inputs would derive the $UCQ^{\neq}$ query $\{ Q^{(a)}_1$, $Q^{(b)}_1 \}$ of that example, call this query $Q''_1$. 
		As $Q''_1$ is not unconditionally contained in the given query $Q_2$, the conclusion of \cite{ZhangM05} for these inputs would be that $Q_2$ does not contain $Q_1$ w.r.t. these inputs with $\Sigma$ $=$ $\emptyset$. 
		
Thus, in this current work we build  directly on the ideas and techniques of \cite{ZhangM05}.  At the same time, \cite{ZhangM05} does not make the chase process explicit, in the way in which it is explicit in the work (e.g., \cite{AbiteboulHV95,DeutschDiss,DeutschPods08,JohnsonK84})  on determining containment of queries in presence of 
dependencies. In particular, the paper \cite{ZhangM05}  does not introduce dependencies that look like 
$\tau_V$ in Example~\ref{disj-neq-ex}. As a result, the authors of \cite{ZhangM05} do not have to deal with the (arguably inelegant) extensions of embedded dependencies to dependencies that may have disjunction and disequalities on the right-hand side. (Appendix~\ref{mendelzon-prelims-sec} provides some details of the approach of \cite{ZhangM05}.) 
		
In this current paper, when extending the approach of \cite{ZhangM05} to the case of dependencies holding on the instances of interest, it has proved convenient for us to make explicit the $MV$-induced dependencies, such as $\tau_V$ in Example~\ref{disj-neq-ex}. Thus, in this work we introduce (in Section~\ref{containment-dependencies-sec}) dependencies that have both disjunctions and disequalities on the right-hand side. Disequalities in dependencies are necessary in our approach for determining ${\cal M}\Sigma$-conditional containment, see Section~\ref{containment-procedure-sec}. (As a side note, we will see in Section~\ref{main-vv-dexchg-sec} that disequalities in dependencies are {\em not} necessary for essentially the same approach to work correctly when solving the problem of finding the set of certain answers to a CQ query w.r.t. a CQ weakly acyclic materialized-view setting.) 

Not surprisingly, for CQ weakly acyclic settings ${\cal M}\Sigma$ and CQ queries $Q_1$ and $Q_2$ of interest, $Q_1$ $\sqsubseteq_{{\cal M}\Sigma}$ $Q_2$ does not necessarily imply any of the following: 
\begin{itemize} 
			\item $Q_1$ $\sqsubseteq$ $Q_2$; 
\vspace{-0.1cm} 
			\item $Q_1$ $\sqsubseteq_{\Sigma}$ $Q_2$; and 
\vspace{-0.1cm} 
			\item $Q_1$ $\sqsubseteq_{{\cal M}\emptyset}$ $Q_2$; here, by ${\cal M}\emptyset$ we denote the result of replacing $\Sigma$ by $\emptyset$ in ${\cal M}\Sigma$.
\end{itemize} 

\vspace{-0.1cm} 

\noindent 
(See Appendix~\ref{sigma-noemptyset-ex-sec} for all the details.)

\subsection{The Dependencies and Chase Rules} 
\label{containment-dependencies-sec} 

We now introduce dependencies that are used in the algorithm of Section~\ref{containment-procedure-sec}. The input to each run of the algorithm is a triple of the form $({\cal M}\Sigma,Q_1,Q_2)$, with ${\cal M}\Sigma$ a CQ weakly acyclic setting, and $Q_1$ and $Q_2$ two CQ queries. We call such triples {\em CQ weakly acyclic input instances.} For each $({\cal M}\Sigma,Q_1,Q_2)$, the algorithm determines whether $Q_1 \sqsubseteq_{{\cal M}\Sigma} Q_2$ holds. To make the determination, a modification (via adding ${\cal C}^{exp}_{MV}$) of the query $Q_1$ is chased with the dependencies that we  introduce in the current subsection. 
%
%

\vspace{-0.1cm} 

\paragraph{Building blocks for the chase} 
All the dependencies used in Section~\ref{containment-procedure-sec} are constructed using the input CQ 
setting ${\cal M}\Sigma$. (For ease of exposition, in the remainder of this subsection we will assume that one such setting ${\cal M}\Sigma$ $=$ $(${\bf P}, $\Sigma$, $\cal V$, $MV)$ is fixed.)  
The construction uses {\em normalized} versions of conjunctions of relational atoms (see, e.g., \cite{ZhangO97}). That is, let $\phi$ be a conjunction of relational atoms. We replace in $\phi$ each duplicate occurrence of a variable or constant with a fresh distinct variable name. As we do each replacement, say of $X$ (or $c$) with $Y$, we add to the conjunction the equality atom $Y$ $=$ $X$ (or $Y$ $=$ $c$). As an illustration, if $\phi$ $=$ $P(X,X) \wedge S(c,c,X)$, then its normalized version is ${\phi}^{(n)}$ $=$ $P(X,Y) \wedge S(c,Z,W) \wedge Y = X \wedge Z = c \wedge W = X$. 
By construction, the normalized version of each $\phi$ is unique up to variable renamings. For the normalized version ${\phi}^{(n)}$ of a conjunction $\phi$, we will denote by ${\cal R}({\phi}^{(n)})$ the conjunction of all the relational atoms in ${\phi}^{(n)}$, and will denote by ${\cal E}({\phi}^{(n)})$ the conjunction of all the equality atoms in ${\phi}^{(n)}$. (If ${\phi}^{(n)}$ has no equality atoms, we set ${\cal E}({\phi}^{(n)})$ to $true$.) 

A {\em non-egd} {\em (negd)} 
is a dependency of the form 
%
%
\begin{equation} 
\label{negd-eqn}  
\sigma: \phi(\bar{W}) \rightarrow X \neq Y.  
\end{equation} 


\noindent 
Here, $\phi$ is a conjunction of relational atoms, and each of $X$ and $Y$ is an element of the set of variables $\bar W$. 

We also use 
chase with ``implication constraints,'' 
see, e.g., \cite{ZhangO97}. An {\em implication constraint (ic)} is a dependency of the form $\tau: \phi(\bar{W}) \rightarrow false$, with $\phi(\bar{W})$ a conjunction of relational atoms. 

The algorithm of Section~\ref{containment-procedure-sec} performs chase of $CQ^{\neq}$ queries with ics, negds, egds, and tgds, by the  following rules. Let $Q$ be a  $CQ^{\neq}$ query. We say that {\em chase of $Q$ with an ic $\tau$ is applicable} whenever  there exists a homomorphism, $h$, from the antecedent $\phi$ of $\tau$ to the body of $Q$. Then we say that the {\em chase step of $Q$ with $\tau$ fails.}  Similarly, we say that a {\em chase step with a negd} $\sigma$ (as in Eq. (\ref{negd-eqn})) {\em applies} to $Q$ if there exists a homomorphism, $h$, from the antecedent $\phi$ of $\sigma$ to the body of $Q$. There are two cases: One, $h(X)$ and $h(Y)$ are the same variable (or the same constant) in $Q$. Then we say that {\em the chase step of $Q$ with $\sigma$ fails.} Otherwise, we form from $Q$ {\em the result} $Q^*$ of the chase step: $Q^*$ is a $CQ^{\neq}$ query obtained by conjoining  $body_{(Q)}$ with the atom $h(X) \neq h(Y)$. Chase steps with {\em tgds} are defined for $CQ^{\neq}$ queries in the same way as for CQ queries, see Section~\ref{dep-chase-prelims-sec}. Finally, for chase with {\em egds,} we extend the rules of Section~\ref{dep-chase-prelims-sec} by requiring that whenever chase of a $CQ^{\neq}$ query $Q$ with an egd $\tau$ is applicable, with some homomorphism $h$, and the consequent of $\tau$ is of the form $X$ $=$ $Y$, then {\em the chase step of $Q$ with $\tau$ fails} iff $body_{(Q)}$ has the 
atom $h(X) \neq h(Y)$ (or $h(Y) \neq h(X)$). (This generalizes the chase-step rule for CQ queries with egds, in the part where $h(X)$ and $h(Y)$ are distinct constants, see Section~\ref{dep-chase-prelims-sec}.) As we define $CQ^{\neq}$ queries as not having explicit equality atoms, our extended chase-step rules cover all possible cases for $CQ^{\neq}$ queries. 

\vspace{-0.1cm} 

\paragraph{Dependencies $\Phi_{(MV)}$  for CQ setting ${\cal M}\Sigma$} 
We now introduce one type of dependencies, {\em $MV$-induced dependencies} $\Phi_{(MV)}$, to be used in the chase in the algorithm of Section~\ref{containment-procedure-sec}. 
For the CQ setting ${\cal M}\Sigma$ with set $\cal V$ of views, 
let $V$ $\in$ $\cal V$ be a $k_V$-ary ($k_V$ $\geq$ $0$) view with definition $V({\bar X})$ $\leftarrow$ $\phi({\bar X},{\bar Y})$. We first normalize the body $\phi$ of $V$ into ${\cal R}(\phi^{(n)})$ $\wedge$ ${\cal E}(\phi^{(n)})$. 
The result \linebreak 
$\neg$ ${\cal E}(\phi^{(n)})$ of negating ${\cal E}(\phi^{(n)})$ is (obviously) a disjunction of disequality atoms. (E.g., $\neg$ $(X = Y \wedge Z = c)$ is $(X \neq Y$ $\vee$ $Z \neq c)$.) We now proceed for $V$ as follows. 

If $MV[V]$ $=$ $\emptyset$, we define the {\em $MV$-induced generalized implication constraint ($MV$-induced gic)} $\iota_V$ {\em for} $V$ as 
\begin{equation} 
\label{new-iota-eqn}  
\iota_V: \ {\cal R}(\phi^{(n)}) \rightarrow \ false \vee \neg {\cal E}(\phi^{(n)}).  
\end{equation}




Now suppose $k_V$ $\geq$ $1$ and $MV[V]$ $=$ $\{ {\bar t}_1$, ${\bar t}_2$, $\ldots$, ${\bar t}_{m_V} \}$, with ${m_V}$ $\geq$ $1$. Then we define the {\em $MV$-induced generalized negd ($MV$-induced gnegd)} $\tau_V$ {\em for} $V$ as  
%
\begin{equation} 
\label{new-tau-eqn}  
\tau_V: {\cal R}(\phi^{(n)}) \rightarrow \vee_{i=1}^{m_V} ({\bar X} = {\bar t}_i)  \vee \neg {\cal E}(\phi^{(n)}).  
\end{equation} 


\noindent 
Here, ${\bar X}$ $=$ $[S_1,\ldots,S_{k_V}]$ is the head vector of the query for $V$, with $S_j$ $\in$ {\sc Const} $\cup$ {\sc Qvar} for  $j$ $\in$ $[1,$ $k_V]$. (By definition of ${\cal R}(\phi^{(n)})$, all the elements of $\bar X$ occur in ${\cal R}(\phi^{(n)})$.) For each $i$ $\in$ $[1,$ ${m_V}]$ and for the ground tuple $\bar t_i$ $=$ $(c_{i1},$ $\ldots,$ $c_{ik_V})$ $\in$   $MV[V]$, we abbreviate by ${\bar X} = {\bar t}_i$ the conjunction $\wedge_{j=1}^{k_V} (S_j = c_{ij})$. $MV$-induced gnegds are a straightforward generalization of disjunctive egds of  
\cite{DeutschT01,FaginKMP05}, with negds added ``on top.'' 

For a CQ setting ${\cal M}\Sigma$ with set of view answers $MV$, {\em the set of $MV$-induced dependencies} $\Phi_{(MV)}$ {\em for} ${\cal M}\Sigma$ is the set of $MV$-induced gnegds and $MV$-induced gics constructed for all the views in ${\cal M}\Sigma$ as specified above.\footnote{We have shown that it is not necessary to use $MV$-induced dependencies for Boolean views $V$ with  
$MV[V]$ $\neq$ $\emptyset$.}  

\vspace{-0.1cm} 

\paragraph{Dependencies $\Sigma_{(\neq)}$ for CQ setting ${\cal M}\Sigma$} 
We now outline how to obtain from the given CQ setting ${\cal M}\Sigma$ the second set of dependencies, $\Sigma_{(\neq)}$, to be used in chase in the algorithm of Section~\ref{containment-procedure-sec}. 
We convert each dependency in $\Sigma$ (in the given ${\cal M}\Sigma$) using a conversion rule that follows, and then produce $\Sigma_{(\neq)}$ as the union of the outputs. The conversion rule for a dependency $\sigma$ $\in$ $\Sigma$ of the form $\sigma: \phi({\bar X},{\bar Y}) \rightarrow \exists \bar{Z} \ \psi(\bar{X},\bar{Z})$ converts $\phi$ into ${\cal R}(\phi^{(n)})$ $\wedge$ ${\cal E}(\phi^{(n)})$, and then returns 

\vspace{-0.1cm} 

\begin{tabbing} 
$\sigma_{(\neq)}: {\cal R}(\phi^{(n)}) \rightarrow  \exists \bar{Z} \ \psi(\bar{X},\bar{Z}) \vee \neg {\cal E}(\phi^{(n)})$.  
\end{tabbing} 

\vspace{-0.3cm} 



\paragraph{Chase of $CQ^{\neq}$ queries with $\Upsilon_{{\cal M}\Sigma}$ $=$ $\Phi_{(MV)}$ $\cup$ $\Sigma_{(\neq)}$} 
We now define chase of $CQ^{\neq}$ queries with the dependencies $\Upsilon_{{\cal M}\Sigma}$ $=$ $\Phi_{(MV)}$ $\cup$ $\Sigma_{(\neq)}$. For the fixed ${\cal M}\Sigma$, let 
$Q$ be a $CQ^{\neq}$ query over the schema {\bf P} in ${\cal M}\Sigma$. 
Our definition of the chase steps can be seen as an extension of the definition of \cite{FaginKMP05} for their disjunctive egds, once we postulate that chase steps are to be applied to queries, rather than to instances as is done in \cite{FaginKMP05}. 
Intuitively, we view each dependency $\upsilon$ $\in$ $\Upsilon_{{\cal M}\Sigma}$, of the form $\upsilon: \phi \rightarrow \psi_1 \vee \psi_2 \vee \ldots \vee \psi_m$, where each $\psi_i$ is a conjunction, as $m$ dependencies $\upsilon_1: \phi \rightarrow \psi_1$; $\ldots$; $\upsilon_m: \phi \rightarrow \psi_m$. Suppose there is a homomorphism, $h$, from the antecedent $\phi$ of $\upsilon$ to the query $Q$, and none of  $h(\psi_1)$, $h(\psi_2)$, $\ldots $, $h(\psi_m)$ is a tautology. Then we say that the chase step with $\upsilon$ applies to $Q$, and we output, as the result of the step, a set of $CQ^{\neq}$ queries such that each element of the set results from the application to $Q$ of one of $\upsilon_1$, $\ldots$, $\upsilon_m$, as defined above. 

Whenever the chase step of $Q$ with $\upsilon_i$, for an $i$ $\in$ $[1,$ $m]$, fails (as is, e.g., always the case with an ic), then the chase step does not contribute anything to the output set. Thus, if the chase step of $Q$ fails with $\upsilon_i$ for {\em all} $i$ $\in$ $[1,$ $m]$, the output of the chase step of $Q$ with the (original) $\upsilon$ 
is the empty set, i.e., a trivial $UCQ^{\neq}$ query. 

Once we have a formalization of chase steps of $CQ^{\neq}$ queries with dependencies $\Upsilon_{{\cal M}\Sigma}$, we can define {\em chase trees} and {\em chase results,} by generalizing the formalizations of \cite{FaginKMP05} of chase of instances with disjunctive egds. 
Due to the space limit, we are unable to provide detailed formalizations in the main text. (Example~\ref{disj-neq-ex} provides an illustration. Appendix~\ref{vv-dexchg-sec} has a detailed formalization of the special case where $\Upsilon_{{\cal M}\Sigma}$ does not contain any disequalities; an extension to disequalities is straightforward.) Intuitively, in a {\em chase tree} $\cal T$ constructed for a CQ setting ${\cal M}\Sigma$ and a $CQ^{\neq}$ query $Q$, the root represents $Q$, and each node represents either a $CQ^{\neq}$ query or (as a special case of a leaf) a trivial $UCQ^{\neq}$ query; we denote the node in this special case by $\epsilon$. A node $t$ in $\cal T$ has children $t_1$, $\ldots$, $t_k$ iff a chase step with some $\sigma$ $\in$ $\Upsilon_{{\cal M}\Sigma}$ applies to the $CQ^{\neq}$ query represented by $t$, and the result of the chase step is exactly all the queries represented by $t_1$, $\ldots$, $t_k$. A (non-$\epsilon$) node $t$ in $\cal T$ is a leaf iff no dependency in $\Upsilon_{{\cal M}\Sigma}$ applies in a chase step to the $CQ^{\neq}$ query represented by $t$. 

Each chase tree $\cal T$ can be associated with a   sequence (with repeated entries allowed) of dependencies in $\Upsilon_{{\cal M}\Sigma}$, according to the sequence of chase steps represented by $\cal T$ from the root downwards. The {\em result of the chase of} $Q$ with sequence $\sigma_1$, $\sigma_2$, $\ldots$ of dependencies in $\Upsilon_{{\cal M}\Sigma}$ is defined iff the associated $\cal T$ is a finite tree; then this result is either a trivial $UCQ^{\neq}$ query (iff each leaf of $\cal T$ is $\epsilon$), or is the union of all the $CQ^{\neq}$ queries represented by the leaves of $\cal T$. {\em A chase result of $Q$ with}  ${\cal M}\Sigma$, denoted $(Q)^{{\cal M}\Sigma}$, is the result (if defined) of the chase of $Q$ with any sequence of dependencies in $\Upsilon_{{\cal M}\Sigma}$. 

We now obtain the following result, in Proposition~\ref{ucq-tree-finite-prop}, for the case where ${\cal M}\Sigma$ is CQ weakly acyclic and  $Q$ is a CQ query. Let $MV$ be the set  of materialized views in ${\cal M}\Sigma$; then ${\cal C}^{exp}_{MV}$ is defined as in Section~\ref{probl-stmt-relationship-sec}. 
As is done in \cite{ZhangM05}, we denote by $Q'$ the CQ query obtained from $Q$ by conjoining the body of $Q$ with ${\cal C}^{exp}_{MV}$, after all the variables of ${\cal C}^{exp}_{MV}$ have been consistently renamed so that $Q$ and ${\cal C}^{exp}_{MV}$ do not share any variable names. We call $Q'$ {\em the ${\cal M}\Sigma$-expansion of} $Q$. 

\vspace{-0.1cm} 

\begin{proposition} 
\label{ucq-tree-finite-prop} 
Given  a CQ weakly acyclic setting ${\cal M}\Sigma$ and a CQ query $Q$: For the ${\cal M}\Sigma$-expansion $Q'$ of $Q$, each chase tree $\cal T$ for ${\cal M}\Sigma$ and $Q'$ is finite, of polynomial depth in the size of $Q$ and of $MV$ in ${\cal M}\Sigma$. Further, for any such $\cal T$ and for the $UCQ^{\neq}$ query $(Q)^{{\cal M}\Sigma}$ that is the result of the chase of $Q'$  with sequence of dependencies associated with $\cal T$, we have that: 
\begin{itemize} 
	\item The number of $CQ^{\neq}$ components of $(Q)^{{\cal M}\Sigma}$ is up to exponential in the size of $Q$ and $MV$, and 
\vspace{-0.1cm} 
	\item For each $CQ^{\neq}$ component, $q$, of $(Q)^{{\cal M}\Sigma}$, the size of $q$ is polynomial in the size of $Q$ and $MV$. 
\end{itemize} 
\vspace{-0.55cm} 
\end{proposition}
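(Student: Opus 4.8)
The plan is to bound, separately, the depth and the branching of an arbitrary chase tree $\cal T$ for ${\cal M}\Sigma$ and $Q'$, and then to read off the size of each $CQ^{\neq}$ component. The crucial observation is that, among the dependencies in $\Upsilon_{{\cal M}\Sigma} = \Phi_{(MV)} \cup \Sigma_{(\neq)}$, the only disjuncts that can create fresh nulls or add new relational atoms are the tgd-disjuncts $\exists \bar Z\,\psi(\bar X,\bar Z)$ of the $\sigma_{(\neq)}$'s; their existential variables are exactly those of the tgds in $\Sigma^{t}$, while normalization, the equality-disjuncts ${\bar X}={\bar t}_i$, and the disequality-disjuncts from $\neg{\cal E}(\phi^{(n)})$ introduce no relational atoms. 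Hence, as far as relational content and active domain are concerned, every root-to-leaf branch of $\cal T$ evolves (up to the inert disequality atoms) like an ordinary chase of the canonical instance of $Q'$ with $\Sigma$, augmented only by egd-like substitutions. By the weak acyclicity of $\Sigma$ and the polynomial-length chase termination of weakly acyclic dependencies~\cite{FaginKMP05} (which already handles tgds together with egds), the number of fresh nulls produced along any branch is bounded by a polynomial $p(|Q|,|MV|)$; since $Q'$ itself has $O(|Q|+|MV|)$ values and no other disjunct creates values, the active domain of every query occurring in $\cal T$ has size $N \le p(|Q|,|MV|)$. As the schema {\bf P} and its maximum arity are fixed, this bounds the number of distinct possible relational atoms by $O(N^{a})$ and of distinct possible disequalities by $O(N^{2})$, both polynomial in $|Q|$ and $|MV|$.

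Next I would bound the depth. I classify each non-$\epsilon$ chase step along a branch by the disjunct applied: a tgd-disjunct adds at least one relational atom, a disequality-disjunct adds exactly one disequality atom, and an equality-disjunct performs one or more substitutions (or fails when it would equate distinct constants). Under the active-trigger semantics of the disjunctive chase, a dependency fires on a given homomorphism only if none of its disjuncts is already satisfied, so no atom and no disequality is produced twice on the same trigger; thus the first two kinds of steps are bounded by the polynomial atom/disequality pools above, and each non-failing equality-step strictly decreases the number of distinct variables (itself polynomially bounded, as variables arise only initially and from the polynomially many tgd-steps). Summing the three bounds yields a polynomial bound on the length of every branch, so $\cal T$ has polynomial depth. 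Since the branching at any node is at most the number of disjuncts of the applied dependency, i.e.\ at most $m_V + |\neg{\cal E}(\phi^{(n)})| \le |MV| + O(1)$, which is polynomial, $\cal T$ is finitely branching, and therefore finite.

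The two itemized bounds then follow immediately. A chase tree of polynomial depth $d$ and polynomial branching $b$ has at most $b^{d}$ leaves, i.e.\ at most exponentially many in $|Q|$ and $|MV|$; since the $CQ^{\neq}$ components of $(Q)^{{\cal M}\Sigma}$ are exactly the queries at the non-$\epsilon$ leaves, their number is at most exponential. Each such component is a single $CQ^{\neq}$ query whose relational atoms and disequalities are drawn from the polynomially bounded pools counted above and whose atoms have fixed-arity size, so each component has size polynomial in $|Q|$ and $|MV|$.

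I expect the main obstacle to be the depth/termination argument: one must verify that grafting disjunction and right-hand-side disequalities onto the embedded dependencies of $\Sigma$ does not destroy the weak-acyclicity-based termination of~\cite{FaginKMP05}, even though weak acyclicity is defined only on the tgd part $\Sigma^{t}$. The clean way around this is precisely the separation used above --- isolating the tgd-disjuncts, which alone govern null creation and hence fall under the weak-acyclicity bound, from the substitution- and disequality-disjuncts, which are non-generating and admit independent, elementary termination measures --- together with a careful appeal to the active-trigger semantics to rule out repeated firing on the same trigger.
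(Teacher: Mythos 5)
Your proposal is correct and follows essentially the same route as the paper: the paper likewise reduces each branch of the chase tree to an ordinary chase with egds and weakly acyclic tgds (via the decomposition of the disjunctive dependencies into their component disjuncts), invokes the polynomial bound of \cite{FaginKMP05} on the number of distinct values, relies on the key observation that the $MV$-induced dependencies introduce no constants beyond those already present in $Q'$, and separately notes that negd steps create no new values and add only polynomially many disequality atoms, whence the polynomial depth, the exponential leaf count, and the polynomial component size all follow. Your explicit three-way classification of chase steps for the depth bound is a slightly more detailed rendering of the same counting argument the paper delegates to Theorem 3.9 of \cite{FaginKMP05}.
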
 

\vspace{-0.1cm} 

The proof of Proposition~\ref{ucq-tree-finite-prop} is based on the results of \cite{FaginKMP05}, which construct a polynomial-size upper bound on the number of distinct values that can occur in chase of an instance with weakly acyclic tgds and egds. Appendix~\ref{vv-dexchg-sec} outlines a proof for a generalization over \cite{FaginKMP05}, in which a version of $\Upsilon_{{\cal M}\Sigma}$ is constructed without disequalities; the main observation is that $Q'$ already has all the constants that might be introduced in the chase by the $MV$-induced gnegds (as in Eq. (\ref{new-tau-eqn})) of $\Upsilon_{{\cal M}\Sigma}$. We then build on that result of Appendix~\ref{vv-dexchg-sec}, by observing that chase steps with {\em negds} do not add new values, and may add a number of disequality atoms that is only up to polynomial in the size of the given $Q$ and $MV$.

\subsection{The Containment-Checking Algorithm} 
\label{containment-procedure-sec} 

By Proposition~\ref{ucq-tree-finite-prop}, if a triple $({\cal M}\Sigma,Q_1,Q_2)$ is CQ weakly acyclic as defined in Section~\ref{containment-dependencies-sec}, then each chase tree for ${\cal M}\Sigma$ and $Q'_1$ is finite. Thus, the following procedure, given here by pseudocode, is an algorithm for CQ weakly acyclic inputs. (Testing whether $({\cal M}\Sigma,Q_1,Q_2)$ is CQ weakly acyclic can be done in polynomial time.) 

\mbox{} 

\noindent
{\em Algorithm} {\sc ${\cal M}\Sigma$-containment determination:} 

\noindent
{\bf Input:} CQ weakly acyclic instance $({\cal M}\Sigma,Q_1,Q_2)$. 

\noindent
{\bf Output:} Determination whether $Q_1$ $\sqsubseteq_{{\cal M}\Sigma}$ $Q_2$. 
%
%
\begin{tabbing}
be \= ho \= be \= ho \= be \= ho \kill 
1. Set $Q'_1$  to the ${\cal M}\Sigma$-expansion of $Q_1$; \\
2. Obtain a chase result $(Q_1)^{{\cal M}\Sigma}$ of $Q'_1$ with $\Upsilon_{{\cal M}\Sigma}$;    \\
3. {\em If ($(Q_1)^{{\cal M}\Sigma}$ is a trivial $UCQ^{\neq}$ query}  \\ 
4. \> {\em or} $(Q_1)^{{\cal M}\Sigma}$ $\sqsubseteq$ $Q_2$) {\em then} output ``yes''; {\em else} output ``no.''  
\end{tabbing} 

(Recall that \cite{LevyMSS95} provides a containment test for the $UCQ^{\neq}$ query $(Q_1)^{{\cal M}\Sigma}$ and CQ query $Q_2$ in line 4.) 

We now show that the algorithm {\sc ${\cal M}\Sigma$-containment determination} is correct for CQ weakly acyclic inputs. Our first observation is as follows. 

\vspace{-0.1cm} 

\begin{proposition} 
\label{zmendelz-containm-prop} 
For a CQ weakly acyclic ${\cal M}\Sigma$ $=$ $(${\bf P}, $\Sigma$, $\cal V$, $MV)$ and CQ query $Q$, let $(Q)^{{\cal M}\Sigma}$ be a chase result  of the ${\cal M}\Sigma$-expansion $Q'$ of $Q$ with ${\cal M}\Sigma$. Then: 
\begin{enumerate} 
	\item $(Q)^{{\cal M}\Sigma}$ $\sqsubseteq$ $Q$, and 
\vspace{-0.2cm} 
	\item $Q$ $\sqsubseteq_{{\cal M}\Sigma}$ $(Q)^{{\cal M}\Sigma}$. 
\end{enumerate} 
\vspace{-0.6cm} 
\end{proposition}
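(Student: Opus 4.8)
The plan is to prove the two containments separately, since they have opposite ``flavors'': part~1 is a soundness statement about individual chase steps (each step can only shrink the set of answers), while part~2 is a completeness statement that exploits the validity of the relevant base instances.

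For part~1, I would first observe that $Q'\sqsubseteq Q$ unconditionally: $Q'$ is obtained from $Q$ by conjoining the extra atoms of ${\cal C}^{exp}_{MV}$ on fresh variables while keeping the same head vector, so the inclusion $body_{(Q)}\subseteq body_{(Q')}$ furnishes a containment mapping from $Q$ to $Q'$. Next I would show that every single chase step, in every branch, turns a $CQ^{\neq}$ query $R$ into a $CQ^{\neq}$ query $R^{*}$ with $R^{*}\sqsubseteq R$: a tgd branch of some $\sigma_{(\neq)}\in\Sigma_{(\neq)}$ only conjoins new relational atoms (on fresh nulls); a disequality branch of a negd, gnegd, or gic only conjoins a disequality atom; and an equality branch $\bar X=\bar t_i$ of a gnegd is a sequence of egd-style substitutions, for which $R^{*}\sqsubseteq R$ was already recorded in Section~\ref{dep-chase-prelims-sec}. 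Composing these containments along any root-to-leaf path of a chase tree gives $q\sqsubseteq Q'\sqsubseteq Q$ for every $CQ^{\neq}$ component $q$ of $(Q)^{{\cal M}\Sigma}$, and a union of queries each contained in $Q$ is contained in $Q$; hence $(Q)^{{\cal M}\Sigma}\sqsubseteq Q$.

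For part~2, fix any instance $I$ with ${\cal V}\Rightarrow_{I,{\Sigma}}MV$ and any tuple $\bar a\in Q(I)$, witnessed by a valuation $\nu$ from $body_{(Q)}$ to $I$ with $\nu(\bar X)=\bar a$. Because $I$ generates exactly $MV$ under CWA, each ground fact $V_i(\bar t_{ij})$ of $MV$ lies in $V_i(I)$, so $body_{(V_i)}$ has a valuation into $I$ producing $\bar t_{ij}$; assembling these (on the fresh copies used in the expansion) yields a valuation of ${\cal C}^{exp}_{MV}$ into $I$, which combines with $\nu$ into a valuation $\nu_0$ of $body_{(Q')}$ into $I$ with $\nu_0(\bar X)=\bar a$, so $\bar a\in Q'(I)$. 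Now I would descend the (finite, by Proposition~\ref{ucq-tree-finite-prop}) chase tree for $Q'$, maintaining the invariant that the current node carries a valuation into $I$ mapping the head to $\bar a$. At a node $R$ with valuation $\mu$ where a step with $\upsilon\in\Upsilon_{{\cal M}\Sigma}$ (homomorphism $h$ into $R$) applies, $\mu\circ h$ maps ${\cal R}(\phi^{(n)})$ into $I$ and I choose the branch it satisfies: if $\mu\circ h$ violates some equality of ${\cal E}(\phi^{(n)})$, I take the matching disequality branch of $\neg{\cal E}(\phi^{(n)})$, for which $\mu$ is already a valuation; if $\mu\circ h$ satisfies all of ${\cal E}(\phi^{(n)})$, then it is a genuine valuation of the original antecedent, and (i)~for $\sigma_{(\neq)}$ the fact that $I\models\sigma$ lets me extend $\mu$ to the fresh nulls of the tgd branch, (ii)~for a gnegd $\tau_V$ the resulting tuple lies in $V(I)=MV[V]$, hence equals some $\bar t_i$, so the branch $\bar X=\bar t_i$ is consistent with $\mu$, and (iii)~for a gic $\iota_V$ this case cannot occur, as it would place a tuple in $V(I)=\emptyset$. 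In every case the chosen branch does not fail (a valuation into $I$ rules out the syntactic failure conditions) and $\mu$ extends to a valuation of the selected child. Since the tree is finite, this path reaches a genuine $CQ^{\neq}$ leaf $q$ (never the trivial $\epsilon$-leaf, because some branch always succeeds) with a valuation of $q$ into $I$ sending $\bar X$ to $\bar a$; thus $\bar a\in q(I)\subseteq(Q)^{{\cal M}\Sigma}(I)$, giving $Q\sqsubseteq_{{\cal M}\Sigma}(Q)^{{\cal M}\Sigma}$.

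The main obstacle is the branch-selection argument in part~2 -- essentially the ``progress'' property that, under the validity of $I$, some disjunct of every applicable dependency is realizable together with the current valuation. This is exactly where the specific shape of $\Upsilon_{{\cal M}\Sigma}$ matters: the normalization into ${\cal R}(\phi^{(n)})\wedge{\cal E}(\phi^{(n)})$ must be reconciled with satisfaction in $I$ (the equality case has to reproduce the effect of the original, un-normalized $\sigma$ or view body), and one must verify carefully that the branch forced by $\mu$ is never one of the failing branches, so that descent stays within the real nodes of the chase tree rather than collapsing to $\epsilon$.
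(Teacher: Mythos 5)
Your proposal is correct and follows essentially the same route as the paper: part~1 by composing the per-step containments $R^{*}\sqsubseteq R$ along each root-to-leaf path of the chase tree, and part~2 by induction on chase steps, maintaining (for a fixed $I$ with ${\cal V}\Rightarrow_{I,\Sigma}MV$) the invariant that the witnessing valuation into $I$ survives into some child — which is exactly the invariant the paper's proof sketch refers to. Your branch-selection analysis, including the check that the branch forced by the valuation is never a failing one (so the descent cannot collapse to $\epsilon$), supplies the details the paper leaves implicit.
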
 

\vspace{-0.1cm} 

The proof of item 2 of Proposition~\ref{zmendelz-containm-prop}  is by induction on the chase steps for $Q'$ and $\Upsilon_{{\cal M}\Sigma}$, once we fix an instance $I$ such that $\cal V$ $\Rightarrow_{I,{\Sigma}}$ $MV$. Specifically, the property in item 2 is an invariant for the output of each chase step of $Q'$ with $\Upsilon_{{\cal M}\Sigma}$, for any fixed such 
$I$. 

Our next observation concerns valuations for the query $(Q)^{{\cal M}\Sigma}$ and for {\em arbitrary} instances of schema {\bf P}. (The proof is by construction of each $UCQ^{\neq}$ query $(Q)^{{\cal M}\Sigma}$.) 

\vspace{-0.1cm} 

\begin{proposition} 
\label{zmendelz-valuation-prop} 
Given a CQ weakly acyclic setting ${\cal M}\Sigma$ $=$ $(${\bf P}, $\Sigma$, $\cal V$, $MV)$ and a CQ query $Q$. For any nontrivial chase result $(Q)^{{\cal M}\Sigma}$ of the ${\cal M}\Sigma$-expansion of $Q$ with ${\cal M}\Sigma$, denote by $B^*$ all the relational atoms in the body of $(Q)^{{\cal M}\Sigma}$. Then for every instance $I$ of schema {\bf P} and for each valuation $\nu$ for  $(Q)^{{\cal M}\Sigma}$ and $I$, $\nu(B^*)$ is a $\Sigma$-valid base instance for $\cal V$ and $MV$. 
\end{proposition}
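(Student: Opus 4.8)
The plan is to reduce the statement to a single $CQ^{\neq}$ component $q$ of $(Q)^{{\cal M}\Sigma}$ — the leaf of the chase tree $\cal T$ for which the given $\nu$ is a valuation — and to take $B^*$ to be the relational atoms of $q$. Fixing an arbitrary instance $I$ and a valuation $\nu$ from $q$ into $I$, I write $\nu(B^*)$ for the sub-instance obtained by applying $\nu$ to the atoms of $B^*$, and I must verify the two defining conditions of a $\Sigma$-valid base instance for $\cal V$ and $MV$: (a) $\nu(B^*) \models \Sigma$, and (b) $V(\nu(B^*)) = MV[V]$ for each $V \in \cal V$. The engine behind both is that $q$ is a \emph{leaf}: no chase step with any dependency in $\Upsilon_{{\cal M}\Sigma}$ applies to $q$, so for every $\upsilon \in \Upsilon_{{\cal M}\Sigma}$ and every homomorphism $h$ from its antecedent into $q$, at least one disjunct of the consequent is already a tautology in $q$ under $h$.

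The key technical device is a \emph{lifting lemma}. Given a dependency $\upsilon \in \Upsilon_{{\cal M}\Sigma}$ with normalized antecedent ${\cal R}(\phi^{(n)})$ and a fact-homomorphism $g$ from the original $\phi$ into $\nu(B^*)$, I would construct a query-level homomorphism $h$ from ${\cal R}(\phi^{(n)})$ into $q$ satisfying $\nu \circ h = g \circ r$, where $r$ is the renaming that collapses ${\cal R}(\phi^{(n)})$ back to $\phi$ (the substitution recorded by ${\cal E}(\phi^{(n)})$). This is possible precisely because ${\cal R}(\phi^{(n)})$ has pairwise distinct argument variables: each atom of $\phi$ is sent by $g$ to a fact $\nu(\beta)$ with $\beta \in B^*$, and $h$ is read off from the arguments of the matching $\beta$, with no conflict. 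Given $h$, the leaf property forces one disjunct of $\upsilon$ to be a tautology. Each disjunct is either a ``positive'' disjunct (an existential conjunction from $\psi$, or an equational conjunction $\bar X = \bar t_k$), or a disequality disjunct from $\neg {\cal E}(\phi^{(n)})$. The crucial observation is that a disequality disjunct $h(X) \neq h(Y)$ is a tautology only when $h(X), h(Y)$ are distinct constants; but $X = Y$ lies in ${\cal E}(\phi^{(n)})$, so $r(X) = r(Y)$ and hence $\nu(h(X)) = g(r(X)) = g(r(Y)) = \nu(h(Y))$, contradicting that $\nu$ fixes these distinct constants. Thus, whenever $g$ exists, the tautological disjunct must be a positive one.

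With the lifting lemma, condition (a) follows from $\Sigma_{(\neq)}$: for $\sigma \in \Sigma$ and a fact-homomorphism $g$ from $\phi$ into $\nu(B^*)$, the surviving positive disjunct of $\sigma_{(\neq)}$ means $h$ extends to a homomorphism of $\psi$ into $q$, and composing with $\nu$ extends $g$ to $\psi$ inside $\nu(B^*)$, so $\nu(B^*) \models \sigma$. For condition (b), the inclusion $V(\nu(B^*)) \subseteq MV[V]$ uses $\Phi_{(MV)}$: any $\bar s \in V(\nu(B^*))$ arises from a fact-homomorphism $g$ from $body_{(V)}$ into $\nu(B^*)$ with $\bar s = g(\bar X)$; the surviving positive disjunct of the gnegd $\tau_V$ (Eq.~(\ref{new-tau-eqn})) must be some $\bar X = \bar t_k$, which by the lifting identity gives $\bar s = \nu(h(\bar X)) = \bar t_k \in MV[V]$; and when $MV[V] = \emptyset$ the gic $\iota_V$ (Eq.~(\ref{new-iota-eqn})) has no positive disjunct, so no such $g$ exists and $V(\nu(B^*)) = \emptyset$. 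The reverse inclusion $MV[V] \subseteq V(\nu(B^*))$ comes from the ${\cal M}\Sigma$-expansion $Q'$: the atoms of ${\cal C}^{exp}_{MV}$ are never deleted during the chase and retain the constants of each $\bar t_{ij}$, so through $\nu$ they continue to witness every ground fact of $MV$ as a view answer on $\nu(B^*)$.

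I expect the main obstacle to be the careful bookkeeping in the lifting lemma and the disequality case: pinning down exactly what ``$h(\psi_i)$ is a tautology'' means for each shape of disjunct (existential relational atoms, conjunctions $\bar X = \bar t_k$, and single disequalities), and verifying that the normalization machinery ${\cal R}$/${\cal E}$/$\neg {\cal E}$ makes $\nu \circ h = g \circ r$ hold exactly, so that $\nu$ can discharge every disequality disjunct. A secondary point requiring care is the persistence, through chase steps that equate variables, of the expansion atoms underwriting $MV[V] \subseteq V(\nu(B^*))$; this should hold because chase steps only add atoms, conjoin disequalities, or equate values, and never remove the constants planted by ${\cal C}^{exp}_{MV}$.
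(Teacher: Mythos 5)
Your proposal is correct and takes essentially the same route as the paper, which proves this proposition only by the one-line remark that it holds ``by construction of each $UCQ^{\neq}$ query $(Q)^{{\cal M}\Sigma}$'': your leaf-condition analysis, the lifting of fact-homomorphisms through the normalization ${\cal R}(\phi^{(n)})$/${\cal E}(\phi^{(n)})$, the observation that $\nu$ discharges every disequality disjunct (so the tautological disjunct at a leaf must be a positive one), and the persistence of the ${\cal C}^{exp}_{MV}$ atoms for the inclusion $MV[V] \subseteq V(\nu(B^*))$ are exactly the content that ``by construction'' is gesturing at. The only point to watch is that your lifting lemma needs every argument position of ${\cal R}(\phi^{(n)})$ to carry a fresh variable, whereas the paper's normalization retains the \emph{first} occurrence of each constant in a view body; this corner case is an artifact of the paper's definition rather than of your argument (and is vacuous for $\Sigma$, whose dependencies are required to be constant-free).
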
 

\vspace{-0.1cm} 

By Propositions \ref{zmendelz-containm-prop}--\ref{zmendelz-valuation-prop}, for CQ weakly acyclic instances $({\cal M}\Sigma,Q_1,Q_2)$, all the chase results $(Q_1)^{{\cal M}\Sigma}$ are trivial $UCQ^{\neq}$ queries (i.e., each of them is the empty set) iff (*) the answer to the input query $Q_1$ is empty on all $\Sigma$-valid base instances. Further, a $(Q_1)^{{\cal M}\Sigma}$ $=$ $\emptyset$  {\em only} if (*) holds. This justifies the ``yes'' output when the condition of line 3 of the algorithm  evaluates to true.

Propositions \ref{ucq-tree-finite-prop} through 
\ref{zmendelz-valuation-prop} permit us to establish 
correctness of the algorithm {\sc ${\cal M}\Sigma$-containment determination} for CQ weakly acyclic inputs: 

\vspace{-0.1cm} 

\begin{theorem} 
\label{main-mendelz-thm} 
Given a CQ weakly acyclic instance $({\cal M}\Sigma,Q_1,Q_2)$. Then $Q_1$ $\sqsubseteq_{{\cal M}\Sigma}$ $Q_2$ if and only if for any one chase result $(Q_1)^{{\cal M}\Sigma}$ of the ${\cal M}\Sigma$-expansion of $Q_1$ with ${\cal M}\Sigma$, either $(Q_1)^{{\cal M}\Sigma}$ $=$ $\emptyset$ or $(Q_1)^{{\cal M}\Sigma}$ $\sqsubseteq$ $Q_2$. 
\end{theorem}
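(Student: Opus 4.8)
The plan is to split the biconditional into its two directions and to dispose of the degenerate case $(Q_1)^{{\cal M}\Sigma}=\emptyset$ separately, relying on the fact recorded just after Proposition~\ref{zmendelz-valuation-prop}: a chase result is a trivial $UCQ^{\neq}$ query exactly when $Q_1(I)=\emptyset$ on every $\Sigma$-valid base instance $I$ for $\cal V$ and $MV$. First I would note that, by Proposition~\ref{ucq-tree-finite-prop}, every chase tree for ${\cal M}\Sigma$ and the ${\cal M}\Sigma$-expansion $Q'_1$ of $Q_1$ is finite, so a chase result $(Q_1)^{{\cal M}\Sigma}$ exists; and since Propositions~\ref{zmendelz-containm-prop}--\ref{zmendelz-valuation-prop} are stated for an \emph{arbitrary} chase result, the ``for any one chase result'' phrasing is well posed, and it suffices to argue about a fixed such result.

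For the ``if'' direction I would assume the right-hand condition. If $(Q_1)^{{\cal M}\Sigma}=\emptyset$, then $Q_1(I)=\emptyset$ on every relevant $I$, whence $Q_1(I)\subseteq Q_2(I)$ trivially and $Q_1\sqsubseteq_{{\cal M}\Sigma}Q_2$. If instead $(Q_1)^{{\cal M}\Sigma}\sqsubseteq Q_2$ holds unconditionally, then for every $I$ with ${\cal V}\Rightarrow_{I,\Sigma}MV$ I would chain the inclusions $Q_1(I)\subseteq (Q_1)^{{\cal M}\Sigma}(I)\subseteq Q_2(I)$, where the first inclusion is item~2 of Proposition~\ref{zmendelz-containm-prop} (i.e. $Q_1\sqsubseteq_{{\cal M}\Sigma}(Q_1)^{{\cal M}\Sigma}$) and the second is the assumed unconditional containment. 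This yields $Q_1\sqsubseteq_{{\cal M}\Sigma}Q_2$.

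The ``only if'' direction is the substantive one. Assuming $Q_1\sqsubseteq_{{\cal M}\Sigma}Q_2$ with $(Q_1)^{{\cal M}\Sigma}$ nontrivial, I would verify $(Q_1)^{{\cal M}\Sigma}\sqsubseteq Q_2$ componentwise, since by the containment characterization of \cite{LevyMSS95} recalled in Section~\ref{prelim-sec} a $UCQ^{\neq}$ query is contained in the CQ query $Q_2$ iff each of its $CQ^{\neq}$ components is. Fix a component $q$ with head $q(\bar s)$ and relational body $B_q$. I would freeze $q$ into a canonical database $I^{(q)}$ via a valuation $\nu_0$ sending the variables of $q$ to distinct fresh constants; because $q$ is satisfiable, $\nu_0$ respects every disequality of $q$, so $\nu_0$ is a genuine valuation and $I^{(q)}=\nu_0(B_q)$. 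By Proposition~\ref{zmendelz-valuation-prop} (applied to $\nu_0$ and $I^{(q)}$), $I^{(q)}$ is a $\Sigma$-valid base instance for $\cal V$ and $MV$, hence a relevant instance. Moreover $\nu_0(\bar s)\in q(I^{(q)})$, and item~1 of Proposition~\ref{zmendelz-containm-prop} ($(Q_1)^{{\cal M}\Sigma}\sqsubseteq Q_1$, hence $q\sqsubseteq Q_1$) gives $\nu_0(\bar s)\in Q_1(I^{(q)})$. Applying the hypothesis $Q_1\sqsubseteq_{{\cal M}\Sigma}Q_2$ to the relevant instance $I^{(q)}$ yields $\nu_0(\bar s)\in Q_2(I^{(q)})$. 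Unfreezing the witnessing valuation of $Q_2$ into $I^{(q)}$ produces a containment mapping from $Q_2$ to the relational part of $q$ that preserves the head, certifying $q\sqsubseteq Q_2$; ranging over all components gives $(Q_1)^{{\cal M}\Sigma}\sqsubseteq Q_2$.

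I expect the main obstacle to be this last ``unfreezing'' step: one must argue that the valuation realizing $\nu_0(\bar s)\in Q_2(I^{(q)})$ translates into an \emph{unconditional} containment mapping for the $CQ^{\neq}$ component $q$, and in particular that the single frozen canonical database $I^{(q)}$ (rather than a family of databases indexed by the disequality pattern) is adequate here precisely because the containing query $Q_2$ is disequality-free, so that the containment test of \cite{LevyMSS95} reduces to membership of the frozen head in $Q_2(I^{(q)})$. Care is also needed to confirm that $\nu_0$ is well defined on a \emph{satisfiable} component $q$ --- i.e. that no disequality of the form $c\neq c$ with $c$ a constant survives in $q$ --- which is exactly where the restriction to satisfiable $CQ^{\neq}$ queries and the failure rules of the chase (Section~\ref{containment-dependencies-sec}) come into play.
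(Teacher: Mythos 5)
Your proposal is correct and follows exactly the route the paper intends: the paper only asserts that Propositions~\ref{ucq-tree-finite-prop}--\ref{zmendelz-valuation-prop} establish the theorem, and your argument is precisely the assembly of those pieces — item~2 of Proposition~\ref{zmendelz-containm-prop} plus the triviality remark for the ``if'' direction, and the freeze/unfreeze of each satisfiable $CQ^{\neq}$ component via Proposition~\ref{zmendelz-valuation-prop}, item~1 of Proposition~\ref{zmendelz-containm-prop}, and the \cite{LevyMSS95} containment test for the ``only if'' direction. The cautions you flag (satisfiability of components, adequacy of a single frozen database because $Q_2$ is disequality-free) are exactly the right ones and are handled correctly.
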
 

\vspace{-0.1cm} 

By this result, our solution to the certain-answer problem presented in Example~\ref{intro-main-three-ex}, for the tuple $(${\tt johnDoe}, {\tt 50000}$)$,  is correct for the setting of this example. (We solve that certain-answer problem via determining ${\cal M}\Sigma$-conditional containment, as stipulated in Theorem~\ref{problem-relationship-thm}.) 

As discussed earlier, the approach of \cite{ZhangM05} is exactly the algorithm {\sc ${\cal M}\Sigma$-containment determination} for the case $\Sigma$ $=$ $\emptyset$. The chase result $(Q_1)^{{\cal M}\Sigma}$, with $\Sigma$ $=$ $\emptyset$, is denoted in \cite{ZhangM05} by $Q''_1$. We have shown 
that our extension of the approach of \cite{ZhangM05} to the cases where $\Sigma$ $\neq$ $\emptyset$ is not as simple as ``just chasing $Q''_1$ with the input dependencies $\Sigma$.'' In fact, even if we chase $Q''_1$ with our modified dependencies $\Sigma_{(\neq)}$, we are not guaranteed a correct output. (See Appendix~\ref{cannot-chase-mendelzon-sec} for all the details.) Thus, algorithm  {\sc ${\cal M}\Sigma$-containment determination} is not a trivial extension of the approach of \cite{ZhangM05}. 

We can also show that to chase the query $Q'_1$ with the dependencies $\Sigma$, and to then chase the resulting query with the  dependencies $\Phi_{(MV)}$, does not, in general, yield a correct determination of $Q_1$ $\sqsubseteq_{{\cal M}\Sigma}$ $Q_2$ when we apply the unconditional-containment ($\sqsubseteq$) test. However, it is by construction of the dependencies $\Sigma_{(\neq)}$ that chasing $Q'_1$ with $\Sigma_{(\neq)}$ (rather than $\Sigma$) only, followed by chase with the  dependencies $\Phi_{(MV)}$ only, yields correct chase results for the purpose of determining ${\cal M}\Sigma$-conditional containment for CQ weakly acyclic inputs. 

Finally, we note that the presence of disequality atoms is critical to ensure correctness of our algorithm. Specifically, if disequality atoms are not introduced into either $\Sigma_{(\neq)}$ or the dependencies $\Phi_{(MV)}$, then the result of Proposition \ref{zmendelz-valuation-prop} no longer holds. (See Appendix~\ref{deps-must-have-neqs-sec} for all the details.) As a result, 
it is no longer clear how to ensure that the only-if direction of Theorem~\ref{main-mendelz-thm} (in case where $(Q_1)^{{\cal M}\Sigma}$ $\neq$ $\emptyset$) goes through. 

\vspace{-0.2cm} 

\section{Finding all the Certain Answers} 
\label{main-vv-dexchg-sec} 

The results of Sections~\ref{problem-statement-sec}--\ref{query-cont-algor-sec} suggest an approach for 
finding all certain-answer tuples for  CQ weakly acyclic inputs. For a $k$-ary query $Q$ and a setting ${\cal M}\Sigma$, the approach is to generate all the $k$-ary tuples of values in $consts({\cal M}\Sigma)$, and then for each such tuple, $\bar t$, to solve the certain-answer problem for $Q$, $\bar t$, and ${\cal M}\Sigma$, by using Theorem~\ref{problem-relationship-thm} and algorithm {\sc ${\cal M}\Sigma$-containment determination}. By our results above, this approach is a correct algorithm for  CQ weakly acyclic inputs. At the same time, its generate-and-test flavor may result in voluminous unnecessary computation for all those tuples $\bar t$ that are {\em not} certain answers for the given input. 

In this section we introduce an approach, called ``view-verified data exchange,'' which solves the same problem but is not based on the generate-and-test paradigm. As the name suggests, this approach is based on data exchange \cite{FaginKMP05,Barcelo09,LibkinDataExchange}. This approach is also intimately related to the techniques that we used in Section \ref{query-cont-algor-sec} to address ${\cal M}\Sigma$-conditional query containment. Specifically, view-verified data exchange uses a modification of the dependencies $\Upsilon_{{\cal M}\Sigma}$ of Section \ref{query-cont-algor-sec}, in which we do away with the disequality atoms in the dependencies. Due to the page limit, in this section we provide just a brief overview; all the details, including a full formalization and examples, can be found in Appendix~\ref{vv-dexchg-sec}. 

Given a CQ setting ${\cal M}\Sigma$ $=$ $(${\bf P}, $\Sigma$, $\cal V$, $MV)$, the idea of view-verified data exchange is very natural: We borrow from the standard data-exchange framework, in that we treat the relation symbols in $\cal V$ as the ``source schema'' and the schema {\bf P} as the ``target schema,'' with ``target constraints'' $\Sigma$. Further, we treat natural tgds arising from the definitions of the views in $\cal V$ as ``source-to-target dependencies'' $\Sigma_{st}$ for this ``data-exchange setting.'' Then we could treat the set $MV$ as a ``source instance,'' and pose the input query $Q$ on the ``target instances'' that are determined by this data-exchange setting  and by this source instance. (All the relevant formal definitions 
can be found in Appendices~\ref{dexchg-sec}--\ref{app-dexchg-sec}.) 

One special type of target instance used in data exchange is called ``canonical universal solution'' \cite{FaginKMP05} for the given data-exchange setting and source instance. Such instances are obtained by chase of the source instance with the dependencies $\Sigma_{st}$ $\cup$ $\Sigma$, and can be used to represent, in the following precise sense, all target instances of interest. 
When $\Sigma_{st}$ is a set of tgds, $\Sigma$ is weakly acyclic, and $Q$ is a UCQ query, the problem of computing certain answers for $Q$, w.r.t. the given data-exchange setting and source instance, can be solved 
via posing $Q$ on a canonical universal solution \cite{FaginKMP05}. It turns out that this result can be carried over directly to the problem of finding certain answers to a query in presence of a {\em materialized-view} setting, resulting in a sound and complete algorithm  \cite{StoffelS05} under {\em OWA} for the CQ weakly acyclic cases of the problem. 

Not surpisingly, the algorithm of \cite{StoffelS05} is 
not complete under {\em CWA.} (See Appendices~\ref{dexchg-sec}--\ref{app-dexchg-sec} for the details.) In particular, applying the algorithm of \cite{StoffelS05} to our Example~\ref{intro-main-three-ex} would produce the empty set of certain-answer tuples. At the same time, using the results of Section~\ref{query-cont-algor-sec} we can show that $\bar t$ $=$ $(${\tt johnDoe}, {\tt 50000}$)$ {\em is} a certain answer for the setting of Example~\ref{intro-main-three-ex} under CWA.  As it would be straightforward for attackers to obtain that tuple $\bar t$ ``from first principles,'' our motivation was to come up with a correct algorithm for the CWA version of the problem of finding all certain query answers, as defined in Section~\ref{probl-stmt-defs-sec}. 
 Our view-verified data exchange does qualify, by being a sound and complete algorithm for all CQ weakly acyclic instances under CWA. 

We outline here the main idea of view-verified data exchange. (Due to the space limit, all the details can be found in Appendix~\ref{vv-dexchg-sec}.) Just as in the approach of \cite{StoffelS05}, we begin by obtaining a canonical universal solution, $J_{de}^{{\cal M}\Sigma}$, for the data-exchange setting that arises naturally from the input instance $({\cal M}\Sigma,Q)$. We then apply to $J_{de}^{{\cal M}\Sigma}$ {\em disjunctive chase,} as specified for our problem of ${\cal M}\Sigma$-conditional query containment, with two modifications. One, we chase the {\em instance} $J_{de}^{{\cal M}\Sigma}$, essentially by treating it as the body of a CQ query. 
Two, we use in the chase a modification of the dependencies $\Upsilon_{{\cal M}\Sigma}$ $=$ $\Phi_{(MV)}$ $\cup$ $\Sigma_{(\neq)}$ of Section~\ref{query-cont-algor-sec}. The idea of this modification of $\Upsilon_{{\cal M}\Sigma}$ is that we do {\em not} normalize the left-hand side of any dependency. One consequence of this choice is that disequalities do not arise in the right-hand side of any resulting dependency. (In particular, $\Sigma$ remains unmodified, rather than giving rise to $\Sigma_{(\neq)}$ as in Section~\ref{query-cont-algor-sec}.) We show that disequalities are not necessary for correctness of the approach to the problem of finding certain-query answers. Intuitively, the instances that we obtain in the chase are used to characterize only $\Sigma$-valid instances for $\cal V$ and $MV$, rather than all possible instances of schema {\bf P}. (In the problem of Section~\ref{query-cont-algor-sec}, the chase enforces constraints that ensure that Proposition \ref{zmendelz-valuation-prop} holds for $(Q_1)^{{\cal M}\Sigma}$ on {\em all} instances of schema {\bf P}.) 

Finally, the view-verified data-exchange approach obtains a set of answers without nulls to the input query $Q$ on {\em each} of the instances in the chase result; the output is then the intersection of these sets. We have shown that for all CQ weakly acyclic inputs, the output of this approach is well defined and is the set of all certain answers to $Q$ w.r.t. the setting ${\cal M}\Sigma$. That is: 

\vspace{-0.1cm} 

\begin{theorem} 
View-verified data exchange is a sound and complete algorithm for finding certain answers for all CQ weakly acyclic instances under CWA. 
\end{theorem}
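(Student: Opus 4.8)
The plan is to reduce the theorem to two structural lemmas about the leaves of the modified disjunctive chase, and to transport query answers along homomorphisms. Fix a CQ weakly acyclic instance $({\cal M}\Sigma,Q)$ with ${\cal M}\Sigma = ({\bf P},\Sigma,{\cal V},MV)$, let $J_{de}^{{\cal M}\Sigma}$ be the canonical universal solution obtained by chasing $MV$ with $\Sigma_{st}\cup\Sigma$, and let $\{J_1,\dots,J_r\}$ be the leaf instances of the disjunctive chase of $J_{de}^{{\cal M}\Sigma}$ with the disequality-free modification of $\Upsilon_{{\cal M}\Sigma}$ (i.e.\ $\Sigma$ together with the disjunctive egds and implication constraints obtained from $\Phi_{(MV)}$ without normalizing left-hand sides). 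The output is $\bigcap_{s} G(J_s)$, where $G(J_s)$ denotes the ground ($\mathrm{null}$-free) tuples in $Q(J_s)$. The two facts I need are: (Covering) every $\Sigma$-valid base instance $I$ for ${\cal V}$ and $MV$ receives a homomorphism $h\colon J_s\to I$ from some leaf $J_s$, fixing constants; and (Faithfulness) the \emph{freezing} $I_{J_s}$ of each leaf $J_s$ (replace each null by a fresh distinct constant, identity on constants) is itself a $\Sigma$-valid base instance with $G(J_s)=G(I_{J_s})$. Given these, soundness and completeness both follow from the elementary fact that if $\bar t$ is a ground tuple in $Q(K)$ and $g\colon K\to K'$ is a constant-fixing homomorphism, then $\bar t\in Q(K')$.

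First I would record the preliminary properties of the leaves. Termination and finiteness of the chase tree (hence of $\{J_1,\dots,J_r\}$) follow from weak acyclicity by the same bound as in Proposition~\ref{ucq-tree-finite-prop}, now applied to an instance rather than a query (the generalization sketched for Appendix~\ref{vv-dexchg-sec}); note the chase introduces no new constants, so all constants in every $J_s$ lie in $consts({\cal M}\Sigma)$. Along any surviving root-to-leaf path, each step (a $\Sigma$-tgd addition, an egd merge, or a disjunctive-egd branch forcing a view-head to equal some $\bar t_i\in MV[V]$) yields a constant-fixing homomorphism from its input to its output; composing them gives a homomorphism $J_{de}^{{\cal M}\Sigma}\to J_s$. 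Since $\Sigma_{st}$ already witnesses every ground fact of $MV[V]$ in $J_{de}^{{\cal M}\Sigma}$, this gives $V(J_s)\supseteq MV[V]$. Being a leaf, $J_s$ satisfies every dependency in the chase set; satisfying the disjunctive egds and implication constraints means every match of a view body has head equal to some $\bar t_i$ (resp.\ is impossible for empty views), i.e.\ $V(J_s)\subseteq MV[V]$, and satisfying the $\Sigma$-part gives $J_s\models\Sigma$. Hence $V(J_s)=MV[V]$ exactly and $J_s\models\Sigma$.

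This yields Faithfulness (and completeness). Because $\Sigma$ is constant-free (Definition~\ref{mat-setting-def}) and freezing is a relational isomorphism that merely retypes nulls as fresh constants, $I_{J_s}\models\Sigma$ iff $J_s\models\Sigma$, so $I_{J_s}\models\Sigma$; and since every view-body match in $J_s$ already has a \emph{ground} head (some $\bar t_i$, forced by $\Phi_{(MV)}$), freezing leaves view outputs unchanged, so $V(I_{J_s})=MV[V]$. Thus ${\cal V}\Rightarrow_{I_{J_s},\Sigma}MV$, i.e.\ $I_{J_s}$ is a $\Sigma$-valid base instance, and the isomorphism gives $G(J_s)=G(I_{J_s})$. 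Now if $\bar t\in certain_{{\cal M}\Sigma}(Q)$ then (its values lying in $consts({\cal M}\Sigma)$) $\bar t\in Q(I_{J_s})$ for every $s$, hence $\bar t\in G(J_s)$ for every $s$, so $\bar t$ is in the output: completeness holds. For Covering (and soundness), take any $\Sigma$-valid base instance $I$. It is in particular an OWA solution of the data-exchange setting ($(MV,I)\models\Sigma_{st}$ since $V(I)\supseteq MV[V]$, and $I\models\Sigma$), so universality of the canonical universal solution \cite{FaginKMP05} supplies a constant-fixing $h_0\colon J_{de}^{{\cal M}\Sigma}\to I$. Moreover $I\models\Sigma$ and, because $V(I)=MV[V]$, also $I\models\Phi_{(MV)}$, so $I$ satisfies the entire disjunctive chase set. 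Invoking the completeness property of disjunctive chase (in the style of \cite{FaginKMP05,DeutschT01}: if the target satisfies all disjunctive dependencies and receives a homomorphism from the root, some leaf maps homomorphically into it), I obtain a leaf $J_s$ and a constant-fixing $h\colon J_s\to I$. If $\bar t$ is in the output then $\bar t\in G(J_s)\subseteq Q(J_s)$, and transporting along $h$ gives $\bar t\in Q(I)$; as $I$ was arbitrary, $\bar t\in certain_{{\cal M}\Sigma}(Q)$: soundness holds. Together these identify the output with $certain_{{\cal M}\Sigma}(Q)$, which also establishes that the output is well defined independently of the chase order.

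The hard part will be the Covering step: I must adapt the disjunctive-chase completeness theorem to our \emph{mixed} dependency set, which combines ordinary $\Sigma$-tgds/egds with the disjunctive egds and implication constraints induced by $MV$, and confirm that the specific form in Eq.~(\ref{new-tau-eqn}) (minus its disequality disjunct) behaves exactly as a disjunctive egd for this argument, including that branches which would equate two distinct constants are legitimately pruned. The conceptually delicate point — and the one I would stress, since it is precisely where this problem departs from ${\cal M}\Sigma$-conditional containment — is justifying that dropping the disequalities (the $\neg\,{\cal E}(\phi^{(n)})$ disjuncts) costs nothing here, even though Proposition~\ref{zmendelz-valuation-prop} needed them in Section~\ref{query-cont-algor-sec}. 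The reason is that in the present setting I never need the leaves to characterize \emph{all} instances of ${\bf P}$: Faithfulness only asks that each leaf's freezing be $\Sigma$-valid, and this is automatic because $\Sigma$ is constant-free and freezing is an isomorphism, so no disequality guard is required to rule out spurious collapses. Making this contrast precise, and checking that the $\supseteq$ and $\subseteq$ halves of $V(J_s)=MV[V]$ survive all chase steps, is where the bulk of the technical care will go.
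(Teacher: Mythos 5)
Your proposal is correct and follows essentially the same route as the paper: your ``Faithfulness'' lemma is the paper's Proposition~\ref{vv-sizes-prop}(2)(b) (grounded leaves are $\Sigma$-valid base instances, yielding completeness), your ``Covering'' lemma is Proposition~\ref{each-valid-ground-instance-mapped-from-vv-prop} (a constant-fixing homomorphism from some leaf into every $\Sigma$-valid base instance, yielding soundness by transporting ground tuples), and termination comes from weak acyclicity exactly as in Proposition~\ref{vv-sizes-prop}(1). Even the step you flag as the hard part---choosing the ``appropriate'' disjunct at each disjunctive-egd step so as to simulate a root-to-leaf path that stays homomorphically above the target instance---is precisely how the paper establishes Covering.
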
 
 
 \vspace{-0.1cm} 


Interestingly, in view-verified data exchange one cannot always find all the certain answers correctly if one does the  chase ``in stages.'' That is, chase only with the input dependencies $\Sigma$, followed by chase only with the ``$MV$-induced dependencies,'' does not always yield a correct solution. The reverse order of the ``stages'' is not guaranteed to work 
either. See Appendix~\ref{chase-not-staged-for-view-verif-app} for the details. 

\vspace{-0.2cm}

\section{Complexity of the Problems} 
\label{complexity-sec} 

In this section we consider the complexity of the CQ weakly acyclic cases of the three problems defined in Section~\ref{probl-stmt-defs-sec}. Our main focus is on 
the security-relevant complexity measure introduced in \cite{ZhangM05}. Due to the page limit, the exposition in this section is just an outline of the results; Appendices~\ref{vv-dexchg-sec} and \ref{pi-p-2-hardness-proof-sec} provide the details. 

Generally, in studying the complexity of the certain-query-answer problem of Definition~\ref{certain-query-answer-def}, it is natural to build on the results of \cite{AbiteboulD98}, which were established w.r.t. 
the complexity measures introduced in \cite{Vardi82}. For instance, for the CQ weakly acyclic case of the problem of Definition~\ref{certain-query-answer-def}, it is straightforward to obtain membership in coNP for the ``data complexity'' of the problem, that is, for the assumption that the set $MV$ is the only non-fixed part of 
$({\cal M}\Sigma,Q,{\bar t})$. Then one can use the coNP-hardness result of \cite{AbiteboulD98} for the special case $\Sigma$ $=$ $\emptyset$, to arrive at the overall coNP completeness of the CQ weakly acyclic case of the problem of Definition~\ref{certain-query-answer-def} w.r.t. the data-complexity measure of \cite{Vardi82}. 

Given the security focus of this current work, we concentrate here on a complexity measure that extends naturally that of \cite{ZhangM05}. Zhang and Mendelzon in \cite{ZhangM05} assumed for their ``conditional-containment'' problem that the base schema and the view definitions are fixed, where- as the set of view answers $MV$ and the queries posed on the base schema in presence of $MV$ can vary. (This assumption is natural in, e.g., database-access control  \cite{BertinoGK11}, where access-control views are typically defined once for each (class of) users, and where the only frequently changing parts of the problem instance would be the view answers, $MV$, seen by the users, as well as the ``secret queries'' $Q$.) \cite{ZhangM05} did not consider dependencies on the base schema; we follow the standard data-exchange assumption, see, e.g., \cite{FaginKMP05}, that the given dependencies are fixed, rather than being part of the problem input. 

Under this complexity metric, we consider first the complexity of the certain-query-answer problem (Definition~\ref{certain-query-answer-def}) and of the ${\cal M}\Sigma$-conditional containment problem (Definition~\ref{main-mendelz-def}). Given the tight relationship between these problems (see Theorem~\ref{problem-relationship-thm}), specifically between their CQ weakly acyclic cases, we can view the two problems together, using the following ``grid'': 
\begin{enumerate} 
	\item The CQ weakly acyclic case of the certain-query-answer problem with $\Sigma$ $=$ $\emptyset$; 

\vspace{-0.2cm} 

	\item The general (i.e., $\Sigma$ $\neq$ $\emptyset$ is possible) CQ weakly acyclic case of the certain-query-answer problem; 

\vspace{-0.2cm} 

	\item The CQ weakly acyclic case of the ${\cal M}\Sigma$-conditional-containment problem with $\Sigma$ $=$ $\emptyset$; and 

\vspace{-0.2cm} 

	\item The general (i.e., $\Sigma$ $\neq$ $\emptyset$ is possible) CQ weakly acyclic case of ${\cal M}\Sigma$-conditional containment.  

\end{enumerate} 

\noindent 
With the help of Theorem~\ref{problem-relationship-thm}, it is easy to show  that Problem 1 above is a special case of each of Problems 2 and 3, and that each of the latter problems is, in turn, a special case of Problem 4. 

Using these relationships, we have shown that each of Problems 1--4 is ${\Pi}^p_2$ complete w.r.t. our extension, above, of the complexity measure of \cite{ZhangM05}. These four results are immediate from the results of Theorems~\ref{complexity-hardness-for-certain-answers-thm}--\ref{complexity-membership-for-conditional-containment-thm}, to follow, and from our observations above on the inclusions between the four problems. 

\vspace{-0.1cm} 

\begin{theorem} 
\label{complexity-hardness-for-certain-answers-thm} 
The certain-query-answer problem of Definition~\ref{certain-query-answer-def} is ${\Pi}^p_2$ hard for CQ input instances $({\cal M}\Sigma,$$Q,$\linebreak ${\bar t})$ in which $\Sigma$ $=$ $\emptyset$ in the setting ${\cal M}\Sigma$, under the assumption that everything in the instance $({\cal M}\Sigma,Q,{\bar t})$ is fixed except for $Q$, $\bar t$, and the set $MV$ in ${\cal M}\Sigma$. 
\end{theorem}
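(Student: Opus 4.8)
The plan is to prove ${\Pi}^p_2$-hardness by a polynomial-time many-one reduction from the validity problem for $\forall\exists$ quantified Boolean formulas, i.e., deciding whether $\Psi = \forall x_1 \cdots \forall x_n \, \exists y_1 \cdots \exists y_m \, \phi(\bar x, \bar y)$ is true, where $\phi$ is a $3$-CNF; this problem is ${\Pi}^p_2$-complete. The guiding idea is that the outer universal quantifier $\forall \bar x$ will be realized by the (CWA) universal quantification over all base instances $I$ with ${\cal V} \Rightarrow_{I,\emptyset} MV$, each such $I$ encoding one truth assignment to $\bar x$, while the inner $\exists \bar y$ together with the matrix $\phi$ will be realized by the existential (non-head) variables of a Boolean CQ query $Q$ evaluated on $I$. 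Since the complexity measure fixes the schema {\bf P} and the view definitions ${\cal V}$ and allows only $Q$, $\bar t$, and $MV$ to vary, all instance-dependent information (the variables, clauses, and their number) must be placed into $Q$ and $MV$; the schema and views are designed once and for all, and we take $\Sigma = \emptyset$ throughout.

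For the fixed part I would use a schema {\bf P} $= \{R, D, C\}$, where $R(\mathit{Idx},\mathit{Val})$ is a ``choice'' relation, $D(\mathit{Val})$ is a value-domain relation, and $C$ is a constant-size $6$-ary relation holding the satisfaction table of a single $3$-literal clause (the $56$ polarity/value combinations that make a clause true). The fixed views are: a projection $V_R(\mathit{Idx}) \leftarrow R(\mathit{Idx},\mathit{Val})$, a projection $V_{\mathit{RVal}}(\mathit{Val}) \leftarrow R(\mathit{Idx},\mathit{Val})$, two Boolean anchor views $V_{0}() \leftarrow R(z_0,0)$ and $V_{1}() \leftarrow R(z_1,1)$ using fixed anchor constants $z_0,z_1$, and identity views $V_D$ and $V_C$ on $D$ and $C$. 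Given $\Psi$, the reduction sets $MV[V_R] = \{a_1,\dots,a_n,z_0,z_1\}$, $MV[V_{\mathit{RVal}}] = \{0,1\}$, $MV[V_0]=MV[V_1]=\{()\}$, $MV[V_D] = \{0,1\}$, and $MV[V_C]$ to the fixed clause table. Under CWA, $V_R$ forces the index column of $R$ to be exactly $\{a_1,\dots,a_n,z_0,z_1\}$ (so each variable index is present and no other occurs), $V_{\mathit{RVal}}$ forces every value of $R$ to lie in $\{0,1\}$, and the anchor views force $R(z_0,0)$ and $R(z_1,1)$ to be present; the latter is exactly what lets the equality $V_{\mathit{RVal}}(I) = \{0,1\}$ be satisfied \emph{without} excluding the all-$0$ or all-$1$ assignments, while still pinning the domain of the real indices to $\{0,1\}$. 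The relevant instances are then precisely those $R$ that pick, for each $a_i$, a nonempty subset of $\{0,1\}$ (plus the anchors), with $D$ and $C$ pinned exactly.

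The query is Boolean, $Q() \leftarrow \bigwedge_{i=1}^{n} R(a_i,X_i) \wedge \bigwedge_{j=1}^{m} D(Y_j) \wedge \bigwedge_{k} C(\bar p_k, B_{k,1}, B_{k,2}, B_{k,3})$, with $\bar t$ the empty tuple; here $\bar p_k$ are the fixed polarity constants of the $k$-th clause and each $B_{k,\ell}$ is the value-variable $X_i$ or $Y_j$ of the $\ell$-th literal's variable. Reading $R(a_i,X_i)$ binds $X_i$ to a value chosen for $x_i$ by $I$, the $D(Y_j)$ subgoals let $Q$ existentially range $\bar y$ over $\{0,1\}$, and each $C$-subgoal succeeds iff the corresponding clause is satisfied. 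I would then argue correctness by a monotonicity argument: call an instance \emph{clean} if each $a_i$ carries exactly one value, so that clean instances correspond bijectively to assignments $\bar x$, and note every relevant instance contains a clean sub-instance. Since CQ queries are monotone, $() \in Q(I)$ holds for all relevant $I$ iff it holds for all clean $I$, and on the clean instance for $\bar x$ we have $() \in Q(I)$ iff $\exists \bar y \in \{0,1\}^m \, \phi(\bar x,\bar y)$. Hence $() \in certain_{{\cal M}\Sigma}(Q)$ iff $\Psi$ is valid, completing the reduction.

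The main obstacle I anticipate is the value-domain gadget: because CWA forces \emph{equality} $V(I) = MV[V]$ rather than containment, a projecting view that bounds $R$'s values to $\{0,1\}$ simultaneously forces both values to occur, which would wrongly drop the all-false and all-true assignments and break the intended bijection with assignments; the anchor views $V_0,V_1$ (with dedicated index constants $z_0,z_1$) are the device that resolves this, and checking that they neither over-constrain $R$ nor spuriously enlarge any view answer is the delicate point. A secondary point requiring care is verifying that \emph{every} relevant instance really does contain a clean sub-instance and that the extra ``permissive'' tuples can only enlarge $Q(I)$ (so they never falsify a certain answer), together with confirming that the construction keeps {\bf P}, ${\cal V}$, and $\Sigma=\emptyset$ fixed while placing all input-dependent size into $Q$ and $MV$, as required by the complexity measure of \cite{ZhangM05}.
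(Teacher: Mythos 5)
Your reduction is correct, and at the top level it is the same strategy the paper uses: a reduction from $\forall\exists$-$3$-CNF in which the CWA-quantification over relevant base instances realizes the universal block, the existential (non-head) variables of a Boolean query realize the existential block, and all formula-dependent data is pushed into $Q$ and $MV$ so that the schema, the view definitions, and $\Sigma=\emptyset$ stay fixed. The gadgets differ, though, in ways worth noting. The paper stores the clause-satisfaction information in $MV$ itself: a $4$-ary relation $R$ is mirrored by a view $V$, and $MV[V]$ lists the seven satisfying assignments of each clause tagged with the clause index (so $7l$ tuples that vary with the formula), while the polarities are implicit in which query variables are repeated across the $R$-subgoals. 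You instead keep a constant $56$-tuple satisfaction table in a fixed relation $C$ pinned by an identity view, and encode each clause's polarities as constants in the query; this makes the clause gadget formula-independent and slightly cleaner. For forcing each universally quantified variable to carry a Boolean value, the paper joins $P(Y_j,j)$ with a relation $S$ that is pinned (via a mirroring view) to contain exactly $(0,j)$ and $(1,j)$, and relies on the query's $R$-subgoals to discard spurious $P$-tuples; you pin the value column of your choice relation directly with a projection view and resolve the resulting ``both values must occur somewhere'' artifact of CWA with the two anchor tuples $R(z_0,0)$, $R(z_1,1)$ — a different but equally valid device, and you correctly identified it as the delicate point. Your closing monotonicity argument (every relevant instance contains a relevant ``clean'' sub-instance in bijection with an assignment, and CQ answers only grow on supersets) plays exactly the role of the paper's Lemmas on core instances and valuations, so the correctness proof goes through.
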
 

\vspace{-0.1cm} 

(It is easy to show that in the setting of Theorem~\ref{complexity-hardness-for-certain-answers-thm}, it is enough to consider problem instances in which the size of the tuple $\bar t$ in $({\cal M}\Sigma,Q,{\bar t})$ is the arity of the query $Q$. See Appendix~\ref{pi-p-2-hardness-proof-sec} for the details.) 

The result of Theorem~\ref{complexity-hardness-for-certain-answers-thm} is by reduction from the $\forall$$\exists$-$CNF$ problem, which is known to be ${\Pi}^p_2$ complete \cite{Stockmeyer76}. 
(Please see Appendix~\ref{pi-p-2-hardness-proof-sec}  for a detailed proof.) We start off from the reduction that was used by Millstein and colleagues in \cite{MillsteinHF03} for the problem of query containment for data-integration systems. We modify the reduction of \cite{MillsteinHF03} in the spirit that is similar to the modification of that reduction (of \cite{MillsteinHF03}) 
as suggested in \cite{ZhangM05}. (Recall that the full version of \cite{ZhangM05}, including any of its proofs, has never been published.) The goal of our modification is to comply with our assumptions about the input size, specifically with the assumption that the input view definitions are fixed. (In \cite{MillsteinHF03} it is assumed that both the queries and the view definitions can vary.) 

\vspace{-0.1cm} 

\begin{theorem} 
\label{complexity-membership-for-conditional-containment-thm} 
The ${\cal M}\Sigma$-conditional containment\linebreak problem of Definition~\ref{main-mendelz-def} 
is in ${\Pi}^p_2$ for CQ weakly acyclic input instances  $({\cal M}\Sigma,Q_1,Q_2)$, under the assumption that everything in the instance $({\cal M}\Sigma,Q_1,Q_2)$ is fixed except  $Q_1$, $Q_2$, and the set $MV$ in ${\cal M}\Sigma$. 
\end{theorem}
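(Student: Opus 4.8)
The plan is to reduce the decision $Q_1 \sqsubseteq_{{\cal M}\Sigma} Q_2$ to the chase-based test justified by Theorem~\ref{main-mendelz-thm}, and then to show that the \emph{complement} of the problem (non-containment) lies in $\Sigma^p_2 = \mathrm{NP}^{\mathrm{NP}}$; since $\Pi^p_2$ is the complement class of $\Sigma^p_2$, this yields the claim. By Theorem~\ref{main-mendelz-thm}, for a fixed chase result $(Q_1)^{{\cal M}\Sigma}$ of the ${\cal M}\Sigma$-expansion $Q'_1$ of $Q_1$, we have $Q_1 \not\sqsubseteq_{{\cal M}\Sigma} Q_2$ iff $(Q_1)^{{\cal M}\Sigma}$ is nontrivial \emph{and} some $CQ^{\neq}$ component $q$ of $(Q_1)^{{\cal M}\Sigma}$ satisfies $q \not\sqsubseteq Q_2$. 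To make ``a fixed chase result'' unambiguous, I would fix a canonical chase strategy (apply, at each query, the lexicographically first applicable dependency of $\Upsilon_{{\cal M}\Sigma}$ with the lexicographically first applicable homomorphism, branching on the disjuncts). This determines a unique chase tree $\cal T$ whose non-$\epsilon$ leaves are exactly the $CQ^{\neq}$ components of $(Q_1)^{{\cal M}\Sigma}$.

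The first key step is to guess such a component by a polynomial-size certificate. By Proposition~\ref{ucq-tree-finite-prop}, $\cal T$ has polynomial depth in the size of $Q_1$ and $MV$, and each of its leaves has polynomial size; moreover the branching at every node is bounded by the number of disjuncts of the applied dependency, which is polynomial in $MV$ and the (fixed) view and dependency definitions. Hence a root-to-leaf path is specified by recording, at each of the polynomially many steps, only the chosen disjunct index. Given this certificate, one re-derives the dependency and homomorphism at each node from the canonical strategy and replays the chase deterministically in polynomial time, obtaining the target $CQ^{\neq}$ component $q$ (which stays of polynomial size along the path). The replay also verifies, in polynomial time, that each step is legal; the remaining checks --- that the reached node is a genuine (non-$\epsilon$) leaf, and that $q \not\sqsubseteq Q_2$ --- are performed next.

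The crux of the analysis is the inner containment test. Since $q$ is a $CQ^{\neq}$ query and $Q_2$ is a CQ query, by the results of \cite{ChandraM77,LevyMSS95} reviewed in Section~\ref{prelim-sec} we have $q \sqsubseteq Q_2$ iff there exists a containment mapping from $Q_2$ to $q$, i.e., a homomorphism from $body_{(Q_2)}$ into the relational atoms of $q$ respecting the head. Existence of such a mapping is an NP property, so $q \not\sqsubseteq Q_2$ --- the \emph{non}existence of any containment mapping --- is a coNP property: universally over all candidate maps $h$ from the variables of $Q_2$ into the terms of $q$, $h$ fails to be a containment mapping, a condition checkable in polynomial time. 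The leaf-ness test is likewise coNP (no dependency of $\Upsilon_{{\cal M}\Sigma}$ applies, i.e., universally over candidate homomorphisms from antecedents into $q$, none triggers a step). Folding these two universal phases together, the complement is expressible as ``there exists a polynomial certificate such that, universally over candidate homomorphisms and containment mappings, a polynomial-time predicate holds,'' i.e., an $\exists\forall$ statement. This places the complement in $\Sigma^p_2$ and therefore the ${\cal M}\Sigma$-conditional containment problem in $\Pi^p_2$, as required.

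Two points deserve care. First, everything is measured under the stated security-relevant metric: {\bf P}, $\Sigma$, and the view definitions $\cal V$ are fixed, so the polynomial bounds of Proposition~\ref{ucq-tree-finite-prop}, stated in terms of $Q_1$ and $MV$, are genuinely polynomial in the input size, and the number of disjuncts per dependency is polynomial in $MV$. Second --- and this is the main obstacle --- the whole argument collapses to $\Pi^p_2$ rather than a higher level only because the inner test $q \sqsubseteq Q_2$ is \emph{NP} and not harder: it is exactly here that I rely on the containment-mapping characterization of \cite{ChandraM77,LevyMSS95} for $CQ^{\neq}$-in-CQ containment, which lets the disequalities in $q$ be ignored (the all-distinct canonical database of $q$ is always a valid witness instance). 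Were this test to require an additional existential layer (e.g., branching over variable identifications), the complement would climb to $\Sigma^p_3$ and the bound would be lost. Verifying this characterization, and confirming that a single guessed leaf of the canonical chase tree suffices so that the nondeterministic guess is both polynomially bounded and faithful to one fixed chase result $(Q_1)^{{\cal M}\Sigma}$, are the parts of the argument I would write out most carefully.
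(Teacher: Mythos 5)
Your proposal is correct and follows the same route the paper intends: the paper's own justification is only the one-line remark that membership follows from Proposition~\ref{ucq-tree-finite-prop}, i.e., guess a polynomial-size $CQ^{\neq}$ component of the chase result (certified, as you do, by a polynomial-depth root-to-leaf path) and then refute containment in $Q_2$ with a coNP check via the containment-mapping test of \cite{ChandraM77,LevyMSS95}, giving a $\Sigma^p_2$ complement. Your added care about fixing a canonical chase strategy and about the fixed-antecedent dependencies keeping the replay polynomial fills in details the paper leaves implicit, but the argument is essentially the paper's.
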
 

\vspace{-0.1cm} 

(The proof is straightforward from the results of Section~\ref{query-cont-algor-sec}, specifically of Proposition~\ref{ucq-tree-finite-prop}.) 

Finally, consider the complexity of the CQ weakly acyclic case of the problem of finding all certain-answer tuples.  
Observe first that, in the special case where $Q$ is a Boolean query, the problem of finding all certain-answer tuples reduces to the certain-query-answer problem for the same ${\cal M}\Sigma$ and $Q$, with $\bar t$ $=$ $()$. 
Now recall that the view-verified data exchange 
of Section~\ref{main-vv-dexchg-sec} is a sound and complete algorithm for the (general) CQ weakly acyclic case of this problem. Using this algorithm, we establish a singly-exponential upper bound on the time complexity 
 of the problem, under the same complexity measure as above, that is, assuming that in each instance $({\cal M}\Sigma,Q)$, everything is fixed except for the query $Q$ and for the set $MV$ in ${\cal M}\Sigma$. Further, under the same complexity measure, solving the CQ weakly acyclic case of the problem is in {\sc PSPACE} (provided the algorithm does certain things on-the-fly). Note that the output size is up to exponential in the arity of the input query $Q$. See Appendix~\ref{vv-dexchg-sec} for all the details.


{\small 
\bibliographystyle{abbrv}
\bibliography{arxiv032014}  

\begin{thebibliography}{10}

\bibitem{AbiteboulD98}
S.~Abiteboul and O.~Duschka.
\newblock Complexity of answering queries using materialized views.
\newblock In {\em PODS}, 1998.

\bibitem{AbiteboulHV95}
S.~Abiteboul, R.~Hull, and V.~Vianu.
\newblock {\em Foundations of Databases}.
\newblock Addison-Wesley, 1995.

\bibitem{AhoSU79}
A.~V. Aho, Y.~Sagiv, and J.~D. Ullman.
\newblock Equivalences among relational expressions.
\newblock {\em SIAM J. Comput.}, 8:218--246, 1979.

\bibitem{LibkinDataExchange}
M.~Arenas, P.~Barcel{\'o}, L.~Libkin, and F.~Murlak.
\newblock {\em Relational and {XML} Data Exchange}.
\newblock Morgan {\&} Claypool, 2010.

\bibitem{Barcelo09}
P.~Barcel{\'o}.
\newblock Logical foundations of relational data exchange.
\newblock {\em SIGMOD Record}, 38(1):49--58, 2009.

\bibitem{BeeriV81}
C.~Beeri and M.~Y. Vardi.
\newblock The implication problem for data dependencies.
\newblock In {\em ICALP}, pages 73--85, 1981.

\bibitem{BertinoGK11}
E.~Bertino, G.~Ghinita, and A.~Kamra.
\newblock Access control for databases: Concepts and systems.
\newblock {\em Foundations and Trends in Databases}, 3(1-2):1--148, 2011.

\bibitem{BrodskyFJ00}
A.~Brodsky, C.~Farkas, and S.~Jajodia.
\newblock Secure databases: Constraints, inference channels, and monitoring
  disclosures.
\newblock {\em IEEE TKDE}, 12(6):900--919, 2000.

\bibitem{ChandraM77}
A.~Chandra and P.~Merlin.
\newblock Optimal implementation of conjunctive queries in relational data
  bases.
\newblock In {\em STOC}, 1977.

\bibitem{DeutschDiss}
A.~Deutsch.
\newblock {\em {XML} Query Reformulation over Mixed and Redundant Storage}.
\newblock PhD thesis, Univ. Pennsylvania, 2002.

\bibitem{DeutschPods08}
A.~Deutsch, A.~Nash, and J.~Remmel.
\newblock The chase revisited.
\newblock In {\em PODS}, pages 149--158, 2008.

\bibitem{DeutschT01}
A.~Deutsch and V.~Tannen.
\newblock Optimization properties for classes of conjunctive regular path
  queries.
\newblock In {\em DBPL}, 2001.

\bibitem{FaginKMP05}
R.~Fagin, P.~Kolaitis, R.~Miller, and L.~Popa.
\newblock Data exchange: semantics and query answering.
\newblock {\em Theoretical Computer Science}, 336:89--124, 2005.

\bibitem{FanG12}
W.~Fan and F.~Geerts.
\newblock {\em Foundations of Data Quality Management}.
\newblock Morgan {\&} Claypool, 2012.

\bibitem{FuxmanKMT06}
A.~Fuxman, P.~G. Kolaitis, R.~J. Miller, and W.-C. Tan.
\newblock Peer data exchange.
\newblock {\em ACM TODS}, 31(4):1454--1498, 2006.

\bibitem{GrecoMS12}
S.~Greco, C.~Molinaro, and F.~Spezzano.
\newblock {\em Incomplete Data and Data Dependencies in Relational Databases}.
\newblock Morgan {\&} Claypool, 2012.

\bibitem{ImielinskiL84}
T.~Imielinski and W.~Lipski.
\newblock Incomplete information in relational databases.
\newblock {\em J. ACM}, 31(4):761--791, 1984.

\bibitem{JohnsonK84}
D.~S. Johnson and A.~C. Klug.
\newblock Testing containment of conjunctive queries under functional and
  inclusion dependencies.
\newblock {\em J. Comput. Syst. Sci.}, 28(1):167--189, 1984.

\bibitem{LevyMSS95}
A.~Levy, A.~Mendelzon, Y.~Sagiv, and D.~Srivastava.
\newblock Answering queries using views.
\newblock In {\em PODS}, 1995.

\bibitem{MiklauS07}
G.~Miklau and D.~Suciu.
\newblock A formal analysis of information disclosure in data exchange.
\newblock {\em JCSS}, 73(3):507--534, 2007.

\bibitem{MillsteinHF03}
T.~D. Millstein, A.~Y. Halevy, and M.~Friedman.
\newblock Query containment for data integration systems.
\newblock {\em JCSS}, 66, 2003.

\bibitem{MotwaniNT08}
R.~Motwani, S.~U. Nabar, and D.~Thomas.
\newblock Auditing {SQL} queries.
\newblock In {\em ICDE}, pages 287--296, 2008.

\bibitem{NabarKMM08}
S.~U. Nabar, K.~Kenthapadi, N.~Mishra, and R.~Motwani.
\newblock A survey of query auditing techniques for data privacy.
\newblock In {\em Privacy-Preserving Data Mining}, pages 415--431. 2008.

\bibitem{RizviMSR04}
S.~Rizvi, A.~O. Mendelzon, S.~Sudarshan, and P.~Roy.
\newblock Extending query rewriting techniques for fine-grained access control.
\newblock In {\em SIGMOD Conference}, pages 551--562, 2004.

\bibitem{Shannon49}
C.~E. Shannon.
\newblock Communication theory of secrecy systems.
\newblock {\em Bell Syst. Techn. J.}, 28:656--715, 1949.

\bibitem{Stockmeyer76}
L.~J. Stockmeyer.
\newblock The polynomial-time hierarchy.
\newblock {\em Theoretical Computer Science}, 3(1):1--22, 1976.

\bibitem{StoffelS05}
K.~Stoffel and T.~Studer.
\newblock Provable data privacy.
\newblock In {\em DEXA}, pages 324--332, 2005.

\bibitem{Vardi82}
M.~Y. Vardi.
\newblock The complexity of relational query languages (extended abstract).
\newblock In {\em STOC}, pages 137--146, 1982.

\bibitem{YipL98}
R.~W. Yip and K.~N. Levitt.
\newblock Data level inference detection in database systems.
\newblock In {\em CSFW}, pages 179--189, 1998.

\bibitem{ZhangO97}
X.~Zhang and M.~{\"O}zsoyoglu.
\newblock Implication and referential constraints: A new formal reasoning.
\newblock {\em IEEE TKDE}, 9, 1997.

\bibitem{ZhangM05}
Z.~Zhang and A.~O. Mendelzon.
\newblock Authorization views and conditional query containment.
\newblock In {\em ICDT}, 2005.

\end{thebibliography}
}

} 

{\mysize 


\newpage 

\appendix 

\nop{ 

\section{An illustration of the three \\ problems considered in this \\ paper} 
\label{probl-stmt-illustr-sec} 

In this appendix we recast Example~\ref{intro-main-three-ex} into the formal terms of Section~\ref{probl-stmt-defs-sec}. The results of this paper permit us to obtain correct solutions to all the three problems formulated at the end of Example~\ref{intro-formalized-ex}.

\begin{example} 
\label{intro-formalized-ex} 
The 
setting ${\cal M}\Sigma$ outlined in Example~\ref{intro-main-three-ex} uses the schema\footnote{We abbreviate the relation names of Example~\ref{intro-main-three-ex} using the first letter of each name.}  {\bf P} $=$ $\{$$E$, $H$, $O$$\}$ and a weakly acyclic set $\{ \sigma \}$ of dependencies, with $\sigma$ as follows:   

\begin{tabbing} 
$\sigma$: $E(X,Y,Z) \wedge H(Y) \rightarrow \exists S \ \ O(X,S)$. 
\end{tabbing} 

Further, $\cal V$ $=$ $\{$$U$, $V$, $W$$\}$ is the set of CQ views in ${\cal M}\Sigma$, with the view definitions as follows: 

\begin{tabbing} 
Hop me b \= boo \kill
$U(X)$ \> $\leftarrow H(X).$ \\ 
$V(X,Y)$ \> $\leftarrow E(X,Y,Z).$ \\ 
$W(Y,Z)$ \> $\leftarrow E(X,Y,Z).$ 
\end{tabbing} 

Finally, for brevity we encode the constants of Example~\ref{intro-main-three-ex} as $c$ for {\tt johnDoe,} $d$ for {\tt sales,} and $f$ for {\tt 50000.} Then the set of view answers $MV$ of Example~\ref{intro-main-three-ex} can be recast for ${\cal M}\Sigma$ as $MV$ $=$ $\{ U(d), V(c,d), W(d,f) \}.$  


Now that we have specified a 
CQ weakly acyclic setting ${\cal M}\Sigma$, consider the CQ query $Q$ of Example~\ref{intro-main-three-ex}: 

\begin{tabbing} 
$Q(X,Z)$ $\leftarrow E(X,Y,Z), O(X,S).$ 
\end{tabbing} 

Consider another CQ query, $Q_1$, defined as follows: 
\begin{tabbing} 
$Q_1(c,f)$ $\leftarrow H(d), E(c,d,X), E(Y,d,f).$ 
\end{tabbing} 

For the ${\cal M}\Sigma$, $Q$, and $Q_1$ as above and for a tuple $\bar t$ $=$ $(c,f)$, we have the following problems as in Section~\ref{probl-stmt-defs-sec}: 

\begin{enumerate} 
	\item The certain-query-answer problem for $Q$ and $\bar t$ in ${\cal M}\Sigma$ is ``Is $\bar t$ a certain answer of $Q$ w.r.t. ${\cal M}\Sigma$?'' 
	\item The problem of finding the set of certain answers to $Q$ w.r.t.  ${\cal M}\Sigma$ is ``Return the set $certain_{{\cal M}\Sigma}(Q)$ for $Q$ and ${\cal M}\Sigma$;'' and, finally, 
	\item The problem of ${\cal M}\Sigma$-conditional containment for $Q_1$ and $Q$ 
	is ``Does $Q_1$ $\sqsubseteq_{{\cal M}\Sigma}$ $Q$ hold?'' 
\end{enumerate} 
\end{example} 

} 

\section{Certain Query Answers: \\ Example with $\Sigma$ $=$ $\emptyset$} 
\label{main-three-sec-app} 

In this appendix we show an example with $\Sigma$ $=$ $\emptyset$, of an input instance for the certain-query-answer problem of Definition~\ref{certain-query-answer-def} and for the problem of finding the set of  certain query answers w.r.t. a materialized-view setting, see Section~\ref{probl-stmt-defs-sec}. This example is to be used as an illustration in later appendices, e.g., in Appendix~\ref{app-rewriting-sec}. 

\begin{example} 
\label{app-main-three-ex} 
Consider a relation $E$ (for {\tt Employee}), which is used for storing information about employees of a company. Let the attributes of $E$ be {\tt Name}, {\tt Dept} (for the departments in which the employees work), 
and {\tt Salary}:  $E$({\tt Name, Dept, Salary}).


We assume that no integrity constraints hold on the database schema {\bf P} containing the relation $E$. (In particular, the only primary key of $E$ is all its attributes.) Thus, the set $\Sigma$ of dependencies holding on the schema {\bf P} is the empty set. 

Let a query {\tt Q} ask for the salaries of all the employees. We can formulate the query {\tt Q} in SQL as  

{\small 
\begin{verbatim} 
(Q): SELECT DISTINCT Name, Salary FROM E;   
\end{verbatim} 
} 

The query {\tt Q} is a CQ query, which can be expressed in Datalog as follows: 

\begin{tabbing} 
$Q(X, Z) \leftarrow E(X,Y,Z).$ 
\end{tabbing} 

Consider two views, {\tt V} and {\tt W}, that are defined for some class(es) of users on the schema {\bf P}. The view {\tt V} returns the departments for each employee, and the view {\tt W} returns the salaries in each department. The Datalog definitions of these CQ views are as follows. (Please see Example~\ref{intro-main-three-ex} for the SQL definitions of {\tt V} and {\tt W}.)  

\begin{tabbing} 
$V(X, Y) \leftarrow E(X,Y,Z).$ \\ 
$W(Y, Z) \leftarrow E(X,Y,Z).$ 
\end{tabbing} 

Suppose that some user(s) are authorized to see the answers to {\tt V} and {\tt W}, and that at some point in time the user(s) can see the following set $MV$ of answers to these views. 

\begin{tabbing} 
$MV$ $=$ $\{$ {\tt V(johnDoe,sales)}, {\tt W(sales,50000)}  $\} \ .$ 
\end{tabbing} 

Then one ``conjunctive fact-expression'' ${\cal C}_{MV}$ (see Section~\ref{probl-stmt-relationship-sec}) that the user(s) can put together based on this instance $MV$ is 

\begin{tabbing} 
${\cal C}_{MV}$ $=$ {\tt V(johnDoe,sales)} {\tt AND} {\tt W(sales,50000)}.  
\end{tabbing} 

Let $\bar t$ $=$ $($  {\tt johnDoe}, {\tt 50000} $)$ be the tuple that the user hypothesizes is in the answer to the query {\tt Q} on all the instances of the relation {\tt Emp} that satisfy the (empty set of) dependencies $\Sigma$ and that generate the above instance $MV$. Observe that  the tuple $\bar t$ is made up from values {\tt johnDoe} and {\tt 50000}, which ``are generated by'' the expression ${\cal C}_{MV}$. Thus, knowing the associations between the values in the tuple $\bar t$ and the respective attribute names in ${MV}$, we can ``put together'' this expression ${\cal C}_{MV}$ and this tuple $\bar t$ as a SQL query, {\tt Rvw}, in terms of the views {\tt V} and {\tt W} and in presence of the constants from the instance $MV$, as follows: 

{\small 
\begin{verbatim} 
(Rvw): SELECT DISTINCT Name, Salary FROM V, W 
       WHERE Name = `johnDoe' AND V.Dept = W.Dept 
       AND V.Dept = `sales' AND Salary = `50000'; 
\end{verbatim} 
} 

\noindent 
That is, by defining the query {\tt Rvw} we formalize the rather natural process of the user ``putting together'' tuples in the available instance $MV$ and of his then using some of the values from the selected tuples to put forth a tuple of constants that is hypothetically in the answer to the query $Q$. We note that {\tt Rvw} is defined by a CQ query: 

\begin{tabbing} 
heheboomheheheboom he \= go \kill 
$R_{vw}(johnDoe,50000) \leftarrow V(johnDoe,sales),$ \\ 
\> $W(sales,50000).$ 
\end{tabbing}

By definition of {\tt Rvw}, the above tuple $\bar t$ $=$ $($  {\tt johnDoe}, {\tt 50000} $)$ is the only possible answer to  {\tt Rvw} on all possible instances of the relations {\tt V} and {\tt W}. It is easy to see that this answer to the query {\tt Rvw} is compatible with (i.e., can be obtained by asking the query {\tt Rvw} on) the above instance $MV$. (Intuitively, this is true because we have constructed {\tt Rvw} from the tuples in the above instance $MV$.) 
\end{example}

\section{Relationship between the cert- ain-query-answer problem w.r.t. a setting and the query-cont- ainment problem w.r.t. a setting} 
\label{app-rewriting-sec} 

In this appendix we provide the technical details on the main result of Section~\ref{probl-stmt-relationship-sec}. That result, Theorem~\ref{problem-relationship-thm}, establishes a direct relationship between the certain-query-answer problem for a given $Q$, $\bar t$, and a valid CQ setting ${\cal M}\Sigma$, 
and the problem of ${\cal M}\Sigma$-conditional containment for $Q'$ and $Q$, 
for the same $Q$ and ${\cal M}\Sigma$. Here, the query $Q'$ is constructed from the given $Q$, $\bar t$, and ${\cal M}\Sigma$. The proof of Theorem~\ref{problem-relationship-thm} is immediate from Theorem~\ref{rewr-approach-thm}, see Section~\ref{one-rewr-enough-sec} of this appendix. 

A relationship similar to that of Theorem~\ref{problem-relationship-thm} was observed in \cite{AbiteboulD98} 
for the dependency-free case under OWA. 
In contrast, our result holds under CWA and in presence of dependencies on the schema {\bf P} in the setting ${\cal M}\Sigma$.   

\subsection{The Intuition} 
\label{intu-sec} 

The intuition for the relationship between the two problems can be illustrated via Example~\ref{app-main-three-ex}. 
That is, we formalize the  thought process of the presumed attackers concerning the answers to ``secret queries'' \cite{MiklauS07} (such as the query $Q$ in Example~\ref{app-main-three-ex}) that are posed on a a proprietary database. The attackers know a materialized-view setting ${{\cal M}\Sigma}$ $=$ ({\bf P}, $\Sigma$, $\cal V$, $MV$) and the definition of a query $Q$, and come up with candidate certain answers to $Q$ in this setting ${\cal M}\Sigma$. (This is, informally, the idea of the problem of database access control, see, e.g.,  \cite{BertinoGK11}.) 

We argue that the approach of putting together the tuples in the view answers is natural for the presumed 
attackers to use. Indeed, in any specific 
instance of the problem of the certain query answer w.r.t. a materialized-view setting, attackers deal directly with a ground instance $MV$. They know that $MV$ is a set of answers to the ``access-policy'' views $\cal V$ on the underlying instance of interest. Thus, intuitively, a question that is natural for the attackers to ask is which values in $consts({\cal M}\Sigma)$ can be put together to form an answer to the secret query $Q$, on all possible underlying instances of interest. (In general, the attackers could consider in their pursuit not just values in $consts({\cal M}\Sigma)$, but also constants mentioned in the queries for $\cal V$ and in the secret query $Q$. It is straightforward to reflect this in our setting, by adding extra head arguments to the definitions of the respective views. Thus, we do not explicitly consider this extension in this paper.)

How can this certain-query-answer question be answered deterministically, as required by 
Definition~\ref{certain-query-answer-def}? A natural approach would be to put together a query, call it $R$, in terms of the relations in the instance $MV$, and to then prove that $R$ is ``contained,'' in some precise sense (in particular, w.r.t. the views in $\cal V$), in the secret query $Q$. We will be referring  to all queries $R$ over schema $\cal V$ 
as ``rewritings'' (in terms of $\cal V$), as indeed they would be defined in terms of the relation symbols in $\cal V$. (Another reason to refer to such queries $R$ as ``rewritings'' is that we will need to define their expansions shortly.) Hence our name for this approach to solving the problem of whether a ground tuple $\bar t$ is a certain answer to a query $Q$ w.r.t. a setting ${\cal M}\Sigma$. As we will see, this approach determines precisely containment between two queries w.r.t. the given setting ${\cal M}\Sigma$. Here, one of the two queries in question is the query $Q$ provided in the problem input, and the other query is the ``expansion'' \cite{LevyMSS95} of a  rewriting $R$, with head $\bar t$, such that the definition of $R$ is obtained from  ${\cal M}\Sigma$. 

A challenge arises immediately when attackers pursue this train of thought: In the set $R(MV)$ of answers to a rewriting $R$ on an instance $MV$, not all the tuples in $R(MV)$ would necessarily be in the answer to the secret query $Q$. That is, the formal containment that we 
are looking for would not hold for all rewritings $R$. (As an illustration, suppose that in some instance of the certain-query-answer problem w.r.t. a setting, the input query $Q$ returns names of employees with high salaries, and $R$ returns names of employees in the accounting department. Clearly, the answer to $R$ is not necessarily a subset of the answer to $Q$, on any particular database of interest.) 

At the same time, for each individual tuple $\bar t$ $\in$ $R(MV)$, it makes sense to ask the question of whether {\em the query} $R({\bar t})$ is contained in $Q$ in the appropriate precise sense. The intuition is that $R({\bar t})$ is the result of binding the head vector of $R$ to a tuple, $\bar t$, in the relation $R(MV)$; as a result, $\bar t$ is the only answer to $R({\bar t})$ on the instance $MV$. We focus on such rewritings $R({\bar t})$ in this approach. 

\subsection{Defining the Rewriting Approach}  
\label{rewr-approach-subsec} 

We now formalize the ``rewriting approach'' to determining whether a given ground tuple $\bar t$ is a certain answer to a given query $Q$ w.r.t. a given materialized-view setting ${\cal M}\Sigma$. As outlined in Section~\ref{intu-sec}, the intuition is that this rewriting approach works by determining containment between two queries w.r.t. the setting ${\cal M}\Sigma$, such that each of the two queries is obtained from some combination of the given inputs $Q$, $\bar t$, and ${\cal M}\Sigma$. Our intent is to tie the definitions of rewritings that attackers can formulate on view answers, to components of the given setting ${\cal M}\Sigma$. After defining rewritings of the form $R({\bar t})$, we recall the standard notion of {\em expansion} of a view-based rewriting \cite{LevyMSS95}; an expansion of a rewriting is its equivalent reformulation over the schema {\bf P} used to define the query $Q$. We then formalize the rewriting approach, using the notion of containment of queries over the same schema w.r.t. a set of view answers $MV$ and a set of dependencies $\Sigma$.


{\bf Head-instantiated rewriting $R({\bar t})$.} Intuitively, an attackers' goal in this approach is to form candidate answers, $\bar t$, to the secret query $Q$, by using constants that are in $consts({\cal M}\Sigma)$ and that thus presumably originate from the actual instance $I$ of interest, $\cal V$ $\Rightarrow_{I,\Sigma}$ $MV$. (That is, the instance $I$ is the actual proprietary database, of interest to the attackers, that has been used to generate the instance $MV$.) Observe that not all $\cal V$-based rewritings could be used toward this goal. Consider, for instance, a rewriting $R_f(f) \leftarrow V(X)$, defined using a constant $f$ and a subgoal $V(X)$ for a view $V$ and variable $X$. Clearly, regardless of the contents of the set $MV$ of answers to the view $V$, the answer $R_f(f)(MV)$ to $R_f$ on $MV$ is always the set $\{$ $(f)$ $\}$. To rule out rewritings such as $R_f(f)$, we define a desirable type of rewritings as follows.

For an integer $k$ $\geq$ $0$ and for a $k$-tuple $\bar t$ of constants, consider a safe $k$-ary CQ query $R$ over schema $\cal V$ and with head vector $\bar t$. We say that $R$ is a {\em head-instantiated rewriting for} $\bar t$ iff there exists a safe $k$-ary CQ query $R^{(g)}({\bar X})$ over the schema $\cal V$, called a {\em grounding rewriting for} $R$, that satisfies two conditions. First, the head vector $\bar X$ of $R^{(g)}$ does not include constants. Second, there exists a mapping, $h$, that maps all the elements of $\bar X$ to constants and that maps the remaining terms in $R^{(g)}$ to themselves, such that the rewriting that results from applying $h$ to the definition of $R^{(g)}$ is exactly $R$. 

\begin{example} 
\label{grounding-ex} 
Consider rewritings $R_{vw}$ and $\tilde{R}_{vw}$ that use constants $c$, $d$, and $f$. ($\tilde{R}_{vw}$ also uses a variable $Z$.) 
\begin{tabbing} 
$R_{vw}(c,f) \leftarrow V(c,d), W(d,f) .$ \\ 
$\tilde{R}_{vw}(c,f) \leftarrow V(c,Z), W(Z,f) .$
\end{tabbing}   

\noindent 
Suppose that $c$, $d$, and $f$ stand for `$johnDoe$', `$sales$', and `$50000$', respectively; then $R_{vw}$ is the rewriting {\tt Rvw} of Example~\ref{app-main-three-ex}. By applying this ``translation of constants'' to the instance $MV$ of Example~\ref{app-main-three-ex}, we obtain an instance $MV$ $=$ $\{ V(c,d), W(d,f) \}$. 

Each of $R_{vw}$ and $\tilde{R}_{vw}$ is a head-instantiated rewriting for $(c,f)$, as the respective grounding rewritings are 
\begin{tabbing} 
$R^{(g)}_{vw}(X,Y) \leftarrow V(X,d), W(d,Y) .$ \\ 
$\tilde{R}^{(g)}_{vw}(X,Y) \leftarrow V(X,Z), W(Z,Y) .$
\end{tabbing}   
\vspace{-0.6cm} 
\end{example} 

By definition, for each head-instantiated rewriting $R$ for a tuple $\bar t$, the answer to $R$ on an instance $I$ of schema $\cal V$ is nonempty (and is exactly $\{$ $\bar t$ $\}$) iff there exists a valuation from the body of $R$ onto a subset $I'$ of $I$ such that $adom(I')$ contains all the constants in $\bar t$. Further, consider an arbitrary safe CQ query $R''$ over schema $\cal V$, and consider any instance $MV$ such that $R''(MV)$ $\neq$ $\emptyset$. Then for each tuple $\bar t$ in $R''(MV)$, the result $R''({\bar t})$ of binding the head vector of $R''$ to $\bar t$ (while consistently renaming the terms in the body of $R''$ as well) is a head-instantiated rewriting for $\bar t$, such that the answer to $R''({\bar t})$ on the instance $MV$ is not empty (and is, obviously, exactly $\{ {\bar t} \}$). 

{\bf Expansion of a rewriting.} We now take a step back, from head-instantiated rewritings to general CQ rewritings, to recall the standard notion of expansion of a CQ rewriting  \cite{LevyMSS95}. 
First, given a set of views $\cal V$ and a ground instance $I$ of schema {\bf P}, consider an instance over schema $\cal V$ $\cup$ {\bf P}, which results from adding to $I$ the relation $V(I)$ for each relation symbol $V$ $\in$ $\cal V$. We call the latter instance {\em the $\cal V$-enhancement of $I$,} and denote it by $I^{(+{\cal V})}$. Now given a rewriting $R$ over the schema $\cal V$, consider a query, $R'$, over the schema {\bf P} such that for each instance $I$ of {\bf P} we have $R'(I)$ $=$ $R(I^{(+{\cal V})})$. We call such a query $R'$ {\em an expansion of} $R$ {\em (over} {\bf P}), and denote it by $R^{exp}$. We will use the following straightforward 
property of $R^{exp}$: 

\begin{proposition} 
\label{expansion-prop} 
For a set $\cal V$ of views over schema {\bf P}: Let $R$ be a query over $\cal V$ such that $R^{exp}$ exists, and let $MV$ be an instance of schema $\cal V$. Then for each instance $I$ of schema {\bf P} such that $\cal V$ $\Rightarrow_{I,\emptyset}$ $MV$, we have  $R^{exp}(I)$ $=$ $R(MV)$. 
\end{proposition}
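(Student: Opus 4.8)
The plan is to prove the identity by a direct two-step unwinding of the two definitions involved, followed by one appeal to the hypothesis $\cal V$ $\Rightarrow_{I,\emptyset}$ $MV$. First I would invoke the definition of expansion: since $R^{exp}$ exists by assumption, we have $R^{exp}(I) = R(I^{(+{\cal V})})$ for the given instance $I$, where $I^{(+{\cal V})}$ is the $\cal V$-enhancement of $I$, i.e., $I$ augmented with the relation $V(I)$ for each $V$ $\in$ $\cal V$. This reduces the claim to showing $R(I^{(+{\cal V})}) = R(MV)$.

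The key observation is that $R$ is a query over the schema $\cal V$ alone. By the semantics of $UCQ^{\neq}$ queries (Section~\ref{inst-query-sec}), the answer to $R$ on any instance is computed as the set of images under valuations from $body_{(R)}$, and every relational subgoal of $R$ carries a relation symbol from $\cal V$. Hence the value of $R$ depends only on the restriction of its input instance to the relation symbols in $\cal V$: the facts over the symbols of {\bf P} that are present in $I^{(+{\cal V})}$ but absent from a pure $\cal V$-instance play no role, since no subgoal of $R$ can match them. Thus $R(I^{(+{\cal V})})$ equals $R$ evaluated on the $\cal V$-part of $I^{(+{\cal V})}$, which by construction is the instance consisting of the relations $V(I)$, one for each $V$ $\in$ $\cal V$.

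Finally I would apply the hypothesis. Since $\Sigma$ $=$ $\emptyset$ here, $\cal V$ $\Rightarrow_{I,\emptyset}$ $MV$ means precisely that $V(I) = MV[V]$ for every $V$ $\in$ $\cal V$ (the requirement $I \models \Sigma$ is vacuous). Therefore the $\cal V$-part of $I^{(+{\cal V})}$ coincides, relation by relation, with $MV$, and so $R(I^{(+{\cal V})}) = R(MV)$. Chaining the equalities yields $R^{exp}(I) = R(I^{(+{\cal V})}) = R(MV)$, as required.

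I do not anticipate a genuine obstacle, as the statement is essentially a bookkeeping consequence of the two definitions. The only point requiring a little care is the middle step, namely making explicit that a query over $\cal V$ is insensitive to the extra {\bf P}-facts carried by the enhancement; this is immediate from the valuation-based semantics but should be stated rather than silently assumed.
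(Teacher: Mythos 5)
Your proof is correct and follows the only natural route: unfold the definition $R^{exp}(I) = R(I^{(+{\cal V})})$, note that a query over schema $\cal V$ is insensitive to the {\bf P}-facts in the enhancement, and apply $V(I) = MV[V]$ from the hypothesis $\cal V$ $\Rightarrow_{I,\emptyset}$ $MV$. The paper treats this proposition as straightforward and gives no explicit proof, so your argument fills in exactly the intended bookkeeping.
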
 



In case where $\cal V$ is a set of CQ views and $R$ is a CQ query over $\cal V$, the standard process in the literature of constructing $R^{exp}$  is \cite{LevyMSS95} to replace each subgoal of $R$ with the body of the query for the corresponding relation symbol in $\cal V$. In this process, care is taken to perform two operations on each query, of the form $V({\bar X})$ $\leftarrow$ $body_{(V)}$, whose body in $R^{exp}$ corresponds to a subgoal of $R$ of the form $V({\bar Z})$. First, we {\em bind the arguments of the query for} $V$ {\em to the vector} $\bar Z$, 
in two steps, (A) and (B). The step  (A) is to extend the homomorphism,\footnote{It is easy to show that if such a $h$ cannot be constructed, then $R$ is unsatisfiable on all instances of the schema $\cal V$.} $h$, that maps each element of the head vector $\bar X$ of the query for $V$ to the same-position element of $\bar Z$, to a homomorphism $h_{V({\bar Z})}$, whose domain is the set of all arguments of $body_{(V)}$, such that $h_{V({\bar Z})}$ is the identity mapping for each value that is not in the domain of $h$. Then, (B) is to apply $h_{V({\bar Z})}$ to $body_{(V)}$, with conjunction  of relational atoms $h_{V({\bar Z})}(body_{(V)})$ as the output. 
Second, before conjoining $h_{V({\bar Z})}(body_{(V)})$ with the current body, $bodyR^{exp}_{curr}$, of the query $R^{exp}$, we rename all the variables in $h_{V({\bar Z})}(body_{(V)})$ consistently into ``fresh'' variables not occuring in $bodyR^{exp}_{curr}$. The query $R^{exp}$ that is obtained by this two-step process is (i) an expansion of $R$ over {\bf P}, and is (ii) unique up to variable renaming.  

{\bf Conditional containment:} We can now use containment to directly relate a rewriting $R$, via $R^{exp}$, to the given query $Q$. The notion of containment we will use is that of Definition~\ref{main-mendelz-def} in Section~\ref{probl-stmt-defs-sec}. 

For notational convenience in the results to follow, we now introduce $\Sigma$-conditional containment of a rewriting in a query modulo a set of views and a set of answers to the views: For a rewriting $R$ over $\cal V$ such that $R^{exp}$ exists, and for a query $Q$ over {\bf P}, we say that $R$ {\em  is $\Sigma$-conditionally contained in} $Q$ {\em w.r.t.} $MV$ {\em and modulo} $\cal V$, denoted $R$ $\sqsubseteq_{\Sigma,MV,{\cal V}}$ $Q$, iff $R^{exp}$ $\sqsubseteq_{{\cal M}\Sigma}$ $Q$ holds. 

{\bf The rewriting approach:} We are now ready to specify the rewriting approach to the problem of determining whether a tuple $\bar t$ is a certain answer to a query $Q$ w.r.t. a materialized-view setting ${\cal M}\Sigma$. For an instance $MV$ of schema $\cal V$, we say that a head-instantiated rewriting $R({\bar t})$ is {\em $MV$-validated} iff the set $R({\bar t})(MV)$ is not the empty set. (The rewritings $R_{vw}$ and $\tilde{R}_{vw}$ of Example~\ref{grounding-ex} are both $MV$-validated.) Given a valid\footnote{The view-verified data-exchange approach of Appendix~\ref{vv-dexchg-sec} can be used as a sound and complete algorithm for determining whether a given CQ weakly acyclic materialized-view setting is valid.} materialized-view setting ${\cal M}\Sigma$ with set of view answers $MV$, a $k$-ary ($k$ $\geq$ $0$) query $Q$, and a $k$-ary tuple $\bar t$ of constants in $consts({\cal M}\Sigma)$, the {\em rewriting approach to the certain-query-answer problem for $Q$ and $\bar t$ in} ${\cal M}\Sigma$ is to find an $MV$-validated head-instantiated rewriting $R$ for $\bar t$ such that $R$ $\sqsubseteq_{\Sigma,MV,{\cal V}}$ $Q$. This approach is sound: 

\begin{proposition} 
\label{rewr-approach-prop}  
Given a valid materialized-view setting ${\cal M}\Sigma$ $=$ $(${\bf P}, $\Sigma$, $\cal V$, $MV)$ and a query $Q$ of arity $k$ $\geq$ $0$. Let $\bar t$ be a $k$-tuple of values from $consts({\cal M}\Sigma)$. Suppose that there exists an $MV$-validated head-instantiated rewriting $R$ for $\bar t$ such that $R$ $\sqsubseteq_{\Sigma,MV,{\cal V}}$ $Q$. Then $\bar t$ is a certain answer to $Q$ w.r.t. ${\cal M}\Sigma$. 
\end{proposition}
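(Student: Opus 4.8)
The plan is to unwind the relevant definitions and chain together two facts: that an $MV$-validated head-instantiated rewriting has $\{\bar t\}$ as its answer on $MV$, and that its expansion reproduces exactly this answer on every relevant base instance. First I would fix an arbitrary base instance $I$ with $\cal V$ $\Rightarrow_{I,\Sigma}$ $MV$ and aim to establish $\bar t$ $\in$ $Q(I)$; since $I$ is then arbitrary among the relevant instances, this yields $\bar t$ $\in$ $certain_{{\cal M}\Sigma}(Q)$ directly from the definition of the certain-answer set.

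The first step is to pin down $R(MV)$. Because $R$ is a head-instantiated rewriting for $\bar t$, its head vector is the ground tuple $\bar t$, so any valuation from $body_{(R)}$ maps the head to $\bar t$ itself (valuations fix constants); hence $R(I')$ $\subseteq$ $\{\bar t\}$ for every instance $I'$ of schema $\cal V$. Since $R$ is $MV$-validated, $R(MV)$ is nonempty, and therefore $R(MV)$ $=$ $\{\bar t\}$.

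The second step is to transport this answer to $I$ through the expansion. The key observation is that $\cal V$ $\Rightarrow_{I,\Sigma}$ $MV$ requires $V(I)$ $=$ $MV[V]$ for every $V$ $\in$ $\cal V$, which is precisely the condition $\cal V$ $\Rightarrow_{I,\emptyset}$ $MV$ (the $\emptyset$-version merely drops the extra requirement $I$ $\models$ $\Sigma$). Thus Proposition~\ref{expansion-prop} applies and gives $R^{exp}(I)$ $=$ $R(MV)$ $=$ $\{\bar t\}$, so in particular $\bar t$ $\in$ $R^{exp}(I)$. Finally, the hypothesis $R$ $\sqsubseteq_{\Sigma,MV,{\cal V}}$ $Q$ unfolds by definition to $R^{exp}$ $\sqsubseteq_{{\cal M}\Sigma}$ $Q$, which by Definition~\ref{main-mendelz-def} means $R^{exp}(I)$ $\subseteq$ $Q(I)$ for this very $I$; combining with $\bar t$ $\in$ $R^{exp}(I)$ gives $\bar t$ $\in$ $Q(I)$, completing the argument.

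I do not anticipate a genuine obstacle here, since the proof is essentially definition-chasing; the only point requiring care is the observation that the CWA-with-dependencies condition $\Rightarrow_{I,\Sigma}$ entails the dependency-free condition $\Rightarrow_{I,\emptyset}$, so that Proposition~\ref{expansion-prop}, stated only for the $\emptyset$-case, is legitimately applicable to a $\Sigma$-valid base instance. Note that this soundness direction is the easy half of the rewriting approach; the converse direction (that whenever $\bar t$ is a certain answer some suitable $MV$-validated head-instantiated rewriting exists) is where the real work lies and would be treated separately.
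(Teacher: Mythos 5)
Your proof is correct and follows essentially the same route as the paper's: establish $R(MV)=\{\bar t\}$ from head-instantiation plus $MV$-validation, transport this to every $\Sigma$-valid base instance via Proposition~\ref{expansion-prop}, and conclude with the containment $R^{exp}\sqsubseteq_{{\cal M}\Sigma}Q$. Your explicit remark that $\Rightarrow_{I,\Sigma}$ entails $\Rightarrow_{I,\emptyset}$ (so that Proposition~\ref{expansion-prop} applies) makes a step the paper leaves implicit, which is a minor improvement in rigor.
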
 

The proof is very simple: Any rewriting $R$ satisfying the conditions of Proposition~\ref{rewr-approach-prop} must have $\bar t$ as its only answer on the given instance $MV$. Thus, by Proposition~\ref{expansion-prop}, $R^{exp}$ (which exists because the containment $R$ $\sqsubseteq_{\Sigma,MV,{\cal V}}$ $Q$ is stated in Proposition~\ref{rewr-approach-prop} to be well defined)  has $\bar t$ as its only answer on all instances $I$ such that $\cal V$ $\Rightarrow_{I,\Sigma}$ $MV$. From the containment $R^{exp}$ $\sqsubseteq_{{\cal M}\Sigma}$ $Q$ we conclude that on all such instances $I$, the tuple $\bar t$ is an element of the set $Q(I)$. The claim of Proposition~\ref{rewr-approach-prop}  follows. 


\subsection{One Rewriting Is Enough} 
\label{one-rewr-enough-sec} 

Suppose that we are given a materialized-view setting ${\cal M}\Sigma$ and a $k$-ary ($k$ $\geq$ $0$) query $Q$. One (e.g., attackers) can generate all $k$-tuples $\bar t$ with values in $consts({\cal M}\Sigma)$. Then, Proposition~\ref{rewr-approach-prop} gives the attackers a tool for testing each such $\bar t$ as a 
certain-answer tuple to $Q$ w.r.t. ${\cal M}\Sigma$, assuming that the attackers can come up with an ``appropriate'' rewriting $R$ for each $\bar t$, and that there exists an algorithm for checking the containment $R$ $\sqsubseteq_{\Sigma,MV,{\cal V}}$ $Q$ for each such $R$ and $\bar t$. We will consider in the next subsection some such algorithms. However, in this current subsection we show that to solve this generate-and-test problem for a given instance ${\cal M}\Sigma$, 
it is not necessary to also generate various bodies for rewritings $R$.  Each valid ${\cal M}\Sigma$ is associated with a single CQ rewriting for each $\bar t$, with all these rewritings (for ${\cal M}\Sigma$) having the same body. 
The main result of this subsection is 
that  for all CQ 
 instances ${\cal M}\Sigma$, these rewritings alone can be used to  capture {\em exactly} the set of all certain answers to the input query. 

Intuitively, we are to construct the desired rewritings from the facts in the instance $MV$ given as part of ${\cal M}\Sigma$. Indeed, by the requirement that $MV$ in each ${\cal M}\Sigma$ be a ground instance, each fact in $MV$ can be viewed equivalently as a relational atom whose all arguments are constants. Given a fixed $MV$ and a $k$-ary ($k$ $\geq$ $0$) tuple $\bar t$ of values from $consts({\cal M}\Sigma)$, we say that a CQ rewriting $R$ with head vector $\bar t$ is an {\em $MV$-induced rewriting for} $\bar t$ iff each subgoal of $R$ is a fact in $MV$. Further, an $MV$-induced rewriting $R$ for $\bar t$ is a {\em maximal $MV$-induced rewriting for} $\bar t$ iff each fact in $MV$ is also a subgoal of $R$. In Example~\ref{grounding-ex}, $R_{vw}$ is a maximal $MV$-induced rewriting for the tuple $(c,f)$, and $\tilde{R}_{vw}$ is not an $MV$-induced rewriting. 

We now list useful properties of $MV$-induced rewritings.

\begin{proposition} 
\label{mv-induced-properties-prop} 
Given a valid materialized-view setting ${\cal M}\Sigma$ $=$ $(${\bf P}, $\Sigma$, $\cal V$, $MV)$. For a $k$ $\geq$ $0$, let $\bar t$, ${\bar t}_1$, and ${\bar t}_2$  be $k$-tuples of values from $consts({\cal M}\Sigma)$, for the $MV$ in ${\cal M}\Sigma$. Then:
\begin{itemize} 
	\item[(1)] Each $MV$-induced rewriting $R$ for $\bar t$ is an $MV$-validated head-instantiated rewriting for $\bar t$ whenever each element of $\bar t$ occurs in the body of $R$; 
	\item[(2)] For each $\bar t$, there is exactly one maximal $MV$-induced rewriting, which is an $MV$-validated head-instantiated rewriting for $\bar t$; and  
	\item[(3)] The maximal $MV$-induced rewritings for ${\bar t}_1$ and for ${\bar t}_2$ have the same body, for all choices of ${\bar t}_1$ and ${\bar t}_2$.  
\end{itemize} 
\end{proposition}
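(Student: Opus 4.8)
The plan is to isolate a single structural fact about maximal $MV$-induced rewritings and then read off the three items from it, with item (1) supplying the technical content that items (2) and (3) reuse.

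First, for item (1), I would take an arbitrary $MV$-induced rewriting $R$ for $\bar t$ each of whose head entries occurs in $body_{(R)}$, and exhibit a grounding rewriting $R^{(g)}$ witnessing that $R$ is head-instantiated. The construction is direct: introduce fresh distinct variables $X_1,\dots,X_k$ for the $k$ head positions of $\bar t=(t_1,\dots,t_k)$, and form $body_{(R^{(g)})}$ from $body_{(R)}$ by lifting, for each $i$, one chosen occurrence of the constant $t_i$ to $X_i$ (the occurrence hypothesis is exactly what guarantees such an occurrence exists, and hence that each $X_i$ appears in the body, i.e.\ that $R^{(g)}$ is safe and carries no constants in its head). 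Defining $h$ by $h(X_i)=t_i$ and $h=\mathrm{id}$ on every other term, $h$ is the identity on all non-head terms and $h(R^{(g)})=R$ by construction, so $R$ is head-instantiated for $\bar t$. For the ``$MV$-validated'' half, observe that $body_{(R)}$ consists solely of ground facts that, by definition of $MV$-induced, are members of $MV$; hence the identity map is a valuation from $body_{(R)}$ into $MV$, so $\bar t\in R(MV)$ and in particular $R(MV)\neq\emptyset$.

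Next, for item (2), I would note that any $MV$-induced rewriting for $\bar t$ has body contained in the conjunction of all facts of $MV$, while maximality forces every fact of $MV$ to occur as a subgoal; thus the body of a maximal $MV$-induced rewriting is exactly the conjunction of all the (finitely many) ground facts of $MV$, with head $\bar t$, which pins it down uniquely. Since this body lists every fact of $MV$, every value of $\bar t$ occurring in $MV$ appears in the body, so the occurrence hypothesis of item (1) is met and item (1) immediately certifies that the maximal rewriting is an $MV$-validated head-instantiated rewriting for $\bar t$. Item (3) is then immediate from the same characterization: the body of the maximal $MV$-induced rewriting is the conjunction of all facts of $MV$ irrespective of $\bar t$, so the maximal rewritings for $\bar t_1$ and $\bar t_2$ differ only in their head vectors and share a common body.

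The one step I expect to require genuine care is the head-instantiated construction in item (1): one must verify that the witnessing map $h$ is the identity on every term other than the lifted head variables, so that body-internal constants are left untouched and $h(R^{(g)})$ equals $R$ on the nose, and that safety of $R^{(g)}$ survives when a value is repeated in $\bar t$, which forces one to lift distinct occurrences and is precisely where the occurrence hypothesis does its work.
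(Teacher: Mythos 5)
The paper states this proposition without proof, and your overall route --- characterize the maximal $MV$-induced rewriting as the one whose body is the conjunction of all facts of $MV$, prove item (1) by an explicit grounding construction, and derive (2) and (3) from that characterization --- is the natural, essentially forced argument. Two points need repair, and you have put your finger on exactly the wrong one of them as being safe. First, in the grounding construction for item (1): when a constant is repeated in $\bar t$, say $t_i = t_j = c$ with $i \neq j$, the hypothesis that each element of $\bar t$ occurs in the body of $R$ guarantees only \emph{one} occurrence of $c$; if $c$ occurs exactly once in the body of $R$, you cannot lift distinct occurrences to $X_i$ and $X_j$, so your construction as stated leaves one of the two head variables out of the body and $R^{(g)}$ is unsafe. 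The fix is to let $h$ collapse subgoals: duplicate the atom carrying the unique occurrence of $c$, lift one copy to $X_i$ and the other to $X_j$, and note that $h$ sends both copies to the same ground atom, so $h(R^{(g)})$ is still exactly $R$ once duplicate conjuncts are identified (as the paper does elsewhere, e.g.\ in Example~\ref{disj-neq-ex}). This is a genuine, if small and fixable, hole in precisely the step you singled out as delicate and then declared handled by the occurrence hypothesis.

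Second, your item (2) silently narrows ``every element of $\bar t$'' to ``every value of $\bar t$ occurring in $MV$.'' Since $consts({\cal M}\Sigma)$ is defined to include constants appearing in the view \emph{definitions} but possibly not in $adom(MV)$, a tuple $\bar t$ may contain a value that occurs nowhere in the body of the maximal $MV$-induced rewriting; in that case the occurrence hypothesis of item (1) fails and the maximal rewriting is in fact not head-instantiated for that $\bar t$. The paper acknowledges exactly this in the note immediately following the proposition and repairs it by adding the view-body constants to the view head vectors; your proof should either import that convention or restrict $\bar t$ to $adom(MV)$ explicitly. With those two repairs the argument is complete: the $MV$-validation step via the identity valuation, the uniqueness claim in (2), and item (3) are all correct as written.
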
 

{\em Note 1.} In case where some constants in $consts({\cal M}\Sigma)$ are in definitions of the views in $\cal V$ but are not in $MV$, all the claims of Proposition~\ref{mv-induced-properties-prop} still go through once we modify the view definitions by adding all their body constants into their head vectors. This fix for this case also works for all the other results of this appendix that deal with head-instantiated rewritings for tuples $\bar t$ constructed from the elements of the set $consts({\cal M}\Sigma)$. 

The next result says that when we have the maximal $MV$-induced rewriting for some tuple $\bar t$ of values from $consts({\cal M}\Sigma)$, then we do not need to consider any other head-instantiated rewritings for $\bar t$ in our rewriting approach. (The proof is straightforward and is omitted.) 


\begin{proposition} 
\label{need-only-max-mv-rewr-prop} 
Given a valid CQ materialized-view setting ${\cal M}\Sigma$ $=$ $(${\bf P}, $\Sigma$, $\cal V$, $MV)$ and a query $Q$ of arity $k$ $\geq$ $0$. Let $\bar t$  be a $k$-tuple of values from $consts({\cal M}\Sigma)$. 
Let $R$ be an $MV$-validated head-instantiated rewriting for $\bar t$ such that $R$ $\sqsubseteq_{\Sigma,MV,{\cal V}}$ $Q$. Then for the maximal $MV$-induced rewriting $R^*_{\bar t}$ for $\bar t$, we have $R^*_{\bar t}$ $\sqsubseteq_{\Sigma,MV,{\cal V}}$ $Q$. 
\end{proposition}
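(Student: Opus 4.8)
The plan is to reduce the conditional-containment claim to an \emph{unconditional} containment between the two rewritings over $\cal V$, and then transport it through the expansion operator. First I would use the hypothesis that $R$ is $MV$-validated. Since $R$ is a head-instantiated rewriting for $\bar t$, its head vector is exactly $\bar t$, a tuple of constants, so $R(MV)$ $\neq$ $\emptyset$ supplies a valuation $\nu$ from $body_{(R)}$ into $MV$; because homomorphisms fix constants, $\nu(\bar t)$ $=$ $\bar t$. By Proposition~\ref{mv-induced-properties-prop}(2), the maximal $MV$-induced rewriting $R^*_{\bar t}$ is well defined, is itself an $MV$-validated head-instantiated rewriting for $\bar t$, and has as its subgoals \emph{exactly} the facts of $MV$. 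Hence $\nu$ sends each subgoal of $R$ to a fact of $MV$, i.e.\ to a subgoal of $R^*_{\bar t}$, so $\nu$ is a homomorphism from $body_{(R)}$ to $body_{(R^*_{\bar t})}$ that preserves the head vector. By the Chandra--Merlin characterization recalled in Section~\ref{prelim-sec}, $\nu$ is a containment mapping witnessing $R^*_{\bar t}$ $\sqsubseteq$ $R$ over the schema $\cal V$ (this is intuitive, as $R^*_{\bar t}$ carries all of $MV$ as subgoals and is therefore the more restrictive query).

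Next I would push this containment through the expansions. As ${\cal M}\Sigma$ is a CQ setting and both $R$ and $R^*_{\bar t}$ are satisfiable CQ rewritings over $\cal V$, the expansions $R^{exp}$ and $(R^*_{\bar t})^{exp}$ exist. Expansion is monotone under unconditional containment: for every instance $I$ of {\bf P} the defining property gives $R^{exp}(I)$ $=$ $R(I^{(+{\cal V})})$ and $(R^*_{\bar t})^{exp}(I)$ $=$ $R^*_{\bar t}(I^{(+{\cal V})})$, and the $\cal V$-part of $I^{(+{\cal V})}$ is an instance of $\cal V$ on which both rewritings are evaluated; since $R^*_{\bar t}$ $\sqsubseteq$ $R$ over $\cal V$ this yields $(R^*_{\bar t})^{exp}(I)$ $\subseteq$ $R^{exp}(I)$ for all $I$, i.e.\ $(R^*_{\bar t})^{exp}$ $\sqsubseteq$ $R^{exp}$ unconditionally over {\bf P}.

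Finally I would combine this with the hypothesis. By definition, $R$ $\sqsubseteq_{\Sigma,MV,{\cal V}}$ $Q$ means $R^{exp}$ $\sqsubseteq_{{\cal M}\Sigma}$ $Q$, that is $R^{exp}(I)$ $\subseteq$ $Q(I)$ for every $I$ with $\cal V$ $\Rightarrow_{I,\Sigma}$ $MV$. Unconditional containment restricts in particular to this sub-collection of $\Sigma$-valid base instances, so $(R^*_{\bar t})^{exp}(I)$ $\subseteq$ $R^{exp}(I)$ $\subseteq$ $Q(I)$ on all such $I$, giving $(R^*_{\bar t})^{exp}$ $\sqsubseteq_{{\cal M}\Sigma}$ $Q$, which is exactly $R^*_{\bar t}$ $\sqsubseteq_{\Sigma,MV,{\cal V}}$ $Q$, as required.

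I expect the only delicate point to be the first step: verifying carefully that the $MV$-validating valuation of $R$ really is a containment mapping into $R^*_{\bar t}$ (in particular that it fixes $\bar t$ and lands inside the subgoal set of the maximal rewriting, which is where Proposition~\ref{mv-induced-properties-prop} is essential). The monotonicity of expansion and the passage from unconditional to ${\cal M}\Sigma$-conditional containment are routine, which is why the proposition's proof can reasonably be omitted as ``straightforward.''
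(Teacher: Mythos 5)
Your argument is correct; note that the paper itself gives no proof for this proposition (it is declared ``straightforward and omitted''), so there is nothing to match against line by line. Each of your three steps checks out: the $MV$-validating valuation of $R$ is indeed a homomorphism from $body_{(R)}$ into the body of $R^*_{\bar t}$ (which is exactly the set of facts of $MV$) fixing the constant head $\bar t$, hence a containment mapping giving $R^*_{\bar t} \sqsubseteq R$ over $\cal V$; expansion is monotone because $R^{exp}(I)=R(I^{(+{\cal V})})$ and the containment over $\cal V$ applies to the $\cal V$-part of $I^{(+{\cal V})}$; and unconditional containment restricts to the $\Sigma$-valid instances. That said, the route the authors most plausibly intended is even shorter and bypasses Chandra--Merlin entirely: by Proposition~\ref{expansion-prop} together with Proposition~\ref{mv-induced-properties-prop}, on \emph{every} instance $I$ with $\cal V$ $\Rightarrow_{I,\Sigma}$ $MV$ one has $R^{exp}(I)=R(MV)=\{\bar t\}$ and $(R^*_{\bar t})^{exp}(I)=R^*_{\bar t}(MV)=\{\bar t\}$, so the two expansions literally coincide on all relevant instances and the conditional containment in $Q$ transfers immediately. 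Your detour buys a genuine \emph{unconditional} containment $(R^*_{\bar t})^{exp}\sqsubseteq R^{exp}$, which is stronger than needed but costs you the (correctly handled) care about containment-mapping directions; the direct route only yields agreement on the $\Sigma$-valid instances, which is all the proposition requires.
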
 

The following result  says that maximal $MV$-induced rewritings alone can be used to  capture exactly the certain answers to 
queries w.r.t.  CQ materialized-view settings. This result is an immediate corollary of Propositions~\ref{mv-induced-properties-prop} and~\ref{need-only-max-mv-rewr-prop}. 

\begin{theorem} 
\label{rewr-approach-correct-thm} 
Given a valid CQ materialized-view setting ${\cal M}\Sigma$ and a 
query $Q$ of arity $k$ $\geq$ $0$. For a $k$-tuple $\bar t$ of values from $consts({\cal M}\Sigma)$: The tuple $\bar t$ is a certain answer to $Q$ w.r.t. ${\cal M}\Sigma$ iff for the maximal $MV$-induced rewriting $R^*_{\bar t}$ for $\bar t$, we have $R^*_{\bar t}$ $\sqsubseteq_{\Sigma,MV,{\cal V}}$ $Q$. 
\end{theorem}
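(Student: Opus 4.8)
The plan is to collapse the entire biconditional to one observation about how the maximal $MV$-induced rewriting behaves under expansion, and then read off both directions as a chain of equivalences.

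First I would record the central fact: for the maximal $MV$-induced rewriting $R^*_{\bar t}$, the expansion $(R^*_{\bar t})^{exp}$ returns exactly the singleton $\{\bar t\}$ on every base instance relevant to ${\cal M}\Sigma$. By Proposition~\ref{mv-induced-properties-prop}(2), $R^*_{\bar t}$ exists, is unique, and is an $MV$-validated head-instantiated rewriting for $\bar t$; being head-instantiated forces its only possible answer to be $\bar t$, and being $MV$-validated forces that answer to be nonempty, so $R^*_{\bar t}(MV) = \{\bar t\}$. Since its answer on $MV$ is nonempty, $R^*_{\bar t}$ is satisfiable and hence $(R^*_{\bar t})^{exp}$ is well defined. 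Now fix any $\Sigma$-valid base instance $I$, i.e. ${\cal V} \Rightarrow_{I,\Sigma} MV$. Then $V(I) = MV[V]$ for every $V \in {\cal V}$, so in particular ${\cal V} \Rightarrow_{I,\emptyset} MV$; thus Proposition~\ref{expansion-prop} applies and gives $(R^*_{\bar t})^{exp}(I) = R^*_{\bar t}(MV) = \{\bar t\}$. (If some value of $\bar t$ occurs in a view definition but not in $MV$, I would first apply the remark Note~1 following Proposition~\ref{mv-induced-properties-prop} to fold those constants into the view heads, after which the same argument goes through.)

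With this fact in hand, both directions follow at once. Unfolding the definition, $R^*_{\bar t} \sqsubseteq_{\Sigma,MV,{\cal V}} Q$ means $(R^*_{\bar t})^{exp} \sqsubseteq_{{\cal M}\Sigma} Q$, i.e. $(R^*_{\bar t})^{exp}(I) \subseteq Q(I)$ for every $I$ with ${\cal V} \Rightarrow_{I,\Sigma} MV$. Substituting $(R^*_{\bar t})^{exp}(I) = \{\bar t\}$ from the previous step, this is equivalent to $\bar t \in Q(I)$ holding for every such $I$, which is precisely $\bar t \in certain_{{\cal M}\Sigma}(Q)$. This chain of equivalences proves the ``if'' and ``only if'' directions simultaneously.

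Finally, I would note how the two cited propositions fit and where the one subtle point lies. For the ``if'' direction one may equally invoke soundness (Proposition~\ref{rewr-approach-prop}) with $R = R^*_{\bar t}$, which is legitimate exactly because Proposition~\ref{mv-induced-properties-prop}(2) certifies that $R^*_{\bar t}$ is $MV$-validated and head-instantiated; for the ``only if'' direction Proposition~\ref{need-only-max-mv-rewr-prop} gives a second route, transferring a containment from any successful $MV$-validated head-instantiated rewriting to the maximal one. The only real bookkeeping, and the step I expect to be the main obstacle to state cleanly, is the CWA/OWA matching when invoking Proposition~\ref{expansion-prop}: that proposition is phrased for instances satisfying ${\cal V} \Rightarrow_{I,\emptyset} MV$, so I must verify that every $\Sigma$-valid instance qualifies (it does, since exact view matching is part of $\Sigma$-validity) and that $(R^*_{\bar t})^{exp}$ exists (it does, by $MV$-validation). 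The substance of the theorem is carried entirely by Proposition~\ref{mv-induced-properties-prop}(2), which supplies a single canonical rewriting whose expansion pins the answer to exactly $\{\bar t\}$ on all relevant instances.
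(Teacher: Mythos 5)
Your proposal is correct and follows essentially the same route as the paper's own proof: both rest on the observation that $R^*_{\bar t}(MV)=\{\bar t\}$ (via Proposition~\ref{mv-induced-properties-prop}) and that Proposition~\ref{expansion-prop} then pins $(R^*_{\bar t})^{exp}(I)=\{\bar t\}$ on every $\Sigma$-valid base instance $I$ (using exactly the subset observation that $\Sigma$-valid instances are among the $\emptyset$-valid ones), after which the containment $\sqsubseteq_{\Sigma,MV,{\cal V}}$ unwinds to the definition of a certain answer. The only cosmetic difference is that you present the two directions as a single chain of equivalences, whereas the paper handles the ``if'' direction by appeal to Proposition~\ref{rewr-approach-prop} and spells out the ``only-if'' direction separately.
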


\begin{proof} 
{\em If:} The proof of this direction parallels the proof of Proposition~\ref{rewr-approach-prop}. 

{\em Only-If:} By Definition~\ref{certain-query-answer-def}, for the given tuple $\bar t$ we have that $\bar t$ is in the set $Q(I)$ for all instances $I$ of schema {\bf P} such that $\cal V$ $\Rightarrow_{I,\Sigma}$ $MV$. By Proposition~\ref{mv-induced-properties-prop}, we have that  $\bar t$ is the only answer on the instance $MV$ to the maximal $MV$-induced rewriting $R^*_{\bar t}$ for $\bar t$. Thus, for the expansion of $R^*_{\bar t}$, denote this expansion by $(R^*_{\bar t})^{exp}$, we have by Proposition~\ref{expansion-prop}  that 
 for each instance $I$ of schema {\bf P} such that $\cal V$ $\Rightarrow_{I,\Sigma}$ $MV$, we have  $(R^*_{\bar t})^{exp}(I)$ $=$ $\{ ({\bar t}) \}$.  (In more detail, we have by Proposition~\ref{expansion-prop}  that  for each instance $J$ of schema {\bf P} such that $\cal V$ $\Rightarrow_{J,\emptyset}$ $MV$, we have  $(R^*_{\bar t})^{exp}(J)$ $=$ $\{ ({\bar t}) \}$. The conclusion that $(R^*_{\bar t})^{exp}(I)$ $=$ $\{ ({\bar t}) \}$  for each instance $I$ of schema {\bf P} such that $\cal V$ $\Rightarrow_{I,\Sigma}$ $MV$ follows from the fact that the set of all such instances $I$ is a subset of the set of all such instances $J$.) Thus, by the definitions of expansions of rewriting and of the containment $\sqsubseteq_{\Sigma,MV,{\cal V}}$, we obtain immediately that $R^*_{\bar t}$ $\sqsubseteq_{\Sigma,MV,{\cal V}}$ $Q$.  
\end{proof} 

It follows from Theorem~\ref{rewr-approach-correct-thm} that the converse of Proposition~\ref{rewr-approach-prop} also holds. Hence we obtain the following result. 

\begin{theorem} 
\label{rewr-approach-thm}  
Given a valid CQ materialized-view setting ${\cal M}\Sigma$ and a 
query $Q$ of arity $k$ $\geq$ $0$. For a $k$-tuple $\bar t$ of values from $consts({\cal M}\Sigma)$: There exists an $MV$-validated head-instantiated rewriting $R$ for $\bar t$ such that $R$ $\sqsubseteq_{\Sigma,MV,{\cal V}}$ $Q$ iff $\bar t$ is a certain answer to $Q$ w.r.t. ${\cal M}\Sigma$. 
\end{theorem}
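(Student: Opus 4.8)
The plan is to obtain this theorem as an immediate consequence of Proposition~\ref{rewr-approach-prop} and Theorem~\ref{rewr-approach-correct-thm}, which together bracket the existential statement from both sides. The key observation is that the present theorem is a logical \emph{weakening} of Theorem~\ref{rewr-approach-correct-thm}: whereas Theorem~\ref{rewr-approach-correct-thm} pins down one canonical witness (the maximal $MV$-induced rewriting $R^*_{\bar t}$), the present theorem merely asserts the \emph{existence} of some $MV$-validated head-instantiated rewriting $R$ for $\bar t$ satisfying $R \sqsubseteq_{\Sigma,MV,{\cal V}} Q$. So no new substantive argument is required; all the real work has already been absorbed into Theorem~\ref{rewr-approach-correct-thm} and into the structural facts about maximal $MV$-induced rewritings collected in Proposition~\ref{mv-induced-properties-prop}.

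For the \emph{if} direction I would simply invoke Proposition~\ref{rewr-approach-prop} verbatim: if some $MV$-validated head-instantiated rewriting $R$ for $\bar t$ satisfies $R \sqsubseteq_{\Sigma,MV,{\cal V}} Q$, then $\bar t$ is a certain answer to $Q$ w.r.t.\ ${\cal M}\Sigma$. This is exactly the statement of Proposition~\ref{rewr-approach-prop}, so nothing further is needed.

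For the \emph{only-if} direction, suppose $\bar t$ is a certain answer to $Q$ w.r.t.\ ${\cal M}\Sigma$. By the only-if direction of Theorem~\ref{rewr-approach-correct-thm}, the maximal $MV$-induced rewriting $R^*_{\bar t}$ for $\bar t$ satisfies $R^*_{\bar t} \sqsubseteq_{\Sigma,MV,{\cal V}} Q$. It then remains only to certify that $R^*_{\bar t}$ is a legitimate witness for the existential claim, i.e.\ that it is an $MV$-validated head-instantiated rewriting for $\bar t$; this is delivered by item~(2) of Proposition~\ref{mv-induced-properties-prop}, which states precisely that the unique maximal $MV$-induced rewriting for any $\bar t$ is such a rewriting. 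Taking $R := R^*_{\bar t}$ completes the direction.

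I do not anticipate any genuine obstacle. The only point to keep straight is the direction of the weakening: Theorem~\ref{rewr-approach-correct-thm} is a statement about a \emph{specific} rewriting, whereas this theorem is purely existential, so one must produce $R^*_{\bar t}$ as the witness rather than attempting to reconstruct an arbitrary $R$. If one wished to avoid citing Theorem~\ref{rewr-approach-correct-thm} directly, the only-if direction could alternatively be reproved from scratch by the argument used in its proof---noting that $R^*_{\bar t}$ has $\bar t$ as its unique answer on $MV$ (Proposition~\ref{mv-induced-properties-prop}), invoking Proposition~\ref{expansion-prop} to transfer this to $(R^*_{\bar t})^{exp}$ on every $I$ with ${\cal V} \Rightarrow_{I,\Sigma} MV$, and reading off the containment---but routing through Theorem~\ref{rewr-approach-correct-thm} is cleaner.
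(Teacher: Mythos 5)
Your proposal is correct and matches the paper's own derivation: the paper obtains this theorem exactly as the combination of Proposition~\ref{rewr-approach-prop} (the \emph{if} direction) with Theorem~\ref{rewr-approach-correct-thm} supplying the maximal $MV$-induced rewriting $R^*_{\bar t}$ as the existential witness for the \emph{only-if} direction. Your extra remark that Proposition~\ref{mv-induced-properties-prop}(2) certifies $R^*_{\bar t}$ as an $MV$-validated head-instantiated rewriting is a correct and welcome explicit justification of a step the paper leaves implicit.
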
 

{\em Note 2.} In the light of Theorems~\ref{rewr-approach-correct-thm} and~\ref{rewr-approach-thm}, we can use the results of this current paper on ${\cal M}\Sigma$-conditional query containment to determine correctly if a given ground $k$-tuple $\bar t$ is a certain answer to the ($k$-ary) query $Q$ w.r.t. the setting ${\cal M}\Sigma$, for the class of all problem instances where $Q$ is a CQ query, and the materialized-view settings ${\cal M}\Sigma$ is valid CQ weakly acyclic. Moreover, we can also {\em find} all the certain answers to $Q$ w.r.t. ${\cal M}\Sigma$ for the same class of instances (i.e., CQ queries and valid CQ weakly acyclic materialized-view settings), by first generating all the ground $k$-tuples of values in $consts({\cal M}\Sigma)$, and by then determining for each such tuple whether it is a certain answer to $Q$ w.r.t. ${\cal M}\Sigma$. By the results of this paper, the latter algorithm is sound and complete for this class of input instances under CWA.

\section{Conditional Containment for \\ CQ Queries} 
\label{mendelzon-prelims-sec} 

Zhang and Mendelzon in \cite{ZhangM05} addressed the problem of letting users access  authorized data, via rewriting the users' queries in terms of their authorization views. Toward that goal, \cite{ZhangM05} explored the notion of ``conditional query containment.'' The results of \cite{ZhangM05} include a powerful reduction of the problem of testing conditional containment of CQ queries to that of testing {\em unconditional} containment of modifications of the queries. In this appendix we review these results of \cite{ZhangM05}.  Appendix~\ref{trivial-example-sec} provides an illustrative example of conditional query containment. 

We begin by reviewing the definition of conditional containment of queries \cite{ZhangM05}. Some of the definitions here are restricted versions of the definitions given in Section~\ref{problem-statement-sec}. We provide the restricted definitions here for this appendix to be self contained. 

Suppose that we are given a schema {\bf P} and a set $\cal V$ of relation symbols not in {\bf P}, with each symbol  {\em (view name)}  $V$ $\in$ $\cal V$ of some arity $k_V$ $\geq$ $0$. Each symbol $V$ $\in$ $\cal V$ is defined via a $k_V$-ary query  on the schema {\bf P}. We call $\cal V$ a {\em set of views on} {\bf P}, and call the query for each $V$ $\in$ $\cal V$ the {\em definition of the view} $V$, or {\em the query for} $V$. We assume that  the query for each $V$ $\in$ $\cal V$ is associated with ($V$ in) the set $\cal V$. Consider a ground instance $MV$ of schema $\cal V$; we call $MV$ a {\em set of view answers for} $\cal V$. Then for a ground instance $I$ of schema {\bf P},  we say that $I$ {\em is a valid instance (of {\bf P}) for} $\cal V$ {\em and} $MV$ \cite{ZhangM05} whenever for each $V$ $\in$ $\cal V$, the answer $V(I)$ to the query for $V$ on the instance $I$ is identical to the relation $MV[V]$ for $V$ in the instance $MV$. For a given  set $MV$ of view answers for a set of views $\cal V$, we say that $MV$ {\em is a valid set of view answers for} $\cal V$ whenever there exists at least one valid instance for $\cal V$ and $MV$. 

Now given queries $Q_1$ and $Q_2$ on the schema {\bf P}, we say that $Q_1$ is {\em conditionally contained in $Q_2$ w.r.t. ($\cal V$ and)} $MV$ \cite{ZhangM05}, denoted\footnote{To avoid overcrowding the symbol $\sqsubseteq$, 
we assume that in the notation $\sqsubseteq_{MV}$, the name $MV$ of an instance of schema $\cal V$ uniquely identifies  the relevant set $\cal V$.} $Q_1$ $\sqsubseteq_{MV}$ $Q_2$, if and only if the relation $Q_1(I)$ is a subset of the relation $Q_2(I)$ for each valid instance $I$ for $\cal V$ and $MV$. 

It is easy to see that for all instances $MV$ of all schemas $\cal V$, the containment $Q_1$ $\sqsubseteq$ $Q_2$ is a sufficient condition for the containment $Q_1$ $\sqsubseteq_{MV}$ $Q_2$. Not surprisingly, $Q_1$ $\sqsubseteq_{MV}$ $Q_2$ does not imply $Q_1$ $\sqsubseteq$ $Q_2$; 
something more sophisticated is clearly called for. The authors of \cite{ZhangM05} report the following powerful test for conditional containment of CQ queries. (We say that $\cal V$ is a set of {\em CQ} views if the query for each $V$ $\in$ $\cal V$ is a CQ query.) 

\begin{theorem}{\cite{ZhangM05}} 
\label{mendelz-thm} 
Given a schema {\bf P}, a set of CQ views $\cal V$ on {\bf P}, a valid set $MV$ of view answers for $\cal V$, and CQ queries $Q_1$ and $Q_2$ on the schema {\bf P}. Then $Q_1$ $\sqsubseteq_{MV}$ $Q_2$ if and only if for the $UCQ^{\neq}$ query $Q''_1$ constructed for $Q_1$ by an algorithm given in \cite{ZhangM05}, we have $Q''_1$ $\sqsubseteq$ $Q_2$. 
\end{theorem}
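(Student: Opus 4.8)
The plan is to read this as the $\Sigma = \emptyset$ specialization of the machinery of Section~\ref{query-cont-algor-sec}: here $Q_1 \sqsubseteq_{MV} Q_2$ is exactly ${\cal M}\emptyset$-conditional containment (Definition~\ref{main-mendelz-def}), and the $UCQ^{\neq}$ query $Q''_1$ is exactly a chase result of the ${\cal M}\emptyset$-expansion of $Q_1$ with the $MV$-induced dependencies $\Phi_{(MV)}$ (there is no $\Sigma_{(\neq)}$ component, since $\Sigma = \emptyset$). So one clean route is simply to invoke Theorem~\ref{main-mendelz-thm} at $\Sigma = \emptyset$; below I give the self-contained argument underlying it. First I would cite Proposition~\ref{ucq-tree-finite-prop} to guarantee that $Q''_1$ is a well-defined finite $UCQ^{\neq}$ query, and adopt the convention that the trivial (empty) $UCQ^{\neq}$ query is unconditionally contained in every query, so that the stated criterion ``$Q''_1 \sqsubseteq Q_2$'' also subsumes the degenerate case where $Q''_1$ is trivial (which forces $Q_1(I) = \emptyset$ on all valid $I$, making both sides of the iff hold).

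The argument rests on two structural facts about $Q''_1$, namely the $\Sigma = \emptyset$ instances of Proposition~\ref{zmendelz-containm-prop} and Proposition~\ref{zmendelz-valuation-prop}; call these Property~A and Property~B. Property~A says that $Q''_1 \sqsubseteq Q_1$ holds unconditionally, while $Q_1 \sqsubseteq_{MV} Q''_1$ holds on all valid instances; in particular $Q_1(I) = Q''_1(I)$ for every valid instance $I$ for $\cal V$ and $MV$. Property~B says that for every $CQ^{\neq}$ component $q$ of $Q''_1$, writing $B^*$ for the conjunction of relational atoms in $body_{(q)}$, and for every instance $I$ (valid or not) and every valuation $\nu$ from $q$ to $I$, the image $\nu(B^*)$ is itself a valid instance for $\cal V$ and $MV$.

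For the ``if'' direction, assume $Q''_1 \sqsubseteq Q_2$ and fix any valid instance $I$. The conditional half of Property~A gives $Q_1(I) \subseteq Q''_1(I)$, and the hypothesis gives $Q''_1(I) \subseteq Q_2(I)$, so $Q_1(I) \subseteq Q_2(I)$; as $I$ ranges over all valid instances, $Q_1 \sqsubseteq_{MV} Q_2$. For the ``only if'' direction, assume $Q_1 \sqsubseteq_{MV} Q_2$, take any instance $I$ and any $\bar t \in Q''_1(I)$ witnessed by a valuation $\nu$ from some component $q$ to $I$ with $\nu$ mapping the head to $\bar t$, and set $I^* := \nu(B^*)$. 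By Property~B, $I^*$ is valid for $\cal V$ and $MV$. Since $\nu$ maps every relational atom of $q$ into $I^*$ by definition of $I^*$ and still respects the disequalities of $q$, it is a valuation from $q$ into $I^*$, whence $\bar t \in Q''_1(I^*)$ and therefore $\bar t \in Q_1(I^*)$ by the unconditional half of Property~A. Validity of $I^*$ together with $Q_1 \sqsubseteq_{MV} Q_2$ yields $\bar t \in Q_2(I^*)$. Finally $I^* = \nu(B^*) \subseteq I$ as sets of facts, so monotonicity of the CQ query $Q_2$ gives $\bar t \in Q_2(I)$; since $I$ and $\bar t$ were arbitrary, $Q''_1 \sqsubseteq Q_2$ unconditionally.

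The main obstacle is Property~B, the validity-of-valuation-images claim (the $\Sigma = \emptyset$ form of Proposition~\ref{zmendelz-valuation-prop}); everything else is bookkeeping plus monotonicity. Its proof must be read off the construction of $Q''_1$: the ${\cal M}\emptyset$-expansion conjoins ${\cal C}^{exp}_{MV}$ into the body, which forces $V(\nu(B^*)) \supseteq MV[V]$ for every view $V$, while the $MV$-induced dependencies $\Phi_{(MV)}$ are exactly what forces the reverse inclusion $V(\nu(B^*)) \subseteq MV[V]$. Concretely, chasing to a leaf resolves, in every surviving branch, each gnegd $\tau_V$ and gic $\iota_V$, so that any homomorphic firing of $body_{(V)}$ inside $B^*$ either coincides with one of the listed tuples $\bar t_i \in MV[V]$ or is separated by a disequality that $\nu$ must respect; hence no spurious view tuple can arise. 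Establishing rigorously that this closed-world matching $V(\nu(B^*)) = MV[V]$ holds simultaneously for all $V \in \cal V$ and all valuations $\nu$ is the delicate core of the argument, and is precisely where the disjunctions and disequalities introduced in Section~\ref{containment-dependencies-sec} are indispensable.
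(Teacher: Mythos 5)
Your proposal is correct and follows essentially the same route the paper itself indicates: the paper identifies $Q''_1$ with the $\Sigma=\emptyset$ chase result $(Q_1)^{{\cal M}\emptyset}$ and states that the theorem follows from exactly your Property~A (that $Q''_1(I)=Q_1(I)$ on all valid instances, with $Q''_1\sqsubseteq Q_1$ unconditionally) and Property~B (that every valuation image of the relational part of the body is a valid instance), which are the $\Sigma=\emptyset$ cases of Propositions~\ref{zmendelz-containm-prop} and~\ref{zmendelz-valuation-prop}. Like the paper (which defers to the unpublished full version of \cite{ZhangM05}), you leave the detailed verification of Property~B to the construction of $Q''_1$, but you correctly isolate it as the crux and the surrounding containment-plus-monotonicity argument is sound.
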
 

Theorem~\ref{mendelz-thm} reduces the problem of testing conditional containment of CQ queries, $Q_1$ $\sqsubseteq_{MV}$ $Q_2$, to the problem of testing (unconditional) containment in $Q_2$ of a $UCQ^{\neq}$ modification of $Q_1$. The latter containment can be decided by a test due to \cite{LevyMSS95}. 
The required modification of $Q_1$ is done by an intricate algorithm given in \cite{ZhangM05}. 
We outline here briefly the intuition for the construction of $Q''_1$ from $Q_1$. 

We say that an instance $I$ of schema {\bf P}  {\em underproduces} $MV$ if, for at least one $V$ $\in$ $\cal V$, the relation $V(I)$ is a proper subset of the relation $MV[V]$. 
By definition, 
each valid instance for $\cal V$ and $MV$ does not underproduce $MV$. 


The construction of $Q''_1$ from $Q_1$ proceeds in two steps. The first step guarantees that its output, a CQ query $Q^*_1$,  has the empty answer on all instances of {\bf P} that underproduce $MV$. This goal is achieved by defining $Q^*_1$ as having the same head vector as in $Q_1$, and by ($Q^*_1$) having the body that is a conjunction of the body of $Q_1$ with the conjunction ${\cal C}^{exp}_{MV}$ defined in Section~\ref{probl-stmt-relationship-sec}. 

The output of the second step in the construction, a $UCQ^{\neq}$ query $Q''_1$, has the same property as $Q^*_1$ does. In addition, for each instance, $I$, of schema  {\bf P} such that $I$ does {\em not} underproduce $MV$, and for each valuation, $\nu$, from the query $Q''_1$ to $I$, all the facts in $\nu(body_{(Q''_1)})$ collectively constitute a valid instance for $\cal V$ and $MV$. (This is done by adding to the body of $Q''_1$ disjunctions, equalities, and/or disequalities based on homomorphisms from the normalized bodies of the views in $\cal V$ to the body of (the current version of) $Q''_1$. The body of a CQ query is {\em normalized} whenever its relational part has only one occurrence of each variable and of each constant, and all the equalities between variables and/or constants are enforced by explicit equality atoms.) 

The result of Theorem~\ref{mendelz-thm} is shown in \cite{ZhangM05} to follow from these properties of $Q''_1$ and from the fact that for all valid instances $I$ for $\cal V$ and $MV$, $Q''_1(I)$ $=$ $Q_1(I)$. 

%

\section{Example of Query Containment w.r.t. a Set of View Answers} 
\label{trivial-example-sec} 

In the example in this appendix, one query is ${\cal M}\Sigma$-conditionally contained in the other, even though the bodies of the queries do not share any relational symbols. 

\begin{example} 
\label{trivial-ex} 
In this trivial example, one query is ${\cal M}\Sigma$-conditionally contained in the other (in the absence of dependencies), even though the bodies of the queries do not share any relational symbols. Consider Boolean CQ queries $Q_1$ and $Q_2$, a CQ view $V$, and a set of view answers $MV$, as follows. 

\begin{tabbing} 
$Q_1() \leftarrow P(X).$ \\ 
$Q_2() \leftarrow R(Y).$ \\ 
$V(Y) \leftarrow R(Y).$ \\ 
$MV$ $=$ $\{$ $V( c )$ $\}$.  
\end{tabbing} 

Let ${\cal M}\Sigma$ be $(\{ P, R \}, \emptyset, \{ V \}, MV)$, with $V$ and $MV$ as above. $P$ and $R$ in the schema {\bf P} $=$ $\{ P, R \}$ are unary relation symbols, and no dependencies hold on {\bf P}. 

For any base instance $I$ that is relevant to the setting ${\cal M}\Sigma$, the instance $I$ must have the ground atom $R( c )$. (This follows from the definitions of the view $V$ and of the instance $MV$.) As a result, the query $Q_2$ returns the empty tuple on any such instance $I$. It follows that {\em any} Boolean query, including $Q_1$, is  ${\cal M}\Sigma$-conditionally contained in 
$Q_2$. The algorithm reported in this paper allows us to make this correct conclusion. 
\end{example}

\section{A Noncontainment Example} 
\label{sigma-noemptyset-ex-sec} 


In this appendix we show by example that when $Q_1$ $\sqsubseteq_{{\cal M}\Sigma}$ $Q$ holds for some choice of  $Q_1$, $\Sigma$, $MV$, and $Q$, then none of the following necessarily holds: 

\begin{itemize} 
	\item[(1)] $Q_1$ $\sqsubseteq$ $Q$, 
	\item[(2)] $Q_1$ $\sqsubseteq_{\Sigma}$ $Q$, and 
	\item[(3)] $Q_1$ $\sqsubseteq_{{\cal M}\emptyset}$ $Q$. 
\end{itemize} 

\noindent 
Here, by ${\cal M}\emptyset$ we denote the result of replacing $\Sigma$ by $\emptyset$ in the given setting ${\cal M}\Sigma$ $=$ $(${\bf P}, $\Sigma$, $\cal V$, $MV \}$. 

\begin{example} 
\label{main-sigma-ex} 
Recall the schema {\bf P} $=$ $\{$ {\tt Emp,}\linebreak {\tt HQDept},  {\tt OfficeInHQ} $\}$ of Example~\ref{intro-main-three-ex}. We abbreviate each relation name by its first letter (same as in Example~\ref{intro-formalized-ex}). As before, we assume that the only primary key of the relation $E$ is its three attributes together. The key of the relation $O$ is its first attribute, which we express using the following egd $\tau$: 

\begin{tabbing} 
$\tau:$ $O(X,Y) \wedge O(X,Z) \rightarrow Y = Z .$
\end{tabbing} 

Suppose that for  all the departments located in the company headquarters, all their employees 
have their offices in the headquarters. We express this constraint using a tgd $\sigma$ (which is the same as in Examples~\ref{intro-main-three-ex} and~\ref{intro-formalized-ex}): 

\begin{tabbing} 
$\sigma:$  $E(X,Y,Z) \wedge H(Y) \rightarrow \exists S \ O(X,S).$
\end{tabbing} 

We assume that $\sigma$ and $\tau$ constitute all the integrity constraints $\Sigma$ on the schema {\bf P}, that is, $\Sigma$ $=$ $\{$ $\sigma$, $\tau$ $\}$. 

Recall the views {\tt U}, {\tt V}, and {\tt W} introduced in Example~\ref{intro-main-three-ex}: 

{\small 
\begin{verbatim} 
(U): DEFINE VIEW U(Dept) AS SELECT * FROM HQDept;    
(V): DEFINE VIEW V(Name, Dept) AS 
     SELECT DISTINCT Name, Dept FROM Emp;   
(W): DEFINE VIEW W(Dept, Salary) AS
     SELECT DISTINCT Dept, Salary FROM Emp;   
\end{verbatim} 
} 

\noindent 
We denote by $\cal V$ the set $\{$ {\tt U}, {\tt V}, {\tt W} $\}$. 

Suppose that a user, or several users together, are authorized to see the answers to all three views {\tt U}, {\tt V}, and {\tt W}, and that at some point in time the user(s) can see the following set $MV$ of answers to these views (same as in Example~\ref{intro-main-three-ex}). 

\begin{tabbing} 
hhabgnms \= jehf \kill 
$MV$$=$$\{${\tt U(sales)},{\tt V(johnDoe,sales)},{\tt W(sales,50000)}$\}.$ 
\end{tabbing} 

\noindent 
We denote by ${\cal M}\Sigma$ the materialized-view setting $(${\bf P}, $\Sigma$, $\cal V$, $MV \}$.  

Now recall the secret query {\tt Q} of Example~\ref{intro-main-three-ex}; {\tt Q} returns the names and salaries of all the employees who work in the company headquarters. 

{\small 
\begin{verbatim} 
(Q): SELECT DISTINCT E.Name, Salary FROM Emp E, OfficeInHQ  
     WHERE E.Name = OfficeInHQ.Name, 
\end{verbatim} 
} 

Recall the tuple $\bar t$ $=$ $($  {\tt johnDoe}, {\tt 50000} $)$ and the query $R_{vw}$, over the schema $\cal V$, of Example~\ref{app-main-three-ex}: 

{\small 
\begin{verbatim} 
(Rvw): SELECT DISTINCT Name, Salary FROM V, W 
       WHERE V.Name = `johnDoe' AND V.Dept = W.Dept 
       AND V.Dept = `sales' AND Salary = `50000';  
\end{verbatim} 
} 

Using the results of this paper, we can show that the expansion $R^{exp}_{vw}$, in terms of the schema {\bf P}, of the query $R_{vw}$ is contained in the query {\tt Q} w.r.t. the setting ${\cal M}\Sigma$. At the same time, none of the following containments hold: $R^{exp}_{vw}$ $\sqsubseteq$ $Q$, $R^{exp}_{vw}$ $\sqsubseteq_{\Sigma}$ $Q$, and $R^{exp}_{vw}$ $\sqsubseteq_{{\cal M}\emptyset}$ $Q$. (Here, by ${\cal M}\emptyset$ we denote the result of replacing $\Sigma$ by $\emptyset$ in the setting ${\cal M}\Sigma$.) 

We now prove all the containment and non-containment statements of the preceding paragraph, for the queries $R^{exp}_{vw}$ and {\tt Q} over the schema {\bf P}. First, we render in Datalog the queries {\tt Rvw}, $R^{exp}_{vw}$, {\tt Q}, and the queries for the three views. (For conciseness, in the remainder of this example we will refer to the constants {\tt johnDoe}, {\tt sales}, and {\tt 50000} as $c$, $d$, and $f$, respectively.)  

\begin{tabbing} 
hitab muc \= mekill \kill 
$Q(X,Z)$ \> $\leftarrow E(X,Y,Z), O(X,S) .$ \\  
$U(X)$ \> $\leftarrow H(X) .$ \\  
$V(X,Y)$ \> $\leftarrow E(X,Y,Z) .$ \\  
$W(Y,Z)$ \> $\leftarrow E(X,Y,Z) .$ \\ 
$R_{vw}(c,f)$ \> $\leftarrow V(c,d), W(d,f) .$ \\  
$R^{exp}_{vw}(c,f)$ \> $\leftarrow E(c,d,Z), E(X,d,f) .$ 
\end{tabbing} 
 
(1) We are first determining whether $R^{exp}_{vw}$ $\sqsubseteq$ $Q$ holds. The noncontainment of  $R^{exp}_{vw}(c,f)$ in $Q$ is immediate from the containment test of \cite{ChandraM77} and from the absence in the definition of $R^{exp}_{vw}(c,f)$ of a subgoal with predicate {\tt OfficeInHQ}. (As a result, the subgoal of $Q$ with predicate {\tt OfficeInHQ} cannot be mapped into the body of $R^{exp}_{vw}(c,f)$.) We conclude that $R^{exp}_{vw}$ $\sqsubseteq$ $Q$ does not hold. 

(2) We are now determining whether $R^{exp}_{vw}$ $\sqsubseteq_{\Sigma}$ $Q$ holds. We observe that the result of chasing the query $R^{exp}_{vw}(c,f)$ with the dependencies $\Sigma$ is identical to $R^{exp}_{vw}(c,f)$. Recall that $R^{exp}_{vw}(c,f)$ $\sqsubseteq_{\Sigma}$ $Q$ holds iff that chase result (which is identical to $R^{exp}_{vw}(c,f)$) is contained in $Q$ in the absence of dependencies. We then use the reasoning of item (1) to conclude that $R^{exp}_{vw}$ $\sqsubseteq_{\Sigma}$ $Q$ does not hold. 

(3) We are now determining whether $R^{exp}_{vw}$ $\sqsubseteq_{{\cal M}\emptyset}$ $Q$ holds. Consider the following instance $I$ of schema {\bf P}: 

\begin{tabbing} 
$I = \{ H(d), E(c,d,f) \}.$ 
\end{tabbing} 

\noindent 
It is easy to verify that for the set $\cal V$ $=$ $\{$ {\tt U}, {\tt V}, {\tt W} $\}$, the result of applying the queries for $\cal V$ to $I$ is exactly the instance $MV$ as given above. (Observe that the instance $I$ does not satisfy the tgd $\sigma$ in the set $\Sigma$ in the setting ${\cal M}\Sigma$. At the same time, we're checking here for the containment of $R^{exp}_{vw}$ in $Q$ w.r.t. the setting ${\cal M}\emptyset$, in which $\Sigma$ $=$ $\emptyset$.) We verify that $Q(I)$ $=$ $\emptyset$ and that $R^{exp}_{vw}(c,f)(I)$ $=$ $\{$ $(c,f)$ $\}$. As a result, the containment $R^{exp}_{vw}$ $\sqsubseteq_{{\cal M}\emptyset}$ $Q$ does not hold. 

(4) Finally, let us determine whether $R^{exp}_{vw}$ $\sqsubseteq_{{\cal M}\Sigma}$ $Q$ holds. For ease of exposition, we denote the query $R^{exp}_{vw}$ by $Q_1$. Using the results of this current paper, we first transform $Q_1$ into $Q'_1$, by conjoining the body of $Q_1$ with ${\cal C}^{exp}_{MV}$ $=$ $E(c,d,A) \wedge E(B,d,f) \wedge H(d)$. (We then minimize the resulting query to obtain $Q'_1$; the minimization does not affect any of our results in this current paper.)  

\begin{tabbing} 
hitab muc \= mekill \kill 
$Q_1(c,f)$ \> $\leftarrow E(c,d,Z), E(X,d,f).$ \\ 
$Q'_1(c,f)$ \> $\leftarrow E(c,d,Z), E(X,d,f), H(d).$ 
\end{tabbing} 

Then we chase $Q'_1$ using the $MV$-induced dependencies $\tau_U$, $\tau_V$, and $\tau_W$, as well as the modifications $\sigma'$ and $\tau'$ of the dependencies $\sigma$ and $\tau$, respectively, in the given set $\Sigma$ of dependencies on the schema {\bf P}: 

\begin{tabbing} 
hitab \= muc hhhhhhhhhhhg \= mekill \kill 
$\tau_U:$ \> $H(X)$ \> $\rightarrow  X = d.$ \\  
$\tau_V:$ \> $E(X,Y,Z)$ \> $\rightarrow  X = c \wedge Y = d.$ \\  
$\tau_W:$ \> $E(X,Y,Z)$ \> $\rightarrow Y = d \wedge Z = f.$ \\ 
$\sigma':$ \> $E(X,Y,Z) \wedge H(T) \rightarrow (\exists S \ O(X,S)) \vee (Y \neq T).$ \\ 
$\tau':$ \> $O(X,Y) \wedge O(T,Z) \rightarrow (Y = Z) \vee (X \neq T).$
\end{tabbing} 

The result of the chase of $Q'_1$ with the dependencies $\tau_U$, $\tau_V$, $\tau_W$, $\sigma'$, and $\tau'$ is the following query $(Q_1)^{{\cal M}\Sigma}$: 

\begin{tabbing} 
hitab muc \= mekill \kill 
$(Q_1)^{{\cal M}\Sigma}(c,f) \leftarrow E(c,d,f), H(d), O(c,Z).$ 
\end{tabbing} 

It is easy to verify that, by the results of \cite{ChandraM77}, the CQ query $(Q_1)^{{\cal M}\Sigma}$ is unconditionally contained in the input query $Q$. Using the results of this current paper, we obtain that  $Q_1$ $\sqsubseteq_{{\cal M}\Sigma}$ $Q$ holds as well. On replacing $Q_1$ by the original notation $R^{exp}_{vw}$, we conclude that $R^{exp}_{vw}$ $\sqsubseteq_{{\cal M}\Sigma}$ $Q$ also holds. 
\end{example} 

\section{For ${\cal M}\Sigma$-Conditional Containm-\\ ent, Cannot Just Chase with $\Sigma$ \\ (or even with $\Sigma_{(\neq)}$) the Query $Q_1''$} 
\label{cannot-chase-mendelzon-sec} 
 
This appendix illustrates via an example that, in determining ${\cal M}\Sigma$-conditional containment of two CQ queries w.r.t. a weakly acyclic materialized-view setting, just ``chasing with $\Sigma$ (or even with $\Sigma_{(\neq)}$)'' the query $Q''_1$ of \cite{ZhangM05} may yield incorrect conclusions about the ${\cal M}\Sigma$-conditional containment of the input queries.  (``The query $Q''_1$ of \cite{ZhangM05}'' is the result of chasing one of the input queries in the algorithm of \cite{ZhangM05} for determining conditional containment of the two input queries in the absence of dependencies. The algorithm of \cite{ZhangM05} checks this query $Q''_1$  for unconditional containment in the input query $Q_2$; it is shown in \cite{ZhangM05} that, in case of the positive answer to the unconditional-containment test, the input query $Q_1$ is {\em conditionally} contained in $Q_2$.) 

\nop{ 
\begin{example} 
\label{tgd-nonlayered-ex} 
Let schema {\bf P} have a unary relation symbol $R$ and binary relation symbols $P$ and $S$. Consider a tgd  $\sigma$ and three view definitions: 

\begin{tabbing} 
$\sigma: S(X,Y) \rightarrow \exists \ T \ \ P(Y,T).$ \\ 
$U(X) \leftarrow R(X).$ \\ 
$V(X) \leftarrow P(Y,X).$ \\ 
$W(X) \leftarrow S(Y,X).$ 
\end{tabbing} 

Finally, for the set  $\{ U,V,W \}$ of the above views, let the set of view answers $MV$ be $\{ U( c ), V( c ), W(f) \}$. Then the materialized-view setting $(${\bf P}, $\{ \sigma \}$, $\{ U,V,W \}$, $MV)$, which we denote by ${\cal M}\Sigma$, is CQ weakly acyclic.

Now let $Q_1$ and $Q_2$ be two CQ queries, as follows. 

\begin{tabbing} 
$Q_1(X) \leftarrow S(X,Y).$ \\ 
$Q_2(X) \leftarrow S(X,Y), P(Y,Z), R(Z).$ 
\end{tabbing} 

It is easy to show that neither $Q_1 \sqsubseteq Q_2$ nor $Q_1 \sqsubseteq_{\{ \sigma \}} Q_2$ holds. (Please see Section~\ref{prelim-sec}.) However, we can show that  $Q_1 \sqsubseteq_{{\cal M}\Sigma} Q_2$ does hold. 
The result of the chase of $Q_1$ in our approach is the following CQ query:  

\begin{tabbing} 
$(Q_1)^{{\cal M}\Sigma}(X) \leftarrow S(X,f), P(f,c), R( c ), P(Z,c), S(T,f).$ 
\end{tabbing} 

We can then determine that $Q_1$ $\sqsubseteq_{{\cal M}\Sigma}$ $Q_2$ holds, by using the results of \cite{ChandraM77} to check that the unconditional containment $(Q_1)^{{\cal M}\Sigma} \sqsubseteq Q_2$ holds.   
\end{example} 

} 

{\em Note.} As shown in this current paper, the result $(Q_1)^{{\cal M}\Sigma}$ of chasing a CQ query $Q_1$ using weakly acyclic dependencies $\Sigma$ and the $MV$-induced dependencies 
is, in general,  a $UCQ^{\neq}$ query. $(Q_1)^{{\cal M}\Sigma}$ in Example~\ref{extended-tgd-nonlayered-ex} is a CQ query, because the view/tgd definitions and $MV$ are particularly simple here.

\begin{example} 
\label{extended-tgd-nonlayered-ex} 
Let schema {\bf P} have a unary relation symbol $R$ and binary relation symbols $P$ and $S$. Consider a tgd  $\sigma$ and three view definitions: 

\begin{tabbing} 
$\sigma: S(X,Y) \rightarrow \exists \ T \ \ P(Y,T).$ \\ 
$U(X) \leftarrow R(X).$ \\ 
$V(X) \leftarrow P(Y,X).$ \\ 
$W(X) \leftarrow S(Y,X).$ 
\end{tabbing} 

Finally, for the set  $\{ U,V,W \}$ of the above views, let the set of view answers $MV$ be $\{ U( c ), V( c ), W(f) \}$. Then the materialized-view setting $(${\bf P}, $\{ \sigma \}$, $\{ U,V,W \}$, $MV)$, which we denote by ${\cal M}\Sigma$, is CQ weakly acyclic.

Now let $Q_1$ and $Q_2$ be two CQ queries, as follows. 

\begin{tabbing} 
$Q_1(X) \leftarrow S(X,Y).$ \\ 
$Q_2(X) \leftarrow S(X,Y), P(Y,Z), R(Z).$ 
\end{tabbing} 

It is easy to show that neither $Q_1 \sqsubseteq Q_2$ nor $Q_1 \sqsubseteq_{\{ \sigma \}} Q_2$ holds. (Please see Section~\ref{prelim-sec}.) However, it turns out that  $Q_1 \sqsubseteq_{{\cal M}\Sigma} Q_2$ does hold. A way to prove this fact is to chase the query $Q_1$ using both the given dependencies, $\{ \sigma \}$, on the schema {\bf P}, and the ``$MV$-induced'' dependencies that we introduce in this paper. (In this particular example, the input dependency $\sigma$ is the same as its ``neq-transformation.'' That is, the set of dependencies $\Sigma_{(\neq)}$ required for the chase in our approach for checking ${\cal M}\Sigma$-conditional containment, is the same as the input set $\Sigma$ $=$ $\{ \sigma \}$. Thus, chase with $\sigma$ and with the $MV$-induced dependencies as shown in this example is correct w.r.t. the approach for checking ${\cal M}\Sigma$-conditional containment as introduced in this paper.) 

The result of the chase is the following CQ query:  

\begin{tabbing} 
$(Q_1)^{{\cal M}\Sigma}(X) \leftarrow S(X,f), P(f,c), R( c ), P(Z,c), S(T,f).$ 
\end{tabbing} 

We can then determine that the ${\cal M}\Sigma$-conditional containment of $Q_1$ in $Q_2$ holds, by using the results of \cite{ChandraM77} to check that the unconditional containment $(Q_1)^{{\cal M}\Sigma} \sqsubseteq Q_2$ holds.  

We now provide the details of chasing the query   $Q_1$ using both the given dependencies, $\{ \sigma \}$, on the schema {\bf P}, and the $MV$-induced dependencies. The process of obtaining the query $(Q_1)^{{\cal M}\Sigma}$ from the query $Q_1$ via this process can be represented using four stages, as follows: 

{\em Stage I:} We first add to the body of the query $Q_1$ the conjunction ${\cal C}^{exp}_{MV}$ $=$ $R( c ) \wedge P(Z,c) \wedge S(T,f)$: 

\begin{tabbing} 
$Q_1'(X) \leftarrow S(X,Y), R( c ), P(Z,c), S(T,f).$ 
\end{tabbing}

{\em Stage II:} In this stage, we choose to chase the query $Q'_1$ using the following $MV$-induced dependencies $\tau_U$, $\tau_V$, and $\tau_W$. (Intuitively, each of the egds $\tau_U$, $\tau_V$, and $\tau_W$ below arises from the ground fact for the respective view in the given instance $MV$. If $MV$ had more than one fact for any view symbol, call it $V^*$, then all these facts together would give rise to a single dependency for $V^*$, with a disjunction on the right-hand side of the dependency. The reason each $MV$-induced dependency in this example is ``just'' an egd is that both the definitions of the views and the instance $MV$ are particularly simple here.) 

\begin{tabbing} 
hehe boo heh \= doo \kill 
$\tau_U: R(X)$ \> $\rightarrow X = c.$ \\ 
$\tau_V: P(Y,X)$ \> $\rightarrow X = c.$ \\ 
$\tau_W: S(Y,X)$ \> $\rightarrow X = f.$ 
\end{tabbing} 

Applying these three dependencies to the query $Q'_1$ results in the following query: 

\begin{tabbing} 
$Q''_1(X) \leftarrow S(X,f), R( c ), P(Z,c), S(T,f).$ 
\end{tabbing} 

\noindent 
The difference between the queries $Q'_1$ and $Q''_1$ is that an application of the egd $\tau_W$ to the first subgoal, $S(X,Y)$, of the query $Q'_1$ turns this subgoal into the first subgoal $S(X,f)$ of the query $Q''_1$. None of the dependencies $\tau_U$, $\tau_V$, and $\tau_W$ is applicable to the query $Q''_1$. 

{\em Stage III:} In this stage, we choose to chase the query $Q''_1$ with the dependency $\sigma$ (on the schema {\bf P} in ${\cal M}\Sigma$); the outcome of the chase is a CQ query $Q'''_1$, see below. The two chase steps with $\sigma$ on $Q''_1$ result in the addition to the body of $Q''_1$ of two subgoals, $P(f,A)$ and $P(f,B)$. These two subgoals are what is different between the queries $Q''_1$ and $Q'''_1$. The tgd $\sigma$ is not applicable to the query $Q'''_1$. 

\begin{tabbing} 
$Q'''_1(X) \leftarrow S(X,f), R( c ), P(Z,c), S(T,f), P(f,A), P(f,B).$ 
\end{tabbing} 

Note that if we stop here, unconditional containment of the query $Q'''_1$ in $Q_2$ does not hold. (The reason is, the body of $Q'''_1$  does not have any pattern of three subgoals with predicates $S$, $P$, and $R$, such that the body of the query $Q_2$ would subsume the pattern.)  

{\em Stage IV:} Now, after stage III  in chasing the original query $Q_1$ is over, we can apply again chase steps with the $MV$-induced dependencies. Indeed, we can do two chase steps with the egd $\tau_V$. As a result, the subgoals $P(f,A)$ and $P(f,B)$ of the query $Q'''_1$  are transformed into two identical atoms $P(f,c)$. We call the resulting query 

\begin{tabbing} 
$(Q_1)^{{\cal M}\Sigma}(c) \leftarrow S(X,f), P(f,c), R( c ), P(Z,c), S(T,f).$ 
\end{tabbing} 

Chase with $\sigma$ or with the $MV$-induced dependencies does not apply to this query $(Q_1)^{{\cal M}\Sigma}$. It is easy to see that the query $(Q_1)^{{\cal M}\Sigma}$ is unconditionally contained in the query $Q_2$. We conclude that the given query $Q_1$ is ${\cal M}\Sigma$-conditionally contained in 
$Q_2$. 
\end{example} 

\section{Ensuring that the Image of \\ $Body((Q_1)^{{\cal M}\Sigma})$ under each Valuation Be a $\Sigma$-Valid Base Instance for $\cal V$ and $MV$}  
\label{deps-must-have-neqs-sec} 

In this appendix, in the context of the problem of determining ${\cal M}\Sigma$-conditional query containment, we demonstrate via two examples the role of disequality atoms, both in $MV$-induced dependencies (Example~\ref{mv-deps-must-have-neqs-ex}) and in the dependencies $\Sigma_{(\neq)}$ (Example~\ref{sigma-deps-must-have-neqs-ex}). As a summary of the observations  illustrated by these examples, when at least some of the disequalities are missing in either kind of dependencies, then there may exist a valuation, call it $\nu$, from the body, $B$, of the resulting query to some instance, such that the image of (the relational part of) $B$ under $\nu$ is  not a $\Sigma$-valid base instance for $\cal V$ and $MV$. As a result, we would not be able to prove correctness of the algorithm for determining ${\cal M}\Sigma$-conditional containment of CQ queries w.r.t. CQ weakly acyclic materialized-view settings, both in case $\Sigma$ $=$ $\emptyset$ (this is the setting of \cite{ZhangM05}) and in case $\Sigma$ $\neq$ $\emptyset$. (Example~\ref{mv-deps-must-have-neqs-ex} illustrates the former case, and Example~\ref{sigma-deps-must-have-neqs-ex} -- the latter case.) Specifically, we would not be able to prove the claim that $Q_1 \sqsubseteq_{{\cal M}\Sigma} Q_2$ implies $(Q_1)^{{\cal M}\Sigma}$ $\sqsubseteq$ $Q_2$. 

\begin{example} 
\label{mv-deps-must-have-neqs-ex} 
On a schema {\bf P} with one binary relation symbol $P$, consider two CQ queries, $Q_1$ and $V$, as follows: 

\begin{tabbing} 
$Q_1(X) \leftarrow P(X,Y).$ \\ 
$V(X) \leftarrow P(X,X).$ 
\end{tabbing} 

Let $\cal V$ be the set $\{ V \}$, with the set of view answers $MV$ $=$ $\{$ $V(c)$ $\}$.  We will show chase of the query $Q_1$ with the dependencies arising in the setting ${\cal M}\Sigma$ $=$ ({\bf P}, $\emptyset$, $\cal V$, $MV$). 

{\bf Scenario A.} In this scenario, we show the correct chase of the query $Q_1$ for the setting ${\cal M}\Sigma$, as introduced in \cite{ZhangM05}. The first step of the approach is to conjoin the body of $Q_1$ with the ${\cal C}^{exp}_{MV}$ $=$ $P(c,c)$: 

\begin{tabbing} 
$Q'_1(X) \leftarrow P(X,Y), P(c,c).$  
\end{tabbing} 

We then chase the query $Q'_1$ with the $MV$-induced dependency $\tau_V$: 

\begin{tabbing} 
$\tau_V: P(X,Y)  \rightarrow X = c \vee X \neq Y.$  
\end{tabbing} 

The result of the chase of $Q'_1$ with $\tau_V$ is a $UCQ^{\neq}$ query  $(Q_1)^{{\cal M}\Sigma}$ $=$ $\{ Q_1^{(a)}, Q_1^{(b)} \}$, with the $CQ^{\neq}$ components specified as follows: 

\begin{tabbing} 
$Q_1^{(a)}(c ) \leftarrow P(c,Y), P(c,c).$  \\ 
$Q_1^{(b)}(X) \leftarrow P(X,Y), X \neq Y, P(c,c).$  
\end{tabbing} 

We can show that for each instance $I$ of the schema {\bf P} and for each valuation, $\nu$, for either one of the two $CQ^{\neq}$ components of the query $(Q_1)^{{\cal M}\Sigma}$ and for $I$, the image under $\nu$ of (the relational part of) the body of the relevant $CQ^{\neq}$ component of $(Q_1)^{{\cal M}\Sigma}$ is a $\Sigma$-valid base instance for $\cal V$ and $MV$. (Here, $\Sigma$ $=$ $\emptyset$ as specified above, and $\cal V$ and $MV$ are also as above.) 

{\bf Scenario B.} In this scenario, we show chase of the query $Q_1$ using a version of the dependency $\tau_V$ (which is defined as in Scenario A here)  that does {\em not} use disequalities. We refer to this version of $\tau_V$ as $\tilde{\tau}_V$: 

\begin{tabbing} 
$\tilde{\tau}_V: P(X,X)  \rightarrow X = c.$  
\end{tabbing} 

\noindent 
(Recall that the body of the view $V$ is $P(X,X)$.) 

Similarly to Scenario A, we first obtain the query $Q'_1$, and then chase it with $\tilde{\tau}_V$. The result of chase of $Q'_1$ with $\tilde{\tau}_V$ is a CQ query  $\tilde{Q}_1$ specified as follows: 

\begin{tabbing} 
$\tilde{Q}_1(X) \leftarrow P(X,Y), P(c,c).$  
\end{tabbing} 

\noindent 
As the dependency $\tilde{\tau}_V$ does not apply to (either $Q'_1$ or) $\tilde{Q}_1$, the query $\tilde{Q}_1$ is the result of the chase of the query $Q'_1$ with  $\tilde{\tau}_V$.

Consider an instance  $I$ $=$ $\{ P(c,c), P(d,d) \}$ of the schema {\bf P}, and the valuation $\nu: \{ X \rightarrow d, Y \rightarrow d \}$ for $\tilde{Q}_1$ and $I$. Clearly, the image $J$ $=$ $\{ P(d,d), P(c,c) \}$ of (the relational part of) the body of the query $\tilde{Q}_1$ under $\nu$ is an instance (of schema {\bf P}) that does not generate the above set of view answers $MV$ under CWA. 
That is, $J$ is not a $\emptyset$-valid base instance for the $\{ V \}$ and $MV$  as above. The reason is, the relation $V(J)$ has the tuple $V(d)$, which is not in the instance $MV$ as specified in the beginning of this example. 
\end{example}

\begin{example} 
\label{sigma-deps-must-have-neqs-ex} 
On a schema {\bf P} with a binary relation symbol $P$ and a unary relation symbol $S$, consider two CQ queries, $Q_1$ and $V$, and a tgd $\sigma$, all defined as follows: 

\begin{tabbing} 
$Q_1(X) \leftarrow P(X,Y).$ \\ 
$V(X) \leftarrow P(X,Y).$ \\ 
$\sigma: P(X,X) \rightarrow S(X)$ 
\end{tabbing} 

Let $\cal V$ be the set $\{ V \}$, with the set of view answers $MV$ $=$ $\{$ $V(c)$ $\}$.  We will show chase of the query $Q_1$ with the dependencies arising in the setting ${\cal M}\Sigma$ $=$ ({\bf P}, $\{ \sigma \}$, $\cal V$, $MV$). 

{\bf Scenario A.} In this scenario, we show the correct chase of the query $Q_1$ for the setting ${\cal M}\Sigma$, as introduced in this current paper. The first step of the approach is to conjoin the body of $Q_1$ with the ${\cal C}^{exp}_{MV}$ $=$ $P(c,Z)$: 

\begin{tabbing} 
$Q'_1(X) \leftarrow P(X,Y), P(c,Z).$  
\end{tabbing} 

We then chase the query $Q'_1$ with the $MV$-induced dependency $\tau_V$ and with the ``neq-transformation'' $\sigma'$ of the tgd $\sigma$, defined as follows: 

\begin{tabbing} 
$\tau_V: P(X,Y)  \rightarrow X = c$  \\ 
$\sigma': P(X,Y) \rightarrow S(X) \vee X \neq Y$ 
\end{tabbing} 

The result of the chase of $Q'_1$ with $\tau_V$ and $\sigma'$ is a $UCQ^{\neq}$ query  $(Q_1)^{{\cal M}\Sigma}$ $=$ $\{ Q_1^{(a)}, Q_1^{(b)}, Q_1^{(c )}, Q_1^{(d)}  \}$, with the $CQ^{\neq}$ components specified as follows: 

\begin{tabbing} 
$Q_1^{(a)}(c ) \leftarrow P(c,Y), S(c ), P(c,Z).$  \\ 
$Q_1^{(b)}(c ) \leftarrow P(c,Y), S(c ), P(c,Z), Z \neq c.$  \\ 
$Q_1^{(c )}(c ) \leftarrow P(c,Y), Y \neq c,  P(c,Z), S(c ).$  \\ 
$Q_1^{(d)}(c ) \leftarrow P(c,Y), Y \neq c, P(c,Z), Z \neq c.$  
\end{tabbing} 

We can show that for each instance $I$ of the schema {\bf P} and for each valuation, $\nu$, for any one of the four $CQ^{\neq}$ components of the query $(Q_1)^{{\cal M}\Sigma}$ and for $I$, the image under $\nu$ of (the relational part of) the body of the relevant $CQ^{\neq}$ component of $(Q_1)^{{\cal M}\Sigma}$ is a $\Sigma$-valid base instance for $\cal V$ and $MV$. (Here, $\Sigma$ $=$ $\{ \sigma \}$ as specified above, and $\cal V$ and $MV$ are also as above.) 

{\bf Scenario B.} In this scenario, we show chase of the query $Q_1$ using the dependency $\tau_V$ (defined as in Scenario A here), as well as the original tgd $\sigma$ (which does {\em not} use disequalities), instead of using the dependency $\sigma'$ of Scenario A. 

Similarly to Scenario A, we first obtain the query $Q'_1$, and then chase it with $\tau_V$ and $\sigma$. The result of the chase is a CQ query  $\tilde{Q}_1$ specified as follows: 

\begin{tabbing} 
$\tilde{Q}_1(c ) \leftarrow P(c,Y), P(c,Z).$  
\end{tabbing} 

\noindent 
Neither $\tau_V$ nor $\sigma$ applies to $\tilde{Q}_1$. 

Consider an instance  $I$ $=$ $\{ P(c,c), S(c ) \}$ of the schema {\bf P}, and the valuation $\nu: \{ Y \rightarrow c, Z \rightarrow c \}$ for $\tilde{Q}_1$ and $I$. Clearly, the image $J$ $=$ $\{ P(c,c) \}$ of (the relational part of) the body of the query $\tilde{Q}_1$ under $\nu$ is an instance (of schema {\bf P}) that does not satisfy the input tgd $\sigma$ (even though the instance $I$ does). (The instance $J$ does generate the above set of view answers $MV$ under CWA.) We conclude that  
$J$ is not a $\Sigma$-valid base instance for the $\{ V \}$ and $MV$  as above. 
\end{example}

\section{The Data-Exchange Approach} 
\label{dexchg-sec} 

In this appendix we outline an approach to finding the set of certain answers to a CQ query w.r.t. a CQ weakly acyclic materialized-view setting under CWA. (Please see Appendix~\ref{app-dexchg-sec} for all the technical details.) This approach is based on data exchange \cite{FaginKMP05,Barcelo09,LibkinDataExchange}, hence the name. 

This approach is the result of our having rediscovered independently the idea and methods of the 2005 paper \cite{StoffelS05} by Stoffel and colleagues. The work \cite{StoffelS05} explicitly uses techniques that arise in data exchange, to solve the problem of finding the set of certain answers to a query w.r.t. a materialized-view setting under the {\em open-world assumption (OWA).}  At the same time, Brodsky and colleagues in their paper \cite{BrodskyFJ00}, which was published in 2000, used the same approach as Stoffel and colleagues did in \cite{StoffelS05}, without calling their approach (of  \cite{BrodskyFJ00}) ``data exchange.'' (Arguably,  
``data exchange'' was not a household term in the year 2000.) Both \cite{BrodskyFJ00} and \cite{StoffelS05} solve the problem of finding the set of certain answers to a query w.r.t. a materialized-view setting under the open-world assumption (OWA). (Please see Section~\ref{related-work-section} for the details on the query languages and classes of dependencies to which the work of \cite{BrodskyFJ00} and \cite{StoffelS05} applies.) 

In this Appendix~\ref{dexchg-sec} we show that, not surprisingly, the approach of \cite{BrodskyFJ00} and \cite{StoffelS05} is sound but not complete under the closed-world assumption (CWA), even in case when the given (base) schema comprises a single relation, and even in the absence of dependencies on this schema. (A counterexample can be found in Appendix~\ref{main-three-sec-app}.) Our ``view-verified data exchange'' of Appendix~\ref{vv-dexchg-sec} then provides a correct algorithm for solving the problem of finding the set of certain answers to a query w.r.t. a materialized-view setting under CWA, for CQ queries and CQ weakly acyclic materialized-view settings. 

The idea of using data exchange \cite{FaginKMP05,Barcelo09,LibkinDataExchange} as a tool arises naturally in the context of the problem of finding the set of certain query answers w.r.t. a materialized-view setting. In the remainder of this appendix, we outline the resulting ``data-exchange'' approach. 
%
%
We begin by reviewing  the basics of data exchange in Section~\ref{dexchg-prelim-sec}, by generally following the excellent detailed survey \cite{Barcelo09}. Then, in Section~\ref{dexchg-work-sec} we introduce and discuss the sound but not complete data-exchange approach to finding the set of certain answers to a CQ query w.r.t. a CQ weakly acyclic materialized-view setting under CWA. All the technical details of the discussion can be found in Appendix~\ref{app-dexchg-sec}. 


\subsection{Reviewing Data Exchange} 
\label{dexchg-prelim-sec} 

Given schemas {\bf S} $=$ $<S_1$, $\ldots$, $S_m>$ and {\bf T} $=$ $<T_1$, $\ldots$, $T_n>$, with no relation symbols in common, denote by $<${\bf S}, {\bf T}$>$ the schema $<S_1$, $\ldots$, $S_m$, $T_1$, $\ldots$, $T_n>$. If $I$ is an instance of {\bf S} and $J$ an instance of {\bf T}, then $(I$, $J)$ denotes an instance $K$ of $<$ {\bf S}, {\bf T} $>$ such that $K[S_i]$ $=$ $I[S_i]$ and $K[T_j]$ $=$ $J[T_j]$, for $i$ $\in$ $[1$, $m]$ and $j$ $\in$ $[1$, $n]$. 


\begin{definition}{Data-exchange setting} 
\label{dexchg-def} 
A {\em data-exchange setting} $\cal M$ is a triple $(${\bf S}, {\bf T}, $\Sigma)$, where {\bf S} and {\bf T} are disjoint schemas and $\Sigma$ is a finite set of dependencies over $<${\bf S}, {\bf T}$>$. {\bf S} in $\cal M$ is called the {\em source schema,} and {\bf T} is called the {\em target schema}. 
\end{definition} 


Instances of {\bf S} are called {\em source} instances and are always ground instances. Instances of {\bf T} are {\em target} instances. 
Given a source instance $I$, we say that a target instance $J$ is a {\em solution for} $I$ {\em (under $\cal M$)} 
if $(I,$ $J)$ $\models$ $\Sigma$. 
 
It is customary in the data-exchange literature to restrict the study to the class of settings whose set $\Sigma$ can be split into two sets $\Sigma_{st}$ and $\Sigma_t$, as follows: 


\begin{enumerate} 
	\item $\Sigma_{st}$ is a set of {\em source-to-target} dependencies {\em (stds),} that is, tgds of the form $\varphi_{{\bf S}}({\bar X})$ $\rightarrow$ $\exists$ $\bar y$ $\psi_{{\bf T}}({\bar X}, {\bar Y})$, where $\varphi_{{\bf S}}({\bar X})$ and $\psi_{{\bf T}}({\bar X}, {\bar Y})$ are conjunctions of relational atoms in {\bf S} and {\bf T}, respectively; and 


	\item $\Sigma_{t}$, the  set of {\em target} dependencies, is the union of a set of tgds and egds defined over the schema {\bf T}. 
\end{enumerate} 

In this current paper, we assume all data-exchange settings to be of the form $\cal M$ $=$ ({\bf S}, {\bf T}, $\Sigma)$, where $\Sigma$ $=$ $\Sigma_{st}$ $\cup$ $\Sigma_t$, for $\Sigma_{st}$ a set of stds and $\Sigma_t$ a set of target dependencies. Intuitively, the stds can be viewed as a tool for specifying how the source data get translated into target data. In addition, the target dependencies are the usual database constraints, to be satisfied by the translated data. The data-exchange settings of this form are not restrictive from the database point of view. 

Solutions for a given source instance are not necessarily unique, and there are source instances that have no solutions. {\em Universal solutions} are, intuitively, ``the most general'' solutions among all possible solutions. 
Formally, given a solution $J$ for source instance $I$, we say that $J$ is a {\em universal} solution for $I$ if for every solution $J'$ for $I$, there exists a homomorphism from $J$ to $J'$. 
Constructing a universal solution for a given source instance $I$ can be done by chasing $I$ with 
$\Sigma_{st}$ $\cup$ $\Sigma_t$. The chase may never terminate or may fail; in the latter case, no solution exists \cite{FaginKMP05}. If the chase does not fail and terminates, then the resulting target instance is guaranteed to be a universal solution for $I$.  

The problem of checking for the existence of solutions is known to be undecidable, please see \cite{Barcelo09}. 
At the same time, the following positive result is due to \cite{FaginKMP05}. 


\begin{theorem}{\cite{FaginKMP05}}  
\label{barcelo-four-four-thm} 
Let  $\cal M$ $=$ ({\bf S}, {\bf T}, $\Sigma_{st}$ $\cup$ $\Sigma_t$) be a fixed data-exchange setting, such that $\Sigma_{t}$ is weakly acyclic. Then there is a polynomial-time algorithm such that for every source instance $I$, the algorithm decides whether a solution for $I$ exists. Then, whenever a solution for $I$ exists, the algorithm computes a universal solution for $I$ in polynomial time. 
\end{theorem}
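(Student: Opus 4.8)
The plan is to present the chase procedure itself as the claimed algorithm and to verify its correctness and its polynomial running time separately; the only substantial work is the running-time bound, which is where weak acyclicity of $\Sigma_t$ is used. First I would fix the algorithm: given a source instance $I$, run a single chase sequence starting from $I$ together with an empty target instance, under $\Sigma_{st} \cup \Sigma_t$, using the instance-chase semantics of Section~\ref{dep-chase-prelims-sec} (each firing tgd invents one fresh null per existential variable, each egd merges two values or fails when forced to equate distinct constants). If the sequence fails, report ``no solution''; otherwise return the terminal target instance $J$.

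Next I would prove correctness through the standard chase invariant, established by induction on chase steps, that the current instance maps homomorphically into every solution $J'$ for $I$ (each step preserves this because $J' \models \Sigma$). Two consequences follow. If the sequence terminates at $J$ without failure, then $(I,J) \models \Sigma$ by definition of termination, so $J$ is a solution, and the invariant makes it universal. If some egd step is forced to equate two distinct constants, the same invariant shows that no solution could satisfy that egd, so none exists; thus a single failing sequence soundly certifies non-existence. I would stress that this needs no confluence of the chase: running one sequence suffices to decide existence and to produce a universal solution.

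The hard part will be the polynomial bound on the length of the chase, and here weak acyclicity of $\Sigma_t$ is essential. I would split this into two stages. Applying the source-to-target tgds $\Sigma_{st}$ first is cheap: the source is ground and fixed, each std fires at most once per homomorphism of its bounded left-hand side into $I$, so this stage is polynomial and creates only polynomially many target nulls. For the target chase I would use the dependency graph of the \textbf{Weakly acyclic dependencies} paragraph of Section~\ref{dep-chase-prelims-sec}: assign each position a \emph{rank} equal to the maximum number of special edges on any path entering it, which is finite and bounded by the number of positions precisely because no cycle passes through a special edge. Then, by induction on rank, I would bound the number of distinct values a position can hold by a polynomial in $|I|$: a rank-$0$ position never receives an invented null, and a rank-$r$ position can gain new nulls only as images of existential variables whose source positions have rank at most $r-1$, so the count stays polynomial, with exponent controlled by $r$ and $r$ bounded by the fixed schema.

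Summing over the finitely many positions then bounds $|J|$, hence the total number of chase steps, by a polynomial in $|I|$ whose degree depends only on the fixed setting $\cal M$; combined with the correctness argument this yields the polynomial-time algorithm. I expect the null-counting induction to be the genuine obstacle: it requires careful bookkeeping of how nulls propagate along regular versus special edges, and verification that egd steps only identify already-present values rather than enlarging any per-position count, so that the per-position polynomial bounds are preserved throughout the chase.
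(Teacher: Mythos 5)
This statement is a background result that the paper cites from \cite{FaginKMP05} and does not prove itself; the only hint it gives of the argument is the remark that the universal solution in question is the canonical one obtained as the result of the chase. Your reconstruction --- chase as the algorithm, the homomorphism-into-every-solution invariant for correctness and for soundness of failure, and the rank-on-the-dependency-graph induction to bound the number of distinct values per position --- is exactly the standard proof in that reference, and the one subtlety you flag yourself (that egd steps only merge existing values, and that bounding the instance size also bounds the number of chase steps because each tgd step adds an atom and each egd step removes a value) is handled correctly.
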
 


The universal solution of Theorem~\ref{barcelo-four-four-thm}, called the {\em canonical} universal solution \cite{FaginKMP05}, is the result of the chase. 

{\bf Query answering:} Assume that a user poses a query $Q$ over the target schema {\bf T}, and $I$ is a given source instance. 
Then the usual semantics for the query answering is that of  ``certain answers,'' defined as follows. 
Let $\cal M$ be a data-exchange setting, let $Q$ be a query over the target schema {\bf T} of $\cal M$, and let $I$ be a source instance. We define $certain_{\cal M}(Q, I)$, the set of {\em certain answers of} $Q$ {\em with respect to} $I$ {\em under} $\cal M$, as 


\begin{tabbing} 
$certain_{\cal M}(Q, I)$ $=$ $\bigcap \ \{ \ Q(J) \ | \ J$ is a solution for $I \ \}.$
\end{tabbing} 


Computing certain answers for arbitrary FO queries is an undecidable problem. For unions of CQ queries (UCQ queries) 
we have the following positive result: 


\begin{theorem}{\cite{FaginKMP05}}  
\label{barcelo-five-two-thm} 
Let $\cal M$ $=$ ({\bf S}, {\bf T}, $\Sigma_{st}$ $\cup$ $\Sigma_t$) be a data-exchange setting with $\Sigma_t$ a weakly acyclic set, and let $Q$ be a $UCQ$ query. 
Then the problem of computing certain answers for $Q$ under $\cal M$ can be solved in polynomial time. 
\end{theorem}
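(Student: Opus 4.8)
The plan is to reduce the computation of certain answers to a single polynomial-time query evaluation on a canonical universal solution. The key structural fact I would establish first is that, for \emph{any} universal solution $J$ of the source instance $I$ and for a $UCQ$ query $Q$, we have $certain_{\cal M}(Q,I) = Q(J)_{\downarrow}$, where $Q(J)_{\downarrow}$ denotes the set of those tuples in $Q(J)$ all of whose values lie in {\sc Const} (i.e., contain no nulls). Given this fact, the algorithm is immediate: by Theorem~\ref{barcelo-four-four-thm}, since $\Sigma_t$ is weakly acyclic, we can decide in polynomial time whether a solution for $I$ exists and, if so, compute a canonical universal solution $J$ in polynomial time. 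We then evaluate the fixed $UCQ$ query $Q$ on $J$ and discard every answer tuple that mentions a null. Because $Q$ is fixed, evaluating $Q$ on $J$ is polynomial in the size of $J$, hence in the size of $I$, and the null-filtering step is linear in the output. The degenerate case in which no solution for $I$ exists is reported separately, once the solution-existence test of Theorem~\ref{barcelo-four-four-thm} has been run.

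I would prove the equality $certain_{\cal M}(Q,I) = Q(J)_{\downarrow}$ by two containments. For $\subseteq$, observe that $J$ is itself a solution for $I$, so $certain_{\cal M}(Q,I) \subseteq Q(J)$ directly from the definition of certain answers; moreover, it is standard that the intersection defining $certain_{\cal M}(Q,I)$ can only contain tuples of constants, since a tuple mentioning a null cannot belong to $Q(J')$ for every solution $J'$ (e.g., for a solution obtained by renaming that null), whence $certain_{\cal M}(Q,I) \subseteq Q(J)_{\downarrow}$. The substantive direction is $Q(J)_{\downarrow} \subseteq certain_{\cal M}(Q,I)$. Let $\bar t \in Q(J)$ be a tuple of constants, and let $J'$ be an arbitrary solution for $I$. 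Since $J$ is universal, there is a homomorphism $g$ from $J$ to $J'$. Because $\bar t \in Q(J)$, some $CQ$ component $Q^{(i)}$ of $Q$ admits a valuation $\nu$ from $body_{(Q^{(i)})}$ to $J$ with $\nu(\bar X) = \bar t$; since $Q$ carries no disequalities, $\nu$ is simply a homomorphism, so the composition $g \circ \nu$ is again a homomorphism from $body_{(Q^{(i)})}$ to $J'$, witnessing $g(\bar t) \in Q^{(i)}(J') \subseteq Q(J')$. As $\bar t$ consists only of constants and homomorphisms fix constants, $g(\bar t) = \bar t$, so $\bar t \in Q(J')$. Because $J'$ was an arbitrary solution, $\bar t \in certain_{\cal M}(Q,I)$, completing the containment.

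The main obstacle --- and the only place where the restriction to $UCQ$ (equivalently, positive existential) queries is essential --- is the preservation-under-homomorphisms step in the $\supseteq$ direction. It is exactly there that one uses the absence of negation, of universal quantification, and of inequalities, so that composing a satisfying valuation with the homomorphism $g\colon J \to J'$ again yields a satisfying valuation on $J'$. For queries outside $UCQ$ this step fails, and certain-answer computation is no longer reducible to evaluation on a single universal solution. The weak acyclicity of $\Sigma_t$ plays no role in this lemma; its sole purpose is to guarantee, via Theorem~\ref{barcelo-four-four-thm}, that the universal solution $J$ demanded by the lemma can actually be produced in polynomial time. Assembling the two ingredients --- the lemma $certain_{\cal M}(Q,I) = Q(J)_{\downarrow}$ and the polynomial-time construction of $J$ --- yields the claimed polynomial-time bound.
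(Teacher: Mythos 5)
Your proposal is correct and follows essentially the same route as the paper, which states the algorithm (test solution existence, build a canonical universal solution via Theorem~\ref{barcelo-four-four-thm}, evaluate $Q$, discard tuples with nulls) and defers the key equality $Q_{\downarrow}(J)=certain_{\cal M}(Q,I)$ to \cite{FaginKMP05}. Your two-containment argument --- universality of $J$ plus preservation of $UCQ$ queries under homomorphisms --- is exactly the standard justification of that equality, so there is nothing to add.
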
 


To compute the certain answers to a UCQ query $Q$ w.r.t. a source instance $I$, we first check whether a solution for $I$ exists. If there is no solution, the setting is inconsistent w.r.t. $I$. Otherwise, compute an arbitrary universal solution $J$ for $I$, and then compute the set $Q_{\downarrow}(J)$ of all those tuples in $Q(J)$ that do not contain nulls. It can be shown that $Q_{\downarrow}(J)$ $=$ $certain_{\cal M}(Q, I)$. 


\subsection{Data Exchange for Finding Certain Query Answers w.r.t. Materialized-View Setting} 
\label{dexchg-work-sec} 

Suppose we are given a valid CQ weakly acyclic mater- ialized-view setting ${\cal M}\Sigma$ $=$ $(${\bf P}, $\Sigma$, $\cal V$, $MV)$ and a CQ query $Q$ of arity $k$ $\geq$ $0$. We consider the problem of finding the set of certain answers to $Q$ w.r.t. the setting ${\cal M}\Sigma$ under CWA. That is, by the definition given in Section~\ref{probl-stmt-defs-sec}, we are interested in finding all (and only) the $k$-ary tuples $\bar t$ of elements of $consts({\cal M}\Sigma)$, such that for all the instances $I$ with $\cal V$ $\Rightarrow_{I,{\Sigma}}$ $MV$,  we have $\bar t$ $\in$ $Q(I)$. 

In this subsection we show how a straightforward reformulation of the pair $({\cal M}\Sigma, Q)$ turns the above problem 
into an instance of the problem of computing certain answers in data exchange. We first construct a set $\Sigma_{st}$ of tgds, as follows. For a view $V$ in the set of views $\cal V$ in  ${\cal M}\Sigma$, consider the query $V({\bar X})$ $\leftarrow$ $body_{(V)}({\bar X}, {\bar Y})$ for $V$. (As ${\cal M}\Sigma$ is a CQ setting, the query for each $V$ $\in$ $\cal V$ is a CQ query.) We associate with this $V$ $\in$ $\cal V$ the tgd $\sigma_V:$ $V({\bar X})$ $\rightarrow$ $\exists {\bar Y}$ $body_{(V)}({\bar X}, {\bar Y})$. We then define the set $\Sigma_{st}$ to be the set of tgds $\sigma_V$ for all $V$ $\in$ $\cal V$. Then the components {\bf P}, $\Sigma$, and $\cal V$ of ${\cal M}\Sigma$ can be reformulated into the following data-exchange setting: 


\begin{tabbing} 
${\cal S}^{(de)}({\cal M}\Sigma)$ $=$ $({\cal V}$, {\bf P}, $\Sigma_{st} \cup \Sigma)$.  
\end{tabbing} 


Further, we interpret $MV$ in ${\cal M}\Sigma$ as a source instance for ${\cal S}^{(de)}({\cal M}\Sigma)$, and interpret the input query $Q$ as a query on the target schema {\bf P} in ${\cal S}^{(de)}({\cal M}\Sigma)$. We call the triple $({\cal S}^{(de)}({\cal M}\Sigma)$, $MV$, $Q)$ {\em the associated data-exchange instance for} $({\cal M}\Sigma, Q)$.  

For valid CQ weakly acyclic settings ${\cal M}\Sigma$ and for CQ queries $Q$, we introduce the following algorithm, which we call the {\em data-exchange approach to finding the set of certain query answers w.r.t. a materialized-view setting.} First, we compute the canonical universal solution, $J_{de}^{{\cal M}\Sigma}$, for the source instance $MV$ in the data-exchange setting ${\cal S}^{(de)}({\cal M}\Sigma)$. If $J_{de}^{{\cal M}\Sigma}$ does not exist, then we output the empty set of answers. Otherwise we output, as a set of certain answers to the query $Q$ w.r.t. the setting ${\cal M}\Sigma$, the set 
of all those tuples in $Q(J_{de}^{{\cal M}\Sigma})$ that do not contain nulls. When we assume, similarly to \cite{ZhangM05}, 
that everything in ${\cal M}\Sigma$ is fixed except for $MV$ and $Q$, then from Theorem~\ref{barcelo-five-two-thm} due to \cite{FaginKMP05} we obtain immediately that this algorithm always terminates and runs in polynomial time. 
We have shown that this data-exchange approach is sound. (Please see Appendix~\ref{app-dexchg-sec}.) 

It turns out that our data-exchange approach is not complete for (CQ queries and) CQ weakly acyclic settings ${\cal M}\Sigma$ with $\Sigma$ $=$ $\emptyset$, nor for those with $\Sigma$ $\neq$ $\emptyset$. (Please see Appendix~\ref{incomplete-dexchg-approach-sec} for all the details.) We now discuss a feature of the data-exchange approach that prevents us from using  it as a complete algorithm for the problem of finding the set of certain query answers w.r.t. a materialized-view setting under CWA. In Appendix~\ref{vv-dexchg-sec} we will eliminate this feature of the data-exchange approach, in a modification that will yield a sound and complete algorithm for finding the set of certain answers to CQ queries w.r.t. CQ weakly acyclic materialized-view settings under CWA. 

Why is the data-exchange approach not complete when applied to (CQ queries and) CQ weakly acyclic material- ized-view settings? Intuitively, the problem is that its canonical universal solution $J_{de}^{{\cal M}\Sigma}$  ``may cover too many target instances'' (i.e., $J_{de}^{{\cal M}\Sigma}$ is an OWA rather than CWA solution). Let us rewrite the set $MV$ of Example~\ref{app-main-three-ex} using, to save space, constants $c$, $d$, and $f$, as $MV$ $=$ $\{ V(c,d), W(d,f) \}$. Now let us evaluate the queries for the views $V$ and $W$ of Example~\ref{app-main-three-ex} over the canonical solution $J_{de}^{{\cal M}\Sigma}$ $=$ $\{ E(c, d, \perp_1),$ $E(\perp_2, d, f) \}$ for that example. We obtain that the answer to the view $V$ on $J_{de}^{{\cal M}\Sigma}$ is $\{ V(c, d)$, $V(\perp_2, d) \}$.  Similarly, the answer to $W$ on $J_{de}^{{\cal M}\Sigma}$ is $\{ W(d, \perp_1)$, $W(d, f) \}$. Thus, if we replace $\perp_1$ in $J_{de}^{{\cal M}\Sigma}$ by any constant except $f$, or replace $\perp_2$ by any constant except $c$, then any ground instance obtained from $J_{de}^{{\cal M}\Sigma}$ using these replacements  would ``generate too many tuples'' (as compared with $MV$) 
in the answer to either  $V$ or $W$.

We now generalize over this observation. Fix a valid CQ weakly acyclic instance ${\cal M}\Sigma$, and consider the canonical universal solution (if one exists) $J_{de}^{{\cal M}\Sigma}$ generated by the data-exchange approach with ${\cal M}\Sigma$ as input. (In the remainder of this paper, we will refer to $J_{de}^{{\cal M}\Sigma}$ as {\em the canonical data-exchange solution for} ${\cal M}\Sigma$.) By definition of $J_{de}^{{\cal M}\Sigma}$, for each $V$ $\in$ $\cal V$, 
the answer to the query for $V$ on $J_{de}^{{\cal M}\Sigma}$ is a superset of the relation $MV[V]$. Suppose that the answer on $J_{de}^{{\cal M}\Sigma}$ to at least one view $V$ $\in$ $\cal V$  is not a subset of $MV[V]$, as it is the case in the example we have just discussed.  Then $J_{de}^{{\cal M}\Sigma}$, as a template for instances of schema {\bf P}, describes not only instances that ``generate'' exactly the set $MV$ in ${\cal M}\Sigma$, but also those instances that generate proper supersets of $MV$. The latter instances are not of interest to us. (Recall that we take the CWA viewpoint, and thus are interested only in the instances $I$ of schema {\bf P} such that $\cal V$ $\Rightarrow_{I,{\Sigma}}$ $MV$.) As a result, when the data-exchange approach uses $J_{de}^{{\cal M}\Sigma}$ to obtain certain answers to the input query $Q$, it can easily miss those certain answers that characterize only those instances of interest to us.




\section{Technical Details of the Data-Exchange Approach} 
\label{app-dexchg-sec} 

In this appendix we discuss the technical details of the data-exchange approach of Appendix~\ref{dexchg-sec} to finding the set of certain answers to a query w.r.t. a materialized-view setting, under CWA for CQ queries and CQ weakly acyclic settings. 

\subsection{A Sound Data-Exchange Approach} 
\label{sound-dexchg-approach-sec} 

Suppose that we are given a valid CQ materialized-view setting ${\cal M}\Sigma$ $=$ $(${\bf P}, $\Sigma$, $\cal V$, $MV)$ and a CQ query $Q$ of arity $k$ $\geq$ $0$. 
By the definition given in Section~\ref{probl-stmt-defs-sec}, we are interested in finding all (and only) $k$-ary tuples $\bar t$ of elements of $consts({\cal M}\Sigma)$, such that for all the instances $I$ satisfying $\cal V$ $\Rightarrow_{I,{\Sigma}}$ $MV$,  we have $\bar t$ $\in$ $Q(I)$. 

We now show how a straightforward reformulation of ${\cal M}\Sigma$ turns the above problem 
into an instance of the problem of computing certain answers in data exchange. We first construct a set $\Sigma_{st}$ of tgds, as follows. For a view $V$ in the set of views $\cal V$ in ${\cal M}\Sigma$, consider the query $V({\bar X})$ $\leftarrow$ $body_{(V)}({\bar X}, {\bar Y})$ for $V$. (As ${\cal M}\Sigma$ is a CQ instance, the query for each $V$ in $\cal V$ is a CQ query.) We associate with this $V$ $\in$ $\cal V$ the tgd $\sigma_V:$ $V({\bar X})$ $\rightarrow$ $\exists {\bar Y}$ $body_{(V)}({\bar X}, {\bar Y})$. We then define the set $\Sigma_{st}$ to be the set of tgds $\sigma_V$ for all $V$ $\in$ $\cal V$. Then ${\cal M}\Sigma$ can be reformulated into a data-exchange setting 

\begin{tabbing} 
${\cal S}^{(de)}({\cal M}\Sigma)$ $=$ $($ $\cal V$, {\bf P}, $\Sigma_{st} \cup \Sigma$ $)$, 
\end{tabbing} 

\noindent 
with a source instance $MV$ and a query $Q$ on the target schema {\bf P}. We call the triple $({\cal S}^{(de)}({\cal M}\Sigma),MV,Q)$ {\em the associated data-exchange instance for} ${\cal M}\Sigma$ and $Q$.  

The following observation is immediate from Definition~\ref{certain-query-answer-def}  and from the definitions in Section~\ref{dexchg-prelim-sec}. 

\begin{proposition} 
\label{dexchg-subsumed-by-potential-leaks-prop} 
Given a valid CQ materialized-view setting ${\cal M}\Sigma$ and a CQ query $Q$, with their associated data-exchange instance $({\cal S}^{(de)}({\cal M}\Sigma),MV,Q)$. Then for each tuple $\bar t$ that is a certain answer of $Q$ with respect to $MV$ under the data-exchange setting ${\cal S}^{(de)}({\cal M}\Sigma)$, we have that $\bar t$ is a certain answer to the query $Q$ w.r.t. ${\cal M}\Sigma$ under CWA. 
\end{proposition}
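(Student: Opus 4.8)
The plan is to prove the claimed containment $certain_{{\cal S}^{(de)}({\cal M}\Sigma)}(Q,MV) \subseteq certain_{{\cal M}\Sigma}(Q)$ by showing that every base instance that is relevant under CWA is, in fact, a solution for $MV$ in the data-exchange setting ${\cal S}^{(de)}({\cal M}\Sigma)$. Since a tuple that is a certain answer in the data-exchange sense lies in $Q(J)$ for \emph{every} solution $J$, and the CWA-relevant instances will turn out to form a subset of the solutions, the result follows by the elementary fact that intersecting $Q(\cdot)$ over a larger family of instances yields a smaller set.

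First I would fix a tuple $\bar t$ that is a certain answer of $Q$ w.r.t. $MV$ under ${\cal S}^{(de)}({\cal M}\Sigma)$, and an arbitrary ground instance $I$ with $\cal V$ $\Rightarrow_{I,{\Sigma}}$ $MV$; the goal is to show $\bar t$ $\in$ $Q(I)$. The key lemma is that $(MV,I)$ $\models$ $\Sigma_{st} \cup \Sigma$, i.e.\ that $I$ is a solution for the source instance $MV$. This splits into two checks. For the target dependencies $\Sigma$: since every dependency in $\Sigma$ mentions only relation symbols of {\bf P} and the {\bf P}-part of $(MV,I)$ is exactly $I$, the condition $I \models \Sigma$ coming from the definition of $\Sigma$-valid base instance gives $(MV,I) \models \Sigma$ directly. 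For the source-to-target tgds $\Sigma_{st}$: I would unfold the semantics of each $\sigma_V : V({\bar X}) \rightarrow \exists {\bar Y}\, body_{(V)}({\bar X},{\bar Y})$. Its antecedent fires exactly on the facts $V({\bar t}')$ with ${\bar t}' \in MV[V]$, and its consequent demands a witnessing valuation of $body_{(V)}$ into $I$ that sends $\bar X$ to ${\bar t}'$ — which is precisely the statement ${\bar t}' \in V(I)$. The closed-world condition $V(I) = MV[V]$ supplies $MV[V] \subseteq V(I)$, so every firing of $\sigma_V$ is satisfied and $(MV,I) \models \Sigma_{st}$.

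Having established that $I$ is a solution for $MV$, the definition of data-exchange certain answers (Section~\ref{dexchg-prelim-sec}) gives $\bar t \in Q(J)$ for all solutions $J$, and hence in particular $\bar t \in Q(I)$. As $I$ was an arbitrary instance with $\cal V$ $\Rightarrow_{I,{\Sigma}}$ $MV$, Definition~\ref{certain-query-answer-def} yields $\bar t \in certain_{{\cal M}\Sigma}(Q)$, completing the argument.

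There is no deep obstacle here: the proposition is essentially a bookkeeping comparison of two families of instances, as its ``immediate'' billing in the text suggests. The one point that deserves care, and that I expect to be the crux, is matching the \emph{inclusion}-only semantics of $\Sigma_{st}$ (the tgds $\sigma_V$ enforce only $MV[V] \subseteq V(J)$, the OWA relaxation) against the \emph{equality} condition $V(I) = MV[V]$ of CWA. It is exactly this asymmetry — that the set of solutions is a (proper) superset of the set of CWA-relevant base instances — which makes the data-exchange approach sound here, while at the same time being responsible for its failure to be complete under CWA, as discussed in Appendix~\ref{dexchg-sec}.
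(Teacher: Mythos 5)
Your proof is correct and is precisely the elaboration of the argument the paper intends, which it dispatches as ``immediate from Definition~\ref{certain-query-answer-def} and from the definitions in Section~\ref{dexchg-prelim-sec}'': every CWA-relevant base instance $I$ is a solution for $MV$ under ${\cal S}^{(de)}({\cal M}\Sigma)$ (the equality $V(I)=MV[V]$ supplies the inclusion $MV[V]\subseteq V(I)$ that the tgds $\sigma_V$ demand, and $I\models\Sigma$ is given), so the intersection defining the data-exchange certain answers ranges over a superset of the CWA-relevant instances and is therefore contained in $certain_{{\cal M}\Sigma}(Q)$. Your closing remark correctly identifies the OWA/CWA asymmetry as the reason the containment is one-directional.
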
 

For valid CQ weakly acyclic settings ${\cal M}\Sigma$ and CQ queries $Q$, we introduce the following algorithm, which we call the {\em data-exchange approach to finding certain query answers w.r.t. a materialized-view setting.} First, we compute the canonical universal solution, $J_{de}^{{\cal M}\Sigma}$, for the source instance $MV$ in the data-exchange setting ${\cal S}^{(de)}({\cal M}\Sigma)$. If $J_{de}^{{\cal M}\Sigma}$ does not exist, then we output the empty set of answers. Otherwise  we output, as a set of certain answers to $Q$ w.r.t. ${\cal M}\Sigma$, the set 
of all those tuples in $Q(J_{de}^{{\cal M}\Sigma})$ that do not contain nulls. When we assume, same as in \cite{ZhangM05}, that everything in ${\cal M}\Sigma$ is fixed except for $MV$, and assume that $Q$ is not fixed, then from Theorem~\ref{barcelo-five-two-thm} due to \cite{FaginKMP05} we obtain immediately that this algorithm always terminates, constructs the instance $J_{de}^{{\cal M}\Sigma}$ 
in polynomial time, 
 and returns each certain-answer tuple in polynomial time. 
By Proposition~\ref{dexchg-subsumed-by-potential-leaks-prop}, this data-exchange approach is sound. 


\subsection{The Data-Exchange Approach Is Not \\ Complete} 
\label{incomplete-dexchg-approach-sec}

By Theorem~\ref{rewr-approach-thm} and Proposition~\ref{dexchg-subsumed-by-potential-leaks-prop}, we have that for CQ weakly acyclic
materialized-view settings ${\cal M}\Sigma$ and for CQ queries $Q$, for all the certain answers to $Q$ w.r.t. ${\cal M}\Sigma$ that can be found using the data-exchange approach of Section~\ref{sound-dexchg-approach-sec}, these certain answers can in principle also be disclosed by an adaptation of our rewriting approach of Section~\ref{rewr-approach-subsec}. (See Note 2 in Section~\ref{one-rewr-enough-sec}.) 

\begin{theorem} 
\label{dexchg-subsumed-by-rewr-thm} 
Given a valid CQ weakly acyclic ma- terialized-view setting ${\cal M}\Sigma$ and a CQ query $Q$. Let $\cal T$ be the set of tuples output by the data-exchange approach of Section~\ref{sound-dexchg-approach-sec} when it is applied to the inputs ${\cal M}\Sigma$ and $Q$. Then for each $\bar t$ $\in$ $\cal T$, there exists an $MV$-validated head-instantiated rewriting $R$ for $\bar t$, such that $R$ $\sqsubseteq_{\Sigma,MV,{\cal V}}$ $Q$.   
\end{theorem}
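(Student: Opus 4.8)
The plan is to recognize this statement as an almost immediate consequence of two results already established: the soundness of the data-exchange approach (Proposition~\ref{dexchg-subsumed-by-potential-leaks-prop}) and the exact characterization of certain answers by head-instantiated rewritings (Theorem~\ref{rewr-approach-thm}). The strategy is to route each tuple in $\cal T$ through the CWA certain-answer property and then invoke the direction of Theorem~\ref{rewr-approach-thm} that produces a rewriting from a certain answer.

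First I would make precise what $\cal T$ is. By construction, the data-exchange approach of Section~\ref{sound-dexchg-approach-sec} forms the canonical universal solution $J_{de}^{{\cal M}\Sigma}$ for the source instance $MV$ under ${\cal S}^{(de)}({\cal M}\Sigma)$ and outputs exactly the null-free tuples of $Q(J_{de}^{{\cal M}\Sigma})$. By Theorem~\ref{barcelo-five-two-thm} (due to \cite{FaginKMP05}), this set of null-free tuples is precisely the set of certain answers of $Q$ w.r.t. $MV$ under the data-exchange setting ${\cal S}^{(de)}({\cal M}\Sigma)$. (If ${\cal M}\Sigma$ is valid, then a $\Sigma$-valid base instance for $\cal V$ and $MV$ is in particular a data-exchange solution for $MV$ under ${\cal S}^{(de)}({\cal M}\Sigma)$, so $J_{de}^{{\cal M}\Sigma}$ exists and $\cal T$ is well defined; otherwise $\cal T$ is empty and the claim is vacuous.)

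Next I would verify the one side-condition needed to apply Theorem~\ref{rewr-approach-thm}, namely that every $\bar t$ $\in$ $\cal T$ is a $k$-tuple of values from $consts({\cal M}\Sigma)$. This is a bookkeeping check on the chase: $J_{de}^{{\cal M}\Sigma}$ is obtained by chasing $MV$ with $\Sigma_{st} \cup \Sigma$, the source-to-target tgds $\sigma_V$ introduce only the constants already present in $\bar t$ $\in$ $MV[V]$ together with the (finitely many) constants occurring in the view bodies, and $\Sigma$ is constant-free by Definition~\ref{mat-setting-def}; hence every constant occurring in $J_{de}^{{\cal M}\Sigma}$ lies in $consts({\cal M}\Sigma)$. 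Since tuples in $\cal T$ are null-free, all of their entries are such constants.

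Having established these two facts, the conclusion follows at once. Fix $\bar t$ $\in$ $\cal T$. By Proposition~\ref{dexchg-subsumed-by-potential-leaks-prop}, $\bar t$ is a certain answer to $Q$ w.r.t. ${\cal M}\Sigma$ under CWA. Applying Theorem~\ref{rewr-approach-thm} to this $k$-tuple $\bar t$ of values from $consts({\cal M}\Sigma)$ (in the direction that derives a rewriting from a certain answer) yields an $MV$-validated head-instantiated rewriting $R$ for $\bar t$ with $R$ $\sqsubseteq_{\Sigma,MV,{\cal V}}$ $Q$, which is exactly the desired conclusion; by Theorem~\ref{rewr-approach-correct-thm} one may in fact take $R$ to be the maximal $MV$-induced rewriting $R^*_{\bar t}$. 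Since no genuinely new argument is required, the only real obstacle is the side-condition of the previous paragraph: one must confirm that the chase introduces no constants outside $consts({\cal M}\Sigma)$, so that the hypothesis of Theorem~\ref{rewr-approach-thm} is literally met.
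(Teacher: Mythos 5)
Your proposal is correct and matches the paper's own argument, which derives this theorem directly from Proposition~\ref{dexchg-subsumed-by-potential-leaks-prop} (soundness of the data-exchange approach) combined with Theorem~\ref{rewr-approach-thm}. Your extra check that every tuple in $\cal T$ has all values in $consts({\cal M}\Sigma)$ is sound but could be obtained even more directly from the paper's earlier observation that any certain answer must consist of values in $consts({\cal M}\Sigma)$.
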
 

Even in the light of the result of Theorem~\ref{dexchg-subsumed-by-rewr-thm}, we cannot just abandon the data-exchange approach in favor of the rewriting approach when working with valid CQ weakly acyclic materialized-view settings and CQ queries. It is true that we already have 
a sound and complete rewriting approach to finding all certain answers to CQ queries w.r.t. valid CQ weakly acyclic materialized-view settings under CWA. However, the rewriting approach works via an explicit generate-and-test paradigm for all the candidate certain-answer tuples, please see Note 2 in Section~\ref{one-rewr-enough-sec}. The advantage of the data-exchange approach in this regard is that we can obtain all the certain-answer tuples for the query $Q$ that are determinable by this (sound) approach, simply by processing $Q$ once on the instance $J_{de}^{{\cal M}\Sigma}$, and by then filtering out all the answer tuples that contain null values. In Appendix~\ref{vv-dexchg-sec} we will introduce a sound {\em and complete} algorithm for finding all the certain-answer tuples to CQ queries w.r.t. CQ weakly acyclic materialized-view settings under CWA. The algorithm of Appendix~\ref{vv-dexchg-sec} (i) uses the idea and approach of data exchange, and is in fact based on the approach of Section~\ref{sound-dexchg-approach-sec}; and (ii) has the same desirable property of ``returning all the certain-answer tuples by processing the input query $Q$ once'' as just discussed in this paragraph in regard to the data-exchange approach of Section~\ref{sound-dexchg-approach-sec}. 

(As each of the generate-and-test rewriting algorithm of Section~\ref{one-rewr-enough-sec} and the algorithm to be introduced in Appendix~\ref{vv-dexchg-sec} is sound and complete for input instances with CQ queries and CQ weakly acyclic materialized-view settings under CWA, these two approaches have of course the same asymptotic complexity, w.r.t. any relevant complexity measure. Our only argument in the previous paragraph in favor of the algorithm of Appendix~\ref{vv-dexchg-sec} is that that algorithm is, in a sense, more streamlined (than the data-exchange approach), as it does not use the generate-and-test paradigm w.r.t. the candidate certain-answer tuples.) 

The reason we are to introduce the algorithm of Appendix~\ref{vv-dexchg-sec} is that, not surprisingly, the data-exchange approach of Section~\ref{sound-dexchg-approach-sec} is not complete under CWA for CQ queries, either for CQ weakly acyclic settings ${\cal M}\Sigma$ with $\Sigma$ $=$ $\emptyset$, or for those with $\Sigma$ $\neq$ $\emptyset$. In the remainder of this appendix, we discuss a feature of the data-exchange approach that prevents us from using  it as a complete algorithm for this class of input instances under CWA. In Appendix~\ref{vv-dexchg-sec} we will eliminate this feature of the data-exchange approach, in a modification that will give us a sound and complete algorithm for finding all the certain-answer tuples for this class of input instances under CWA. 

We now prove that  the data-exchange approach is not complete for CQ instances ${\cal M}\Sigma$ with $\Sigma$ $=$ $\emptyset$. 

\begin{example} 
\label{dexchg-approach-incomplete-sigma-empty-ex} 
We recall the CQ query $Q$ and the CQ views $V$ and $W$ of Example~\ref{app-main-three-ex}: 

\begin{tabbing} 
$Q(X, Z)$ $\leftarrow$ $E(X, Y, Z)$. \\ 
$V(X, Y)$ $\leftarrow$ $E(X, Y, Z)$. \\ 
$W(Y, Z)$ $\leftarrow$ $E(X, Y, Z)$. 
\end{tabbing} 

Using the agreement as in Example~\ref{grounding-ex} for the constants used in Example~\ref{app-main-three-ex}, we represent the set of view answers of Example~\ref{app-main-three-ex} as $MV$ $=$ $\{ \ V(c,d), W(d,f) \ \}$. In the same notation, the tuple $\bar t$ of Example~\ref{app-main-three-ex} is recast as 
$(c, f)$. 

Consider the materialized-view setting ${\cal M}\Sigma$ $=$ $(\{ E \}$, $\emptyset$, $\{ V, W \}$, $MV)$, with all the elements as defined above. By definition, ${\cal M}\Sigma$ is a CQ weakly acyclic setting. (${\cal M}\Sigma$ is also valid, by the existence of the instance $\{ (c,d,f) \}$ of schema $\{ E \}$.) The data-exchange approach of Section~\ref{sound-dexchg-approach-sec} applied to ${\cal M}\Sigma$ and $Q$ yields the following canonical universal solution, $J_{de}^{{\cal M}\Sigma}$, for the source instance $MV$ in the data-exchange setting ${\cal S}^{(de)}({\cal M}\Sigma)$: 

\begin{tabbing} 
$J_{de}^{{\cal M}\Sigma}$ $=$ $\{$ $E(c, d, \perp_1)$, $E(\perp_2, d, f)$ $\}$. 
\end{tabbing} 

\noindent 
(The first tuple in $J_{de}^{{\cal M}\Sigma}$ is due to the tuple $V(c, d)$ in $MV$, and the second tuple is due to $W(d,f)$ in $MV$.) It is easy to see that each of the two answers to the query $Q$ on the instance $J_{de}^{{\cal M}\Sigma}$ has nulls and thus cannot qualify as a certain answer to $Q$ w.r.t. ${\cal M}\Sigma$. 
\end{example} 

When given as inputs the setting ${\cal M}\Sigma$ and query $Q$ of Example~\ref{dexchg-approach-incomplete-sigma-empty-ex}, the data-exchange approach of Section~\ref{sound-dexchg-approach-sec} outputs the empty set of candidate-answer tuples. As $Q$ is a CQ query and ${\cal M}\Sigma$ is CQ weakly acyclic (with $\Sigma$ $=$ $\emptyset$),  the sound and complete rewriting-based algorithm of Section~\ref{one-rewr-enough-sec} for finding all the candidate-answer tuples is applicable to ${\cal M}\Sigma$ and $Q$, and outputs $\{$ $(c, f)$ $\}$ when given ${\cal M}\Sigma$ and $Q$ in its input. We conclude that the data-exchange approach is incomplete when applied to CQ queries and CQ weakly acyclic settings with $\Sigma$ $=$ $\emptyset$. Further, we can use the example of Appendix~\ref{sigma-noemptyset-ex-sec} to show that the data-exchange approach is also incomplete when applied to (CQ queries and) CQ weakly acyclic materialized-view settings with $\Sigma$ $\neq$ $\emptyset$. 

Why is the data-exchange approach not complete when applied to (CQ queries and) CQ weakly acyclic material- ized-view settings? Intuitively, the problem is that its canonical universal solution $J_{de}^{{\cal M}\Sigma}$  ``may cover too many target instances'' (i.e., $J_{de}^{{\cal M}\Sigma}$ is an OWA rather than CWA solution). Let us evaluate the queries for the views $V$ and $W$ of Example~\ref{dexchg-approach-incomplete-sigma-empty-ex} over the solution $J_{de}^{{\cal M}\Sigma}$ of that example. We obtain that the answer to the view $V$ on $J_{de}^{{\cal M}\Sigma}$ is $\{ V(c, d)$, $V(\perp_2, d) \}$.  Similarly, the answer to $W$ on $J_{de}^{{\cal M}\Sigma}$ is $\{ W(d, \perp_1)$, $W(d, f) \}$. Thus, if we replace $\perp_1$ in $J_{de}^{{\cal M}\Sigma}$ by any constant except $f$, or replace $\perp_2$ by any constant except $c$, then any ground instance obtained from $J_{de}^{{\cal M}\Sigma}$ using these replacements  would ``generate too many tuples'' (as compared with $MV$) 
in the answer to either  $V$ or $W$. 

We now generalize over this observation. Fix a valid CQ weakly acyclic instance ${\cal M}\Sigma$, and consider the canonical universal solution (if one exists) $J_{de}^{{\cal M}\Sigma}$ generated by the data-exchange approach with ${\cal M}\Sigma$ as input. (In the remainder of this paper, we will refer to $J_{de}^{{\cal M}\Sigma}$ as {\em the canonical data-exchange solution for} ${\cal M}\Sigma$.) By definition of $J_{de}^{{\cal M}\Sigma}$, for each $V$ $\in$ $\cal V$, 
the answer to the query for $V$ on $J_{de}^{{\cal M}\Sigma}$ is a superset of the relation $MV[V]$. Suppose that the answer on $J_{de}^{{\cal M}\Sigma}$ to at least one view $V$ $\in$ $\cal V$  is not a subset of $MV[V]$, as it is the case in the example we have just discussed.  Then $J_{de}^{{\cal M}\Sigma}$, as a template for instances of schema {\bf P}, describes not only instances that ``generate'' exactly the set $MV$ in ${\cal M}\Sigma$, but also those instances that generate proper supersets of $MV$. The latter instances are not of interest to us. (Recall that we take the CWA viewpoint, and thus are interested only in the instances $I$ of schema {\bf P} such that $\cal V$ $\Rightarrow_{I,{\Sigma}}$ $MV$.) As a result, when the data-exchange approach uses $J_{de}^{{\cal M}\Sigma}$ to obtain certain answers to the input query $Q$, it can easily miss those certain answers that characterize only those instances of interest to us.

\section{View-Verified Data Exchange} 
\label{vv-dexchg-sec} 

The problem with the natural data-exchange approach, 
as introduced in \cite{BrodskyFJ00,StoffelS05}, is that its canonical universal solution, when turned into a ground instance, 
may produce a proper superset of the given set of view answers $MV$. (See Appendices~\ref{dexchg-sec}--~\ref{app-dexchg-sec} in this current paper.) 
That is, the canonical data-exchange solution does not necessarily describe ground solutions for ${\cal M}\Sigma$ ``tightly enough.'' (Recall that we take the CWA viewpoint, and thus are interested only in the instances $I$ of schema {\bf P} such that $\cal V$ $\Rightarrow_{I,{\Sigma}}$ $MV$. At the same time, the canonical data-exchange solution describes not only these ``CWA'' instances, but also those that are relevant to the inputs under OWA.) 

The approach 
that we introduce in this appendix builds on data exchange, by ``tightening'' its universal solutions using $consts({\cal M}\Sigma)$. This approach, which we call {\em view-verified data exchange,} solves correctly the problem of finding all the candidate-answer tuples w.r.t. a CQ query and a valid CQ weakly acyclic materialized-view setting. We also use the approach of this appendix to solve the problem of deciding whether a given materialized-view setting is valid. 


\subsection{Chase with {\em MV}-Induced Dependencies} 
\label{mv-chase-sec} 

In Section~\ref{vv-dexchg-def-sec} we will define view-verified data exchange for CQ weakly acyclic input instances. (Throughout this appendix, we use the term ``CQ weakly acyclic input instance'' to refer to a pair $({\cal M}\Sigma,Q)$, where ${\cal M}\Sigma$ is a CQ weakly acyclic materialized-view setting, and $Q$ is a CQ query $Q$ over the schema {\bf P} in ${\cal M}\Sigma$.) Given a ${\cal M}\Sigma$ with set of views $\cal V$ and set of view answers $MV$, the idea of the approach is to force the canonical data-exchange solution $J_{de}^{{\cal M}\Sigma}$  for ${\cal M}\Sigma$ to generate only the relations in $MV$ as answers to the queries for $\cal V$. (By definition of $J_{de}^{{\cal M}\Sigma}$, the answer on $J_{de}^{{\cal M}\Sigma}$ to the query for each $V$ $\in$ $\cal V$ is always a superset of the relation $MV[V]$.) We achieve this goal by chasing $J_{de}^{{\cal M}\Sigma}$ using ``$MV$-induced'' dependencies. Intuitively, applying $MV$-induced dependencies to the instance $J_{de}^{{\cal M}\Sigma}$ forces some nulls in $J_{de}^{{\cal M}\Sigma}$ to become constants in $consts({\cal M}\Sigma)$. As a result of such a chase step, we obtain that for at least one view $V$ $\in$ $\cal V$, some formerly non-ground tuples in the answer to $V$ on the instance become ground tuples in $MV[V]$. 

We now formally define $MV$-induced dependencies. Let $V({\bar X})$ $\leftarrow$ $\phi({\bar X},{\bar Y})$ be a CQ query of arity $k_V$ $\geq$ $0$, and $MV$ be a ground instance of a schema that includes the $k_V$-ary relation symbol $V$. First, in case where $MV[V]$ $=$ $\emptyset$, we define the {\em $MV$-induced implication constraint ($MV$-induced ic)} $\iota_V$ {\em for} $V$ as 
\begin{equation} 
\label{iota-eqn}  
\iota_V: \ \phi({\bar X},{\bar Y}) \rightarrow \ false.  
\end{equation} 


\noindent 
(Each $MV$-induced ic is an implication constraint, i.e., a Horn rule with the empty head. See \cite{ZhangO97} for the discussion and references on implication constraints.) 




Second, in case where $k_V$ $\geq$ $1$, suppose $MV[V]$ $=$ $\{ {\bar t}_1$, ${\bar t}_2$, $\ldots$, ${\bar t}_{m_V} \}$, with ${m_V}$ $\geq$ $1$. Then we define the {\em $MV$-induced generalized egd ($MV$-induced ged)} $\tau_V$ {\em for} $V$ as  
\begin{equation} 
\label{tau-eqn}  
\tau_V: \phi({\bar X},{\bar Y}) \rightarrow \vee_{i=1}^{m_V} ({\bar X} = {\bar t}_i).  
\end{equation} 


\noindent 
Here, ${\bar X}$ $=$ $[S_1,\ldots,S_{k_V}]$ is the head vector of the query for $V$, with $S_j$ $\in$ {\sc Const} $\cup$ {\sc Qvar} for  $j$ $\in$ $[1,$ $k_V]$. For each $i$ $\in$ $[1,$ ${m_V}]$ and for the ground tuple $\bar t_i$ $=$ $(c_{i1},$ $\ldots,$ $c_{ik_V})$ $\in$   $MV[V]$, we abbreviate by ${\bar X} = {\bar t}_i$ the conjunction $\wedge_{j=1}^{k_V} (S_j = c_{ij})$. $MV$-induced geds are a straightforward generalization of disjunctive egds of  
\cite{DeutschT01,FaginKMP05}. 

We now define chase of instances with $MV$-induced dependencies. Consider first $MV$-induced implication constraints.  Given an instance $K$ of schema {\bf P} and an $MV$-induced ic $\iota_V$ as in Eq. (\ref{iota-eqn}), suppose there exists a homomorphism $h$ from the antecedent $\phi({\bar X},{\bar Y})$ of $\iota_V$ to $K$. The intuition here is that we want to make sure that $K$ does not ``generate'' any tuples in the relation $MV[V]$; however, by the existence of $h$, the instance $K$ does generate at least one such tuple. We then say that {\em chase with $\iota_V$ (and $h$) fails on the instance} $K$ {\em and produces the set} $\{ \epsilon \}$, with $\epsilon$ denoting the empty instance. 

Now 
let $\tau$ as in Eq. (\ref{tau-eqn}) be an $MV$-induced generalized egd for a $V$ $\in$ $\cal V$. 
The intuition here is that $K$ must ``generate'' {\em only} the tuples in the relation $MV[V]$; we make this happen 
by assigning nulls in $K$ to constants in $MV[V]$. (If such assignments are not possible, chase with $\tau$ fails on $K$.) 
Example~\ref{best-ex} is the running example. 

Our definition of the chase step with $\tau$ as in Eq. (\ref{tau-eqn}) is a straightforward extension of the definition of \cite{FaginKMP05} for their disjunctive egds, as follows. Consider the consequent of $\tau$, of the form $\vee_{i=1}^{m_V} ({\bar X} = {\bar t}_i)$. Recall that for each $i$ $\in$ $[1,$ $m_V]$, the expression ${\bar X} = {\bar t}_i$ is of the form $\wedge_{j=1}^{k_V} (S_j = c_{ij})$. Denote by $\tau^{(1)}$, $\ldots$, $\tau^{(m_V)}$ the following $m_V$ dependencies obtained from $\tau$: $(\phi({\bar X},{\bar Y})$ $\rightarrow$ ${\bar X} = {\bar t}_1)$, $\ldots$, $(\phi({\bar X},{\bar Y})$ $\rightarrow$ ${\bar X} = {\bar t}_{m_V})$, and call them {\em the dependencies associated with} $\tau$. For each $i$ $\in$ $[1,$ $m_V]$, $\tau^{(i)}$ is an embedded dependency that can be equivalently represented by $k_V$ egds 
$\tau^{(i,1)}$, $\ldots$, $\tau^{(i,k_V)}$.  Here, for each $j$ $\in$ $[1,$ $k_V]$, the egd $\tau^{(i,j)}$ is $\phi({\bar X},{\bar Y})$ $\rightarrow$ $S_j = c_{ij}$. 

Given a $\tau$ as in Eq. (\ref{tau-eqn}) and an instance $K$ of schema {\bf P}, suppose that there exists a homomorphism $h$ from $\phi({\bar X},{\bar Y})$ to $K$ such that $\wedge_{j=1}^{k_V} (h(S_j) = h(c_{ij}))$ is not a tautology for any $i$ $\in$ $[1,$ $m_V]$. Then we say that $\tau$ {\em is applicable to $K$ with the homomorphism} $h$. It is easy to see that it is also the case that each of $\tau^{(1)}$, $\ldots$, $\tau^{(m_V)}$ can be applied to $K$ with 
$h$. That is, for each $i$ $\in$ $[1,$ $m_V]$, the chase of $K$ is applicable with at least one egd $\tau^{(i,j)}$ in the equivalent representation of $\tau^{(i)}$ as a set of egds. For each  $i$ $\in$ $[1,$ $m_V]$, let $K_i$ be the result of applying all the egds $\tau^{(i,1)}$, $\ldots$, $\tau^{(i,k_V)}$ to $K$ with 
$h$. Note that chase with $\tau^{(i,j)}$ and $h$ can fail on $K$ for some $i$ and $j$. For each such $i$, we say that {\em chase with $\tau^{(i)}$ 
fails on} $K$ {\em and produces the empty instance} $\epsilon$.  

Similarly to \cite{FaginKMP05}, we distinguish two cases: 
\begin{itemize} 
	\item If the set $\{ K_{1},$ $\ldots$, $K_{m_V} \}$ contains only empty instances, 
we say that {\em chase with $\tau$ (and $h$) fails on} $K$ {\em and produces the set}   $\{ \epsilon \}$. 
	\item Otherwise, let ${\cal K}^{(\tau)}$ $=$ $\{ K_{i_1},$ $\ldots$, $K_{i_p} \}$ be the set of all nonempty elements of $\{ K_{1},$ $\ldots$, $K_{m_V} \}$. We say that ${\cal K}^{(\tau)}$ {\em is the result of applying $\tau$ to $K$ with} $h$. 
\end{itemize} 


Similarly to the approach of \cite{FaginKMP05}, in addition to chase steps with $MV$-induced dependencies we will also use chase steps with egds and tgds as in Section~\ref{dep-chase-prelims-sec}. For the chase step of each type, we will use the set notation for uniformity: $K$ $\Rightarrow^{\sigma,h}$ ${\cal K}'$ denotes that a chase step with dependency $\sigma$ and homomorphism $h$ applied to instance $K$ yields a set of instances ${\cal K}'$. Whenever chase with an egd fails on $K$, the set ${\cal K}'$ is the set $\{ \epsilon \}$ by convention; in all other cases where $\sigma$ is an egd or a tgd, the set ${\cal K}'$ is a singleton set.  For $\sigma$ of the form as in Eq. (\ref{iota-eqn})--(\ref{tau-eqn}), the set ${\cal K}'$ is in some cases $\{ \epsilon \}$ as defined above.  

\begin{definition}{$MV$-enhanced chase} Let $\Sigma$ be a set of egds and tgds, let $\Sigma^{(MV)}$  be a set of $MV$-induced dependencies, and let $K$ be an instance. 
\begin{itemize}
	\item A {\em chase tree of $K$ with} $\Sigma$ $\cup$ $\Sigma^{(MV)}$ is a tree (finite or infinite) such that: 
	\begin{itemize}
		\item The root is $K$, and 
		\item For every node $K_j$ in the tree, let ${\cal K}_j$ be the set of its children. Then there must exist some dependency $\sigma$ in $\Sigma$ $\cup$ $\Sigma^{(MV)}$ and some homomorphism $h$ such that $K_j$ $\Rightarrow^{\sigma,h}$ ${\cal K}_j$. 
	\end{itemize} 
	\item A {\em finite $MV$-enhanced chase of $K$ with} $\Sigma$ $\cup$ $\Sigma^{(MV)}$ is a finite chase tree $\cal T$, such that for each leaf $K_p$ of $\cal T$, we have that either (a) $K_p$ is $\epsilon$, or (b) 
	there is no dependency $\sigma$ in $\Sigma$ $\cup$ $\Sigma^{(MV)}$ and no homomorphism $h$ such that $\sigma$ can be applied to $K_p$ with $h$. 
\end{itemize} 
\end{definition} 

\begin{example} 
\label{best-ex} 
Consider $Q$ as in Example~\ref{app-main-three-ex} and ${\cal M}\Sigma$ $=$ $(\{ E \}$, $\emptyset$, $\{ V, W \}$, $MV)$, with all the elements except $MV$ as in Example~\ref{app-main-three-ex}.\footnote{Please see Example~\ref{dexchg-approach-incomplete-sigma-empty-ex} for the details.} For this current example, we define the set $MV$ as 

\begin{tabbing} 
$MV$ $=$ $\{ \ V(c,d), V(g,d), W(d,f) \ \}$. 
\end{tabbing} 

By definition, ${\cal M}\Sigma$ paired with $Q$ is a CQ 
instance with $\Sigma$ $=$ $\emptyset$. ${\cal M}\Sigma$ is also valid, as witnessed by the instance $\{ E(c,d,f),$ $E(g,d,f) \}$. 
The data-exchange approach of Appendix~\ref{dexchg-sec} 
yields the following canonical data-exchange solution $J_{de}^{{\cal M}\Sigma}$ for ${\cal M}\Sigma$: 


\begin{tabbing} 
$J_{de}^{{\cal M}\Sigma}$ $=$ $\{$ $E(c, d, \perp_1)$, $E(g, d, \perp_2)$, $E(\perp_3, d, f)$ $\}$. 
\end{tabbing} 


\noindent 
The set of answers without nulls to the query $Q$ on $J_{de}^{{\cal M}\Sigma}$ is empty. Thus, the data-exchange approach applied to 
${\cal M}\Sigma$ discovers no certain answers to the query $Q$ w.r.t. the setting ${\cal M}\Sigma$. 

In applying the view-verified data-exchange approach to the input $({\cal M}\Sigma,Q)$, we first construct the $MV$-induced generalized egds, $\tau_V$ and $\tau_W$, one for each of the two views in ${\cal M}\Sigma$. (As $MV$ has no empty relations, we do not need to construct $MV$-induced ics for ${\cal M}\Sigma$.)
 

\begin{tabbing} 
$\tau_V: E(X,Y,Z) \rightarrow (X = c \wedge Y = d) \vee (X = g \wedge Y = d).$ \\ 
$\tau_W: E(X,Y,Z) \rightarrow (Y = d \wedge Z = f).$
\end{tabbing} 


\noindent 
The two dependencies associated with $\tau_V$ are $\tau^{(1)}_V:$\linebreak $E(X,Y,Z) \rightarrow (X = c \wedge Y = d)$ and $\tau^{(2)}_V: E(X,Y,Z) \rightarrow (X = g \wedge Y = d).$ Each of $\tau^{(1)}_V$ and $\tau^{(2)}_V$ can be equivalently represented by two egds. For instance, the egd representation for $\tau^{(1)}_V$ is via $\tau^{(1,1)}_V: E(X,Y,Z) \rightarrow X = c$ and $\tau^{(1,2)}_V: E(X,Y,Z) \rightarrow Y = d$.  Similarly, there is one dependency $\tau^{(1)}_W$ ($=$ $\tau_W$) associated with $\tau_W$; an equivalent representation of $\tau^{(1)}_W$ is via two egds. 

Consider a homomorphism $h^{(1)}_V:$ $\{ X \rightarrow c,$ $Y \rightarrow d,$ $Z \rightarrow \perp_1 \}$ from the antecedent $E(X,Y,Z)$ of $\tau_V$ to the instance $J_{de}^{{\cal M}\Sigma}$. As applying $h^{(1)}_V$ to the consequent of $\tau^{(1)}_V$ gives us the tautology $(c = c \wedge d = d)$, we conclude that $\tau_V$ is not applicable to $J_{de}^{{\cal M}\Sigma}$ with 
$h^{(1)}_V$. 

Consider now the homomorphism $h^{(2)}_V:$ $\{ X \rightarrow \perp_3,$ $Y \rightarrow d,$ $Z \rightarrow f \}$ from the antecedent  of $\tau_V$ to $J_{de}^{{\cal M}\Sigma}$. 
Applying $h^{(2)}_V$ to the consequent of $\tau_V$ gives us the expression $(\perp_3 = c \wedge d = d) \vee (\perp_3 = g \wedge d = d),$ which has no tautologies among its disjuncts. Thus, $\tau_V$ is applicable to $J_{de}^{{\cal M}\Sigma}$ with 
$h^{(2)}_V$. The chase step with $\tau_V$ and $h^{(2)}_V$ transforms 
$J_{de}^{{\cal M}\Sigma}$ into 
instances $J_1$ and $J_2$, as follows. 


\begin{tabbing} 
$J_1$ $=$ $\{$ $E(c, d, \perp_1)$, $E(g, d, \perp_2)$, $E(c, d, f)$ $\}$. \\ 
$J_2$ $=$ $\{$ $E(c, d, \perp_1)$, $E(g, d, \perp_2)$, $E(g, d, f)$ $\}$. 
\end{tabbing} 


\noindent 
($J_1$ results from assigning $\perp_3 := c$, and $J_2$ from $\perp_3 := g$.) 

We then use the same procedure to apply $\tau_W$ to each of $J_1$ and $J_2$. 
In each case, the chase steps assign the value $f$ to each of $\perp_1$ and $\perp_2$. As a result, the following instance $J_{vv}^{{\cal M}\Sigma}$ is obtained from each of $J_1$ and $J_2$: 


\begin{tabbing} 
$J_{vv}^{{\cal M}\Sigma}$ $=$ $\{$ $E(c, d, f)$, $E(g, d, f)$ $\}$. 
\end{tabbing} 
\end{example}  


\subsection{Solving CQ Weakly Acyclic Instances} 
\label{vv-dexchg-def-sec} 

We now define the view-verified data-exchange approach to the problem of finding all certain answers to queries w.r.t. materialized-view-settings. 

Let ${\cal M}\Sigma$ $=$ $(${\bf P}, $\Sigma$, $\cal V$, $MV)$ be a CQ materialized-view setting. Then {\em the set $\Sigma^{({\cal M}\Sigma)}$ of $MV$-induced dependencies for} ${\cal M}\Sigma$ is a set of up to $|{\cal V}|$ elements, 
as follows. For each $V$ $\in$ $\cal V$ such that $k_V$ $\neq$ $0$ or $MV[V]$ $\neq$ $\{ () \}$, 
$\Sigma^{({\cal M}\Sigma)}$ has one $MV$-induced implication constraint or one $MV$-induced generalized egd, by the rules as in Eq. (\ref{iota-eqn})--(\ref{tau-eqn}) in Section~\ref{mv-chase-sec}.\footnote{We omit from $\Sigma^{({\cal M}\Sigma)}$ the dependencies, of the form $\phi({\bar X},{\bar Y}) \rightarrow true$, for the case where $k_V = 0$ {\em and} $MV[V]$ $\neq$ $\emptyset$. By the results in this appendix, adding these dependencies to $\Sigma^{({\cal M}\Sigma)}$ would not change any chase results.}  

For CQ weakly acyclic input instances $({\cal M}\Sigma,Q)$ we introduce the following {\em view-verified data-exchange approach to finding certain query answers w.r.t. a materi- alized-view setting.} First, we compute (as in Appendix~\ref{dexchg-sec}) the canonical universal solution $J_{de}^{{\cal M}\Sigma}$ for the source instance $MV$ in the data-exchange setting ${\cal S}^{(de)}({\cal M}\Sigma)$. If $J_{de}^{{\cal M}\Sigma}$ does not exist, 
we stop and output the answer that  ${\cal M}\Sigma$ is not valid. Otherwise we obtain a chase tree of $J_{de}^{{\cal M}\Sigma}$ with $\Sigma$ $\cup$ $\Sigma^{({\cal M}\Sigma)}$, where $\Sigma^{({\cal M}\Sigma)}$ is the set of $MV$-induced dependencies for ${\cal M}\Sigma$. If the chase tree is finite, denote by ${\cal J}^{{\cal M}\Sigma}_{vv}$ the set of all the nonempty leaves of the tree. 
We call each $J$ $\in$ ${\cal J}^{{\cal M}\Sigma}_{vv}$ a {\em view-verified universal solution for} ${\cal M}\Sigma$. 
If ${\cal J}^{{\cal M}\Sigma}_{vv}$ $=$ $\emptyset$, then we stop and output the answer that  ${\cal M}\Sigma$ is not valid. 
Otherwise, for each $J$ $\in$ ${\cal J}^{{\cal M}\Sigma}_{vv}$ we compute the set $Q_{\downarrow}(J)$ of all the tuples in $Q(J)$ that do not contain nulls. Finally, the output of the approach for the input $({\cal M}\Sigma,Q)$  
is the set 
\begin{equation} 
\label{vv-eqn} 
\bigcap_{J \in {\cal J}^{{\cal M}\Sigma}_{vv}} Q_{\downarrow}(J). 
\end{equation} 


The view-verified data-exchange approach to the problem of finding all certain answers to queries w.r.t. materi- alized-view-settings addresses the shortcoming of the data-exchange approach, see Appendix~\ref{dexchg-sec}. Recall that the canonical universal solution $J_{de}^{{\cal M}\Sigma}$ of the latter approach might not cover ``tightly enough'' all the instances of interest to the attackers. In the view-verified approach, we address this problem, by using our extension of the chase to generate from $J_{de}^{{\cal M}\Sigma}$ a set ${\cal J}^{{\cal M}\Sigma}_{vv}$ of instances that are each ``tighter'' than $J_{de}^{{\cal M}\Sigma}$ in this sense. 

In Section~\ref{vv-correctness-sec} we will show that the view-verified data-exchange approach is a sound and complete algorithm for the problem of finding all certain answers to queries w.r.t. materialized-view-settings, in all cases where the input instances are CQ weakly acyclic. In particular, we will see that the set ${\cal J}^{{\cal M}\Sigma}_{vv}$ is well defined, in that the chase tree in the view-verified data-exchange approach is always finite. We will also see that the set ${\cal J}^{{\cal M}\Sigma}_{vv}$ is ``just tight enough,'' in the following sense: Recall (see Section~\ref{probl-stmt-defs-sec}) the definition of $certain_{{\cal M}\Sigma}(Q)$, i.e., of the set of certain answers of query $Q$ w.r.t. materialized-view setting ${\cal M}\Sigma$. 
 Then the expression in Eq. (\ref{vv-eqn}), which is the intersection of all the ``certain-answer expressions'' for $Q$ and for the individual elements of the set ${\cal J}^{{\cal M}\Sigma}_{vv}$, is exactly the set $certain_{{\cal M}\Sigma}(Q)$. 
 
\begin{example} 
\label{second-best-ex} 
Recall the input instance $({\cal M}\Sigma,Q)$ of Example~\ref{best-ex}, and the instance $J_{vv}^{{\cal M}\Sigma}$ obtained in that example. 
$J_{vv}^{{\cal M}\Sigma}$ is the (only) view-verified universal solution for ${\cal M}\Sigma$. The set of answers without nulls to the query $Q$ on $J_{vv}^{{\cal M}\Sigma}$ is $\{ (c,f), (g, f) \}$. Thus, $(c,f)$ and $(g, f)$ are certain answers of the query $Q$ w.r.t. the materialized-view setting ${\cal M}\Sigma$, as computed for the instance $({\cal M}\Sigma,Q)$ by the view-verified data-exchange approach. Both $(c,f)$ and $(g, f)$ 
(and nothing else) are also discovered by the rewriting algorithm of Appendix~\ref{app-rewriting-sec}, 
which is sound and complete for $({\cal M}\Sigma,Q)$. (See Note 2 in Section~\ref{one-rewr-enough-sec}.) 
\end{example} 

\subsection{Correctness, Validity, and Complexity} 
\label{vv-correctness-sec} 

In this subsection, we show that the  view-verified data-exchange approach is sound and complete for all CQ weakly acyclic input instances, and discuss its runtime and space complexity.  
We also show how the approach 
can be used to decide whether a CQ weakly acyclic materialized-view setting ${\cal M}\Sigma$ is valid.



{\bf View-verified data exchange is an algorithm.} 
We begin by obtaining a basic observation that builds on the results of \cite{FaginKMP05} for chase with tgds and disjunctive egds (as they are defined in \cite{FaginKMP05}). It is immediate from Proposition~\ref{vv-sizes-prop} that view-verified data exchange always terminates 
in finite time  for CQ weakly acyclic inputs. 

\begin{proposition} 
\label{vv-sizes-prop} 
Given a CQ weakly acyclic mate- rialized-view setting ${\cal M}\Sigma$, such that its canonical data-exchange solution $J_{de}^{{\cal M}\Sigma}$ exists. Assume that everything in ${\cal M}\Sigma$ is fixed except for the instance $MV$. Then we have that: 
\begin{itemize} 
	\item[(1)] $MV$-enhanced chase of $J_{de}^{{\cal M}\Sigma}$ with $\Sigma$ $\cup$ $\Sigma^{({\cal M}\Sigma)}$ is a finite tree, $\cal T$, such that: 
		\begin{itemize} 
	  		\item[(a)] $\cal T$ is of polynomial depth in the size of $MV$, and 
			\item[(b)] The number of leaves in $\cal T$ is up to exponential in the size of $MV$; and 
		\end{itemize} 
	\item[(2)] For each nonempty leaf $J$ of 
	$\cal T$, we have that: 
		\begin{itemize} 
	  		\item[(a)] $J$ is of polynomial size in the size of $MV$, and 
			\item[(b)] Each grounded version of $J$ is a $\Sigma$-valid base instance for $\cal V$ and $MV$.  
		\end{itemize} 
\end{itemize} 
\end{proposition}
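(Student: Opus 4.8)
The plan is to reduce everything to the termination and value-bound machinery of \cite{FaginKMP05}, exactly as was done for the query-side analog in Proposition~\ref{ucq-tree-finite-prop}, and then to read off correctness (item (2b)) from the leaf condition of the $MV$-enhanced chase. First I would record the starting point: by Theorem~\ref{barcelo-four-four-thm}, since everything in ${\cal M}\Sigma$ except $MV$ is fixed, the canonical data-exchange solution $J_{de}^{{\cal M}\Sigma}$ is computable in polynomial time, hence is of polynomial size and contains only polynomially many distinct values (constants from $consts({\cal M}\Sigma)$ together with nulls) in the size of $MV$. I would then analyze the $MV$-enhanced chase of $J_{de}^{{\cal M}\Sigma}$ with $\Sigma$ $\cup$ $\Sigma^{({\cal M}\Sigma)}$ step by step: a tgd step of $\Sigma$ adds a fact and possibly fresh nulls; an egd step of $\Sigma$, or a branch of an $MV$-induced generalized egd $\tau_V$, equates two existing values, and in the case of $\tau_V$ always equates a null to a constant already occurring in $MV$; an $MV$-induced implication constraint (or a failing egd branch) only produces the empty instance $\epsilon$. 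The crucial observation --- the same one underlying Proposition~\ref{ucq-tree-finite-prop} --- is that \emph{no step introduces a constant not already present in $MV$}, and that every dependency in $\Sigma^{({\cal M}\Sigma)}$ is egd-like, so it contributes neither positions nor special edges to the dependency graph of $\Sigma^{t}$. Consequently $\Sigma$ $\cup$ $\Sigma^{({\cal M}\Sigma)}$ inherits the weak acyclicity of $\Sigma$, and the polynomial upper bound of \cite{FaginKMP05} on the number of distinct values arising in any chase applies. Call this bound $B = B(|MV|)$.

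From $B$ the size claims follow routinely. Since the schema {\bf P} has fixed arities and a fixed number of relations and at most $B$ distinct values ever appear, every instance in the tree has size polynomial in $|MV|$, giving item (2a). For the depth (item (1a)): along any root-to-leaf path, each tgd step adds a distinct fact and there are at most polynomially many possible facts, while each egd step and each branch of a generalized-egd step strictly decreases the number of distinct values and so can occur at most $B$ times; hence the path length is polynomial in $|MV|$. For the number of leaves (item (1b)): the only branching steps are those of the $MV$-induced generalized egds, each branching into at most $|MV[V]| \le |MV|$ children, and the number of such steps on any path is at most $B$; therefore the tree has at most $|MV|^{B}$ leaves, which is singly exponential in $|MV|$.

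It remains to prove correctness (item (2b)): that each grounded version $J'$ of a nonempty leaf $J$ --- obtained by replacing the nulls of $J$ with pairwise-distinct fresh constants --- satisfies $\cal V$ $\Rightarrow_{J',\Sigma}$ $MV$, i.e., $J' \models \Sigma$ and $V(J') = MV[V]$ for every $V \in \cal V$. Since $J$ is a leaf, no dependency of $\Sigma$ applies to $J$, so $J \models \Sigma$; because $\Sigma$ is constant-free, injectively renaming nulls to fresh constants preserves satisfaction, so $J' \models \Sigma$. For the closed-world condition I would argue the two inclusions separately. The inclusion $MV[V] \subseteq V(J')$ holds because $J_{de}^{{\cal M}\Sigma}$ satisfies the source-to-target dependency $\sigma_V$ (so $MV[V] \subseteq V(J_{de}^{{\cal M}\Sigma})$), and chase steps only add facts or specialize nulls, never deleting the ground witnesses of $MV[V]$ --- a property preserved under grounding. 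The reverse inclusion $V(J') \subseteq MV[V]$ is exactly what the leaf condition on $\Sigma^{({\cal M}\Sigma)}$ buys: if some valuation of $body_{(V)}$ into $J'$ produced a tuple $\bar s \notin MV[V]$, then pulling the fresh constants back to their originating nulls yields a homomorphism of the antecedent of $\tau_V$ (or of $\iota_V$, when $MV[V] = \emptyset$) into $J$ whose head image equals no $\bar t_i$ as a tautology --- i.e., the $MV$-induced dependency would still apply to $J$, contradicting that $J$ is a leaf.

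The main obstacle is the value/termination bound in the second paragraph: one must verify carefully that interleaving the original weakly acyclic dependencies $\Sigma$ with the disjunctive, constant-equating $MV$-induced dependencies does not escape the \cite{FaginKMP05} framework --- in particular that re-triggering of $\Sigma$-tgds after a null has been frozen to a constant cannot create unboundedly many new nulls. This is handled by the observation that the generalized egds add no new values and no special edges, so the positional-rank argument of \cite{FaginKMP05} bounds all nulls regardless of the interleaving; the remaining steps are bookkeeping.
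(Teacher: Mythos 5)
Your proposal is correct and follows essentially the same route as the paper: both reduce termination and the size bounds to Theorem 3.9 of \cite{FaginKMP05} via the observation that the $MV$-induced generalized egds decompose into ordinary egds, add no tgd positions or special edges, and introduce no constants beyond those already in the root $J_{de}^{{\cal M}\Sigma}$, with (1)(b) from the branching factor and (2)(b) from the leaf condition of the $MV$-enhanced chase. Your write-up merely spells out (2)(b) --- the pullback of a spurious valuation on a grounded leaf to a still-applicable $\tau_V$ or $\iota_V$ on the leaf itself --- in more detail than the paper's one-line justification, and that argument is sound.
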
  


\noindent 
(A {\em grounded version} of instance $K$ 
results from replacing consistently all its nulls 
with distinct 
new constants.) 


The proof of Proposition~\ref{vv-sizes-prop} relies heavily on the results of \cite{FaginKMP05}, particularly on its Theorem 3.9. Recall the ``decomposition,'' in Section~\ref{mv-chase-sec}, of $MV$-induced generalized egds into egds that are defined as in Section~\ref{dep-chase-prelims-sec}. Intuitively, given a CQ weakly acyclic materialized-view setting ${\cal M}\Sigma$ and for each node $K$ on each path from the root $J_{de}^{{\cal M}\Sigma}$ of the tree $\cal T$ for ${\cal M}\Sigma$, we can obtain $K$ by chasing the root of $\cal T$ using only egds and weakly acyclic tgds.\footnote{Besides the egds and tgds of Section~\ref{dep-chase-prelims-sec}, chase on each path in $\cal T$ may use  $MV$-induced implication constraints. However, the only role of the latter constraints is to obtain the 
instance $\epsilon$ and thus to terminate the respective path in $\cal T$.} The key observation here is that even though the set $\Sigma^{({\cal M}\Sigma)}$ of dependencies is not fixed (in fact, its size is linear in the size of the instance $MV$ in ${\cal M}\Sigma$), all the constants that contribute to the size of $\Sigma^{({\cal M}\Sigma)}$ are already used in the root $J_{de}^{{\cal M}\Sigma}$ of the tree $\cal T$, by definition of $J_{de}^{{\cal M}\Sigma}$. In addition, the antecedent of each $MV$-induced generalized egd in $\Sigma^{({\cal M}\Sigma)}$ is of constant size, by definition of the size of ${\cal M}\Sigma$. 
As a result, we can build on Theorem 3.9 and Proposition 5.6 of \cite{FaginKMP05} to obtain items (1)(a) and (2)(a) of our Proposition~\ref{vv-sizes-prop}. 

Item (2)(b) of Proposition~\ref{vv-sizes-prop} is by definition of $MV$-enhanced chase, and 
(1)(b) is by construction of the 
tree $\cal T$. Appendix~\ref{expon-vv-number-app} provides a lower bound, via an example where for a CQ instance ${\cal M}\Sigma$ with $\Sigma$ $=$ $\emptyset$, the number of leaves in a chase tree  is exponential in the size of ${\cal M}\Sigma$. 

{\bf Soundness and completeness.} By Proposition~\ref{vv-sizes-prop} (2)(b), the view-verified data-exchange approach  is a {\em complete} algorithm when applied to CQ weakly acyclic input instances $({\cal M}\Sigma,Q)$. (That is, for each certain-answer tuple $\bar t$ for a problem input in this class, view-verified data exchange outputs $\bar t$.) We now make a key observation toward a proof that this algorithm is also {\em sound} for such instances. (Soundness means that for each tuple $\bar t$ that this approach outputs for an input $({\cal M}\Sigma,Q)$ in this class, $\bar t$ is a  certain-answer tuple for $({\cal M}\Sigma,Q)$.) 


\begin{proposition} 
\label{each-valid-ground-instance-mapped-from-vv-prop} 
Given a CQ weakly acyclic mate- rialized-view setting ${\cal M}\Sigma$ $=$ $(${\bf P}, $\Sigma$, $\cal V$, $MV)$ and a CQ query $Q$. 
Then, for each instance $I$ such that $\cal V$ $\Rightarrow_{I,\Sigma}$ $MV$, 
there exists a homomorphism from some view-verified universal solution for ${\cal M}\Sigma$ to $I$.  
\end{proposition}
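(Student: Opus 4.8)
The plan is to track the construction of view-verified universal solutions step by step, maintaining throughout a homomorphism into the fixed instance $I$. First I would observe that $I$ is itself a solution for the source instance $MV$ in the data-exchange setting ${\cal S}^{(de)}({\cal M}\Sigma) = ({\cal V}, \mathbf{P}, \Sigma_{st} \cup \Sigma)$. Indeed, ${\cal V} \Rightarrow_{I,\Sigma} MV$ gives $I \models \Sigma$ directly, and for each $V \in {\cal V}$ and each $\bar t \in MV[V] = V(I)$ there is a valuation from $body_{(V)}$ onto a witness in $I$ whose head is $\bar t$, so $(MV, I) \models \Sigma_{st}$; hence $(MV, I) \models \Sigma_{st} \cup \Sigma$. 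Since $J_{de}^{{\cal M}\Sigma}$ is a canonical universal solution for $MV$ (Theorem~\ref{barcelo-four-four-thm}), there is a homomorphism $h_0$ from $J_{de}^{{\cal M}\Sigma}$ to $I$. This is the base of the induction.

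Next I would argue by induction along a single root-to-leaf path of the $MV$-enhanced chase tree ${\cal T}$ of $J_{de}^{{\cal M}\Sigma}$ with $\Sigma \cup \Sigma^{({\cal M}\Sigma)}$, keeping the invariant that the current node $K$ is a nonempty instance admitting a homomorphism $h_K \colon K \to I$. By Proposition~\ref{vv-sizes-prop}(1), ${\cal T}$ is finite, so the path reaches a leaf. At an internal node $K$ the tree applies some $\sigma$ with some homomorphism $g$ from its antecedent into $K$, and I must select a child compatible with $h_K$. If $\sigma$ is a tgd of $\Sigma$, the standard data-exchange argument extends $h_K$ to the single child by sending the freshly created nulls to the witnesses guaranteed by $I \models \sigma$ applied to the valuation $h_K \circ g$. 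If $\sigma$ is an egd of $\Sigma$, then $I \models \sigma$ forces $h_K(g(X)) = h_K(g(Y))$, so these two elements cannot be distinct constants and the chase step does not fail; the induced quotient map gives $h_{K'} \colon K' \to I$. If $\sigma$ is an $MV$-induced implication constraint $\iota_V$, then a homomorphism from $body_{(V)}$ into $K$ would compose with $h_K$ to yield a nonempty answer to $V$ on $I$, contradicting $V(I) = MV[V] = \emptyset$; hence $\iota_V$ never applies along the path and this case does not arise.

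The crux is the branching case, where $\sigma$ is an $MV$-induced generalized egd $\tau_V$ with consequent $\bigvee_{i=1}^{m_V}(\bar X = \bar t_i)$ applied with $g$. Here $h_K \circ g$ is a valuation from $body_{(V)}$ into the \emph{ground} instance $I$, so the head image $h_K(g(\bar X))$ is a ground tuple lying in $V(I) = MV[V]$; thus it equals a unique $\bar t_{i^*} \in MV[V]$. I would then follow the branch $K_{i^*}$ obtained by the egds $\tau^{(i^*,1)},\dots,\tau^{(i^*,k_V)}$ that set $g(S_j) := c_{i^*j}$. Because $h_K(g(S_j)) = c_{i^*j}$ already, none of these egds forces two distinct constants to be equated, so $K_{i^*}$ is nonempty (hence a genuine child of $K$ in ${\cal T}$), and the quotient induced by the merges is consistent with $h_K$, yielding $h_{K_{i^*}} \colon K_{i^*} \to I$. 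This is precisely where the closed-world condition $V(I) \subseteq MV[V]$ is indispensable: it guarantees the valuation into $I$ selects an \emph{existing} disjunct of $\tau_V$, which is exactly what fails under OWA and explains the incompleteness discussed in Appendix~\ref{app-dexchg-sec}.

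Since the invariant is preserved and every node on the path is nonempty, the path ends at a nonempty leaf $L$, that is, a view-verified universal solution for ${\cal M}\Sigma$, together with the homomorphism $h_L \colon L \to I$, which proves the claim. The main obstacle to watch is verifying that the branch dictated by $h_K \circ g$ is actually present (nonempty) among the children of $K$ in the concrete tree ${\cal T}$, rather than having been pruned as an $\epsilon$-branch; this is handled by the ``no distinct-constant clash'' observation above, which also subsumes the subtlety that the egds of $\Sigma$ never fail along the chosen path.
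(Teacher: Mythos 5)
Your proof is correct and follows essentially the same route as the paper's: start from the homomorphism $J_{de}^{{\cal M}\Sigma}\to I$ guaranteed because $I$ is a solution for $MV$, then walk a root-to-leaf path of the finite $MV$-enhanced chase tree, using $\cal V$ $\Rightarrow_{I,\Sigma}$ $MV$ to pick, at each $MV$-induced generalized egd, the associated disjunct $\tau^{(i^*)}$ whose target tuple is the image of the head under the current homomorphism, so that the invariant (a homomorphism from the current node to $I$) is preserved and the path ends in a nonempty leaf. Your write-up is simply a more explicit case analysis (tgds, egds, ics, geds) of the argument the paper only sketches.
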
 


The intuition for the proof of Proposition~\ref{each-valid-ground-instance-mapped-from-vv-prop} is as follows. For a given ${\cal M}\Sigma$, whenever an instance $I$ exists such that $\cal V$ $\Rightarrow_{I,\Sigma}$ $MV$, a canonical data-exchange solution $J_{de}^{{\cal M}\Sigma}$ for ${\cal M}\Sigma$ must also exist. By definition of $J_{de}^{{\cal M}\Sigma}$, there must be a homomorphism from $J_{de}^{{\cal M}\Sigma}$ to the instance $I$. We then start applying $MV$-enhanced chase to $J_{de}^{{\cal M}\Sigma}$, to simulate some rooted path, $P_{({\cal T})}$, in the chase tree $\cal T$ for ${\cal M}\Sigma$. (The tree is finite by Proposition~\ref{vv-sizes-prop}.) In following the path $P_{({\cal T})}$ via the chase, we make sure that there is a homomorphism from each node in the path to $I$, by always choosing an ``appropriate'' associated dependency $\tau^{(i)}$ for each $MV$-induced generalized egd $\tau$ that we are applying in the chase. By $\cal V$ $\Rightarrow_{I,\Sigma}$ $MV$, such a choice always exists, and the path $P_{({\cal T})}$ terminates in finite time in a nonempty instance, $J$. By definition, $J$ is a view-verified universal solution for ${\cal M}\Sigma$. By our simulation of the path $P_{({\cal T})}$ ``on the way to'' $I$, there exists a homomorphism from $J$ to $I$.  

{\bf Validity of setting ${\cal M}\Sigma$.} By the results of \cite{FaginKMP05}, when for a given ${\cal M}\Sigma$ no canonical data-exchange solution 
exists, then ${\cal M}\Sigma$ is not a valid setting. We refine this observation into a sufficient and necessary condition for validity of CQ weakly acyclic materialized-view settings ${\cal M}\Sigma$. (The only-if part of Proposition~\ref{when-ds-valid-prop} follows from  
Proposition~\ref{each-valid-ground-instance-mapped-from-vv-prop}, and its if part is by 
Proposition~\ref{vv-sizes-prop} (2)(b).) 



\begin{proposition} 
\label{when-ds-valid-prop} 
Given a CQ weakly acyclic mate- rialized-view setting ${\cal M}\Sigma$, the setting ${\cal M}\Sigma$ is valid iff the set ${\cal J}^{{\cal M}\Sigma}_{vv}$ of view-verified universal solutions for ${\cal M}\Sigma$ is not empty. 
\end{proposition}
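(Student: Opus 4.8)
The plan is to prove the biconditional by treating its two directions separately, with each direction invoking one of the two preceding propositions so that almost no new machinery is required. Before splitting into cases I would dispose of the degenerate situation in which the canonical data-exchange solution $J_{de}^{{\cal M}\Sigma}$ fails to exist: in that case ${\cal J}^{{\cal M}\Sigma}_{vv}$ is empty by our definition of view-verified data exchange, and, by the results of \cite{FaginKMP05}, the absence of a canonical universal solution means that $MV$ has no solution at all under the data-exchange setting ${\cal S}^{(de)}({\cal M}\Sigma)$; hence no $\Sigma$-valid base instance $I$ with $\cal V$ $\Rightarrow_{I,\Sigma}$ $MV$ exists and ${\cal M}\Sigma$ is not valid. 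So both sides of the equivalence are false and the claim holds. In the remainder I would assume $J_{de}^{{\cal M}\Sigma}$ exists, so that the chase tree $\cal T$ of $J_{de}^{{\cal M}\Sigma}$ with $\Sigma$ $\cup$ $\Sigma^{({\cal M}\Sigma)}$ is well defined and, by Proposition~\ref{vv-sizes-prop}(1), finite.

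For the \emph{if} direction I would assume ${\cal J}^{{\cal M}\Sigma}_{vv}$ $\neq$ $\emptyset$ and pick any element $J$; by definition $J$ is a nonempty leaf of $\cal T$. I would then form a grounded version $I$ of $J$ by replacing all nulls of $J$ consistently with distinct fresh constants — such an $I$ always exists. Proposition~\ref{vv-sizes-prop}(2)(b) states precisely that every grounded version of a nonempty leaf is a $\Sigma$-valid base instance for $\cal V$ and $MV$, i.e., $\cal V$ $\Rightarrow_{I,\Sigma}$ $MV$. The existence of such an $I$ is exactly the definition of ${\cal M}\Sigma$ being valid, so this direction follows immediately.

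For the \emph{only if} direction I would assume ${\cal M}\Sigma$ is valid, so that by definition there is an instance $I$ with $\cal V$ $\Rightarrow_{I,\Sigma}$ $MV$. Applying Proposition~\ref{each-valid-ground-instance-mapped-from-vv-prop} to this $I$ yields the existence of a homomorphism from \emph{some} view-verified universal solution $J$ for ${\cal M}\Sigma$ into $I$. The content I actually need is weaker than the homomorphism itself: the mere fact that the proposition asserts the existence of such a $J$ guarantees that at least one view-verified universal solution exists, whence ${\cal J}^{{\cal M}\Sigma}_{vv}$ $\neq$ $\emptyset$.

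The main obstacle is not in this proof but is pushed into Proposition~\ref{each-valid-ground-instance-mapped-from-vv-prop}, on which the only-if direction rests: establishing that every $\Sigma$-valid base instance $I$ receives a homomorphism from some view-verified universal solution requires simulating a root-to-leaf path of $\cal T$ while maintaining, at every node, a homomorphism into $I$ (choosing at each $MV$-induced generalized egd an associated dependency $\tau^{(i)}$ compatible with $I$, which is possible precisely because $\cal V$ $\Rightarrow_{I,\Sigma}$ $MV$), and appealing to finiteness of $\cal T$ from Proposition~\ref{vv-sizes-prop} to guarantee termination in a nonempty leaf. Within the present proof itself, the only points requiring care are the degenerate ``no canonical solution'' case handled above, and the observation that Proposition~\ref{each-valid-ground-instance-mapped-from-vv-prop}, by asserting the existence of $J$, already forces ${\cal J}^{{\cal M}\Sigma}_{vv}$ to be nonempty.
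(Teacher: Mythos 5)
Your proposal is correct and follows exactly the paper's own argument: the paper derives the only-if direction from Proposition~\ref{each-valid-ground-instance-mapped-from-vv-prop} and the if direction from Proposition~\ref{vv-sizes-prop}~(2)(b), just as you do. Your explicit handling of the degenerate case where $J_{de}^{{\cal M}\Sigma}$ does not exist matches the paper's preliminary remark that the absence of a canonical data-exchange solution already implies non-validity.
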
 


{\bf Correctness of view-verified data exchange.} By Proposition~\ref{each-valid-ground-instance-mapped-from-vv-prop}, view-verified data exchange is sound. By Proposition~\ref{when-ds-valid-prop}, it outputs a set of certain-answer tuples iff its input is valid. 
We now conclude: 


\begin{theorem} 
\label{vv-correctness-thm} 
View-verified data exchange is a sound and complete algorithm for finding all certain answers to CQ queries w.r.t. CQ weakly acyclic materialized-view-settings. 
\end{theorem}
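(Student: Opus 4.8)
The plan is to derive the theorem as a corollary of Propositions~\ref{vv-sizes-prop}, \ref{each-valid-ground-instance-mapped-from-vv-prop}, and~\ref{when-ds-valid-prop}, by showing that for every CQ weakly acyclic input $({\cal M}\Sigma,Q)$ the set computed in Eq.~(\ref{vv-eqn}) coincides with $certain_{{\cal M}\Sigma}(Q)$. First I would dispose of termination and the validity dichotomy. Proposition~\ref{vv-sizes-prop}(1) guarantees that the $MV$-enhanced chase tree rooted at $J_{de}^{{\cal M}\Sigma}$ is finite, so the set ${\cal J}^{{\cal M}\Sigma}_{vv}$ of nonempty leaves is a finite collection of finite instances and the intersection $\bigcap_{J \in {\cal J}^{{\cal M}\Sigma}_{vv}} Q_{\downarrow}(J)$ is computable; hence view-verified data exchange is an algorithm. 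By Proposition~\ref{when-ds-valid-prop}, ${\cal J}^{{\cal M}\Sigma}_{vv} = \emptyset$ exactly when ${\cal M}\Sigma$ is not valid, in which case the algorithm correctly reports invalidity and there are no $\Sigma$-valid base instances to certify. It then remains to treat a valid ${\cal M}\Sigma$ and establish the two inclusions between the output and $certain_{{\cal M}\Sigma}(Q)$.

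For soundness (output $\subseteq$ certain answers), fix a tuple $\bar t$ in the output and an arbitrary instance $I$ with ${\cal V} \Rightarrow_{I,\Sigma} MV$. By Proposition~\ref{each-valid-ground-instance-mapped-from-vv-prop} there is a homomorphism $h$ from some $J \in {\cal J}^{{\cal M}\Sigma}_{vv}$ to $I$. Since $\bar t \in Q_{\downarrow}(J) \subseteq Q(J)$, there is a valuation $\nu$ from $body_{(Q)}$ to $J$ with $\nu(\bar X) = \bar t$; composing, $h \circ \nu$ is a homomorphism from $body_{(Q)}$ to $I$. Because $\bar t \in Q_{\downarrow}(J)$ has no nulls and homomorphisms fix constants, $h(\bar t) = \bar t$, so $h \circ \nu$ witnesses $\bar t \in Q(I)$. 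As $I$ was arbitrary, $\bar t \in certain_{{\cal M}\Sigma}(Q)$.

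For completeness (certain answers $\subseteq$ output), let $\bar t \in certain_{{\cal M}\Sigma}(Q)$, and recall every value of $\bar t$ lies in $consts({\cal M}\Sigma)$. Fix any $J \in {\cal J}^{{\cal M}\Sigma}_{vv}$ and let $G$ be a grounded version of $J$ obtained by replacing its nulls consistently with fresh distinct constants not occurring in ${\cal M}\Sigma$. By Proposition~\ref{vv-sizes-prop}(2)(b), $G$ is a $\Sigma$-valid base instance for ${\cal V}$ and $MV$, so $\bar t \in Q(G)$. The valuation producing $\bar t$ on $G$ cannot map any head variable to a fresh constant, since $\bar t$ contains only values of $consts({\cal M}\Sigma)$; pulling the fresh constants back to the corresponding nulls (the inverse of the grounding bijection) turns this valuation into one from $body_{(Q)}$ to $J$ whose head image is still $\bar t$. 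Thus $\bar t \in Q(J)$, and as $\bar t$ has no nulls, $\bar t \in Q_{\downarrow}(J)$. Since $J$ was arbitrary, $\bar t$ lies in the intersection of Eq.~(\ref{vv-eqn}), i.e., in the output. Combining the two inclusions shows that the output equals $certain_{{\cal M}\Sigma}(Q)$, so view-verified data exchange is sound and complete.

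Given Propositions~\ref{vv-sizes-prop} and~\ref{each-valid-ground-instance-mapped-from-vv-prop}, the theorem reduces to the bookkeeping above, and the only delicate point at this level is null/constant management: in completeness I must ensure that grounding with constants foreign to ${\cal M}\Sigma$ neither destroys the certain answer $\bar t$ nor lets a fresh constant enter the head when pulling back to $J$, and in soundness I must ensure the homomorphism from Proposition~\ref{each-valid-ground-instance-mapped-from-vv-prop} fixes $\bar t$, which holds precisely because $\bar t$ is ground. The substantive difficulty therefore lies not in the theorem itself but in the two supporting propositions: chiefly Proposition~\ref{each-valid-ground-instance-mapped-from-vv-prop}, whose proof must simulate a rooted path of the chase tree while maintaining a homomorphism into the arbitrary target $I$ (choosing, at each $MV$-induced generalized egd, an associated dependency compatible with ${\cal V} \Rightarrow_{I,\Sigma} MV$), and Proposition~\ref{vv-sizes-prop}, which must lift the termination and polynomial-size bounds of \cite{FaginKMP05} to the non-fixed, $MV$-dependent set $\Sigma^{({\cal M}\Sigma)}$ by observing that all of its constants already appear in the root $J_{de}^{{\cal M}\Sigma}$.
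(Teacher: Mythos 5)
Your proposal is correct and follows essentially the same route as the paper: completeness from Proposition~\ref{vv-sizes-prop}(2)(b) via grounded versions of the leaves, soundness from the homomorphism of Proposition~\ref{each-valid-ground-instance-mapped-from-vv-prop}, and the validity dichotomy from Proposition~\ref{when-ds-valid-prop}. The paper states these implications tersely; you have merely supplied the same null/constant bookkeeping it leaves implicit.
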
 


{\bf Complexity of view-verified data exchange for CQ weakly acyclic input instances.} By Theorem~\ref{vv-correctness-thm}, view-verified data exchange is an algorithm for all CQ weakly acyclic input instances. We now obtain an expon- ential-time upper bound on the runtime complexity of the view-verified data-exchange approach, 
as follows. 

Given a CQ weakly acyclic input instance $({\cal M}\Sigma,Q)$, the runtime of the approach of Section~\ref{vv-dexchg-def-sec} is exponential in the size of $Q$ and of the set of answers $MV$ in ${\cal M}\Sigma$, assuming that the rest of ${\cal M}\Sigma$ is fixed. This complexity setting extends naturally that of \cite{ZhangM05}: Zhang and Mendelzon in \cite{ZhangM05} assumed for their problem that the base schema and the view definitions are fixed, whereas the set of view answers $MV$ and the queries posed on the base schema in presence of $MV$ can vary. The authors of \cite{ZhangM05} did not consider dependencies on the base schema; we follow the standard data-exchange assumption, see, e.g., \cite{FaginKMP05}, that the given dependencies are fixed rather than being part of the problem input. 


To obtain the above exponential-time upper bound for the problem of view-verified data exchange for CQ weakly acyclic input instances $({\cal M}\Sigma,Q)$, we analyze the following flow for the view-verified data-exchange algorithm of Section~\ref{vv-dexchg-def-sec}. First, we spend exponential time in the arity $k$ of $Q$ to generate all the $k$-ary ground tuples $\bar t$ out of the set $consts({\cal M}\Sigma)$. (Generating each such $\bar t$ gives rise to one iteration of the {\em main loop} of the algorithm.) For each such tuple $\bar t$, we then do the following: 
\begin{itemize} 

	\item Construct the query $Q({\bar t})$, as the result of applying to the query $Q$ the homomorphism\footnote{It is easy to verify that if a homomorphism $\mu$ specified by (i)-(ii) does not exist, then $\bar t$ cannot be a certain answer to $Q$ w.r.t. ${\cal M}\Sigma$.} $\mu$, such that (i) $\mu$ maps the head vector $\bar X$ of $Q$ to $\bar t$, and (ii) $\mu$ is the identity mapping on each term that occurs in $Q$ but not in its head vector $\bar X$; 

	\item Enumerate all the (up to an exponential number of) view-verified universal solutions $J$ for ${\cal M}\Sigma$ (recall that generating each such $J$ takes polynomial time in the size of $MV$, see Proposition~\ref{vv-sizes-prop}); and then 
	
	\item For each such $J$ that is not the empty instance, verify whether the query $Q({\bar t})$ has a nonempty set of answers, which would be precisely $\{ {\bar t} \}$, on the instance $J$. (For each $\bar t$ generated as above, we use a one-bit flag to track whether $\bar t$ is an answer to $Q$ on all such instances $J$; each $\bar t$ that is an answer to $Q$ on all the instances $J$ is returned as an answer tuple by the view-verified data-exchange algorithm.) The runtime for this verification step is polynomial in the size of $MV$ (because the size of $J$ is polynomial in the size of $MV$, see Proposition~\ref{vv-sizes-prop}) and is exponential in the number of subgoals of $Q$. (As the schema {\bf P} in ${\cal M}\Sigma$ is fixed, each subgoal of the query $Q$ has up to constant arity.)  

\end{itemize} 

Observe that for each tuple $\bar t$ generated in the main loop of the algorithm, the respective iteration of the main loop runs in {\sc PSPACE}. Indeed, recall from Proposition~\ref{vv-sizes-prop} that each instance $J$ as above is of size polynomial in the size of the instance $MV$ in ${\cal M}\Sigma$. Further, the size of each candidate valuation from $Q({\bar t})$ to $J$ is linear in the size of $Q$; thus, we satisfy the {\sc PSPACE} requirement as long as we generate these candidate valuations one at a time (``on the fly'' for each fixed $J$), in some clear algorithmic order. 

Further, the entire view-verified data-exchange algorithm (i.e., finding {\em all} the certain-answer tuples for the given input CQ weakly acyclic pair $({\cal M}\Sigma,Q)$) also runs in {\sc PSPACE}, provided that we: 

\begin{itemize} 
	\item[(a)] Output each certain-answer tuple $\bar t$ ``on the fly'' (i.e., as soon as we know that it is a certain answer), and 
	\item[(b)] Use a counter (e.g., a binary-number representation of each $k$-ary ground candidate certain-answer tuple $\bar t$, as generated in the main loop of the algorithm) to keep track of the ``latest'' $\bar t$ that we have looked at and to generate from that ``latest'' $\bar t$ the next candidate certain-answer tuple $\bar t$ that we are to examine for the given input; the size of such a counter would be polynomial in the size of the problem input. 
\end{itemize}

\section{The Number of Leaves in $MV$-Enh- anced Chase Can be Exponential in the Size of the Input} 
\label{expon-vv-number-app}

In this appendix we show by example a family of CQ
materialized-view settings ${\cal M}\Sigma$ with $\Sigma$ $=$ $\emptyset$, such that the number of leaves in a chase tree for each 
setting in the family is exponential in the size of the setting. 
As usual and similarly to \cite{ZhangM05}, we assume that the {\em size} of a given materialized-view setting ${\cal M}\Sigma$ is the size of its instance $MV$, 
with the remaining elements of ${\cal M}\Sigma$ being fixed. (See Section~\ref{vv-correctness-sec} for a detailed discussion.) 


\begin{example} 
\label{expon-num-vv-sols-ex} 
Consider a schema {\bf P} with two binary relations $P$ and $R$, and with $\Sigma$ $=$ $\emptyset$. Let the set of 
views $\cal V$ $=$ $\{ V, W \}$ be defined via two CQ queries, as follows: 

\begin{tabbing} 
$V(X) \leftarrow P(X,Y), R(Y,Z) .$ \\ 
$W(Z) \leftarrow R(Y,Z) .$ 
\end{tabbing} 

For each $n$ $\geq$ $1$, consider a set $MV_{(n)}$ of answers for $\cal V$, with $n+2$ tuples, as follows. The relation $MV_{(n)}[V]$ has $n$ tuples $V(1)$, $V(2)$, $\ldots,$ $V(n)$, and $MV_{(n)}[W]$ has tuples $W(0)$ and $W(1)$. 



For each $n$ $\geq$ $1$, let the materialized-view setting ${\cal M}\Sigma^{(n)}$ be the tuple $(${\bf P}, $\Sigma$, $\cal V$, $MV_{(n)})$, with all the components as described above. (As specified above, the set $\Sigma$ is the empty set for each $n$ $\geq$ $1$.) 

The canonical universal solution $J^{{\cal M}\Sigma^{(n)}}_{de}$ for $MV_{(n)}$ has two tuples, $P(i,\perp_{(i,1)})$ and $R(\perp_{(i,1)},\perp_{(i,2)})$, for $V(i)$ in $MV_{(n)}$, for each $i$ $\in$ $[1,$ $n]$. It also has the tuples $R(\perp_{(n+1,1)},0)$ and $R(\perp_{(n+2,1)},1)$ for $MV_{(n)}[W]$. 

The process of creating view-verified universal solutions for ${\cal M}\Sigma^{(n)}$ involves assigning either $0$ or $1$ independently to each of the nulls $\perp_{(i,2)}$, for all $i$ $\in$ $[1,$ $n]$.  It is easy to see that this process creates $2^n$ nonisomorphic instances, one for each assignment of zeroes and ones 
to each element of the vector $[\perp_{(1,2)},$ $\perp_{(2,2)},$ $\ldots$, $\perp_{(n,2)}]$. The expression $2^n$ is exponential in the size of the set $MV_{(n)}$ of view answers in ${\cal M}\Sigma^{(n)}$. 
\end{example} 

\section{Chase Cannot Be Staged for \\ Finding All Certain Answers}  
\label{chase-not-staged-for-view-verif-app} 

	
	In this appendix we provide two examples that show that in the problem of finding all certain answers to a CQ query w.r.t. a CQ weakly acyclic materialized-view setting, one cannot always find all the certain answers correctly if one does the  chase (in view-verified data-exchange, see Appendix~\ref{vv-dexchg-sec}) {\em in stages.} That is, chase only with the input dependencies $\Sigma$, followed by chase only with the ``$MV$-induced dependencies,'' does not always yield a correct solution. (This is the point of Example~\ref{chase-interleave-certain-answ-one-ex}.) The reverse order of the ``stages'' does not always work either. (This is the point of Example~\ref{chase-interleave-certain-answ-two-ex}.) 
	
	
\begin{example} 
\label{chase-interleave-certain-answ-one-ex} 
Consider a schema {\bf P} $=$ $\{ P, S \}$ with binary relation symbols $P$ and $S$. Let $\sigma$ be a dependency defined on the schema {\bf P}, as follows.  (The dependency $\sigma$ is an egd, specifically a functional dependency.)

\begin{tabbing} 
$\sigma:$ $P(X,Y) \wedge P(X,Z) \rightarrow Y = Z.$ 
\end{tabbing} 

Further, let $U$, $V$, and $W$ be three CQ views over {\bf P}, and let $MV$ be the set of answers for these views, as follows. 

\begin{tabbing} 
$U(X) \leftarrow P(X,Y), S(Y,Z).$ \\ 
$V(X) \leftarrow P(X,Y).$ \\ 
$W(X,Z) \leftarrow S(X,Y), P(Y,Z).$ \\ 
$MV$ $=$ $\{ U(c ), V(c ), W(g,h) \}$.  
\end{tabbing}

We denote the set $\{ U, V, W \}$  by $\cal V$, and the set $\{ \sigma \}$ by $\Sigma$. Then the setting ${\cal M}\Sigma$ $=$ ({\bf P}, $\Sigma,$ $\cal V$, $MV)$ is CQ weakly acyclic. 

Now let $Q$ be a CQ query: 

\begin{tabbing} 
$Q(X) \leftarrow S(X,Y).$ 
\end{tabbing} 

We consider the problem of finding the set of certain answers to the query $Q$ w.r.t. the setting ${\cal M}\Sigma$ using the view-verified data-exchange approach, as described in Appendix~\ref{vv-dexchg-sec}. By this approach, we first construct, from the materialized-view setting ${\cal M}\Sigma$, a {\em data-exchange} setting ${\cal S}^{(de)}({\cal M}\Sigma)$ $=$ $({\cal V}$, {\bf P}, $\Sigma_{st} \cup \Sigma)$.   Here, $\Sigma_{st}$ $=$ $\{ \sigma_U, \sigma_V, \sigma_W \}$ is the set of the following three tgds: 

\begin{tabbing} 
hokoban teteu i\= keke momo\= l \kill 
$\sigma_U: U(X)$ \> $\rightarrow \exists Y,Z$ \> $P(X,Y) \wedge S(Y,Z)$. \\ 
$\sigma_V: V(X)$ \> $\rightarrow \exists Y$ \> $P(X,Y)$. \\ 
$\sigma_W: W(X,Z)$ \> $\rightarrow \exists Y$ \> $S(X,Y) \wedge P(Y,Z)$.  
\end{tabbing} 

We then designate the set of view answers $MV$ in the materialized-view setting ${\cal M}\Sigma$ to be a {\em source instance} for the data-exchange setting ${\cal S}^{(de)}({\cal M}\Sigma)$. 

We now proceed to construct the canonical universal solution, call it $J_0$, for the source instance $MV$ in the data-exchange setting ${\cal S}^{(de)}({\cal M}\Sigma)$:  

\begin{tabbing} 
hokob   \= teteu i\= keke momo\= l \kill 
$J_0$ $=$ $\{ P(c,\perp_1), S(\perp_1,\perp_2), P(c,\perp_3),$ \\  
\> $S(g,\perp_4), P(\perp_4,h) \}$. 
\end{tabbing} 

In the instance $J_0$, the atoms $P(c,\perp_1)$ and $S(\perp_1,$ $\perp_2)$ are due to the atom $U(c )$ in $MV$ and to the tgd $\sigma_U$, and so on for the rest of $MV$ and of $\Sigma_{st}$. 

As described in Appendix~\ref{vv-dexchg-sec}, toward finding all the certain answers to the query $Q$ w.r.t. the materialized-view setting ${\cal M}\Sigma$,  we now chase the instance $J_0$, using both the set $\Sigma$ $=$ $\{ \sigma \}$ in ${\cal M}\Sigma$, as well as the dependencies $\tau_{U(c )}$, $\tau_{V(c )}$, and $\tau_{W(g,h)}$, as follows. (The three latter dependencies are generated  by the view-verified data-exchange approach from the inputs $\cal V$ and $MV$.) 

\begin{tabbing} 
hokoban teteu i\= keke momo\= l \kill 
$\tau_{U(c )}:$ $P(X,Y) \wedge S(Y,Z)$ $\rightarrow$ $X = c$. \\ 
$\tau_{V(c )}:$ $P(X,Y)$ $\rightarrow$ $X = c$. \\ 
$\tau_{W(g,h)}:$ $S(X,Y) \wedge P(Y,Z)$ $\rightarrow$ $X = g$ $\wedge$ $Z = h$. 
\end{tabbing}

We do three stages of the chase of the instance $J_0$ with $\sigma$, $\tau_{U(c )}$, $\tau_{V(c )}$, and $\tau_{W(g,h)}$. In Stages I and III, we perform the chase steps with the input dependency $\sigma$ on the schema {\bf P}, and in Stage II, we chase the instance with the $MV$-induced dependencies $\tau_{U(c )}$, $\tau_{V(c )}$, and $\tau_{W(g,h)}$. 

{\em Stage I:} A chase step of the instance $J_0$ with the egd  $\sigma$ turns the atom $P(c,\perp_3)$ of $J_0$ into a copy of its atom $P(c,\perp_1)$, resulting in the following instance $J_1$ (in which we drop the duplicate of $P(c,\perp_1)$): 

\begin{tabbing} 
hokob   \= teteu i\= keke momo\= l \kill 
$J_1$ $=$ $\{ P(c,\perp_1), S(\perp_1,\perp_2), S(g,\perp_4), P(\perp_4,h) \}$. 
\end{tabbing} 

\noindent 
The egd $\sigma$ does not apply to the instance $J_1$. 

{\em Stage II:} We now chase the instance $J_1$ with the $MV$-induced dependencies $\tau_{U(c )}$, $\tau_{V(c )}$, and $\tau_{W(g,h)}$. The egd $\tau_{V(c )}$ applies to the atom $P(\perp_4,h)$ in $J_1$, turning it into $P(c,h)$ and, as a side effect, also turning the atom $S(g,\perp_4)$ of $J_1$ into $S(g,c)$. We call the resulting instance $J_2$: 

\begin{tabbing} 
hokob   \= teteu i\= keke momo\= l \kill 
$J_2$ $=$ $\{ P(c,\perp_1), S(\perp_1,\perp_2), S(g,c), P(c,h) \}$. 
\end{tabbing} 

\noindent 
The $MV$-induced dependencies $\tau_{U(c )}$, $\tau_{V(c )}$, and $\tau_{W(g,h)}$ do not apply to the instance $J_2$. 

Note that if we stop after this Stage II, the set of answers without nulls to the query $Q$ on the instance $J_2$ is $Q(J_2)$ $=$ $\{ (g) \}$. However, we observe that the instance $J_2$ does not satisfy the egd $\sigma$. We can then do {\em Stage III} of the chase, by applying $\sigma$ to the instance $J_2$. The application binds the null $\perp_1$ to the constant $h$, in the atoms $P(c,\perp_1)$ and $S(\perp_1,\perp_2)$ of the instance $J_2$. We call the resulting instance $J_3$: 

\begin{tabbing} 
hokob   \= teteu i\= keke momo\= l \kill 
$J_3$ $=$ $\{ P(c,h), S(h,\perp_2), S(g,c) \}$. 
\end{tabbing} 

The set of answers  without nulls  to the query $Q$ on the instance $J_3$ is $Q(J_3)$ $=$ $\{ (g), (h) \}$. By the results reported in Appendix~\ref{vv-dexchg-sec}, this set is a correct set of certain answers to $Q$ w.r.t. the given materialized-view setting ${\cal M}\Sigma$. We can see that by not applying Stage III in the chase, we would have missed the certain answer $(h)$ to the query $Q$ w.r.t. the setting ${\cal M}\Sigma$. 
\end{example} 
	
\begin{example} 
\label{chase-interleave-certain-answ-two-ex} 
Consider a schema {\bf P} $=$ $\{ P, S \}$ with a unary relation symbol $P$ and a binary relation symbol $S$. Let $\sigma$ be a dependency (specifically, a full tgd)  defined on the schema {\bf P}, as follows.  

\begin{tabbing} 
$\sigma:$ $S(X,Y)  \rightarrow P(Y).$ 
\end{tabbing} 

Further, let $U$ and $V$ be two CQ views over {\bf P}, and let $MV$ be the set of answers for these views, as follows. 

\begin{tabbing} 
$U(X) \leftarrow P(X), S(X,Y).$ \\ 
$V(X) \leftarrow P(X).$ \\ 
$MV$ $=$ $\{ U(c ), V(c ) \}$.  
\end{tabbing}

\noindent 
We denote the set $\{ U, V \}$  by $\cal V$, and the set $\{ \sigma \}$ by $\Sigma$. Then the setting ${\cal M}\Sigma$ $=$ ({\bf P}, $\Sigma,$ $\cal V$, $MV)$ is CQ weakly acyclic. 

Now let $Q$ be a CQ query: 

\begin{tabbing} 
$Q(X) \leftarrow S(X,X).$ 
\end{tabbing} 

We consider the problem of finding the set of certain answers to the query $Q$ w.r.t. the setting ${\cal M}\Sigma$, using the view-verified data-exchange approach detailed in Appendix~\ref{vv-dexchg-sec}. By this approach, we first construct, from the materialized-view setting ${\cal M}\Sigma$, a {\em data-exchange} setting ${\cal S}^{(de)}({\cal M}\Sigma)$ $=$ $({\cal V}$, {\bf P}, $\Sigma_{st} \cup \Sigma)$.   Here, $\Sigma_{st}$ $=$ $\{ \sigma_U, \sigma_V \}$ is the set of the following two tgds: 

\begin{tabbing} 
hokoban tet\= keke m \= momo\= l \kill 
$\sigma_U: U(X)$ \> $\rightarrow \exists Y$ \> $P(X) \wedge S(X,Y)$. \\ 
$\sigma_V: V(X)$ \> $\rightarrow P(X)$. 
\end{tabbing} 

We then designate the set of view answers $MV$ in the materialized-view setting ${\cal M}\Sigma$ to be a {\em source instance} for the data-exchange setting ${\cal S}^{(de)}({\cal M}\Sigma)$. 

We proceed to construct the canonical universal solution, call it $J_0$, for the source instance $MV$ in the data-exchange setting ${\cal S}^{(de)}({\cal M}\Sigma)$. 

\begin{tabbing} 
hokob   \= teteu i\= keke momo\= l \kill 
$J_0$ $=$ $\{ P(c), S(c,\perp_1) \}$. 
\end{tabbing} 

In the instance $J_0$, the atoms $P(c)$ and $S(c,\perp_1)$ are due to the atom $U(c )$ in $MV$ and to the tgd $\sigma_U$. At the same time, the atom $P(c)$ is also due to the atom $V(c )$ in $MV$ and to the tgd $\sigma_V$. 

As described in Appendix~\ref{vv-dexchg-sec}, toward finding all the certain answers to the query $Q$ w.r.t. the materialized-view setting ${\cal M}\Sigma$,  we now chase the instance $J_0$, using both the set $\Sigma$ $=$ $\{ \sigma \}$ in ${\cal M}\Sigma$, as well as the dependencies $\tau_{U(c )}$ and $\tau_{V(c )}$, as follows. (The two latter dependencies are generated  by the view-verified data-exchange approach from the inputs $\cal V$ and $MV$.) 

\begin{tabbing} 
hokoban teteu i\= keke momo\= l \kill 
$\tau_{U(c )}:$ $P(X) \wedge S(X,Y)$ $\rightarrow$ $X = c$. \\ 
$\tau_{V(c )}:$ $P(X)$ $\rightarrow$ $X = c$. 
\end{tabbing}

We do three stages of the chase of the instance $J_0$ with $\sigma$, $\tau_{U(c )}$, and $\tau_{V(c )}$. In Stages I and III, we perform the chase steps with the $MV$-induced dependencies $\tau_{U(c )}$ and $\tau_{V(c )}$,   and in Stage II, we chase the instance with the input dependency $\sigma$ on the schema {\bf P}. 

{\em Stage I:} A chase step of the instance $J_0$ with 
the $MV$-induced dependencies $\tau_{U(c )}$ and $\tau_{V(c )}$ leaves the instance $J_0$ unchanged. To indicate that we have performed this stage of the chase, we rename $J_0$ into $J_1$: 

\begin{tabbing} 
hokob   \= teteu i\= keke momo\= l \kill 
$J_1$ $=$ $\{ P(c), S(c,\perp_1) \}$. 
\end{tabbing} 

\noindent 
The $MV$-induced dependencies $\tau_{U(c )}$ and $\tau_{V(c )}$ do not apply to the instance $J_1$. 

{\em Stage II:} We now chase the instance $J_1$ with the tgd  $\sigma$ on the schema {\bf P}. The application adds to the instance $J_1$ the atom  $P(\perp_1)$. We call the resulting instance $J_2$: 

\begin{tabbing} 
hokob   \= teteu i\= keke momo\= l \kill 
$J_2$ $=$ $\{ P(c), S(c,\perp_1), P(\perp_1) \}$. 
\end{tabbing} 

\noindent 
The tgd  $\sigma$ does not apply to the instance $J_2$. 

Note that if we stop after this Stage II, the set of answers without nulls to the query $Q$ on the instance $J_2$ is $Q(J_2)$ $=$ $\emptyset$. However, we observe that the instance $J_2$ does not satisfy the $MV$-induced dependency $\tau_{V(c )}$. We can then do {\em Stage III} of the chase, by applying the dependencies $\tau_{U(c )}$ and $\tau_{V(c )}$ to the instance $J_2$. An application of $\tau_{V(c )}$ in a chase step to $J_2$ binds the null $\perp_1$, in the atoms $P(\perp_1)$ and $S(c,\perp_1)$, to the constant $c$. We call the resulting instance $J_3$: 

\begin{tabbing} 
hokob   \= teteu i\= keke momo\= l \kill 
$J_3$ $=$ $\{ P(c), S(c,c) \}$. 
\end{tabbing} 

The set of answers  without nulls  to the query $Q$ on the instance $J_3$ is $Q(J_3)$ $=$ $\{ (c ) \}$. By the results reported in Appendix~\ref{vv-dexchg-sec}, this set is a correct set of certain answers to $Q$ w.r.t. the given materialized-view setting ${\cal M}\Sigma$. We can see that by not applying Stage III in the chase, we would have missed the certain answer $(c)$ to the query $Q$ w.r.t. ${\cal M}\Sigma$. 
\end{example}

\section{The Certain-Query-Answer Problem Is $\Pi^p_2$ Complete for Conjunctive Weakly Acyclic Input Inst- ances} 
\label{pi-p-2-hardness-proof-sec} 

In this appendix 
we prove that the 
certain-query-answer problem, for a query and ground tuple w.r.t. a materialized-view-setting,   
 is $\Pi^p_2$ 
complete for CQ weakly acyclic input instances $({\cal M}\Sigma,Q,{\bar t})$. We say that a triple $({\cal M}\Sigma,Q,{\bar t})$, with ${\cal M}\Sigma$ a materialized-view setting, $Q$ a query,  and $\bar t$ a ground tuple, is a {\em CQ weakly acyclic input instance} if and only if ${\cal M}\Sigma$ is CQ weakly acyclic and $Q$ is a CQ query.  

In the complexity measure used throughout this appendix, we assume, in a natural extension of the complexity setting introduced in \cite{ZhangM05} (see Section~\ref{vv-correctness-sec} for the detailed discussion), that all elements of ${\cal M}\Sigma$ except $MV$ are fixed, and that $Q$ is not fixed. That is, the {\em size} of a given input instance $({\cal M}\Sigma,Q,{\bar t})$ is the size of its set of view answers $MV$ and of its query $Q$, with the remaining elements of $({\cal M}\Sigma,Q,{\bar t})$ being fixed. Note that in all input instances 
$({\cal M}\Sigma,Q,{\bar t})$ in which $\bar t$ could be a certain answer to $Q$ w.r.t. ${\cal M}\Sigma$, 
 the size of the ground input tuple $\bar t$ must be linear in the size of $Q$; more precisely, the size of $\bar t$ must be the arity of the query $Q$. Thus, 
in this appendix we restrict our consideration to the problem-input triples that satisfy this property. That is, in all of the results in this appendix we assume that, in all the given input instances $({\cal M}\Sigma,Q,{\bar t})$, we have that: $Q$ is a $k$-ary CQ query for some $k$ $\geq$ $0$; $\bar t$ is a $k$-ary ground tuple; and the size of the CQ weakly acyclic instance $({\cal M}\Sigma,Q,{\bar t})$ is the size of the set of answers $MV$ in ${\cal M}\Sigma$ and of the query $Q$.


We first observe that the problem is in ${\Pi}^p_2$. 


\begin{proposition} 
\label{vv-sizes-corol} 
The certain-answer problem for a query and a 
tuple w.r.t. a materialized-view setting 
is in ${\Pi}^p_2$ for CQ weakly acyclic input instances. 
\end{proposition}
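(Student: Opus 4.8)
The plan is to show membership in ${\Pi}^p_2$ by exhibiting the complement (``$\bar t$ is \emph{not} a certain answer'') in ${\Sigma}^p_2$, and to obtain the required polynomial witness bounds from the view-verified data-exchange machinery of Appendix~\ref{vv-dexchg-sec}. The starting point is the characterization that underlies Theorem~\ref{vv-correctness-thm} and Eq.~(\ref{vv-eqn}): for a CQ weakly acyclic input instance, $certain_{{\cal M}\Sigma}(Q)$ equals $\bigcap_{J \in {\cal J}^{{\cal M}\Sigma}_{vv}} Q_{\downarrow}(J)$, so that $\bar t \in certain_{{\cal M}\Sigma}(Q)$ holds iff $\bar t \in Q(J)$ for \emph{every} view-verified universal solution $J$ for ${\cal M}\Sigma$ (recall $\bar t$ is ground, so membership in $Q_{\downarrow}(J)$ coincides with membership in $Q(J)$). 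Dually, $\bar t$ fails to be a certain answer iff there \emph{exists} a view-verified universal solution $J$ with $\bar t \notin Q(J)$. This already exposes the $\exists\forall$ shape I intend to exploit for the complement.

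\textbf{Bounding the existential witness.} First I would argue that a view-verified universal solution can be certified by a polynomial-size object. By Proposition~\ref{vv-sizes-prop}, for fixed everything-but-$MV$ the $MV$-enhanced chase tree $\cal T$ of $J_{de}^{{\cal M}\Sigma}$ with $\Sigma \cup \Sigma^{({\cal M}\Sigma)}$ is finite of polynomial depth in the size of $MV$, and every nonempty leaf $J$ is of polynomial size. Hence any root-to-leaf path has polynomially many chase steps, and at each step the only branching comes from choosing one disjunct ${\bar X} = {\bar t}_i$ of an $MV$-induced generalized egd (at most $m_V \le |MV|$ choices) together with a homomorphism of the (fixed, constant-size) antecedent into the current instance. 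The existential player therefore guesses a polynomial-length sequence of (dependency, homomorphism, disjunct) triples; a deterministic verifier checks in polynomial time that each triple is an applicable chase step, reconstructs the successive polynomial-size instances, and confirms that the final instance $J$ is nonempty and that no dependency in $\Sigma \cup \Sigma^{({\cal M}\Sigma)}$ still applies, i.e.\ that $J$ is indeed a leaf. (Constructing $J_{de}^{{\cal M}\Sigma}$ itself is polynomial by Theorem~\ref{barcelo-four-four-thm}.)

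\textbf{Assembling the alternation.} Given the guessed leaf $J$, the inner test is ``$\bar t \notin Q(J)$.'' Since $Q$ is part of the input, evaluating the CQ query $Q$ on $J$ is NP in combined complexity (guess a valuation from $body_{(Q)}$ into $J$ sending the head vector to $\bar t$), so ``$\bar t \notin Q(J)$'' is a coNP predicate, decidable by a universal quantifier over polynomial-size candidate valuations. The complement of our problem is thus ``$\exists$ (polynomial certificate of a view-verified universal solution $J$) such that $\forall$ (valuation) the valuation fails to witness $\bar t \in Q(J)$,'' an $\exists\forall$ formula with polynomially bounded quantifiers and a polynomial-time matrix, i.e.\ a ${\Sigma}^p_2$ statement. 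Consequently the certain-query-answer problem is in ${\Pi}^p_2$. Validity of the input setting fits within the same budget: by Proposition~\ref{when-ds-valid-prop} a setting is valid exactly when ${\cal J}^{{\cal M}\Sigma}_{vv} \neq \emptyset$, and inputs are valid settings by Definition~\ref{mat-setting-def}, so this issue does not raise the complexity.

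\textbf{Expected obstacle.} The routine parts are the quantifier bookkeeping and the combined-complexity NP bound for CQ evaluation. The crux of the argument is the polynomial witness property, namely that it suffices to quantify over view-verified universal solutions rather than over arbitrary (unbounded) $\Sigma$-valid base instances, and that each such solution admits a polynomial-size, polynomially verifiable certificate. Both facts are exactly what Proposition~\ref{vv-sizes-prop} and Proposition~\ref{each-valid-ground-instance-mapped-from-vv-prop} provide (the latter justifying that passing to view-verified solutions loses no relevant base instance), so the main effort is to invoke them with the right uniform size bounds and to confirm that the branching and depth of the chase tree are both polynomial once everything but $MV$ and $Q$ is fixed.
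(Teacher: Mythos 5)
Your proposal is correct and follows essentially the same route as the paper's own proof: place the complement in ${\Sigma}^p_2$ by guessing a view-verified universal solution $J$ (polynomially bounded via Proposition~\ref{vv-sizes-prop}) and then deciding $\bar t \notin Q(J)$ with a coNP/NP-oracle check on valuations. Your added detail on certifying $J$ by guessing the polynomial-length chase path is a slightly more careful rendering of the paper's step (1), not a different argument.
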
 


\begin{proof} 
Given a CQ weakly acyclic input instance $({\cal M}\Sigma,Q,{\bar t})$, 
we show how to ascertain that the ground tuple $\bar t$ is {\em not} a certain answer to the query $Q$ w.r.t. the setting ${\cal M}\Sigma$. Observe first that if this is the case, then, by soundness of view-verified data exchange (see Section~\ref{vv-correctness-sec}), there must be a view-verified universal solution, $J$, for ${\cal M}\Sigma$ such that $\bar t$ is not an answer to $Q$ on the instance $J$. We can thus: 
\begin{itemize} 
	\item[(1)] Guess a view-verified universal solution $J$ for ${\cal M}\Sigma$, and then  
	
	

	\item[(2)] Verify that there is no valuation from the query $Q({\bar t})$ to $J$; here, $Q({\bar t})$ is the result of applying to the query $Q$ the homomorphism\footnote{It is easy to verify that if a homomorphism $\mu$ specified by (i)-(ii) does not exist, then $\bar t$ cannot be a certain answer to $Q$ w.r.t. ${\cal M}\Sigma$.} $\mu$, such that: 
	\begin{itemize} 
		\item[(i)] $\mu$ maps the head vector $\bar X$ of $Q$ to $\bar t$, and 
		\item[(ii)] $\mu$ is the identity mapping on each term that occurs in $Q$ but not in its head vector $\bar X$. 
	\end{itemize} 

\end{itemize} 

By Proposition~\ref{vv-sizes-prop}, the step (1) that generates the instance $J$ can be done in polynomial space in the size of the set $MV$ in ${\cal M}\Sigma$. Further, step (2) can be done using an $NP$-oracle for $Q$ and $J$, as the size of each valuation from $Q({\bar t})$ to $J$ must be polynomial in the size of $Q$ and $J$ (it is, in fact, linear in the size of $Q$). 
\end{proof} 

We now provide a ${\Pi}^p_2$ hardness result, even for the special case of CQ weakly acyclic inputs with $\Sigma$ $=$ $\emptyset$.


\begin{theorem} 
\label{complexity-hardness-thm} 
The certain-answer problem for a query and a 
tuple w.r.t. a materialized-view setting 
is ${\Pi}^p_2$ hard for CQ input instances $({\cal M}\Sigma,Q,{\bar t})$ in which $\Sigma$ $=$ $\emptyset$ in the input materialized-view setting ${\cal M}\Sigma$. 
\end{theorem}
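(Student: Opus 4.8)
The plan is to prove $\Pi^p_2$-hardness by a polynomial-time reduction from the $\forall\exists$-CNF problem (restricted to $3$CNF, which is still $\Pi^p_2$-complete \cite{Stockmeyer76}). Given a quantified formula $\Psi = \forall x_1\cdots\forall x_n\,\exists y_1\cdots\exists y_m\,\varphi$, with $\varphi = C_1\wedge\cdots\wedge C_k$ a $3$CNF formula, I would construct a CQ input instance $({\cal M}\Sigma,Q,\bar t)$ with $\Sigma=\emptyset$ such that $\bar t\in certain_{{\cal M}\Sigma}(Q)$ iff $\Psi$ is true. Following the complexity measure of \cite{ZhangM05}, the schema {\bf P} and the view set $\cal V$ must be fixed across the entire family, so every formula-dependent datum has to live in the varying parts $MV$ and $Q$. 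This is exactly the point at which the construction must depart from the reduction of \cite{MillsteinHF03} (where the view definitions are allowed to vary), and where I would adapt it in the spirit suggested in \cite{ZhangM05}. I take $Q$ to be Boolean and $\bar t=()$, so that $()\in certain_{{\cal M}\Sigma}(Q)$ means ``$Q$ is nonempty on every relevant instance,'' which already exhibits the $\forall\exists$ shape: over all instances $I$ with $\cal V\Rightarrow_{I,\emptyset}MV$, there exists a valuation from $body_{(Q)}$ into $I$. (If an instance whose tuple has the arity of $Q$ is preferred, one uses a dummy head $Q(a)$ and $\bar t=(a)$.)

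The fixed schema I would use has three relations: binary $UVal$ and $EVal$, storing ``variable-index, truth-value'' pairs for the universal and the existential variables respectively, and a $7$-ary relation $CSat$ storing satisfying configurations of clauses. The fixed views are $Uvar(i)\leftarrow UVal(i,B)$; $Both(i)\leftarrow UVal(i,0)\wedge UVal(i,1)$ (the only views mentioning the fixed constants $0,1$); an identity view $Ecopy(i,b)\leftarrow EVal(i,b)$; and an identity view $Ccopy$ on $CSat$. The formula is encoded by setting $MV[Uvar]=\{(1),\ldots,(n)\}$, $MV[Both]=\emptyset$, $MV[Ecopy]=\{(\ell,0),(\ell,1):1\le\ell\le m\}$, and $MV[Ccopy]$ to the list of all satisfying truth-value combinations of the three variables of each clause (at most $7k$ facts). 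The crucial gadget is the pair $(Uvar,Both)$: under CWA, $Uvar$ forces every index $1,\ldots,n$ to carry at least one value, while $MV[Both]=\emptyset$ forbids any index carrying both; since $\Sigma=\emptyset$ and no dependency is available, the CWA requirement alone pins $UVal$ to $\{(i,\alpha(i)):1\le i\le n\}$ for an arbitrary assignment $\alpha$, and every $\alpha$ is realized. Symmetrically $Ecopy$ and $Ccopy$ pin $EVal$ and $CSat$ exactly, with no freedom. Hence the relevant instances of ${\cal M}\Sigma$ are in bijection with truth assignments $\alpha$ to $x_1,\ldots,x_n$, and the all-zero $\alpha$ witnesses that ${\cal M}\Sigma$ is valid.

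I then build $Q$ to test satisfiability of $\varphi$ under the instance's assignment. For each variable $z_p$ of $\varphi$ I introduce one query variable $B_p$; for each clause $C_j$ on variables $z_{p_1},z_{p_2},z_{p_3}$ (whose indices are known constants $c_{p_1},c_{p_2},c_{p_3}$) I add the subgoal $CSat(j, c_{p_1}, B_{p_1}, c_{p_2}, B_{p_2}, c_{p_3}, B_{p_3})$, and for each variable a linking subgoal $UVal(c_p,B_p)$ if $z_p$ is universal or $EVal(c_p,B_p)$ if existential. Reusing the same $B_p$ across all clauses enforces one global assignment. On the relevant instance $I_\alpha$ the linking subgoals force $B_p=\alpha(i)$ for a universal $z_p=x_i$ (only one value present) and leave $B_p$ free for an existential $z_p=y_\ell$ (both values present), while each $CSat$-subgoal forces the chosen triple to be a satisfying configuration of its clause. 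Thus a valuation of $body_{(Q)}$ into $I_\alpha$ exists iff there is an assignment $\beta$ to $y_1,\ldots,y_m$ with $(\alpha,\beta)\models\varphi$. Combining the two correspondences, $()\in certain_{{\cal M}\Sigma}(Q)$ iff for every $\alpha$ such a $\beta$ exists, i.e.\ iff $\Psi$ holds. The construction is clearly polynomial and yields $\Sigma=\emptyset$, a fixed CQ schema and view set, and a CQ query $Q$; one may equivalently present it as an ${\cal M}\Sigma$-conditional-containment instance via Theorem~\ref{problem-relationship-thm}.

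The main obstacle I anticipate lies entirely in the ``exactly one value per universal variable'' step: because $\Sigma=\emptyset$, the mutual exclusion of the two truth values cannot be imposed by an egd and must instead be forced by a fixed view under the closed-world assumption, which is precisely what the empty-answer $Both$ gadget achieves. Getting this right, so that the relevant instances are exactly the assignments $\alpha$ (no more and no fewer) while the views mention no formula-specific constant, is the delicate part, together with keeping all arities bounded, which is why I restrict to $3$CNF so that $CSat$ has fixed arity. The remaining verifications, namely that both correspondences are exact and that clauses with repeated or fewer than three distinct variables are handled by padding a variable slot, are routine.
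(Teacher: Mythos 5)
Your overall architecture (reduction from $\forall\exists$-3CNF, a clause-configuration relation pinned by an identity view, per-variable linking subgoals, a Boolean query) matches the paper's, but your gadget for the universal variables has a genuine gap. The view $Uvar(i)\leftarrow UVal(i,B)$ under CWA only pins the \emph{projection} of $UVal$ onto its first column to be exactly $\{1,\ldots,n\}$; it says nothing about the second column. Likewise $MV[Both]=\emptyset$ only forbids an index from carrying \emph{both} $0$ and $1$. So the $\Sigma$-valid base instances are not in bijection with truth assignments: an instance with, say, $UVal(1,57)$ as the only tuple of index $1$ (or with $UVal(1,0)$ together with $UVal(1,57)$) satisfies all four view equalities and is therefore relevant. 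On such an instance your linking subgoal $UVal(c_1,B_1)$ joined with a $CSat$-subgoal forces $B_1$ to be a value occurring in $CSat$, i.e.\ $0$ or $1$, so $Q$ has no valuation there even when $\Psi$ is true. Hence the direction ``$\Psi$ true $\Rightarrow$ $()\in certain_{{\cal M}\Sigma}(Q)$'' fails, and the reduction as stated is not correct. Patching it (e.g.\ with a view $Val(b)\leftarrow UVal(i,b)$ materialized to $\{(0),(1)\}$) introduces its own side conditions, since that view also forces both truth values to occur \emph{somewhere}, excluding the all-zero and all-one assignments unless you add dummy indices.

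The paper's proof sidesteps exactly this difficulty by not attempting to pin the universal relation to single-valued assignments at all. It uses a join view $W(j)\leftarrow P(Y_j,j),S(Y_j,j)$ where $S$ is pinned exactly to $\{(0,j),(1,j):j\}$ by an identity view; under CWA this guarantees that every index carries \emph{at least one} truth-valued $P$-tuple, while extra tuples (a second truth value, or junk values) are tolerated. Correctness then rests on monotonicity of CQ queries: for the ``if'' direction, any relevant instance contains some truth value per index, which suffices to find a valuation; for the ``only if'' direction, only the minimal ``core'' instances (one truth value per index, all of which are relevant) are needed, and on those the linking subgoals do force a unique assignment because the query variables $Y_j$ also occur in the $R$-subgoals, whose tuples contain only $0/1$ entries. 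If you replace your exact-pinning claim with this at-least-one-plus-monotonicity argument, your construction can be repaired along the same lines.
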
 


Before providing a proof of Theorem~\ref{complexity-hardness-thm}, we observe that as 
an immediate corollary of Theorem~\ref{complexity-hardness-thm} and of Proposition~\ref{vv-sizes-corol}
we obtain the main result of this appendix, a ${\Pi}^p_2$-completeness result for the certain-answer problem for a query and a 
tuple w.r.t. a materialized-view setting, for the case of 
CQ weakly acyclic input instances: 


\begin{theorem} 
\label{complexity-hardness-and-membership-thm} 
The certain-answer problem for a query and a 
tuple w.r.t. a materialized-view setting 
is ${\Pi}^p_2$ complete for CQ weakly acyclic input instances. 
\end{theorem}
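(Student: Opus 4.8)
The plan is to prove the two inclusions separately and then combine them, exactly as the theorem's billing as a corollary suggests. For the upper bound I would show membership in $\Pi^p_2$, and for the matching lower bound I would show $\Pi^p_2$-hardness already for the subclass of inputs with $\Sigma = \emptyset$; since that subclass sits inside the general CQ weakly acyclic class, hardness transfers upward while membership transfers downward, so the two bounds meet exactly at $\Pi^p_2$ for the full class. The substantive content is concentrated in the two ingredients, which I sketch next.

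For membership (this is Proposition~\ref{vv-sizes-corol}), I would argue that the complement is in $\Sigma^p_2$. Using correctness (soundness and completeness) of view-verified data exchange (Theorem~\ref{vv-correctness-thm}), a ground tuple $\bar t$ fails to be a certain answer to $Q$ w.r.t. ${\cal M}\Sigma$ if and only if there exists a view-verified universal solution $J$ for ${\cal M}\Sigma$ on which $\bar t$ is not an answer to $Q$; here I would use that $\bar t$ is ground and $Q$ is a CQ, so whether $\bar t \in Q(J)$ is unaffected by grounding $J$ and is preserved along homomorphisms from $J$ to a relevant instance. By Proposition~\ref{vv-sizes-prop}, each such $J$ is of size polynomial in $|MV|$ and can be guessed in polynomial space; verifying that there is no valuation from the head-instantiated query $Q(\bar t)$ into $J$ is a co-$NP$ check (equivalently, one call to an $NP$ oracle), since candidate valuations have size linear in $|Q|$. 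Guess-then-co-$NP$-check places the complement in $\Sigma^p_2$, hence the problem itself in $\Pi^p_2$.

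For hardness I would reduce from the $\Pi^p_2$-complete problem $\forall\exists$-CNF \cite{Stockmeyer76}, adapting the reduction of \cite{MillsteinHF03} for query containment in data integration, modified in the spirit of \cite{ZhangM05}. Given an instance $\forall \bar x \, \exists \bar y \, \phi(\bar x, \bar y)$ with $\phi$ in CNF, I would build a fixed schema, fixed CQ view definitions, and $\Sigma = \emptyset$, and encode the formula entirely into the non-fixed parts $MV$, $Q$, and $\bar t$. The design goal is that the $\Sigma$-valid base instances (under CWA) for ${\cal V}$ and $MV$ correspond bijectively to truth assignments of the universally quantified variables $\bar x$ --- much as in Example~\ref{expon-num-vv-sols-ex}, where independently setting each of $n$ nulls to $0$ or $1$ yields the $2^n$ relevant instances --- while the existential variables $\bar y$ and the clause structure of $\phi$ are captured by the body of the single CQ query $Q$, so that $Q$ returns $\bar t$ on a given relevant instance exactly when $\exists \bar y \, \phi$ holds under the corresponding assignment. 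Then $\bar t$ is a certain answer iff $Q$ returns $\bar t$ on all relevant instances, i.e. iff $\forall \bar x \, \exists \bar y \, \phi$ is valid.

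The hard part will be realizing this encoding under the Zhang--Mendelzon complexity measure, in which the view definitions must stay fixed: in \cite{MillsteinHF03} both the query and the view definitions may grow with the formula, so the quantifier alternation can be spread across the views, whereas here every dependence on the number of variables and clauses must be pushed into $MV$ and $Q$. Getting the CWA closed-world constraint to force precisely the intended family of relevant instances (no more, no less), using only a constant number of fixed view templates while $MV$ scales with $|\bar x|$, is the delicate step; once it is in place, correctness of the reduction (which is Theorem~\ref{complexity-hardness-thm}), together with Proposition~\ref{vv-sizes-corol}, yields $\Pi^p_2$-completeness, and the final combination is routine.
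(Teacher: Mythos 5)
Your proposal is correct and follows essentially the same route as the paper: membership is obtained exactly as in Proposition~\ref{vv-sizes-corol} (guess a polynomial-size view-verified universal solution and check non-derivability of $\bar t$ with an $NP$ oracle, using Proposition~\ref{vv-sizes-prop} and the correctness of view-verified data exchange), and hardness exactly as in Theorem~\ref{complexity-hardness-thm} (a reduction from $\forall\exists$-CNF adapting \cite{MillsteinHF03} in the spirit of \cite{ZhangM05}, with fixed schema and views, $\Sigma=\emptyset$, and the formula encoded in $MV$ and $Q$ so that the relevant CWA instances range over assignments to the universally quantified variables). The only caveat is that the correspondence between relevant instances and universal assignments is not literally a bijection (there are infinitely many relevant instances, each containing one of the $2^m$ core instances), but you correctly flag this closed-world bookkeeping as the delicate step, and the paper handles it via its lemmas on core instances and valuations.
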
 


In the remainder of this appendix, we provide a proof of Theorem~\ref{complexity-hardness-thm}. 
As a summary, the result of Theorem~\ref{complexity-hardness-thm} is by reduction from the $\forall$$\exists$-$CNF$ problem, which is known to be ${\Pi}^p_2$ complete \cite{Stockmeyer76}. 
We start off from the reduction that was used by Millstein and colleagues in \cite{MillsteinHF03} for the problem of query containment for data-integration systems. We modify the reduction of \cite{MillsteinHF03} in the spirit that is similar to the modification of that reduction (of \cite{MillsteinHF03}) 
as suggested in \cite{ZhangM05}. (Recall that the full version of \cite{ZhangM05}, including any of its proofs, has never been published.) The goal of our modification is to comply with our assumptions about the input size, specifically with the assumption that the input view definitions are fixed. (In \cite{MillsteinHF03} it is assumed that both the queries and the view definitions can vary.) 



\begin{proof}{(Theorem~\ref{complexity-hardness-thm})} 
In this proof, we build on the constructions from the proof of Theorem 3.3 in \cite{MillsteinHF03}; that result of \cite{MillsteinHF03} states ${\Pi}^p_2$ hardness for a subclass of the problem of query containment for data-integration systems. The reason that we modify the reduction of \cite{MillsteinHF03} is that we need to comply with our assumptions about the size of our input instances ${\cal M}\Sigma$, specifically with the assumption that the input view definitions are fixed. (In \cite{MillsteinHF03} it is assumed that both the queries and the view definitions can vary.) Thus, our variation on the reduction of \cite{MillsteinHF03} is similar in spirit to the modification suggested in \cite{ZhangM05}. 

Similarly to the reduction in \cite{MillsteinHF03}, we reduce the $\forall$$\exists$-$CNF$ problem, known to be ${\Pi}^p_2$ complete \cite{Stockmeyer76}, to our problem. The $\forall$$\exists$-$CNF$ problem is defined as follows: Given a 3-$CNF$ propositional formula $F$ with variables $\bar X$ and $\bar Y$, is it the case that for each truth assignment to $\bar Y$, there exists a truth assignment to $\bar X$ that satisfies $F$? Here, we denote by $\bar X$ the set of $n$ variables $X_1$, $\ldots$, $X_n$, for some $n$ $\geq$ $0$, and we denote by $\bar Y$ the set of $m$ variables $Y_1$, $\ldots$, $Y_m$, for some $m$ $\geq$ $1$. 

The reduction is as follows. Suppose we are given a 3-$CNF$ formula $F$, with variables 

\begin{tabbing} 
$\bar Z$ $=$ $\{ X_1,\ldots,X_n \}$ $\cup$ $\{ Y_1,\ldots,Y_m \}$.  
\end{tabbing} 

\noindent 
The formula $F$ has clauses $\bar C$ $=$ $\{ C_1,\ldots,C_l \}$. Clause $C_i$ contains the three variables (either positive or negated) $Z_{i,1}$, $Z_{i,2}$, and $Z_{i,3}$; each of the three variables is an element of the set $\bar Z$. 

For the input formula $F$, 
we begin building the corresponding (CQ weakly acyclic) instance\linebreak $({\cal M}\Sigma,Q,{\bar t})$ of the certain-answer problem for a query and a 
tuple w.r.t. a materialized-view setting. In each such instance $({\cal M}\Sigma,Q,{\bar t})$, the query $Q$ will be Boolean, hence $\bar t$ will be the empty tuple. The setting ${\cal M}\Sigma$ that we will construct is as usual a quadruple of the form  $(${\bf P}, $\Sigma$, $\cal V$, $MV)$, always with $\Sigma$ $=$ $\emptyset$. We show below how to construct each of {\bf P}, $\cal V$, $MV$, and $Q$ for the input formula $F$. 

 The schema {\bf P} $=$ $\{ P, R, S \}$ that we construct for the input formula $F$ uses three relation symbols: $R$ of arity $k_R$ $=$ $4$, and two binary relation symbols $P$ and $S$. Intuitively, for each $i$ $\in$ $[1,$ $l]$ and for the clause $C_i$ in $F$, in each ``relevant'' instance of schema {\bf P} we will have in $R$ a nonempty set of tuples whose fourth argument is the constant $i$. Further, for each $j$ $\in$ $[1,$ $m]$ and for the variable $Y_j$ in $F$, in each ``relevant'' instance of schema {\bf P} we will have in each of $P$ and $S$ a nonempty set of tuples whose second argument is the constant $j$. 


We now define the set of views $\cal V$ $=$ $\{ U, V, W \}$ in the materialized-view setting ${\cal M}\Sigma$ that we construct for the given formula $F$. First, for the clauses $\bar C$ in $F$ we introduce the following view $V$: 

\begin{tabbing} 
$V(Z_1,Z_2,Z_3,i)$ $\leftarrow$ $R(Z_1,Z_2,Z_3,i).$
\end{tabbing} 

\noindent 
(Here, $i$ is a variable rather than a constant; we use the variable name $i$ in the definition of the view $V$ to mnemonically refer to each clause $C_i$ in $\bar C$ as discussed above.) The answer to this view simply mirrors the relation $R$. 

In the set $MV$ of answers to the views in $\cal V$ that we are constructing for the given formula $F$, the relation $MV[V]$  records, for each $i$ $\in$ $[1,$ $l]$ and for the clause $C_i$ in $F$, the seven (out of the total eight possible) satisfying assignments for the clause. (We follow \cite{MillsteinHF03} in using $1$ for $true$ and $0$ for $false$.)  The fourth argument of each tuple in $MV[V]$ for these seven assignments for $C_i$ is always $i$. 

As a running example, we use the following example from the proof of Theorem 3.3 in \cite{MillsteinHF03}: Consider the formula 

\begin{tabbing} 
$F = (X_1 \vee X_2 \vee Y_1) \wedge (\neg X_1 \vee \neg X_2 \vee \neg Y_2).$
\end{tabbing} 

\noindent 
The seven satisfying assignments to $X_1$, $X_2$, and $Y_1$ in the first clause $C_1 = (X_1 \vee X_2 \vee Y_1)$ of $F$ are $(1,1,1)$, $(1,1,0)$, $(1,0,1)$, $(1,0,0)$, $(0,1,1)$, $(0,1,0)$, and $(0,0,1)$. For the second clause $C_2 = (\neg X_1 \vee \neg X_2 \vee \neg Y_2)$ of $F$, the seven satisfying assignments to $X_1$, $X_2$, and $Y_2$ are $(0,0,0)$, $(0,0,1)$, $(0,1,0)$, $(0,1,1)$, $(1,0,0)$, $(1,0,1)$, and $(1,1,0)$. 

By construction of the view $V$ and by our intuition for the relation $R$,  see above, in this running example we construct the relation $MV[V]$ from these fourteen assignments, as follows. First, the seven assignments as above for $C_1$ are adorned, in the fourth argument of $V$, by the index $1$ of $C_1$, as follows: $V(1,1,1,1)$, $V(1,1,0,1)$, $V(1,0,1,1)$, $V(1,0,0,1)$, $V(0,1,1,1)$,\linebreak $V(0,1,0,1)$, and $V(0,0,1,1)$. Similarly, the seven assignments as above for $C_2$ get  adorned, in the fourth argument of $V$, by the index $2$ of $C_2$, as follows: $V(0,0,0,2)$, $V(0,0,1,2)$, $V(0,1,0,2)$, $V(0,1,1,2)$, $V(1,0,0,2)$,\linebreak $V(1,0,1,2)$, and $V(1,1,0,2)$. These fourteen tuples together constitute the relation $MV[V]$ for the formula $F$ in this running example. 

We now return from our running example, to continue to define the views in the set $\cal V$ $=$ $\{ U, V, W \}$ for the formula $F$. For the $m$ $\geq$ $1$ variables $Y_1,\ldots,Y_m$ in $F$, we introduce a unary view $W$: 

\begin{tabbing} 
Tab heh \= boom \kill
$W(j)$ $\leftarrow$ $P(Y_j,j), S(Y_j,j).$ 
\end{tabbing} 

\noindent 
Here, $j$ is a variable rather than a constant; we use the variable name $j$ in the definition of the view $W$ to mnemonically refer to each variable $Y_j$ in the formula $F$ as discussed in the beginning of this proof. 

The relation $MV[W]$ for the given formula $F$ is $MV[W]$ $=$ $\{ (1), (2),$ $\ldots,$ $(m) \}$. Intuitively, for each $j$ $\in$ $[1$, $m]$, the tuple $(j)$ in $MV[W]$ witnesses, in each ground instance $I$ of schema {\bf P} such that ${\cal V}$ $\Rightarrow_{I,\Sigma}$ $MV$, the presence of a ground $P$-atom $P(y_j,j)$ and of a ground $S$-atom $S(y_j,j)$ with the same arguments. (Here, $y_j$ is some constant value.) In our running example, we have that $MV[W]$ $=$ $\{ (1), (2) \}$, with one tuple for each of the two variables $Y_1$ and $Y_2$ in the given formula $F$. 

Finally, in the general case we define the view $U$ in the set $\cal V$ $=$ $\{ U, V, W \}$ for the given formula $F$ as follows: 

\begin{tabbing} 
$U(Y,j)$ $\leftarrow$ $S(Y,j).$
\end{tabbing} 

\noindent 
(As in the previous view definitions, $j$ is a variable rather than a constant.) The answer to this view simply mirrors the relation $S$. 

Now in the set $MV$ that we are constructing, the relation $MV[U]$ for $U$ provides the two possible truth assignments, $1$ and $0$, to each variable among $Y_1$, $\ldots,$ $Y_m$ in the set of variables $\bar Y$ in the formula $F$. That is, for each $j$ $\in$ $[1,$ $m]$, the relation $MV[U]$ has exactly two tuples, $U(1,j)$ and $U(0,j)$. For instance, the relation $MV[U]$ for our running example would have four tuples: $MV[U]$ $=$ $\{ U(1,1), U(0,1), U(1,2), U(0,2) \}$. Here, the first two  tuples correspond to the two possible truth assignments, $1$ and $0$, to the variable $Y_1$ in the formula $F$ in the example. Similarly, the last two  tuples in $MV[U]$ correspond to the two possible truth  assignments, $1$ and $0$, to the variable $Y_2$ in the formula $F$ in the example. 

For the general case of the formula $F$, 
the above construction generates, for a given $F$, both the ground tuple $\bar t$ $=$ $()$ and the elements {\bf P}, $\Sigma$ ($=$ $\emptyset$), $\cal V$, and $MV$ in the materialized-view setting ${\cal M}\Sigma$ that we are producing for $F$. To complete the construction of the input instance $({\cal M}\Sigma,Q,{\bar t})$ for the given $F$, we now specify the CQ query $Q$: 

\begin{tabbing} 
Tab me\= me crazy \kill
$Q() \leftarrow \bigwedge_{j=1}^m P(Y_j,j) \ \bigwedge_{i=1}^l R(Z_{i,1},Z_{i,2},Z_{i,3},i).$
\end{tabbing} 


Intuitively, the Boolean query $Q$ has a separate subgoal, $P(Y_j,j)$, for each $j$ $\in$ $[1,$ $m]$, that is for each $Y_j$ among the variables $Y_1$, $\ldots$, $Y_m$ of the input formula $F$. The query $Q$ also has a separate subgoal, $R(Z_{i,1},Z_{i,2},Z_{i,3},i)$, for each $i$ $\in$ $[1,$ $l]$, that is for each $C_i(Z_{i,1},Z_{i,2},Z_{i,3})$ among the clauses  $C_1$, $\ldots$, $C_l$ of the input formula $F$. By design, $Q$ uses in each of its $R$-subgoals all the variables of the form $Z_{i,k}$ in the same way as they are used in the corresponding clause $C_i$ in the formula $F$. (Recall that for each variable of the form $Z_{i,k}$ in the clauses of $F$, this variable is either among the variables $\bar X$ of $F$ or among the variables $\bar Y$ of $F$.) In addition, also by design of the query $Q$, for each variable $Y_j$ among $Y_1$, $\ldots$, $Y_m$ in the formula $F$, the {\em same} variable name $Y_j$ 
is used in the $P$-subgoal of $Q$ with $j$ the value of the second argument of the subgoal. It follows that for each variable that is used as the first argument of some $P$-subgoal of the query $Q$, this same variable must occur in the conjunction $\bigwedge_{i=1}^l R(Z_{i,1},Z_{i,2},Z_{i,3},i)$ in the body of $Q$. 

As an illustration, the query $Q$ for the formula $F$ in our running example is as follows: 

\begin{tabbing} 
$Q() \leftarrow P(Y_1,1), P(Y_2,2), R(X_1, X_2, Y_1, 1), R(X_1, X_2, Y_2, 2).$
\end{tabbing} 

We have completed the construction of the  instance $({\cal M}\Sigma,Q,{\bar t})$ for each 3-$CNF$ propositional formula $F$. 
By construction, the instance $({\cal M}\Sigma,Q,{\bar t})$ is CQ weakly acyclic and has $\Sigma$ $=$ $\emptyset$ and $\bar t$ $=$ $()$. Further,  each of {\bf P}, $\Sigma$, and $\cal V$, in the materialized-view setting ${\cal M}\Sigma$ constructed for the input formula $F$, does not depend on $F$; thus, both the formulation and the size of each of {\bf P}, $\Sigma$, and $\cal V$ are fixed across all the input formulae $F$. In contrast, the size of each of $Q$ and $MV$ in the instance $({\cal M}\Sigma,Q,{\bar t})$ is linear in the size of the input formula $F$. As a result, the overall size of the instance $({\cal M}\Sigma,Q,{\bar t})$ is polynomial (linear) in the size of the input formula $F$. 

It turns out that for each formula $F$, 
the materialized-view setting ${\cal M}\Sigma$ that we construct for $F$
 is a valid setting by definition. Indeed, for each such $F$ and for the corresponding setting ${\cal M}\Sigma$ $=$ $(${\bf P}, $\Sigma$, $\cal V$, $MV)$, there exists a ground instance $I^{(1,1,\ldots,1)}(F)$ of schema {\bf P} such that $\cal V$ $\Rightarrow_{I^{(1,1,\ldots,1)}(F),\Sigma}$ $MV$, as follows. (Intuitively, the $m$-tuple $(1,1,\ldots,1)$ in the name of this instance $I^{(1,1,\ldots,1)}(F)$ refers to the fact that this instance represents an assignment of the truth value $1$ to each of the $m$ variables $Y_1$, $\ldots$, $Y_m$ of the input formula $F$.) $I^{(1,1,\ldots,1)}(F)$ is the union of the instance $I^{(1,1,\ldots,1)}_{(P,S)}(F)$, as specified below, with the instance $I^{(1,1,\ldots,1)}_{(R)}(F)$ that has the same set of tuples for the relation $R$ as the instance $MV$ has in the relation for $V$ (see the specification of $MV[V]$ above). As an illustration, in our running example,  the instance $I^{(1,1)}_{(R)}(F)$ has the same fourteen tuples as we saw above in the relation $MV[V]$, except that these same tuples are now in the relation $R$ in $I^{(1,1)}(F)$: 

\begin{tabbing} 
Tab he he ho \= ho \kill
$I^{(1,1)}_{(R)}(F)$ $=$ $\{ R(1,1,1,1)$, $R(1,1,0,1)$, $R(1,0,1,1)$, \\ 
\> $R(1,0,0,1)$, $R(0,1,1,1)$, $R(0,1,0,1)$, \\ 
\> $R(0,0,1,1)$, $R(0,0,0,2)$, $R(0,0,1,2)$, \\ 
\> $R(0,1,0,2)$, $R(0,1,1,2)$, $R(1,0,0,2)$, \\ 
\> $R(1,0,1,2)$, $R(1,1,0,2) \}$.  
\end{tabbing} 

For the general case, the instance $I^{(1,1,\ldots,1)}_{(P,S)}(F)$ is of the following form: 

\begin{tabbing} 
Tab me \= me m\= me crazy \kill
$I^{(1,1,\ldots,1)}_{(P,S)}(F) = \{ P(1,1), P(1,2), \ldots, P(1,m), S(1,1), $ \\ 
\> $S(0,1), S(1,2), S(0,2), \ldots, S(1,m), S(0,m) \}.$
\end{tabbing} 

As an illustration, for our running example, the instance $I^{(1,1)}_{(P,S)}(F)$ is as follows: 

\begin{tabbing} 
Tab me me m\= me crazy \kill
$I^{(1,1)}_{(P,S)}(F) = \{ P(1,1), P(1,2), S(1,1), S(0,1),$ \\ 
\> $S(1,2), S(0,2) \}.$
\end{tabbing} 

\noindent 
The entire instance $I^{(1,1)}(F)$ for our running example is the union of the instances $I^{(1,1)}_{(P,S)}(F)$ and $I^{(1,1)}_{(R)}(F)$ as given above. 

In the general case, intuitively, the tuples in the relation $S$ in $I^{(1,1,\ldots,1)}_{(P,S)}(F)$ mirror the instance $MV[U]$, by definition of the view $U$ and of $MV[U]$. The tuples $P(1,1)$, $P(1,2)$, $\ldots$, $P(1,m)$ in $I^{(1,1,\ldots,1)}_{(P,S)}(F)$ give us a key part of this proof, by representing a particular assignment of the truth values $1$ and $0$, one value to each $Y_j$ ($j$ $\in$ $[1,$ $m]$) among the variables $Y_1$, $\ldots$, $Y_m$ of the formula $F$. The particular assignment of the  truth values $1$ and $0$ to the variables $\bar Y$ of $F$ that is represented in the instance $I^{(1,1,\ldots,1)}(F)$ is the assignment of the truth value $1$ to each of the $m$ variables. We represent this fact in the name of the instance $I^{(1,1,\ldots,1)}(F)$, by using this assignment as an $m$-tuple $(1,1,\ldots,1)$ in the superscript in the name. 


It is straightforward to verify that the ground instance $I^{(1,1,\ldots,1)}(F)$ satisfies $\cal V$ $\Rightarrow_{I^{(1,1,\ldots,1)}(F),\Sigma}$ $MV$. Further, it is straightforward to verify that there exist $2^m$ $-$ $1$ more ground instances of schema {\bf P}, as follows. Each such instance, denote it for now by $J$, differs from the instance $I^{(1,1,\ldots,1)}(F)$ only in whether the first argument of one or more $P$-tuple(s) in $J$ is the value $0$, instead of $1$ as it is in $I^{(1,1,\ldots,1)}(F)$. It is straightforward to verify that for each such instance $J$,  we have that $\cal V$ $\Rightarrow_{J,\Sigma}$ $MV$ in the context of the setting ${\cal M}\Sigma$ that we have constructed for the given formula $F$. Clearly, the total number of such instances $J$, including the instance $I^{(1,1,\ldots,1)}(F)$, is $2^m$, as the value $1$ of the first argument of $P(1,j)$ in $I^{(1,1,\ldots,1)}(F)$ can be flipped to $0$ independently for each $j$ $\in$ $[1,$ $m]$.

For each instance $J$ constructed as above, we name the instance using the same notation as for the instance $I^{(1,1,\ldots,1)}(F)$, by adorning the name $I(F)$ with an $m$-tuple $(a_1,a_2,\ldots,a_m)$ in the superscript, where $a_j$ for each $j$ $\in$ $[1,$ $m]$ is either $1$ or $0$, and ($a_j$) is precisely the first argument of the $P$-atom in the instance such that this $P$-atom has $j$ as its second argument. For example, the instance $I^{(0,0,\ldots,0)}(F)$ is the instance that differs from the instance $I^{(1,1,\ldots,1)}(F)$ only in that the first argument of each $P$-tuple in $J$ is the value $0$, instead of $1$ as it is in $I^{(1,1,\ldots,1)}(F)$. 

To the $2^m$ instances of the form $I^{(a_1,a_2,\ldots,a_m)}(F)$ as defined above, we refer collectively as {\em the core instances (of schema {\bf P}) for the input formula} $F$. In addition to these core instances, there is an infinite number of other ground instances of schema {\bf P}, where for each instance, $I$, we have that $\cal V$ $\Rightarrow_{I,\Sigma}$ $MV$ in the context of the setting ${\cal M}\Sigma$ that we have constructed for the given formula $F$. By definition of ${\cal M}\Sigma$, each such instance $I$ can be obtained by unioning one of the core instances for $F$, call the instance $K$, with a finite set of ground atoms of the form $P(c,d)$, where either (i) $d$ is a constant that is not in the set $[1,$ $m]$, or (ii) $d$ is in the set $[1,$ $m]$, while $c$ is a constant distinct from the value $g$ in the atom $P(g,d)$ in the ``core part'' $K$ of the instance $I$. 

For each ground instance $I$ that satisfies $\cal V$ $\Rightarrow_{I,\Sigma}$ $MV$ and by definition of the query $Q$ that we have constructed for the given formula $F$, we obtain the following useful observations. 

\begin{lemma} 
\label{pi-p-two-same-r-lemma} 
Given a 3-$CNF$ propositional formula $F$ 
and the instance $({\cal M}\Sigma,Q,{\bar t})$ for $F$. Then, for all ground instances $I$ of schema {\bf P} such that $\cal V$ $\Rightarrow_{I,\Sigma}$ $MV$ in the context of the instance $({\cal M}\Sigma,Q,{\bar t})$, the relation $R$ is the same in all the instances $I$, and the relation $S$ is also the same in all the instances $I$. 
\end{lemma}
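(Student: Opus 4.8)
The plan is to observe that two of the three views in the constructed setting, namely $V$ and $U$, are ``copying'' (identity) views: the view $V$ reproduces the relation $R$ verbatim, and the view $U$ reproduces the relation $S$ verbatim. Under the closed-world assumption this pins down both $R$ and $S$ uniquely to the given materialized relations, regardless of how $P$ is populated. I will therefore split the argument into the $R$-part and the $S$-part, each a one-line consequence of the identity property of the corresponding view together with the exact-equality requirement of CWA.

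First I would treat $R$. By the definition $V(Z_1,Z_2,Z_3,i) \leftarrow R(Z_1,Z_2,Z_3,i)$, the head vector of $V$ consists of four distinct variables, each occurring exactly once in the single body subgoal $R(Z_1,Z_2,Z_3,i)$, and the body contains no constants. Hence for every ground instance $I$ of {\bf P} the answer is exactly $V(I) = I[R]$, i.e.\ the view evaluation is the identity on the $R$-relation. For any $I$ with $\cal V$ $\Rightarrow_{I,\Sigma}$ $MV$, the closed-world requirement gives $V(I) = MV[V]$; combining the two equalities yields $I[R] = MV[V]$. Since $MV[V]$ is part of the (for a given $F$, fixed) problem input and does not depend on the choice of $I$, the relation $R$ is identical across all valid instances $I$.

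The argument for $S$ is symmetric, using the view $U(Y,j) \leftarrow S(Y,j)$: its head vector consists of two distinct variables that exhaust the variables of its single subgoal, so $U(I) = I[S]$ for every ground $I$. The closed-world condition $U(I) = MV[U]$ then forces $I[S] = MV[U]$, and since $MV[U]$ is fixed, $S$ coincides on all valid instances. (If no valid $I$ existed the statement would hold vacuously, but validity has already been witnessed in the surrounding text by the core instance $I^{(1,1,\ldots,1)}(F)$.)

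There is essentially no hard step here; the only thing to verify carefully is the identity property of each of $V$ and $U$, namely that the head vector lists all and only the variables of the body, with no repetitions and no constants, so that no projection or selection takes place. The one point worth stressing is that the conclusion rests specifically on CWA: it is the exact equalities $V(I) = MV[V]$ and $U(I) = MV[U]$, which forbid the instance from either under- or over-producing these view answers, that collapse $R$ and $S$ to the single fixed relations $MV[V]$ and $MV[U]$. Note that the lemma deliberately says nothing about $P$, which is constrained only through the join view $W$ and is therefore free to vary; this is precisely what lets the surrounding hardness reduction encode the truth assignments to the variables $\bar Y$.
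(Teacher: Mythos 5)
Your proof is correct and is exactly the argument the paper has in mind: the paper dismisses this lemma with ``immediate from the definitions of the views $U$ and $V$,'' and what you have written is precisely that observation spelled out --- $V$ and $U$ are identity views on $R$ and $S$, so the CWA equalities $V(I)=MV[V]$ and $U(I)=MV[U]$ pin both relations to the fixed materializations. No divergence from the paper's route.
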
 

This result is immediate from the definitions of the views $U$ and $V$. The result of Lemma~\ref{pi-p-two-same-r-lemma} implies that in each ground instance $I$ of schema {\bf P} such that $\cal V$ $\Rightarrow_{I,\Sigma}$ $MV$, 
(i) the relation $R$ is the same in $I$ as the relation $R$ in the fixed instance $I^{(1,1,\ldots,1)}(F)$ defined above, and (ii) the relation $S$ is the same in $I$ as the relation $S$ in $I^{(1,1,\ldots,1)}(F)$. 

\begin{lemma} 
\label{pi-p-at-least-m-p-lemma} 
Given a 3-$CNF$ propositional formula $F$ 
and the instance $({\cal M}\Sigma,Q,{\bar t})$ for $F$. Then, for each ground instance $I$ of schema {\bf P} such that $\cal V$ $\Rightarrow_{I,\Sigma}$ $MV$ in the context of the instance $({\cal M}\Sigma,Q,{\bar t})$, the relation $P$ in $I$ has for each $j$ $\in$ $[1,$ $m]$ an atom of the form $P(e_j,j)$, with the constant $e_j$ $\in$ $\{ 1, 0 \}$.  
\end{lemma}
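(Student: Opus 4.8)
The plan is to extract, for each index $j \in [1,m]$, the required atom $P(e_j,j)$ directly from the closed-world constraints that the two views $W$ and $U$ impose on $I$. The key leverage is that $\cal V$ $\Rightarrow_{I,\Sigma}$ $MV$ means, in particular, that $W(I) = MV[W]$ and $U(I) = MV[U]$ as \emph{exact} equalities of relations under CWA, not merely containments. I would make no use of the dependencies here (recall $\Sigma = \emptyset$ in this construction); the argument rests entirely on the view definitions and on the concrete contents of $MV$ fixed by the reduction.

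First I would fix $j \in [1,m]$ and invoke the equality $W(I) = MV[W]$. Since $MV[W] = \{(1),\ldots,(m)\}$ contains the tuple $(j)$, we have $(j) \in W(I)$. By the definition $W(j) \leftarrow P(Y_j,j), S(Y_j,j)$, this membership witnesses the existence of a value $y$ (a constant, as $I$ is a ground instance) with both $P(y,j) \in I$ and $S(y,j) \in I$. This already yields a $P$-atom whose second argument is $j$; it remains only to argue that its first argument lies in $\{0,1\}$.

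Next I would use the view $U$ to constrain $y$. Since $U(Y,j) \leftarrow S(Y,j)$ simply mirrors the relation $S$, the fact $S(y,j) \in I$ forces $U(y,j) \in U(I)$. By the CWA equality $U(I) = MV[U]$, we obtain $U(y,j) \in MV[U]$. But the only tuples of $MV[U]$ whose second coordinate equals $j$ are $U(1,j)$ and $U(0,j)$, so $y \in \{0,1\}$. Setting $e_j := y$ then gives $P(e_j,j) \in I$ with $e_j \in \{0,1\}$, and since $j$ was arbitrary the claim follows for all $j \in [1,m]$.

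I do not expect a genuine obstacle in this argument; the one point that must be handled carefully is the exactness of the closed-world equalities. It is essential that $MV[U]$ contains no spurious tuple with second coordinate $j$ and first coordinate outside $\{0,1\}$ (which holds by construction of $MV[U]$), and that membership of $S(y,j)$ in $I$ propagates to $U(I)$ via the view definition. Both are immediate from the definitions of $U$, $W$, and the set $MV$ built in the reduction, so the proof reduces to this short chain of observations. This lemma, together with Lemma~\ref{pi-p-two-same-r-lemma}, will then let us read off from each relevant instance $I$ a well-defined truth assignment to $\bar Y$, which is the heart of the correctness of the $\forall\exists$-$CNF$ reduction.
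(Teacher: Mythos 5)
Your proof is correct and follows essentially the same route as the paper's: use the tuple $(j)$ in $MV[W]$ together with the definition of $W$ to extract a witness with $P(y,j)$ and $S(y,j)$ in $I$, then use the fact that $U$ mirrors $S$ and the CWA equality $U(I)=MV[U]$ to force $y\in\{0,1\}$. The only cosmetic difference is that the paper constrains \emph{all} $S$-atoms of $I$ via $MV[U]$ while you constrain only the specific witness, which is equally valid.
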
 

\begin{proof} 
This result is immediate from the definitions of the views $U$ and $W$ and from the specifications of the relations $MV[U]$ and $MV[W]$ in ${\cal M}\Sigma$ as constructed for the given formula $F$. Indeed, recall that $MV[W]$ $=$ $\{ (1), (2), \ldots, (m) \}$. By definition of the view $W$, for each $j$ $\in$ $[1$, $m]$ we have that the tuple $(j)$ in $MV[W]$ witnesses, in each ground instance $I$ of schema {\bf P} such that ${\cal V}$ $\Rightarrow_{I,\Sigma}$ $MV$, the presence of a ground $P$-atom $P(e_j,j)$ and of a ground $S$-atom $S(e_j,j)$ with the same arguments. Now consider an arbitrary ground atom $S(d,f)$ in any ground instance $I$ of schema {\bf P} such that ${\cal V}$ $\Rightarrow_{I,\Sigma}$ $MV$. By definition of the view $U$ and by the contents of the relation $MV[U]$, the value $d$ in this atom $S(d,f)$ must be one of $1$ and $0$, and the value $f$ in $S(d,f)$ must belong to the set $[1$, $m]$. Thus, via the definition of the view $W$ and the contents of the relation $MV[W]$ as discussed above, we obtain that each ground atom of the form $P(e_j,j)$ as above must have its value $e_j$ restricted to one of $1$ and $0$. The claim of the lemma follows. 
\end{proof}

\begin{lemma} 
\label{pi-p-two-homomorphism-lemma} 
Given a 3-$CNF$ propositional formula $F$ 
and the instance $({\cal M}\Sigma,Q,{\bar t})$ for $F$. Let $I$ be a ground instance of schema {\bf P} such that $\cal V$ $\Rightarrow_{I,\Sigma}$ $MV$ in the context of the instance $({\cal M}\Sigma,Q,{\bar t})$. Then there exists a core instance $K$ for $F$ such that there is an identity homomorphism from $K$ to $I$. 
\end{lemma}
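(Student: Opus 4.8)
The plan is to observe that since a core instance $K$ for $F$ is ground, an identity homomorphism from $K$ to $I$ is exactly the statement that every fact of $K$ is a fact of $I$, i.e., $K \subseteq I$. Thus the lemma reduces to exhibiting a truth assignment $(a_1,\ldots,a_m) \in \{0,1\}^m$ such that the core instance $I^{(a_1,\ldots,a_m)}(F)$ is contained in the given valid instance $I$. I would build this assignment one coordinate at a time, treating the $R$-facts, the $S$-facts, and the $P$-facts of the candidate core instance separately.

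First I would dispose of the $R$- and $S$-facts using Lemma~\ref{pi-p-two-same-r-lemma}. By construction, every core instance $I^{(a_1,\ldots,a_m)}(F)$ has the same $R$-relation (the content mirroring $MV[V]$) and the same $S$-relation (the fixed set $\{ S(1,j), S(0,j) : j \in [1,m]\}$), independently of the superscript; the superscript only affects the $P$-relation. By Lemma~\ref{pi-p-two-same-r-lemma}, the $R$- and $S$-relations of the valid instance $I$ coincide with those of $I^{(1,\ldots,1)}(F)$, hence with those of every core instance. Consequently the $R$-facts and the $S$-facts of \emph{any} core instance are already present in $I$, regardless of how the $a_j$ are later chosen.

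Next I would pin down the $P$-facts using Lemma~\ref{pi-p-at-least-m-p-lemma}. That lemma supplies, for each $j \in [1,m]$, a fact $P(e_j, j) \in I$ with $e_j \in \{0,1\}$. I would set $a_j := e_j$ for each $j$ and take $K := I^{(a_1,\ldots,a_m)}(F)$. By the definition of the core instances, $K$ is then one of the $2^m$ core instances for $F$; its $P$-part is $\{ P(a_j,j) : j \in [1,m]\} = \{ P(e_j,j) : j \in [1,m]\}$, which lies in $I$ by the choice of the $e_j$. Combined with the previous paragraph, every fact of $K$ lies in $I$, so the identity map is a homomorphism from $K$ to $I$, as required.

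The argument is short, and the only point deserving care is that $I$ may legitimately contain additional facts beyond those of $K$ -- for instance extra $P$-atoms whose second argument lies outside $[1,m]$, or both $P(0,j)$ and $P(1,j)$ for the same $j$. These cause no difficulty, since an identity homomorphism \emph{into} $I$ requires only that each $K$-fact appear in $I$, not that $I$ be exhausted by $K$; and when both $P(0,j)$ and $P(1,j)$ are present we are simply free to pick either value for $a_j$. Thus the essential work is the bookkeeping of invoking the two preceding lemmas, and I do not anticipate any genuine obstacle.
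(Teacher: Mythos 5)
Your proof is correct and follows exactly the route the paper intends: the $R$- and $S$-parts of every core instance are forced into $I$ by Lemma~\ref{pi-p-two-same-r-lemma} (together with the observation, made explicitly in the paper, that these parts do not depend on the superscript), and the $P$-part is obtained by choosing $a_j := e_j$ from Lemma~\ref{pi-p-at-least-m-p-lemma}, so that $K \subseteq I$ and the identity map is a homomorphism. The closing remark that $I$ may contain extra facts without harming the argument is also the right observation; no gaps.
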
 

\begin{lemma} 
\label{pi-p-two-valuation-lemma} 
Given a 3-$CNF$ propositional formula $F$ 
and the instance $({\cal M}\Sigma,Q,{\bar t})$ for $F$. Let $I$ be a ground instance of schema {\bf P} such that $\cal V$ $\Rightarrow_{I,\Sigma}$ $MV$ in the context of the instance $({\cal M}\Sigma,Q,{\bar t})$. Then for each valuation, $\mu$, from $Q$ to $I$, the image of all the $P$-subgoals of $Q$ under $\mu$ is the relation for $P$ in one of the core instances for $F$. 
\end{lemma}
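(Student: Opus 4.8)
The plan is to leverage the single structural feature of the reduction that the query $Q$ was engineered to have: each universal variable name $Y_j$ is shared between the $P$-subgoal $P(Y_j,j)$ and every $R$-subgoal that encodes a clause containing $Y_j$. This shared occurrence is exactly what will force $\mu(Y_j)$ to be a genuine truth value in $\{0,1\}$, and hence will let me identify the $\mu$-image of the $P$-subgoals with the $P$-relation of one specific core instance.

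First I would fix an instance $I$ with ${\cal V}$ $\Rightarrow_{I,\Sigma}$ $MV$ and a valuation $\mu$ from $Q$ to $I$. By Lemma~\ref{pi-p-two-same-r-lemma}, the relation $R$ in $I$ coincides with the fixed relation $R$ of $I^{(1,1,\ldots,1)}(F)$, which is the copy of $MV[V]$; by construction every tuple of $MV[V]$ records one of the seven satisfying assignments of some clause, so each of its first three arguments lies in $\{0,1\}$. Since $\mu$ sends each $R$-subgoal $R(Z_{i,1},Z_{i,2},Z_{i,3},i)$ of $Q$ to a tuple of $I[R]$ (with the constant fourth argument $i$ preserved), it follows at once that $\mu(Z_{i,k})$ $\in$ $\{0,1\}$ for every $i$ $\in$ $[1,l]$ and $k$ $\in$ $\{1,2,3\}$.

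Next I would invoke the design of $Q$ noted in the construction: each variable $Y_j$ ($j$ $\in$ $[1,m]$) that appears as the first argument of a $P$-subgoal also occurs inside the conjunction $\bigwedge_{i=1}^l R(Z_{i,1},Z_{i,2},Z_{i,3},i)$, i.e. $Y_j = Z_{i,k}$ for some $i,k$. Combining this with the previous step gives $\mu(Y_j)$ $\in$ $\{0,1\}$ for every $j$ $\in$ $[1,m]$. The $P$-subgoals of $Q$ are precisely $P(Y_1,1),\ldots,P(Y_m,m)$, so their image under $\mu$ is the set $\{\, P(\mu(Y_j),j) : j \in [1,m] \,\}$; writing $b_j := \mu(Y_j)$ $\in$ $\{0,1\}$, these $m$ atoms have pairwise distinct second arguments and therefore form exactly the $P$-relation $\{\, P(b_j,j) : j\in[1,m] \,\}$ of the core instance $I^{(b_1,\ldots,b_m)}(F)$. (Each such atom lies in $I[P]$, being a $\mu$-image, and the distinctness of second arguments upgrades the containment to equality.) This establishes the claim.

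I expect the genuine obstacle to be the second step, namely pinning down $\mu(Y_j)$ $\in$ $\{0,1\}$ rather than some spurious constant. This cannot be read off from the $P$-subgoal alone: a general valid instance $I$ may carry extra atoms $P(c,j)$ with $c$ $\notin$ $\{0,1\}$, and such atoms survive the $W$-view test because $S$ (mirroring $MV[U]$) never contains a matching $S(c,j)$, so $W$ produces no new index. Membership $\mu(Y_j)$ $\in$ $\{0,1\}$ therefore relies essentially on the shared occurrence of $Y_j$ inside an $R$-subgoal together with the $\{0,1\}$-valuedness of $I[R]$ from Lemma~\ref{pi-p-two-same-r-lemma}. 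Accordingly I would state explicitly the standing assumption, already built into the ``It follows that'' clause of the query construction, that every variable of $F$ (in particular every $Y_j$) occurs in at least one clause; this is without loss of generality for $\forall\exists$-$CNF$ and is exactly what guarantees that no $Y_j$ escapes the $R$-conjunction.
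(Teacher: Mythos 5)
Your proof is correct and takes essentially the same route the paper indicates: the paper's only stated justification for this lemma is the parenthetical reminder that each first argument of a $P$-subgoal of $Q$ also occurs among the first three arguments of some $R$-subgoal, and your argument is precisely the fleshing-out of that observation via Lemma~\ref{pi-p-two-same-r-lemma} and the $\{0,1\}$-valuedness of $MV[V]$. Your closing remark correctly identifies the implicit standing assumption (every $Y_j$ occurs in some clause, WLOG for $\forall\exists$-$CNF$) that the paper's ``It follows that'' clause relies on without stating.
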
 

(Toward the proof of Lemma~\ref{pi-p-two-valuation-lemma}, recall that  by design of the query $Q$, the first argument of each $P$-subgoal of the query $Q$ must also occur as one of the first three arguments of at least one $R$-subgoal of $Q$.) 

By the above lemmae and by the structure of all the ground instances $I$ of schema {\bf P} such that $\cal V$ $\Rightarrow_{I,\Sigma}$ $MV$, it must be that, in the context of the instance $({\cal M}\Sigma,Q,{\bar t})$: 

\mbox{} 

\noindent 
(*) For each $I$ such that $\cal V$ $\Rightarrow_{I,\Sigma}$ $MV$ and for each valuation, $\mu$, from the query $Q$ to $I$, the image $\mu(body_{(Q)})$ of the body of the query $Q$ under $\mu$ is a subset of one of the core instances for $F$. 

\mbox{} 

Specifically, by construction of the query $Q$, for any such core instance $K$ for $F$, all the ground atoms in the relation $P$ in $K$ must be present in the set $\mu(body_{(Q)})$. 
(Recall that in each core instance, $K$, for the input formula $F$, the relation $P$ intuitively represents exactly one specific assignment of values $1$ an $0$ to the $m$ $\geq$ $1$ variables $\bar Y$ of the formula $F$. Further, each specific assignment of values $1$ an $0$ to the $m$ $\geq$ $1$ variables $\bar Y$ of the formula $F$ is represented by a separate core instance for $F$.) 

As an illustration, consider the query $Q$ of our running example and the instance $I^{(1,1)}(F)$ of schema {\bf P} for that example. (Both the query $Q$ and the instance $I^{(1,1)}(F)$ for this running example have already been given in this proof.) Consider a mapping $\mu$ $=$ $\{$ $X_1$ $\rightarrow$ $1$, $X_2$ $\rightarrow$ $0$, $Y_1$ $\rightarrow$ $1$, $Y_2$ $\rightarrow$ $1$ $\}$. We can show that $\mu$ is a valuation from the query $Q$ to the instance $I^{(1,1)}(F)$. The image $\mu(body_{(Q)})$ of the body of the query $Q$ under the valuation $\mu$ includes all the ground atoms in the relation $P$ in the instance $I^{(1,1)}(F)$, that is both atoms $P(1,1)$ and $P(1,2)$  in $I^{(1,1)}(F)$. 

We now proceed to show that for the input formula $F$ and for the corresponding instance $({\cal M}\Sigma,Q,{\bar t})$ constructed for $F$ as above, the following two statements are equivalent: 
\begin{itemize} 
	\item[(I)] For each assignment of truth values $1$ and $0$ to the variables $\bar Y$ in the formula $F$, there exists an assignment of truth values $1$ and $0$ to the variables $\bar X$ in $F$ such that $F$ is true under these assignments; and  
	\item[(II)] The tuple $()$ is an answer to the query $Q$ on all the ground instances $I$ of schema {\bf P} such that $\cal V$ $\Rightarrow_{I,\Sigma}$ $MV$ in the context of the instance $({\cal M}\Sigma,Q,{\bar t})$. 
\end{itemize} 

Note that by the definition in Section~\ref{probl-stmt-defs-sec}, the statement (II) says that the tuple $()$ is a certain-answer tuple for the query $Q$ w.r.t. the setting ${\cal M}\Sigma$. Thus, once we show the equivalence of the statements (I) and (II), our proof of $\Pi^p_2$ hardness of the certain-query-answer problem for CQ weakly acyclic inputs with $\Sigma$ $=$ $\emptyset$ will be complete. 

We begin the proof of the equivalence of the statements (I) and (II) by making the following observation. Denote by ${\cal R}_{(Q)}$ the conjunction of the $R$-subgoals of the query $Q$ in  the instance $({\cal M}\Sigma,Q,{\bar t})$ for the given formula $F$. Further, denote by ${\cal P}_{(Q)}$ the conjunction of the $P$-subgoals of the query $Q$ in  the instance $({\cal M}\Sigma,Q,{\bar t})$. Consider any ground instance $I$ of schema {\bf P} such that $\cal V$ $\Rightarrow_{I,\Sigma}$ $MV$ in the context of the instance $({\cal M}\Sigma,Q,{\bar t})$. Let $\nu$ be any mapping of the set $\bar Z$ of the variables ($\bar X$ and $\bar Y$) of the formula $F$ into the set $\{ 0, 1 \}$. Similarly to the argument in \cite{AhoSU79} (as also used in the proof of Theorem 3.3 in \cite{MillsteinHF03}), we can show that any such mapping $\nu$ is a satisfying assignment for the formula $F$ if and only if $\nu({\cal R}_{(Q)})$ is a subset of the instance $I$. By Lemma~\ref{pi-p-two-same-r-lemma}, we have that any such mapping $\nu$ is a satisfying assignment for the formula $F$ if and only if $\nu({\cal R}_{(Q)})$ is a subset of {\em each} ground instance $J$ of schema {\bf P} such that $\cal V$ $\Rightarrow_{J,\Sigma}$ $MV$ in the context of the instance $({\cal M}\Sigma,Q,{\bar t})$. From the above reasoning and from Lemmae~\ref{pi-p-at-least-m-p-lemma}--\ref{pi-p-two-homomorphism-lemma}, we obtain the following result of Lemma~\ref{five-lemma}. 

We first introduce some notation. In the remainder of the proof of Theorem~\ref{complexity-hardness-thm}, let $\mu^{(a_1,\ldots,a_m)}_{(\bar Y)}$, with $a_j$ $\in$ $\{ 0,1 \}$ for each $j$ $\in$ $[1,$ $m]$, denote the mapping from the set of variables $\bar Y$ of the formula $F$ to the set $\{ 0,1 \}$, such that $\mu^{(a_1,\ldots,a_m)}_{(\bar Y)}(Y_j)$ $=$ $a_j$ for each $j$ $\in$ $[1,$ $m]$. Further, let $\mu_{(\bar X)}$ denote a mapping from the set of variables $\bar X$ of the formula $F$ to the set $\{ 0,1 \}$. 

\begin{lemma} 
\label{five-lemma} 
Given a 3-$CNF$ formula $F$ with $m$ $\geq$ $1$ variables $\bar Y$ and with $n$ $\geq$ $0$ variables $\bar X$, let $(a_1,\ldots,a_m)$ be an arbitrary $m$-tuple such that $a_j$ $\in$ $\{ 0,1 \}$ for each $j$ $\in$ $[1,$ $m]$. Let $\mu_{(\bar X)}$ be an arbitrary mapping from the set of variables $\bar X$ of the formula $F$ to the set $\{ 0,1 \}$. Let $({\cal M}\Sigma,Q,{\bar t})$ be  the instance that we have constructed for the formula $F$ as above. Finally, let $I$ be a ground instance of the schema {\bf P}, such that $\cal V$ $\Rightarrow_{I,\Sigma}$ $MV$ in the context of the instance $({\cal M}\Sigma,Q,{\bar t})$, and such that the set $\{ P(a_1,1),\ldots,P(a_m,m) \}$  is a subset of the instance $I$. 

Then the following two statements are equivalent: 
\begin{itemize} 

	\item The assignment $\mu^{(a_1,\ldots,a_m)}_{(\bar Y)}$ $\cup$ $\mu_{(\bar X)}$ of the variables of the formula $F$ to elements of the set $\{ 0,1 \}$ is a satisfying assignment for the formula $F$; and 

	\item The empty tuple $()$ is in the answer to the query $Q$ on the instance $I$ due to the valuation $\mu^{(a_1,\ldots,a_m)}_{(\bar Y)}$ $\cup$ $\mu_{(\bar X)}$ from $body_{(Q)}$ to $I$. 

\end{itemize} 
\vspace{-0.5cm} 
\end{lemma}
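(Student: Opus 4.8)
The plan is to prove the two-way equivalence by reducing membership of the empty tuple in $Q(I)$ to a purely clause-by-clause satisfaction check, exploiting that the $R$-relation of $I$ encodes exactly the satisfying assignments of each clause. Throughout, write $\nu$ for the combined assignment $\mu^{(a_1,\ldots,a_m)}_{(\bar Y)} \cup \mu_{(\bar X)}$, which maps every variable of $\bar Z = \bar X \cup \bar Y$ into $\{0,1\}$. Since $()$ is the head of the Boolean query $Q$, the statement ``$()$ is in $Q(I)$ due to the valuation $\nu$'' is by definition equivalent to ``$\nu$ is a valuation from $body_{(Q)}$ to $I$,'' i.e.\ $\nu$ maps each subgoal of $Q$ to a fact of $I$. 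So the whole lemma amounts to showing that $\nu$ maps $body_{(Q)}$ into $I$ if and only if $\nu$ satisfies $F$.

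First I would dispose of the $P$-subgoals. The conjunction ${\cal P}_{(Q)}$ consists of the atoms $P(Y_j,j)$ for $j \in [1,m]$, and by construction $\nu(Y_j) = \mu^{(a_1,\ldots,a_m)}_{(\bar Y)}(Y_j) = a_j$. Hence $\nu({\cal P}_{(Q)}) = \{ P(a_1,1),\ldots,P(a_m,m) \}$, which is a subset of $I$ precisely by the hypothesis imposed on $I$ in the statement of the lemma. Thus $\nu$ always maps every $P$-subgoal into $I$, and the question of whether $\nu$ is a valuation from $body_{(Q)}$ to $I$ collapses to whether $\nu({\cal R}_{(Q)}) \subseteq I$, where ${\cal R}_{(Q)}$ is the conjunction of the $R$-subgoals $R(Z_{i,1},Z_{i,2},Z_{i,3},i)$.

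The heart of the argument is the already-noted equivalence, namely that $\nu$ satisfies $F$ iff $\nu({\cal R}_{(Q)}) \subseteq I$, established by the Aho--Sagiv--Ullman--style reasoning of \cite{AhoSU79,MillsteinHF03} in the paragraph preceding the lemma and pinned down to this particular $I$ by Lemma~\ref{pi-p-two-same-r-lemma}. Concretely, for each clause $C_i$ the relation $R$ in $I$ contains exactly the seven tuples $(b_1,b_2,b_3,i)$ whose first three components form a satisfying assignment of $C_i$ (the single falsifying triple being omitted); and since Lemma~\ref{pi-p-two-same-r-lemma} guarantees that $R$ is the same across all $I$ with $\cal V$ $\Rightarrow_{I,\Sigma}$ $MV$, namely the $R$-relation of $I^{(1,1,\ldots,1)}(F)$, this holds for our $I$. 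Therefore $R(\nu(Z_{i,1}),\nu(Z_{i,2}),\nu(Z_{i,3}),i) \in I$ iff $\nu$ satisfies clause $C_i$, and ranging over $i \in [1,l]$ gives $\nu({\cal R}_{(Q)}) \subseteq I$ iff $\nu$ satisfies every clause, i.e.\ iff $\nu$ satisfies $F$. Chaining this with the reduction of the previous paragraph yields the desired equivalence in both directions.

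I expect the reduction itself to be routine; the one point requiring care is keeping the roles of the two halves of $\nu$ straight. The $\bar Y$-part is frozen to the tuple $(a_1,\ldots,a_m)$ and is what forces the $P$-subgoals into $I$ via the lemma's hypothesis on $I$, whereas the $\bar X$-part is free and influences only the $R$-subgoals. Making explicit that the $P$-subgoals impose no constraint beyond selecting which core instance we are in (so that satisfaction of $F$ is decided entirely by the $R$-subgoals through the clause encoding) is the main obstacle, and it is exactly the content that Lemmae~\ref{pi-p-at-least-m-p-lemma}--\ref{pi-p-two-homomorphism-lemma} were set up to supply.
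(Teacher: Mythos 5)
Your proposal is correct and follows essentially the same route the paper takes: the paper derives Lemma~\ref{five-lemma} directly from the Aho--Sagiv--Ullman-style observation that $\nu$ satisfies $F$ iff $\nu({\cal R}_{(Q)})\subseteq I$ (pinned down by Lemma~\ref{pi-p-two-same-r-lemma}) together with the fact that the $P$-subgoals are absorbed by the hypothesis $\{P(a_1,1),\ldots,P(a_m,m)\}\subseteq I$. Your writeup merely makes explicit the decomposition of $body_{(Q)}$ into ${\cal P}_{(Q)}$ and ${\cal R}_{(Q)}$ that the paper leaves implicit.
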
  

We are now ready to show the equivalence of the statements (I) and (II) as formulated above. 

{\em (I) $\rightarrow$ (II):} Suppose that for each $m$-tuple  $(a_1,\ldots,a_m)$, such that $a_j$ $\in$ $\{ 0,1 \}$ for each $j$ $\in$ $[1,$ $m]$, we have that there exists a mapping $\mu_{(\bar X)}$ from the set of variables $\bar X$ of the formula $F$ to the set $\{ 0,1 \}$, such that 

\begin{tabbing} 
$\mu^{(a_1,\ldots,a_m)}_{(\bar Y)}$ $\cup$ $\mu_{(\bar X)}$ 
\end{tabbing} 

\noindent 
is a satisfying assignment for the formula $F$. Fix an arbitrary ground instance $I$ of the schema {\bf P} such that $\cal V$ $\Rightarrow_{I,\Sigma}$ $MV$. Then, by Lemmae~\ref{pi-p-at-least-m-p-lemma} and~\ref{five-lemma}, the empty tuple $()$ is in the relation $Q(I)$. 

{\em (II) $\rightarrow$ (I):} Consider the set $\cal K$ of the $2^m$ core instances (of schema {\bf P}) for the formula $F$. By construction of the set $\cal K$,  for each $m$-tuple  $(a_1,\ldots,a_m)$ such that $a_j$ $\in$ $\{ 0,1 \}$ for each $j$ $\in$ $[1,$ $m]$, there exists an instance  $K$ $\in$ $\cal K$ such that the set $\{ P(a_1,1),\ldots,P(a_m,m) \}$  is a subset of the instance $K$, and the relation $K[P]$ has {\em no other tuples.}  

Fix an arbitrary instance $K$ $\in$ $\cal K$; the relation $K[P]$ $=$ $\{ P(a_1,1),$ $\ldots,$ $P(a_m,m) \}$ specifies a particular $m$-tuple  $(a_1,\ldots,a_m)$ such that $a_j$ $\in$ $\{ 0,1 \}$ for each $j$ $\in$ $[1,$ $m]$. By our assumption (II), there exists a mapping $\mu_{(\bar X)}$ from the set of variables $\bar X$ of the formula\footnote{Recall that the set $\bar Z$ $=$ $\bar X$ $\cup$ $\bar Y$ is the set of all variables of the formula $F$, and is also the set of all variables of the query $Q$ in $({\cal M}\Sigma,Q,{\bar t})$.} $F$ to the set $\{ 0,1 \}$, such that 

\begin{tabbing} 
$\mu(K)$ $=$ $\mu^{(a_1,\ldots,a_m)}_{(\bar Y)}$ $\cup$ $\mu_{(\bar X)}$ 
\end{tabbing} 

\noindent 
is a valuation from the query $Q$ to the instance $K$ that produces the empty tuple $()$ in the relation $Q(K)$. Thus, by  Lemma~\ref{five-lemma}, the mapping $\mu(K)$ of the variables of the formula $F$ to the set $\{ 0,1 \}$ is a satisfying assignment for the formula $F$. The claim of (I) follows from the observation (made above) that for each $m$-tuple  $(a_1,\ldots,a_m)$ such that $a_j$ $\in$ $\{ 0,1 \}$ for each $j$ $\in$ $[1,$ $m]$, there exists an instance  $K$ $\in$ $\cal K$ such that the set $\{ P(a_1,1),\ldots,P(a_m,m) \}$  is in the instance $K$. This completes the proof of Theorem~\ref{complexity-hardness-thm}. 
%
\end{proof}

\newpage 

\nop{ 

\section{Correctness and Complexity} 

\subsection{Correctness of the Approach} 

\subsection{Complexity of the Approach} 

\subsection{Layering Not Possible} 
	
{\bf [[[ Done ]]]} Appendix~\ref{chase-not-staged-for-view-verif-app}: The problem of finding all certain-answer tuples to a CQ query w.r.t. a CQ weakly acyclic materialized-view setting: The point of this appendix is that chase in view-verified data exchange cannot always be done in {\em stages/layers.} 

\section{Finding Certain Answers W.r.t. a Materialized-View Setting}


\section{Cannot Show Chase-Staging for the Case Where We Chase with $\Sigma_{(\neq)}$ First?} 

The example below is an incorrect example of how chase with $\Sigma_{(\neq)}$ followed with chase with $MV$-induced dependencies does not, in general, yield a correct containment conclusion for ${\cal M}\Sigma$-conditional containment. I am assuming at this point that it may be ok to chase $Q'_1$ with just $\Sigma_{(\neq)}$ and then with just the $MV$-induced dependencies, to obtain {\em correct} conclusions!!! about the ${\cal M}\Sigma$-conditional containment of the input queries

\begin{example} 
\label{egd-nonlayered-ex} 
Let schema {\bf P} have binary relation symbols $P$, $R$, and $S$. Consider an egd  $\sigma$ and three view definitions: 

\begin{tabbing} 
$\sigma: P(X,Y) \wedge P(X,Z) \rightarrow Y = Z.$ \\ 
$U(X) \leftarrow P(Z,Y), S(Y,X).$ \\ 
$V(X) \leftarrow P(X,Y).$ \\ 
$W(X) \leftarrow P(Z,Y), R(Y,X).$ 
\end{tabbing} 

Finally, for the set  $\{ U,V,W \}$ of the above views, let the set of view answers $MV$ be $\{ U(g), V( c ), W(f) \}$. Then the materialized-view setting $(${\bf P}, $\{ \sigma \}$, $\{ U,V,W \}$, $MV)$, which we denote by ${\cal M}\Sigma$, is CQ weakly acyclic.

Now let $Q_1$ and $Q_2$ be two CQ queries, as follows. 

\begin{tabbing} 
$Q_1(X) \leftarrow P(X,Y).$ \\ 
$Q_2(X) \leftarrow P(X,W), R(W,Z), S(W,Y).$ 
\end{tabbing} 

It is easy to show that neither $Q_1 \sqsubseteq Q_2$ nor $Q_1 \sqsubseteq_{\{ \sigma \}} Q_2$ holds. (Please see Section~\ref{prelim-sec}.) However, we can show that  $Q_1 \sqsubseteq_{{\cal M}\Sigma} Q_2$ does hold. 
The result of the chase of $Q_1$ in our approach is the following CQ query:  

\begin{tabbing} 
$(Q_1)^{{\cal M}\Sigma}(c) \leftarrow P(c,Y), R(Y,f), S(Y,g).$ 
\end{tabbing} 

We can then determine that $Q_1$ $\sqsubseteq_{{\cal M}\Sigma}$ $Q_2$ holds, by using the results of \cite{ChandraM77} to check that the unconditional containment $(Q_1)^{{\cal M}\Sigma} \sqsubseteq Q_2$ holds.   
\end{example}

\begin{example} 
\label{extended-egd-nonlayered-ex} 
This is a detailed version of Example~\ref{egd-nonlayered-ex}. Let schema {\bf P} have binary relation symbols $P$, $R$, and $S$. Consider an egd  $\sigma$ and three view definitions: 

\begin{tabbing} 
$\sigma: P(X,Y) \wedge P(X,Z) \rightarrow Y = Z.$ \\ 
$U(X) \leftarrow P(Z,Y), S(Y,X).$ \\ 
$V(X) \leftarrow P(X,Y).$ \\ 
$W(X) \leftarrow P(Z,Y), R(Y,X).$ 
\end{tabbing} 

Finally, for the set  $\{ U,V,W \}$ of the above views, let the set of view answers $MV$ be $\{ U(g), V( c ), W(f) \}$. Then the materialized-view setting $(${\bf P}, $\{ \sigma \}$, $\{ U,V,W \}$, $MV)$, which we denote by ${\cal M}\Sigma$, is CQ weakly acyclic. 

Now let $Q_1$ and $Q_2$ be two CQ queries, as follows. 

\begin{tabbing} 
$Q_1(X) \leftarrow P(X,Y).$ \\ 
$Q_2(X) \leftarrow P(X,W), R(W,Z), S(W,Y).$ 
\end{tabbing} 

It is easy to show that neither $Q_1 \sqsubseteq Q_2$ nor $Q_1 \sqsubseteq_{\{ \sigma \}} Q_2$ holds. (Please see Section~\ref{prelim-sec}.) However, we can show that  $Q_1 \sqsubseteq_{{\cal M}\Sigma} Q_2$ does hold. 
The result of the chase of $Q_1$ in our approach is the following CQ query:  

\begin{tabbing} 
$(Q_1)^{{\cal M}\Sigma}(c) \leftarrow P(c,Y), R(Y,f), S(Y,g).$ 
\end{tabbing} 

We can then determine that $Q_1$ $\sqsubseteq_{{\cal M}\Sigma}$ $Q_2$ holds, by using the results of \cite{ChandraM77} to check that the unconditional containment $(Q_1)^{{\cal M}\Sigma} \sqsubseteq Q_2$ holds.    

We now provide the details of chasing the query   $Q_1$ using both the given dependencies, $\{ \sigma \}$, on the schema {\bf P}, and the $MV$-induced dependencies. The process of obtaining the query $(Q_1)^{{\cal M}\Sigma}$ from the query $Q_1$ via this process can be represented using four stages, as follows: 

{\em Stage I:} We first add to the body of the query $Q_1$ the conjunction ${\cal C}^{exp}_{MV}$ $=$ $P(D,H) \wedge S(H,g) \wedge P(c,A) \wedge P(Z,B) \wedge R(B,f)$: 

\begin{tabbing} 
hebu hoh \= koko \kill 
$Q_1'(X) \leftarrow P(X,Y), P(D,H), S(H,g),$ \\ 
\> $P(c,A), P(Z,B), R(B,f).$ 
\end{tabbing} 

{\em Stage II:} In this stage, we choose to chase the query $Q'_1$ with the following modification, $\sigma'$, of the dependency $\sigma$ (on the schema {\bf P} in ${\cal M}\Sigma$): 

\begin{tabbing} 
$\sigma': P(X,Y) \wedge P(W,Z) \rightarrow (X = W \wedge Y = Z) \vee (X \neq W).$ 
\end{tabbing} 

Applying $\sigma'$ to the query $Q'_1$ results in a $UCQ^{\neq}$ query with many $CQ^{\neq}$ components. For instance, if we choose to apply $\sigma'$ first to the two first subgoals ($P(X,Y)$ and $P(D,H)$) of $Q'_1$, then we would initially obtain 

\begin{tabbing} 
hebu hoh heh \= koko \kill 
$Q^{(12a)}_1(X) \leftarrow P(X,Y), S(Y,g), P(c,A), P(Z,B), R(B,f).$ \\ 
$Q^{(12b)}_1(X) \leftarrow P(X,Y), P(D,H), S(H,g),$ \\ 
\> $P(c,A), P(Z,B), R(B,f), X \neq D.$ 
\end{tabbing} 

Then further applications of $\sigma'$ to the resulting queries would yield, among other $CQ^{\neq}$ queries: 

\begin{tabbing} 
hebu hoheh \= koko \kill 
$Q^{(a)}_1(X) \leftarrow P(X,Y), S(Y,g), P(c,A), P(Z,B), R(B,f),$ \\ 
\>  $X \neq c, X \neq Z, Z \neq c.$ \\ 
$Q^{(b)}_1(c ) \leftarrow P(c,Y), S(Y,g), R(Y,f).$ 
\end{tabbing} 

\noindent 
The dependency $\sigma'$ does not apply to either $Q^{(a)}_1$ or $Q^{(b)}_1$. 

In summary, chase of the query $Q'_1$ with the dependency $\sigma'$ enforces all combinations of pairwise equations and disequations on the terms $X$, $D$, $c$, and $Z$, as each of these four terms is the first argument of one of the $P$-subgoals of the query $Q'_1$. Whenever an equation is forced on any two of these terms, the second arguments of the respective $P$-atoms are also equated by the dependency $\sigma'$.  (We then remove duplicates of the resulting identical $P$-atoms.) 

The final outcome of this chase of $Q'_1$ with the dependency $\sigma'$ is a $UCQ^{\neq}$ query $Q''_1$, to which $\sigma'$ does not apply. One of the $CQ^{\neq}$ components of $Q''_1$ is the CQ query $Q^{(b)}_1$ above; every other $CQ^{\neq}$ component of $Q''_1$ has at least one of the following conjunctions: $P(X, \alpha) \wedge X \neq c$, $P(D, \alpha) \wedge D \neq c$, and $P(Z, \alpha) \wedge Z \neq c$. (In all cases here, $\alpha$ is one of the original body variables of the query $Q'_1$. We leave $\alpha$ underspecified here to avoid listing a larger number of conjunctions.) This description of the $UCQ^{\neq}$ query $Q''_1$ is sufficient for the purpose of this example, and thus permits us to avoid presenting here all the $CQ^{\neq}$ components of $Q''_1$. 

{\bf [[[ Stopped here F06/21/13 ]]]} 

\begin{tabbing} 
hebu hoh \= koko \kill 
$Q_1''(X) \leftarrow P(X,Y), P(D,H), S(H,g),$ \\ 
\> $P(c,A), P(Z,B), R(B,f).$ 
\end{tabbing}

{\em Stage III:} In this stage, we choose to chase the query $Q''_1$ using the following $MV$-induced dependencies $\tau_U$, $\tau_V$, and $\tau_W$. (Intuitively, each of the egds $\tau_U$, $\tau_V$, and $\tau_W$ below arises from the ground fact for the respective view in the given instance $MV$.) 

\begin{tabbing} 
hehe boo heh hoho mum \= doo \kill 
$\tau_U: P(Z,Y) \wedge S(W,X)$ \> $\rightarrow (X = g) \vee (Y \neq W).$ \\ 
$\tau_V: P(X,Y)$ \> $\rightarrow X = c.$ \\ 
$\tau_W: P(Z,Y) \wedge R(W,X)$ \> $\rightarrow (X = f) \vee (Y \neq W).$ 
\end{tabbing} 

Applying these three dependencies to the query $Q''_1$ results in the following query: 

\begin{tabbing} 
hebu hoh \= koko \kill 
$Q_1'''( c) \leftarrow P(c,Y), P(c,H), S(H,g),$ \\ 
\> $P(c,A), P(c,B), R(B,f).$ 
\end{tabbing} 

\noindent 
The difference between the queries $Q''_1$ and $Q'''_1$ is that an application of the egd $\tau_V$ to all the $P$-subgoals of the query $Q''_1$ turns the first argument of each of them into the constant $c$. None of the dependencies $\tau_U$, $\tau_V$, and $\tau_W$ is applicable to the query $Q'''_1$. (Note that disequalities do not get added to the query in the application of the chase steps. The reason is, the second argument of the only $S$-subgoal of the query equals the constant $g$, and the second argument of the only $R$-subgoal of the query equals the constant $f$. As a result, each of $\tau_U$ and $\tau_W$ is satisfied by each of $Q''_1$ and $Q'''_1$.) 

Note that if we stop here, unconditional containment of the query $Q'''_1$ in $Q_2$ does not hold. (The reason is, the body of $Q'''_1$  does not have any pattern of three subgoals with predicates $P$, $R$, and $S$, such that the body of the query $Q_2$ would subsume the pattern.)  

{\em Stage IV:} In this stage, we choose to chase the query $Q'''_1$ with the dependency $\sigma$ (on the schema {\bf P} in ${\cal M}\Sigma$); the outcome of the chase is a CQ query $(Q_1)^{{\cal M}\Sigma}$, see below (same as shown earlier in this example). The chase steps with $\sigma$ apply to each $P$-subgoal of the query $Q'''_1$, and turn the second argument of all of them into the same variable (we have chosen $Y$). 

\begin{tabbing} 
hebu hoh \= koko \kill 
$(Q_1)^{{\cal M}\Sigma}( c) \leftarrow P(c,Y), S(Y,g), R(Y,f).$ 
\end{tabbing} 

Chase with $\sigma$ or with the $MV$-induced dependencies does not apply to this query $(Q_1)^{{\cal M}\Sigma}$. It is easy to see that the query $(Q_1)^{{\cal M}\Sigma}$ is unconditionally contained in the query $Q_2$. We conclude that the given query $Q_1$ is ${\cal M}\Sigma$-conditionally contained in 
$Q_2$. 
\end{example}

\section{Oshmetki} 

Abstract:

\begin{itemize} 

	\item We consider two problems, cert answ and zmendelz 
	
	\item Say upfront that I do CWA (as opposed to OWA) 
	
	\item I also do dependencies 
	
	\item we are motivated by the security applications of these problems; 
	
	as a constructive algorithm for the security applications, we also consider the problem of finding all certain answers w.r.t. materialized-view setting 
	
	\item we build on \cite{AbiteboulD98} and \cite{ZhangM05} 
	
	\item we show tight relationship between the problems   

	\item our contributions: 
	
	\begin{itemize} 
	
		\item sound and complete algorithms for all three problems, for cases where all the queries (incuidng view definitions) are conjunctive, and where the given dependencies are weakly acyclic 
		
		\item complexity 
	
	\end{itemize} 

\end{itemize} 

{\bf [[[ Can I extend to {\em unions} of CQ queries, the way it is done in \cite{StoffelS05}? ]]]}

{\bf [[[ In appendix, must show how to formally solve the problem instance of Example~\ref{intro-main-three-ex} in three ways: 
\begin{itemize} 
	\item via my query chasing 
	\item via my CWA-based instance chasing (data exchange) 
	\item via the OWA-based instance chasing of \cite{BrodskyFJ00,StoffelS05} 
\end{itemize} 
 ]]]} 
 
 {\bf [[[ On extensions to more expressive query and view languages: Say that even under OWA and in the absence of dependencies, it is known \cite{Meyden93} that when views are in $UCQ$, then even for $CQ$ queries the certain-query-answer problem becomes coNP hard. ]]]} 

{\bf [[[ Must also discuss the {\em bag-semantics} implications of my contributions: Cannot be solved unless the bag-containment problem is solved? ]]]}

\subsection{Flow for the list of contributions} 

\begin{enumerate} 
	\item Sound and complete algorithm for CWA/CQ/weakly-acyclic-embedded-deps ZMendelzon  
	
	{\bf [[[ Abstract of \cite{ZhangM05}: A recent proposal for database access control consists of defining ``authorization views'' that specify the accessible data, and declaring a query valid if it can be completely rewritten using the views. Unlike traditional work in query rewriting using views, the rewritten query needs to be equivalent to the original query only over the set of database states that agree with a given set of materializations for the authorization views. With this motivation, we study conditional query containment, i.e., containment over states that agree on a set of materialized views. We give an algorithm to test conditional containment of conjunctive queries with respect to a set of materialized conjunctive views. We show the problem is ?p2 -complete. Based on the algorithm, we give a test for a query to be conditionally authorized given a set of materialized authorization views. ]]] }

	(need to point out that the full version of \cite{ZhangM05} has never been published, thus we had to develop all the proofs from scratch) 

In our approach, we use disjunctive egds and introduce (disjunctive) neq-dependencies 

feature: {\em interleaving} of $\Sigma$-chase steps with $MV$ -chase steps (without the interleaving, the chase outcome may be incorrect)

	\item Showed that in presence of {\em CQ} view definitions, CWA certain answers plus embedded dependencies is a special case of CWA ZMendelzon plus embedded dependencies 
	
	--- thus, the former problem could be solved using the above algorithm (which we had developed for the latter problem) 
	
	\item Sound and complete algorithm for the special case of certain answers (CWA/CQ/weakly-acyclic-embedded-deps) 
	
	--- the algorithm does chase on instances (as opposed to on queries) 
	
	--- and does not use neq-dependencies 
	
	--- and checks for validity of $MV$ in the problem input (a special case of this effect was already observed in \cite{AbiteboulD98} for their conditional-tables-based algorithm for CWA dependency-free) 
	
	--- still, the {\em interleaving} feature (i.e., need for interleaving of $\Sigma$-chase steps with $MV$ -chase steps) remains even in this version

	\item Complexity of each of the two problems: under the CWA/CQ/weakly-acyclic-embedded-deps setting and using the (database-security-relevant) complexity measure of \cite{ZhangM05} (rather than the complexity measures of \cite{Vardi82}, which were used in \cite{AbiteboulD98}): 
	
	--- We use, both for the query-containment problem and for the problem of finding certain query answers under CWA, a complexity metric given by Zhang and Mendelzon in \cite{ZhangM05}s.  The complexity metric of \cite{ZhangM05} considers fixed both the back-end database schema {\em and} the view definitions, while treating as problem inputs both the set of view answers {\em and} the secret query. (The paper \cite{ZhangM05} does not consider dependencies on the back-end schema.) 
While not used in \cite{AbiteboulD98}, the complexity metric of \cite{ZhangM05} is appropriate for database-security problems. Indeed, access-control views are associated with (typically) classes of users, and stay unchanged for long periods of time. At the same time, secret queries can be different for different system needs, user characteristics, or sets of view answers.

	--- still $\Pi^p_2$ membership for the (more general) ZMendelzon problem (the special case {\em without} dependencies was already shown to be $\Pi^p_2$ complete in \cite{ZhangM05}) 
	
	--- and (rather surprisingly)  $\Pi^p_2$ {\em hardness} even for the special case of finding certain query answers and even in the absence of dependencies

\end{enumerate}

\subsection{Flow for the remaining sections of the paper}  

\begin{enumerate} 

	\item Abstract plus Sections 1-3 plus citations {\bf (5.5 pages???: Abstract and Sec 1 would be (at least) two pages together, Sec 2 (plus all citations) will be (at least) one and a half pages, Sec 3 two pages )} 
	\item Section 4: Statement of both problems: {\bf (1.5 pages???)} 

		\begin{enumerate} 
			\item Sec 4.1: CWA \& dependencies: certain query answers 
			
				\begin{enumerate} 
					\item was introduced (with dependencies) under OWA in \cite{StoffelS05,BrodskyFJ00} 
					
					\item was solved for CWA/OWA dependency free in \cite{AbiteboulD98} (full dependencies suggested as extension)  

				\end{enumerate} 
				
			\item Sec 4.2: CWA \& dependencies: query containment: 
			
			We are the first to provide this problem formulation (where dependencies are involved) 
			
			\item Sec 4.3: Proving that the certain-query-answers problem is a special case of the query-containment problem 
			
			(in \cite{AbiteboulD98} a similar relationship was already observed, for the case of OWA and dependency-free)

		\end{enumerate} 
	
	\item Section 5: Solution for the certain-answers problem {\bf (2.5 pages???)} 
	
	{\bf [[[ May need to revamp for lack of space, by skipping this Section 5 and by putting  into appendix all its details (including the old section on incompleteness, under CWA, of the {\em pure data-exchange} (OWA) approach of \cite{BrodskyFJ00,StoffelS05}) ]]]} 

		\begin{enumerate} 
			\item Sec 5.1: Reviewing dexchg basics and the OWA solution of \cite{BrodskyFJ00,StoffelS05}; point out incompleteness (for CWA) of the OWA solution of \cite{BrodskyFJ00,StoffelS05} --- show via Example~\ref{intro-main-three-ex} (in this example, the OWA approach returns the empty set of answers to the secret query {\tt Q}, whereas {\em our} CWA approach returns at least one certain answer to the {\tt Q} of that example)  
			
			\item Sec 5.2: Introduce $MV$-induced dependencies  
			
			\item Sec 5.3: Introduce our CWA-based algorithm for finding certain answers in presence of dependencies 

			\item Sec 5.4: In our CWA-based algorithm, $MV$-chase and $\Sigma$-chase steps must be interleaved to achieve correct solution (i.e., they cannot be partitioned into separate chase layers) 

		\end{enumerate} 

	\item Section 6: Our solution for the ($\Sigma$-extended) query-containment problem of \cite{ZhangM05} 
	
	--- need to point out that the full version of \cite{ZhangM05} has never been published, thus we had to develop all the proofs from scratch 

		\begin{enumerate} 
			\item Sec 6.1: Adding {\em neq}-dependencies, chasing on {\em query} (as opposed to instance), intuition for generating $Q_1'''$ 
			
			\item Sec 6.2: The algorithm + $MV$-chase and $\Sigma$-chase must be interleaved for correctness (i.e., they cannot be partitioned into separate chase layers) 
			
			\item Sec 6.3: The {\em certain-answers} problem can thus be solved by chasing on {\em query} (as opposed to instance) as well 

		\end{enumerate} 

	\item Section 7: Correctness, validity, complexity (of the algorithms) 

		\begin{enumerate} 
			\item Sec 7.1: The approach for query containment: 
			
				\begin{enumerate} 
					\item is an {\em algorithm} (i.e., terminates) for CQ + weakly acyclic 
					
					\item mention that, as explained in \cite{ZhangM05}, can possibly also extend to $CQ^{\neq}$ queries (or even $UCQ^{\neq}$ as well) 

				\end{enumerate} 

			\item Sec 7.2: Soundness and completeness (for CQ + weakly acyclic) of the algorithm for query containment 
			
			--- mention that we have proved separately (put the details into appendix) correctness (i.e., algorithmness/soundness/completeness) of the algorithm 
			
			\item Sec 7.3: Can also use the ({\em NB! instance}-chasing, as opposed to {\em query}-chasing!!!, because I do not need any queries for determining setting validity!!!) algorithm to determine $MV$-{\em validity} of the input setting, both for the problem of finding certain query answers and for the problem of query containment w.r.t. $MV$!  
			
			\item Sec 7.4: 
			
				\begin{enumerate} 
					\item $\Pi^p_2$ membership for the general CWA-based query containment in presence of dependencies (for CQ + weakly acyclic) 
				
					\item $\Pi^p_2$ hardness, NB! even for the special case of (CQ) certain query answering, even in the absence of dependencies  

					\item conclusion: $\Pi^p_2$ completeness, for each of the two problems, and for each of the two problems both in presence of dependencies and in the absence of dependencies 

				\end{enumerate} 

		\end{enumerate} 

\end{enumerate} 

\section{Plan for the Paper} 

\begin{enumerate} 
	\item Why new setting: because we add (both to the setting of \cite{AbiteboulD98} and to the setting of \cite{ZhangM05}) dependencies (egds and tgds) $\Sigma$ on the schema $\cal P$; 
	
	actually, \cite{StoffelS05,BrodskyFJ00} already added egds and tgds to the definition of \cite{AbiteboulD98} (and  \cite{StoffelS05} also used data exchange in solving the problem); however, \cite{StoffelS05,BrodskyFJ00} both solved  the OWA version of the $\Sigma$-added problem, whereas we in this paper solve the CWA version of the same problem. 
	
	What is different in what we do: 
	
	\begin{enumerate} 
	
		\item For the problem of \cite{AbiteboulD98}: We solve the (extension-of-\cite{AbiteboulD98}-via-$\Sigma$) problem of \cite{StoffelS05} using {\em CWA} rather than OWA; and 
		
		\item For the problem of \cite{ZhangM05}: We follow \cite{ZhangM05} in that we use CWA, which is the same as what \cite{ZhangM05} do; 
		
		at the same time, we add the presence of embedded dependencies to the \cite{ZhangM05} definition of query containment. 
	
	\end{enumerate} 

	\item For the \cite{AbiteboulD98}-problem (i.e., for the CWA-version of the \cite{StoffelS05} problem):  First, prove $\Pi^p_2$ completeness in this setting 
	
	The $\Pi^p_2$ completeness result holds 
	
	\begin{itemize} 
	
		\item[(i)] both when $\Sigma$ $=$ $\emptyset$ (which is the \cite{AbiteboulD98} setting), and 
		
		\item[(ii)] when $\Sigma$ $\neq$ $\emptyset$ (which is the \cite{StoffelS05} setting) 
		
	\end{itemize}  
	
	Note that to obtain our $\Pi^p_2$-completeness result, we use here a different complexity setting/definition than those used in \cite{AbiteboulD98}: Because our setting (which is the same as in \cite{ZhangM05}) is more appropriate for security-related problems (specifically for database access control and for data publishing) than the complexity setting/definitions from \cite{AbiteboulD98}. This ``more appropriateness'' of this complexity setting/definition is justified in detail in \cite{ZhangM05}. 

	\item Now, both for the case $\Sigma$ $=$ $\emptyset$ and for the case $\Sigma$ $\neq$ $\emptyset$, we propose sound and complete (view-verified data-exchange) algorithm for detecting information leaks in the {\em CWA} setting; 
	
	The algorithm is novel both for the case $\Sigma$ $=$ $\emptyset$ and for the case $\Sigma$ $\neq$ $\emptyset$. In the case $\Sigma$ $\neq$ $\emptyset$, the main point is that the $\Sigma$-dependencies are to be applied in chase steps that {\em interleave} with the chase steps for the $\cal V$/$MV$-generated dependencies; 
	
	must show that if there is no such interleaving (i.e., if there are separate stages for the application of $\Sigma$ and of the $\cal V$/$MV$-generated dependencies), then the outcome of the chase may give incorrect results. (It is shown in my handwritten examples 1 through 4, for the dates R03/21/13 and later.) 
	
	\item Show how the view-verified data-exchange algorithm for detecting information leaks can also be used to determine $\Sigma$-{\em validity} of problem input. 
	
	\item Then, for a {\em different} problem of $Q_1$ $\sqsubseteq_{{\cal M}\Sigma}$ $Q_2$, do the addition, to ZMendelzon (\cite{ZhangM05}), of dependencies (egds and tgds) $\Sigma$ on the schema $\cal P$. 
	
	Again, the main point is that you cannot clearly do $\Sigma$-chase and $MV$-dep-chase in stages; instead, you need to interleave them. 
	
	\item Why we consider {\em together} the problems of \cite{AbiteboulD98} and of \cite{ZhangM05}: 
	
	\begin{enumerate} 
	
		\item The problem of \cite{AbiteboulD98} can be considered as a special case of the problem of \cite{ZhangM05}, in the following sense: 
		
		The problem of  \cite{AbiteboulD98} is as follows: Given a setting ${\cal M}\Sigma$ whose query $Q$ is $k$-ary, for some $k$ $\geq$ $0$, and given a $k$-tuple $\bar t$ of elements of $consts({\cal M}\Sigma)$, for the $MV$ in ${\cal M}\Sigma$: Is $\bar t$ a certain answer to $Q$ under CWA in the setting ${\cal M}\Sigma$? 
		
		We have shown this problem to be equivalent to the following: Given ${\cal M}\Sigma$ and $\bar t$ as above, and for the maximal $MV$-induced rewriting $R^*_{\bar t}$ for $\bar t$, is it true that $(R^*_{\bar t})^{exp}$ $\sqsubseteq_{{\cal M}\Sigma}$ $Q$? 
		
		Now we reformulate this problem as follows: Given a setting ${\cal M}\Sigma$ whose query $Q$ is $k$-ary and given a $k$-tuple $\bar t$ of elements of $consts({\cal M}\Sigma)$, construct $Q'$ $:=$ $(R^*_{\bar t})^{exp}$, and then ask whether it is true that $Q'$ $\sqsubseteq_{{\cal M}\Sigma}$ $Q$? [This was already observed, for the special case $\Sigma$ $=$ $\emptyset$, in \cite{AbiteboulD98}.]  
		
		The latter problem is clearly a \cite{ZhangM05}-problem. Thus, the \cite{AbiteboulD98}-problem is a special case of the \cite{ZhangM05}-problem. 
		
		\item At the same time, the problems of \cite{AbiteboulD98} and of \cite{ZhangM05} are complementary to each other, in the sense that solving the problem of \cite{AbiteboulD98} requires chasing an instance (and thus we can afford to {\em not} use neq-dependencies), and solving the problem of \cite{ZhangM05} requires chasing a query [rather than an instance]. 
		
		\item Finally, for both problems (that of \cite{AbiteboulD98} and that of \cite{ZhangM05}): To obtain a correct solution, we need to do {\em interleaving} of chase. 
	
	\end{enumerate} 
	
\end{enumerate} 

\section{Notes from Edinburgh/Libkin visit December 2012} 

\paragraph{How the complexity measure in our database-access-control work differs from the complexity measures of} \cite{AbiteboulD98} 

\begin{enumerate} 

	\item For the problem formulated in \cite{AbiteboulD98}: Given $Q$, $\cal V$, and $MV$: Need to determine the certain answers to $Q$ on instances $I$ that are behind $MV$: 

	\begin{enumerate} 

		\item Data complexity: only $MV$ varies; 
	
		\item {\em Query complexity} of \cite{AbiteboulD98}: both $Q$ and $\cal V$ vary; and  

		\item {\em Combined complexity} of \cite{AbiteboulD98}: all of $MV$, $Q$, and $\cal V$ vary. 

	\end{enumerate} 

	\item The complexity measure adopted from \cite{ZhangM05} for our data-access-control problem: 
	
	$MV$ and $Q$ vary, whereas $\cal V$ is fixed. 


	



\end{enumerate}

\paragraph{Incorrect complexity in my PODS-13 submission, and correction of the complexity} 

Denote by $k$ the arity of the query $Q$. Given a ground tuple $\bar t$, such that each constant in $\bar t$ is in $consts({\cal M}\Sigma)$: We guess all the view-verified data-exchange solutions $J$, and for each such $J$ we also guess a homomorphism $h_J$ from $body_{(Q)}$ to $J$. Then, for each such database $J$ and its homomorphism $h_J$, we can verify in polynomial time whether 

\begin{enumerate} 
	\item $h_J(body_{(Q)})$ $\in$ $J$; and 
	
	\item $h_J(head_{(Q)})$ $=$ $\bar t$ . 
	
\end{enumerate} 

Hence, given a problem instance ${\cal M}\Sigma$, with $k$ $\geq$ $0$ the arity of the query $Q$ in ${\cal M}\Sigma$, and given a $k$-ary ground tuple $\bar t$, the problem of deciding whether $\bar t$ is a certain answer to $Q$ in ${\cal M}\Sigma$ is in $\Pi^p_2$. 

At the same time: For the $k$-ary ground tuples $\bar t$, where each constant in each $\bar t$ is in $consts({\cal M}\Sigma)$, the number of all such tuples is exponential in $k$, and is hence exponential in the size of the input query $Q$. Thus, the number of certain answers to $Q$ in ${\cal M}\Sigma$ (that is, the size of the output for the problem input ${\cal M}\Sigma$) can, in principle, be (up to) exponential in the size of the problem input ${\cal M}\Sigma$. 

\paragraph{Data complexity of the problem} 

Given a problem input ${\cal M}\Sigma$: When only the size of $MV$ varies, and the size of each other component of ${\cal M}\Sigma$ is fixed, then this {\em data complexity} of the problem is coNP hard; this is immediate from the results of \cite{AbiteboulD98}. As this data complexity of the problem is also in coNP, see Proposition~\ref{conpdata-prop}, we have thus established that the problem ${\cal M}\Sigma$ is coNP complete in terms of data complexity. 

\begin{proposition} 
\label{conpdata-prop} 
Given a problem instance ${\cal M}\Sigma$. Then the problem is in coNP in terms of data complexity. 
\end{proposition} 

\begin{proof} 
Fix a ground tuple $\bar t$ whose all elements are in $consts({\cal M}\Sigma)$. Guess a view-verified solution $J$ such that $\cal V$ $\Rightarrow_{J,\Sigma}$ $MV$. (Any view-verified solution $J$ is polynomial in the size of $MV$, and {\bf [[[ Stopped here R05/02/13 ]]]}) 
\end{proof}

\section{Ting Meeting Jan 23, 2013} 

We can use our Definition~\ref{specific-instance-def} to express multiple aspects of the computational (runtime) complexity of a variety of security-related problems, as follows: 

\begin{enumerate} 
	\item Database access control: 
	
	\begin{enumerate} 
		\item Data complexity: {\bf P}, $\Sigma$, $\cal V$, and $Q$ are all fixed, and $MV$ varies; and 
		
		\item Combined complexity: {\bf P}, $\Sigma$, and $\cal V$ are all fixed, whereas $Q$ and $MV$ vary. 

	\end{enumerate}  
	
	\item Data publishing (as in, e.g., \cite{MiklauS07}): 

	\begin{enumerate} 
		\item Data complexity: {\bf P}, $\Sigma$, $\cal V$, and $Q$ are all fixed, and $MV$ varies; and 
		
		\item Combined complexity: {\bf P}, $\Sigma$, and $\cal V$ are all fixed, whereas $Q$ and $MV$ vary. 

	\end{enumerate}  

	\item Inference control:

	\begin{enumerate} 
		\item Data{\underline {\bf -and-query!!!}} complexity: {\bf P}, $\Sigma$, and $Q$ are all fixed, whereas $\cal V$ and $MV$ vary; and 
		
		\item Combined complexity: {\bf P} and $\Sigma$ are both fixed, whereas each of $\cal V$, $Q$, and $MV$ varies. (This should be the same as the ``combined complexity'' of \cite{AbiteboulD98}; need to double check whether this is correct.)

	\end{enumerate}  
	
	In contrast: In \cite{AbiteboulD98} (in the technical-report version): 
	
	{\bf [[[ ******* Stopped here F04/26/13 *******]]]} 
	
\end{enumerate}  

\section{Building on ZMendelzon05} 

\subsection{$\Sigma$-Conditional Emptiness of Query} 

{\bf [[[ Define first a {\em published-view setting} ${\cal P}_s$ $=$ $(${\bf P}, $\Sigma$, $\cal  V$, $MV)$, where {\bf P} is a schema, $\Sigma$ is a set of dependencies on {\bf P}, the set $\cal V$ is a set of views defined on {\bf P}, and $MV$ is a set of view answers for $\cal V$. ]]]} 

\begin{definition}{Conditional emptiness of query w.r.t. published-view setting} 
\label{cond-empt-def} 
Given a published-view setting  ${\cal P}_s$ $=$ $(${\bf P}, $\Sigma$, $\cal  V$, $MV)$ and a query $Q$ over the schema {\bf P} in ${\cal P}_s$. Then $Q$ is {\em conditionally empty w.r.t. the setting} ${\cal P}_s$ if and only if for each instance $I$ such that $\cal V$ $\Rightarrow_{I,\Sigma}$ $MV$, we have that $Q(I)$ $=$ $\emptyset$. 
\end{definition} 

This definition is a natural extension,  to the consideration of dependencies holding on the schema {\bf P}, of the definition of \cite{ZhangM05} of conditional emptiness of a query. That is, the latter definition of \cite{ZhangM05} is the special case, with $\Sigma$ $=$ $\emptyset$, of our Definition~\ref{cond-empt-def}. 

{\bf [[[ Need to define --- in earlier sections of the paper!: 
\begin{itemize} 
	\item For the $\Sigma$-extension of the problem of \cite{AbiteboulD98} (that is, for my {\em first} problem in this paper): Need to define: 
	
	\begin{itemize} 
		\item chase process as applied to the {\em canonical data-exchange solution (${J}^{{\cal P}_s}_{de}$) for} the published-view setting ${\cal P}_s$ $=$ $(${\bf P}, $\Sigma$, $\cal V$, $MV)$; --- *** NB! Note the change of notation from ${\cal M}\Sigma$ to ${\cal P}_s$ in the notation ${J}^{{\cal P}_s}_{de}$ (which used to be ${J}^{{\cal D}_s}_{de}$)!!! 
		
		\item the set of instances ${\cal J}^{{\cal P}_s}_{vv}$, which is the result of this chase process; --- *** NB! Note the change of notation from ${\cal M}\Sigma$ to ${\cal P}_s$ in the notation ${J}^{{\cal P}_s}_{vv}$  (which used to be ${J}^{{\cal D}_s}_{vv}$)!!! 
	
	\end{itemize} 

	\item For my extension of the problem of \cite{ZhangM05} (that is, for my {\em second} problem in this paper): Need to define: 
	
	\begin{itemize} 
		\item chase process as applied to  the query $Q$ w.r.t. the setting ${\cal M}\Sigma$ $=$ $(${\bf P}, $\Sigma$, $\cal V$, $Q$, $MV)$; 
		
		this chase process is to be described as applying, to the ($Q$-generated) $CQ^{\neq}$ queries, at each step, (i) $\Sigma$-dependencies, (ii) $MV$-induced dependencies that are the same as in view-verified data exchange, and, finally, (iii) $MV$-induced neq-dependencies. 
				
		\item the query $(Q)^{\Sigma \& MV}$, which is the result of this chase process. 
	
		Very important:   $(Q)^{\Sigma \& MV}$ is a trivial query if and only if the above chase process (as applied to  the query $Q$ w.r.t. the setting ${\cal M}\Sigma$) fails on {\em all of} $Q$. (That is, $(Q)^{\Sigma \& MV}$ is a trivial query if and only if all possible disjuncts of disjunctive chase steps, when applied in all possible iterations and combinations to the original CQ query $Q$, result in all the chase steps {\em failing.} In other words, in the chase tree for $Q$ in presence of all the relevant dependencies (i)--(iii) as above, we have that each tree leaf is an unsatisfiable $CQ^{\neq}$ query.) 

	\end{itemize} 
	
\end{itemize} 
 ]]]} 

\begin{definition}{Canonical database for $UCQ^{\neq}$ query}  
\label{ucq-neq-canon-db-def}  
Given a $UCQ^{\neq}$ query $Q$ and a ground instance $I$, we say that $I$ is a {\em canonical database for} $Q$ iff $I$ is a canonical database for any one $CQ^{\neq}$ component of the query $Q$. 
\end{definition} 

By Definition~\ref{ucq-neq-canon-db-def}, for any $UCQ^{\neq}$ query $Q$, there are as many canonical databases for $Q$, up to isomorphism, as there are $CQ^{\neq}$ components of $Q$. (As a corollary, canonical databases exist for only nontrivial $UCQ^{\neq}$ queries.) Consider an illustration. 

\begin{example} 
Let $Q$ be a $UCQ^{\neq}$ query defined by the following two $CQ^{\neq}$-rules: 

\begin{tabbing} 
$Q(X) \leftarrow P(X,X,Y), X \neq Y.$ \\ 
$Q(X) \leftarrow P(X,Y,Y).$ 
\end{tabbing} 

Then the instance $I_1$ $=$ $\{$ $p(a,a,b)$ $\}$ is a canonical database for the first $CQ^{\neq}$-component of the query $Q$, and the instance $I_2$ $=$ $\{$ $p(c,d,d)$ $\}$ is a canonical database for its second $CQ^{\neq}$-component. By Definition~\ref{ucq-neq-canon-db-def}, each of $I_1$ and $I_2$, as well as each of the following ground instances $I_3$ through $I_5$, is a canonical database for the $UCQ^{\neq}$ query $Q$. ($I_3$ is isomorphic to $I_1$, and each of $I_4$ and $I_5$ is isomorphic to $I_2$.) 
 
\begin{tabbing} 
$I_3$ $=$ $\{$ $p(c,c,d)$ $\}$ \\ 
$I_4$ $=$ $\{$ $p(d,f,f)$ $\}$ \\ 
$I_5$ $=$ $\{$ $p(g,f,f)$ $\}$ 
\end{tabbing}  
\end{example} 

\begin{theorem} 
\label{cond-empt-thm} 
Given a schema {\bf P}, a weakly acyclic set $\Sigma$ of egds and tgds on {\bf P}, a set $\cal V$ of CQ views defined on {\bf P}, a CQ query $Q$ over {\bf P}, and a set $MV$ of view answers for $\cal V$. Then $Q$ is $\Sigma$-conditionally empty w.r.t. {\bf P}, $\cal V$, and $MV$ if and only if $(Q)^{\Sigma \& MV}$ is a trivial query. 
\end{theorem} 

To set the stage for a proof of Theorem~\ref{cond-empt-thm}, we make several observations. 

{\bf [[[ Need to prove first that for each valid CQ weakly acyclic information-leak instance ${\cal M}\Sigma$: We have that $(Q)^{\Sigma \& MV}$ exists, is a well defined $UCQ^{\neq}$ query, and that all such queries are equivalent on all the relevant $\Sigma$-valid base instances. ]]]} 

\begin{proposition} 
\label{chase-canon-db-prop} 
Given a valid CQ weakly acyclic information-leak instance ${\cal M}\Sigma$ $=$ $(${\bf P}, $\Sigma$, $\cal V$, $Q$, $MV)$, with $(Q)^{\Sigma \& MV}$ 
the result of the chase process as applied to the query $Q$ in ${\cal M}\Sigma$. Then for each canonical database, $D$, for the query $(Q)^{\Sigma \& MV}$, we have that $D$ is a $\Sigma$-valid base instance for $\cal V$ and $MV$ in the setting ${\cal M}\Sigma$. 
\end{proposition} 

The proof is by construnction of the $UCQ^{\neq}$ query $(Q)^{\Sigma \& MV}$. 

{\em Remark.} In \cite{ZhangM05}, Zhang and Mendelzon show that a stronger property than Proposition~\ref{chase-canon-db-prop} holds in the special case where $\Sigma$ $=$ $\emptyset$. Namely, given a valid CQ instance ${\cal M}\Sigma$ with $\Sigma$ $=$ $\emptyset$, let $q$ be any $CQ^{\neq}$ component of the query $(Q)^{\Sigma \& MV}$ for ${\cal M}\Sigma$. Then, for an {\em arbitrary} instance, $I$, of schema {\bf P} and for an arbitrary valuation, $\nu$, from the query $q$ to $I$, we have that the instance 

\begin{tabbing} 
$I(q,\nu)$ $=$ $\{$ $\nu(A)$ | $A$ is a relational atom in $body_{(q)}$ $\}$ 
\end{tabbing} 

\noindent 
is a $\emptyset$-valid base instance for $\cal V$ and $MV$, that is, we have that $\cal V$ $\Rightarrow_{I(q,\nu),\emptyset}$ $MV$. (Note that the instance $I$ itself is arbitrary; in particular, it does not have to be a $\emptyset$-valid base instance for $\cal V$ and $MV$.) At the same time, in the general case where $\Sigma$ $=$ $\emptyset$ does not necessarily hold, this stronger version of Proposition~\ref{chase-canon-db-prop} may be violated. Consider an example. 

\begin{example} 
Let {\bf P} be a schema, with binary relation symbol $P$ and unary relation symbols $R$ and $S$. Let $\sigma$ be the following tgd on {\bf P}: 

\begin{tabbing} 
$\sigma:$ $P(X,X) \rightarrow R(X)$ . 
\end{tabbing} 

\noindent 
(While not an inclusion dependency, $\sigma$ is by definition an embedded dependency, specifically a tgd.) 


Consider a CQ query $Q$ and a CQ view $V$, as follows: 

\begin{tabbing} 
$Q(X) \leftarrow P(X,Y).$ \\ 
$V(X) \leftarrow S(X).$
\end{tabbing} 

Finally, let the set of view answers $MV$ be $\{$ $V(b)$ $\}$. 

For the setting ${\cal P}_s$ $=$ $(${\bf P}, $\{$ $\sigma$ $\}$, $\{ V \}$, $MV)$ and for the query $Q$ as above, the $UCQ^{\neq}$ query $(Q)^{{\cal P}_s}$ consists of one $CQ^{\neq}$ component, $q$, as follows: 

\begin{tabbing} 
$q(X) \leftarrow P(X,Y), S(b) .$ 
\end{tabbing} 

Consider a ground instance $I$ $=$ $\{$ $P(b,b)$, $S(b)$, $R(b)$ $\}$ of schema {\bf P}, and a valuation $\nu$ $=$ $\{$ $X \rightarrow b$, $Y \rightarrow b$ $\}$ from the $CQ^{\neq}$ query $q$ to the instance $I$. Then the instance 

\begin{tabbing} 
$I(q,\nu)$ $=$ $\{$ $\nu(A)$ | $A$ is a relational atom in $body_{(q)}$ $\}$ 
\end{tabbing} 

\noindent 
is $\{$ $P(b,b)$, $S(b)$ $\}$ for the $I$ and $\nu$ as given above. It is easy to see that the instance $I(q,\nu)$ does not satisfy the tgd $\sigma$, even though for the instance $I$ we have that $I$ $\models$ $\{ \sigma \}$. (We would be able to make the same point even if $I$ $\models$ $\{ \sigma \}$ did not hold. That is, let us modify the above instance $I$  by removing from it the atom $R(b)$. After this modification and with the same $q$ and $\nu$ as before, neither $I$ nor $I(q,\nu)$ would satisfy the tgd $\sigma$.) 
\end{example}

\subsection{Counterexample for Conditional Tables} 
\label{cond-tables-counterex-sec} 

{\bf [[[ Probably need to throw away this example and its whole subsection (Section~\ref{cond-tables-counterex-sec}) ]]]} 

In this example we show that in presence of embedded dependencies (specifically tgds), it may be impossible to represent a set of relevant view-verified solutions as a {\em single} conditional table. (That is, specifically, {\em more than one} conditional table may be needed to represent all the relevant instances. Thus, the obvious {\em compactness} advantage of solutions based on conditional tables cannot be used in such cases.) As a consequence, it is not clear how to extend to the case of embedded dependencies (involving tgds) the conditional-tables-based solution of \cite{AbiteboulD98}, which was designed for the case where $\Sigma$ $=$ $\emptyset$. 

\begin{example} 
\label{cond-tables-counterex} 
Let schema {\bf P} consist of a single binary relation symbol $P$, and let a tgd $\sigma$ holding on the schema {\bf P} be as follows: 

\begin{tabbing} 
$\sigma: \ P(X,Y) \ \& \ P(Y,Z) \rightarrow P(Y,Y)$ . 
\end{tabbing} 

\noindent 
We denote the set $\{$ $\sigma$ $\}$ by $\Sigma$. 

Consider a CQ query $Q$ and two CQ views, $V$ and $W$, defined on the schema {\bf P}: 
\begin{tabbing} 
$Q(X) \leftarrow P(X,X)$ \\ 
$V(X) \leftarrow P(X,Y)$ \\ 
$W(X) \leftarrow P(Y,X)$  
\end{tabbing} 

Finally, let the set of view answers $MV$ consist of four atoms: $MV$ $=$ $\{$ $V(0)$, $V(1)$, $W(0)$, $W(1)$ $\}$. 

\end{example}

\subsection{Lemma 4 of ZMendelzon05} 

\begin{lemma}{(Lemma 4 of \cite{ZhangM05})} 
Given a CQ query $Q$ and a set of CQ views $\cal V$ with materializations $MV$. Let $\mu$ be a valuation of $Q''$ on any input database instance. Then the set 

\begin{tabbing} 
$\{$ $\mu({\bar X})$ | $p({\bar X})$ is a regular subgoal of $Q''$ $\}$ 
\end{tabbing} 

\noindent 
is a valid database instance.  
\end{lemma} 

\begin{proof} 
Fix an arbitrary valuation $\mu$ from the query $Q''$ to an arbitrary database instance $D$. Denote by $D^*$ the instance 

\begin{tabbing} 
$\{$ $\mu({\bar X})$ | $p({\bar X})$ is a regular subgoal of $Q''$ $\}$. 
\end{tabbing} 

\noindent 
Our goal is to show that for each view $V$ $\in$ $\cal V$, we have that $V(D^*)$ $=$ $MV[V]$ $\subseteq$ $MV$.  

\end{proof}



\newpage 

\section{Immediate ToDo's June 2013} 

Immediate ToDo's June 2013: 

\begin{itemize}

	\item  	{\bf [[[ Stopped here W07/24/13 ]]]} 
 Define my dependencies in $\Phi_{(MV)}$ and in $\Sigma_{(\neq)}$; define $\Upsilon_{{\cal M}\Sigma}$ $:=$ $\Sigma_{(\neq)}$ $\cup$ $\Phi_{(MV)}$. 
	
	\item Explain informally how {\em certain-answers} $MV$-dependencies get generalized to {\em query-containment} dependencies $\Upsilon_{{\cal M}\Sigma}$. 
	
	\item Extend the chase steps: 
	
	\begin{itemize} 
		\item {\em edgs} fail if the terms to be equated are disequated (this generalizes the case of egd chase-step failure where the terms to be equated are two distinct constants) 
		\item {\em negds} fail if the terms to be disequated are the same term (remind the reader that by our definition of $CQ^{\neq}$ query, its body does not have explicit equality atoms, thus this case for negd-step failure is all-encompassing) 
	
	\end{itemize} 
	
	\item Claim B: Prove that by construction, at each step in chase of $Q_1$ $\&$ ${\cal C}^{exp}_{MV}$ with the dependencies $\Upsilon_{{\cal M}\Sigma}$, each such query $Q^*_1$ (that is the result of this step in the chase of $Q_1$ $\&$ ${\cal C}^{exp}_{MV}$) is unconditionally contained in $Q_1$. 
	
	Claim C: Conclude (from Claim B) that $(Q_1)^{{\cal M}\Sigma}$ is unconditionally contained in $Q_1$. 
	
	\item Claim D: Prove the inductive claim that at each step in chase of $Q_1$ $\&$ ${\cal C}^{exp}_{MV}$ with the dependencies $\Upsilon_{{\cal M}\Sigma}$ and for each instance $I$ such that $I$ is $\Sigma$-valid for $\cal V$ and $MV$: For each tuple $\bar t$ $\in$ $Q_1(I)$, we have that the query $Q^*_1$ (that is the result of this step in the chase of $Q_1$ $\&$ ${\cal C}^{exp}_{MV}$) also has $\bar t$ in its answer on $I$. 
	
	Claim E: Conclude (from Claim D) that for all $\Sigma$-valid base instances $I$ for $\cal V$ and $MV$: We have that $Q_1(I)$ $\subseteq$  $(Q_1)^{{\cal M}\Sigma}$. 
	
	Claim F: Conclude (from Claim C and Claim E that for all $\Sigma$-valid base instances $I$ for $\cal V$ and $MV$: We have that $Q_1(I)$ $=$  $(Q_1)^{{\cal M}\Sigma}$. 
	
	\item Claim G: Prove that $(Q_1)^{{\cal M}\Sigma}$ $\sqsubseteq$ $Q_1$ if and only if $Q_1$ $\sqsubseteq_{{\cal M}\Sigma}$ $Q_1$. 
	
	\item Explain how the neq's are instrumental in the proof of the ``if'' part of Claim G. (Otherwise our proof does not go through.) 
	
	Do explanation via two examples: 

	\begin{itemize} 
	
		\item one example (see Example~\ref{mv-deps-must-have-neqs-ex}) for incorrect $MV$-induced dependencies, and 
		
		\item the second example (see Example~\ref{sigma-deps-must-have-neqs-ex}) for applying $\Sigma$ instead of $\Sigma_{(\neq)}$ 
	
	\end{itemize} 
	
	\item Explain how to use the above results (our extension of ZMendelz) and our Theorem~\ref{problem-relationship-thm} (in Section~\ref{probl-stmt-relationship-sec}) to solve the problem of determining, for a CQ query $Q$, a CQ weakly acyclic setting ${\cal M}\Sigma$, and a ground tuple $\bar t$, whether $\bar t$ is a certain answer to $Q$ w.r.t. the setting ${\cal M}\Sigma$. 
	
	\item Explain why neq's are not needed for finding all certain query answers via chasing instances: This is because we only care about certain answers, and we just assume that for each instance $I$ such that $I$ is $\Sigma$-valid for $\cal V$ and $MV$ and for its respective vv-solution $J$: The instance $I$ ``takes care itself'' of all the noneq's of the elements of $adom(I)$, precisely by our assumption that $I$ is $\Sigma$-valid for $\cal V$ and $MV$. 
	
	\item NB! Proofread and correct (as needed) all examples in Section~\ref{query-cont-algor-sec}: Need to correct all the {\em query-containment} examples there so that the use the dependencies $\Phi_{(MV)}$ and in $\Sigma_{(\neq)}$ (as opposed to just the original $\Sigma$ from the problem input).  
	
	\item Besides the \cite{ZhangM05} complexity analysis, say that by a straightforward extension of the results of \cite{AbiteboulD98}, we still get co-NP-completeness in the data complexity, and (which???) completeness in the combined complexity ({\bf [[[ Must double check all these results!!! ]]]}). 
	
	\item NB! Explain what {\em my} novelty is over \cite{ZhangM05}: 
	
	\begin{itemize} 
	
		\item First: Cannot just ``chase with $\Sigma$'' the query $Q''_1$ of \cite{ZhangM05}; -- see Appendix~\ref{cannot-chase-mendelzon-sec} for an illustrative example; this example shows that just ``chasing with $\Sigma$'' the query $Q''_1$ of \cite{ZhangM05} may yield incorrect conclusions about ${\cal M}\Sigma$-conditional containment of the input queries 
		
		Appendix~\ref{cannot-chase-mendelzon-sec} also illustrates how chase of a CQ query as introduced in this paper (in the algorithm for determining ${\cal M}\Sigma$-conditional containment of two queries) can result, in some special cases, in a {\em CQ} query (although in general, the result of the chase is a $UCQ^{\neq}$ query) 
		
		
		{\bf [[[ Can I justify the following? ]]]} otherwise the proof of the ``if'' part of claim G does not go through! 
		
		
		\item Second: Turns out, cannot even ``chase with $\Sigma$'' at all: Must first convert $\Sigma$ into $\Sigma_{(\neq)}$! -- otherwise the proof of the ``if'' part of claim G does not go through! 
 	
	\end{itemize} 
	
	\item For the next-to-next item here (claim A): Need to prove first the result that a CQ weakly acyclic materialized-view setting ${\cal M}\Sigma$ is valid if and only if the set of view-verified universal solutions for ${\cal M}\Sigma$ is not empty; use verbatim Proposition~\ref{when-ds-valid-prop}, with the reformulated statement (due to replacing ${\cal D}_s$ with ${\cal M}\Sigma$, and so on). 
	
		\item For the next item here (claim A): As a precursor result for Theorem~\ref{zmendelz-polyn-size-thm}, need to prove first that for any given CQ weakly acyclic materialized-view setting ${\cal M}\Sigma$ and for any given CQ query $Q$, the query $(Q)^{{\cal M}\Sigma}$ exists, can be obtained in finite time, and is a $UCQ^{\neq}$ query.

		\item Do claim A: Extend the \cite{FaginKMP05} proof to ``polynomial-sizedness'' of my result of query chase 
	
	We build on the following result of \cite{FaginKMP05}. 
	
{\sc Theorem} 3.9 in \cite{FaginKMP05}. {\em  
Let $\Sigma$ be the union of a weakly acyclic set of tgds with a set of egds. Then there exists a polynomial in the size of an instance $K$ that bounds the length of every chase sequence of $K$ with $\Sigma$.}  

The assumptions under which this result is proven in \cite{FaginKMP05} require that (i) the given set $\Sigma$, as well as the (database) schema on which $\Sigma$ is defined, be fixed, and that (ii) only the instance $K$ is considered to be a problem input {\em per se.}  

The idea of the proof of this Theorem 3.9 of \cite{FaginKMP05} is to construct a function that would bound from above the maximal number of distinct values (both constants and nulls) that can occur in the result of chasing the input instance $K$ with the given dependencies $\Sigma$. The function is then shown to be a polynomial in the size of the instance $K$, which proves the claim. 

We now present our result, which builds on Theorem 3.9 of \cite{FaginKMP05}.  

\begin{theorem} 
\label{zmendelz-polyn-size-thm} 
Given a valid CQ weakly acyclic materialized-view setting ${\cal M}\Sigma$ and a CQ query $Q$. Then, for each $CQ^{\neq}$ component, $q$, of the $UCQ^{\neq}$ query $(Q)^{{\cal M}\Sigma}$, we have that the size of $q$ is polynomial in the size of the query $Q$ and of the set $MV$ of view answers in the setting ${\cal M}\Sigma$. Moreover, the number of all the $CQ^{\neq}$ components of $(Q)^{{\cal M}\Sigma}$ is up to exponential in the size of $Q$ and $MV$. 
\end{theorem}

{\em Note:} In the statement of Theorem~\ref{zmendelz-polyn-size-thm}, we  consider only the case where the given setting ${\cal M}\Sigma$  is valid, for the following reason. Consider an instance of the ${\cal M}\Sigma$-conditional containment problem for two queries, $Q_1$ and $Q_2$, assuming that the given ${\cal M}\Sigma$ is CQ weakly acyclic and {\em not} valid. (Recall that the validity problem is decidable for CQ weakly acyclic settings, see Proposition~\ref{when-ds-valid-prop}.) In that case, we do not have to chase the query $Q_1$ to obtain $(Q_1)^{{\cal M}\Sigma}$, because there are no $\Sigma$-valid base instances w.r.t. the $\cal V$ and $MV$ in the given setting ${\cal M}\Sigma$. Instead, in all the cases where ${\cal M}\Sigma$ is not valid, we declare that the given query $Q_1$ is (vacuously) ${\cal M}\Sigma$-conditionally contained in $Q_2$. Hence, in Theorem~\ref{zmendelz-polyn-size-thm} we can deal only with the cases where the given setting ${\cal M}\Sigma$  is valid.

\begin{proof} 
Our proof is based on Proposition~\ref{vv-sizes-prop}. 

\end{proof}

\end{itemize} 

\newpage 

\newpage

{\bf ******************** [[[ Everything below, all until the beginning of the appendix, is old --- as of December 2012 ]]]} 

\section{OLD: Introduction} 
\label{intro-sec} 

In this paper we consider the following basic question: Given a set of answers, $MV$, to some fixed queries on some (unavailable) database instance of interest, $I$, and given another query, $Q$: Which of the answer tuples to $Q$ on $I$ are ``deterministically assured'' by the contents of $MV$? 
Formally, consider an instance $I$ of some database schema {\bf P}. For a set $\cal V$ of 
$n$ $\geq$ $1$ queries $V_1$, $V_2$, $\ldots$, $V_n$ defined on the schema {\bf P}, denote by $MV$ the union of the answers $V_j(I)$ to the queries $V_j$ on the instance $I$, for all $j$ $\in$ $[1,$ $n]$. 
The problem is as follows: Given a query $Q$ over the schema {\bf P}, return the set of all tuples, $\bar t$, of domain values, such that each $\bar t$ must be in the answer $Q(I)$ to the query $Q$ on the instance of interest $I$. (We say that there is {\em information leakage} of $Q$ via $MV$ and $\cal V$ if and only if at least one such tuple $\bar t$ exists.) The information available for making the determination of whether a tuple $\bar t$ must be in $Q(I)$ includes the following: the definition of the schema {\bf P}; the definitions of the queries $Q$, $V_1$, $V_2$, $\ldots$, $V_n$; the instance $MV$; and (optionally) integrity constraints $\Sigma$ that must hold on all instances of the schema {\bf P}. Note that the instance $I$ is not available for making the determination. This problem builds on the problems considered in \cite{AbiteboulD98} by adding integrity constraints to the problem inputs.  

It turns out that this abstract problem arises naturally in fine-grained database-access control. Indeed, in organizational databases that are designed for shared access,  individual users typically access the stored data based on their privileges. In many cases, such access privileges are expressed using detailed levels of granularity of the database data \cite{BertinoGK11}. Representative mechanisms  for formally specifying fine-grained user-access privileges include Oracle Virtual Private Database (VPD) \cite{VPD,oracleSec} and label-based access control in DB2 \cite{BSWC06}. 

Not surprisingly, management of fine-grained access control is rather challenging. In one particular challenge that we address in this paper, a user or a group of users may obtain sensitive data using more than one data-access policy at a time. (E.g., a user may work on multiple projects, with each project independently granting partial access to the same data.) As a result, sensitive information may be leaked inadvertently to unauthorized users when they combine privileges from multiple access-control rules, each of which is seemingly safe. Then it is natural for the owners of the data to want to determine whether any such information leakage is possible when certain data have been disclosed to the users.   
What complicates this problem further is that an adversary may not only be aware of specific access-control rules, but also be equipped with domain knowledge, as reflected by database integrity constraints. Formalizing ``sensitive information'' as our query $Q$ of interest, and formalizing access-control rules as views $\cal V$, leads naturally to the formal problem outlined in the beginning of this section. Consider an illustration. 


\begin{example} 
\label{main-three-ex} 
Suppose a relation {\tt Emp} stores information about employees of a company. Let the attributes of {\tt Emp} be {\tt Name}, {\tt Dept}, 
and {\tt Salary}, with self-explanatory attribute names:  {\tt Emp(Name,Dept,Salary)}.


We assume for simplicity that no integrity constraints hold on the database schema {\bf P} containing the relation {\tt Emp}. (In particular, the only primary key of {\tt Emp} is all its attributes.) Thus, the set $\Sigma$ of dependencies holding on the schema {\bf P} is the empty set. 

Let a ``secret query'' {\tt Q} ask for the salaries of all the employees. We can formulate the query {\tt Q} in SQL as  


{\small 
\begin{verbatim} 
(Q): SELECT DISTINCT Name, Salary FROM Emp;   
\end{verbatim} 
} 


Consider two views, {\tt V} and {\tt W}, that are defined for some class(es) of users, in SQL on the schema {\bf P}. The view {\tt V} returns the department name for each employee, and the view {\tt W} returns the salaries in each department: 


{\small 
\begin{verbatim} 
(V): DEFINE VIEW V(Name, Dept) AS 
     SELECT DISTINCT Name, Dept FROM Emp;   
(W): DEFINE VIEW W(Dept, Salary) AS
     SELECT DISTINCT Dept, Salary FROM Emp;   
\end{verbatim} 
} 


Suppose that some user(s) are authorized to see the answers to {\tt V} and {\tt W}, and that at some point the user(s) can see the following set $MV$ of answers to these views. 


\begin{tabbing} 
$MV$ $=$ $\{$ {\tt V(JohnDoe, Sales)}, {\tt W(Sales, \$50000)}  $\} \ .$ 
\end{tabbing} 


Consider a 
tuple $\bar t$ $=$ $(${\tt JohnDoe}, {\tt \$50000}$)$ of domain values in $MV$. Assume that these users are interested in finding out whether $\bar t$ is in the answer to the secret query {\tt Q} on the ``back-end'' instance, $I$. (That is, applying the queries {\tt V} and {\tt W} to $I$ has generated the instance $MV$.) We can show that the relation {\tt Emp} in such an instance $I$ is uniquely determined by {\tt V}, {\tt W}, and $MV$: 



\begin{tabbing} 
{\tt Emp} in $I$ is $\{$ {\tt Emp(JohnDoe, Sales, \$50000)}  $\} \ .$ 
\end{tabbing} 


The only answer to the secret query {\tt Q} on this instance $I$ is the above tuple $\bar t$. Thus, the presence of the tuple $\bar t$ in the answer to {\tt Q} on the organizational database of interest is in this case deterministically assured by the information that these users are authorized to access. 
\end{example} 



As detecting information leakage is an important challenge in real-life database access, a variety of formalizations of the problem have been studied, please see \cite{BertinoGK11,ChenKLM09} for overviews. In influential paper \cite{MiklauS07} by Miklau and Suciu,  the problem considered by the authors is the same at the informal level as our basic question above. (As a result, the solutions that we obtain in this current paper address, at the pragmatic level, the same real-life challenges as the solutions developed in \cite{MiklauS07}.) That is, \cite{MiklauS07} focuses on the problem of determining, for a fixed finite set $\cal V$ of views to be published, whether the published answers will logically disclose information about a fixed  confidential query $Q$.  

At the same time, the formalization of this problem in \cite{MiklauS07}, inspired by Shannon's notion of perfect secrecy \cite{Shannon49}, is as follows: There is no information leakage of the query $Q$ via the views $\cal V$ if and only if the {\em probability} of an adversary guessing the answer to $Q$ is the same (or, in another scenario, is almost the same) whether the adversary knows the answers to $\cal V$ or not. The information-leakage problem is addressed in \cite{MiklauS07} using this formalization, in the absence of any specific fixed answers to the views $\cal V$ and using the assumption that the database of interest is given as a probability distribution over a fixed finite domain of values.  Data-exchange approaches \cite{Barcelo09} are not used in the technical development in \cite{MiklauS07}; rather, the term ``data exchange'' is used in \cite{MiklauS07} informally as a reference to today's  universal sharing of data (as in, e.g., on the Web). 

It is possible that the solutions given in \cite{MiklauS07} could be extended to address part of our basic question, as follows. Suppose that, for a set of databases of interest (given via an appropriate domain and probability distribution), for a query $Q$, and for views $\cal V$,  there is no information leakage of $Q$ via $\cal V$ by the definition of \cite{MiklauS07}. Then, conceivably, one could show formally that for each possible set, $MV$, of answers to the views $\cal V$, 
there is no {\em deterministic} information leakage of $Q$ w.r.t. $\cal V$ and $MV$, in the sense of our basic question. While such a result might presumably be proved, the proof would still leave open the following possibility. Suppose that for a particular set $MV^*$, there is no deterministic information leakage of $Q$ w.r.t. $\cal V$ and $MV^*$, 
even though by the results of  \cite{MiklauS07} there is probabilistic information leakage of $Q$ via $\cal V$, intuitively due to some other set $MV'$ of answers to the views $\cal V$. Clearly, any such set $MV'$ must not be disclosed to the users, due to the associated threat of (deterministic) leakage of the sensitive information $Q$. At the same time, any set $MV^*$ as above can be safely disclosed to the users, even in presence of the  ``general'' information leakage of $Q$ via $\cal V$ shown using the results of  \cite{MiklauS07}. 

 
 In this paper, we formalize  the basic question outlined in the beginning of this section; our formalization is a natural extension of that in \cite{AbiteboulD98}. We then perform a theoretical study of the formal problem, in the relational setting and under the following restrictions, which we will be referring to collectively as ``the CQ weakly-acyclic setting'':  (1) The given queries $Q$, $V_1$, $\ldots$, $V_n$ are all SQL select-project-join queries with equality comparisons and possibly with constants. (That is, we assume that $Q$, $V_1$, $\ldots$, $V_n$ are expressed in the common language of {\em conjunctive (CQ) queries.}) (2) We assume that for each given set $\Sigma$ of integrity constraints {\em (dependencies)} on the input database schema {\bf P}, the set $\Sigma$ is a finite set of ``weakly-acyclic dependencies'' \cite{FaginKMP05}, which is a common assumption in the literature. 

{\bf Our contributions.} The specific contributions of this work are as follows: 


\begin{itemize} 
	\item We formalize the problem of ``deterministic assurance'' 
of the presence of some tuples in the answer to a fixed query, $Q$, on some database of interest, $I$, by the contents of the answers $MV$ to other fixed queries $\cal V$ on the same database $I$.   
In our formalization, a problem instance, usually denoted ${\cal D}_s$, includes a schema {\bf P}, a set $\Sigma$ of dependencies on {\bf P}, a set $\cal V$, a set $MV$, and a query $Q$. If any instance $I$ of schema {\bf P} exists such that $I$ satisfies $\Sigma$ and such that the set of answers on $I$ to the views $\cal V$ is exactly $MV$, then we say that ${\cal D}_s$ is a {\em valid} problem instance.  For each valid problem instance ${\cal D}_s$, the problem of information-leak disclosure is to determine the set of tuples that must be present in the answer to the query $Q$ on the instance $I$ due to the information given by ${\cal D}_s$.  


	\item We perform a theoretical study of the 
	problem of information-leak disclosure in the CQ weakly-acyclic setting. Specifically, we introduce and study two approaches that arise naturally in the context of the problem: the ``rewriting approach'' (Section~\ref{rewriting-sec}) and the ``data-exchange approach'' (Section~\ref{dexchg-sec}). 
	While both approaches are sound, neither approach yields a sound and complete algorithm for all 
CQ weakly-acyclic inputs. 


	\item Then, in Section~\ref{vv-dexchg-sec} we introduce a modification of our ``data-exchange approach'' of Section~\ref{dexchg-sec}. The resulting ``view-verified data-exchange approach''  yields a sound and complete algorithm for all problem inputs in the CQ weakly-acyclic setting. 


	\item We prove that the problem of information-leak disclosure is $\Pi^p_2$ complete for the class of CQ weakly acyclic instances. 
	(We assume, same as in \cite{ZhangM05}, that the parts {\bf P}, $\Sigma$, and $\cal V$ of the problem input ${\cal D}_s$ are fixed, while $Q$ and $MV$ can vary.) 


	\item Finally, we provide an algorithm for determining whether a given CQ weakly acyclic instance ${\cal D}_s$ of the problem of information-leak disclosure is valid. 
\end{itemize} 


The results of this paper are applicable to a number of fundamental problems in information management, especially in database security and privacy. Consider, for instance, a database where user privileges are defined through views. Following the principles of least privilege and separation of duty~\cite{VimercatiPS03}, it would be natural to ask whe- ther a user, or a group of possibly colluding users, are given privileges that will enable the users to deterministically derive some sensitive information. Questions of this type are particularly important and challenging for fine-grained access control, where privileges may be granted to users in a much more elaborate manner than table-level or column-level access control. As a result, sensitive information may be leaked 
in very subtle ways, which are hard to discover by manual inspection of access-control rules.  Similarly, when a data owner wants to share information with a third party (e.g., through views), it is crucial to understand whether, e.g., private information of individuals may be leaked because of such sharing~\cite{YaoWWBJ09, ChenKLM09}.  

The remainder of this paper is organized as follows. In Section~\ref{rel-work-sec} we review related work. We then provide the background definitions in Section~\ref{prelim-sec}, and formalize in Section~\ref{problem-statement-sec} the problem of information-leak disclosure, which we focus on in this paper. In Sections~\ref{rewriting-sec}--\ref{vv-dexchg-sec} we present three approaches to addressing the problem in the CQ weakly acyclic setting. Further, in Section \ref{vv-dexchg-sec} we solve the problem of determining the validity of a CQ weakly acyclic problem input, and show that the problem of information-leak disclosure is $\Pi^p_2$ complete in the CQ weakly acyclic setting, even when $\Sigma$ $=$ $\emptyset$. 


\section{OLD: Related Work} 
\label{rel-work-sec} 

As observed in Section~\ref{intro-sec}, to the best of our knowledge, the formal problem that we address in this current paper 
has not been considered in the open literature.
(Abiteboul and Duschka in \cite{AbiteboulD98} consider the special case where the set of dependencies is empty, apply in their analysis a different type of complexity metric than we do in this paper, and do not provide algorithms alongside their complexity results.) 
The work \cite{MiklauS07} addresses a problem that is similar to ours at the informal level; see Section~\ref{intro-sec} for a detailed discussion of \cite{MiklauS07}. Generally, the literature on privacy-preserving query answering and data publishing is represented by work on data anonymization and on differential 
privacy; \cite{ChenKLM09} is a recent survey. 
Most of that work focuses on probabilistic inference of private 
information, while in this paper we focus on the possibilistic
situation, where an adversary can deterministically derive
sensitive information. Further, our model of sensitive information 
goes beyond associations between individuals and their private 
sensitive attributes.

Policy analysis has been studied for various types of systems, including operating systems, 
role-based access control, trust management, and firewalls~\cite{HRU76,AHBH05,AS91, LWM03b}. 
Typically, two types of properties are studied. The first type is static properties: Given the current security setting
(e.g., non-management privileges of users), can certain actions or events (e.g., separation of duty) happen? Our analysis of 
information leakage in database policies falls into this category. What is different is that our policy model is much more elaborate, as we deal with policies defined by database query languages. The other type of properties in policy analysis is 
dynamic properties when a system evolves; that direction is not closely related to the topic of our paper. 

The problem of {\em inference control,} with a focus on preventing unauthorized users from computing sensitive information, has been studied extensively in the database-security literature. The inputs to the problem can be considered the same as ours; the set $\cal V$ is interpreted as free-form queries, defined (within a given query language) by the user, who asks them sequentially on a database instance $I$. In addition, a fixed procedure for inference of the sensitive information is specified; the adversary user is assumed to use only that procedure in computing sensitive information. The ``security monitor'' in the system logs all the queries, and temporarily withholds the answer to the latest query posed by the user. It then applies the fixed inference procedure to the log and to the latest answer. In case sensitive information is derived in this process, the monitor chooses what to do (e.g., to refuse to give the user the latest answer) to prevent the leakage. Work in this direction has been done in statistical databases \cite{fj02}  and in controlled query evaluation (CQE)~\cite{bb02}. Some of the work (e.g., \cite{BrodskyFJ00}) uses chase with dependencies to determine leakage. 

In contrast, our goal 
is to determine whether there exists any procedure that would be guaranteed to derive all and only the sensitive information, for all problem instances in a given class. None of the procedures that we have seen in the literature on inference control yields sound and complete algorithms for our class of interest  in this current paper, CQ weakly acyclic instances. 


Zhang and Mendelzon in \cite{ZhangM05} addressed the problem of letting users access  authorized data, via rewriting the users' queries in terms of their authorization views; this problem is different from ours. Toward that goal, \cite{ZhangM05} explored the notion of ``conditional query containment,'' which in this current paper we extend to the case $\Sigma$ $\neq$ $\emptyset$. The results of \cite{ZhangM05}, which we use in our work, include a powerful reduction of the problem of testing conditional containment of CQ queries to that of testing {\em unconditional} containment of modifications of the queries.

Our results of Sections~\ref{dexchg-sec}--\ref{vv-dexchg-sec} build on the influential framework for data exchange \cite{FaginKMP05}  by Fagin and colleagues. Our $MV$-induced dependencies of Section~\ref{vv-dexchg-sec} resemble target-to-source dependencies $\Sigma_{ts}$ introduced into (peer) data exchange in \cite{FuxmanKMT06}. The difference is that $\Sigma_{ts}$ are embedded dependencies defined at the schema level. In contrast, our disjunctive $MV$-induced dependencies embody the given set of view answers $MV$. 

Finally, our problem can be linked  at the abstract level to the work of Nash and colleagues \cite{NashSV10} on whether a query $Q$ is determined by views $\cal V$. The focus in \cite{NashSV10} is on whether views $\cal V$ determine the entire answer to the query $Q$. Our interest in this paper is in determining the maximal set $Q_s(I)$ of the tuples in the answer to $Q$ on an instance $I$, such that $Q_s(I)$ is deterministically assured by the set $MV$ of answers to $\cal V$ on $I$.


\section{ToDo's August 2013} 

{\bf [[[ ToDo's 08/14/13:

\begin{itemize} 
	
	\item Proofread the main text (except the {\em Preliminaries} section) 
	
	\item Shorten the main text to 12 pages   
\end{itemize} 

 ]]]}

} 

} 

\end{document}